\documentclass[letterpaper,USenglish,numberwithinsect]{no-lipics}

\usepackage[utf8]{inputenc}
\usepackage[T1]{fontenc}

\usepackage{multicol}
\usepackage{mathtools}
\usepackage{xspace}
\usepackage{contour}

\newcommand{\A}{\mathcal{A}}
\newcommand{\Ham}{\mathsf{HD}}

\newcommand{\Z}{\mathbb{Z}}
\newcommand{\R}{\mathbb{R}}

\newcommand{\eps}{\varepsilon}

\newcommand{\AG}{\mathsf{AG}}

\newcommand{\No}{\textsc{No}\xspace}
\newcommand{\Yes}{\textsc{Yes}\xspace}
\newcommand{\poly}{\mathsf{poly}}

\newcommand{\wall}{%
    \begin{tikzpicture}[x=1ex, y=1ex]
        \foreach \x in {0,2,4} {
            \draw (\x,0) rectangle (\x+2,1);
        }
        \foreach \x in {1,3} {
            \draw (\x,1) rectangle (\x+2,2);
        }
    \end{tikzpicture}
}

\newcommand{\dd}{\mathinner{.\,.}}
\newcommand{\fragmentco}[2]{{[}\,#1\dd#2\,{)}}
\newcommand{\fragmentoc}[2]{{(}\,#1\dd#2\,{]}}

\newcommand{\fragmentcc}[2]{{[}\,#1\dd#2\,{]}}
\newcommand{\fragment}{\fragmentcc}

\newcommand{\per}{\mathsf{per}}
\newcommand{\DTW}{\mathsf{DTW}}
\newcommand{\PMDTW}{\mathsf{PM\text{\_}DTW}}
\newcommand{\kDTW}{k\textit{-}\mathsf{DTW}}
\newcommand{\ED}{\mathsf{ED}}
\newcommand{\kED}{k\textit{-}\mathsf{ED}}

\newcommand{\bigz}{\mathsf{big}}
\newcommand{\smallz}{\mathsf{small}}

\newcommand{\rle}{\mathsf{rle}}
\newcommand{\shrink}{\mathsf{shrink}}
\newcommand{\tshrink}{2\text{-}\mathsf{shrink}}

\newcommand{\half}{\mathsf{half}}
\newcommand{\unit}{\mathsf{unit}}

\newcommand{\dchara}{\mathtt{a}}
\newcommand{\dcharb}{\mathtt{b}}
\newcommand{\dchare}{\mathtt{e}}
\newcommand{\dcharf}{\mathtt{f}}
\newcommand{\dcharw}{\mathtt{w}}

\newcommand{\AGed}{\AG^\ED}
\newcommand{\ceil}[1]{\left\lceil{#1}\right\rceil}
\newcommand{\floor}[1]{\left\lfloor{#1}\right\rfloor}

\newcommand{\Oh}{\mathcal{O}}
\newcommand{\Ohtilde}{\mathpalette\OhtildeAux{}}
\newcommand{\OhtildeAux}[2]{%
  \raisebox{0.25ex}{$\mathsurround=0pt#1\widetilde{%
    \smash{\raisebox{-0.25ex}{$\mathsurround=0pt#1\Oh$}}}$}%
}

\newcommand{\Otild}{\Ohtilde}

\newcommand{\para}[1]{\subparagraph{#1}}

\newcommand{\dist}{\mathsf{dist}}
\newcommand{\cost}{\mathsf{cost}}

\newcommand{\ca}{\mathtt{a}}
\newcommand{\cb}{\mathtt{b}}
\newcommand{\cc}{\mathtt{c}}
\newcommand{\dol}{\texttt{\$}}
\newcommand{\hash}{\texttt{\#}}

\newcommand{\OV}{\mathsf{OV}}

\newcommand{\ainote}[1]{}

\newtheorem{observation}[theorem]{Observation}

\newcommand{\restatementwithproof}[1]{%
  \begingroup
  \let\lipicsRestatementEnd\relax
  #1%
  \endgroup
}

\crefname{claim}{Claim}{Claims}
\crefname{observation}{Observation}{Observations}

\title{Dynamic Time Warping in the Low-Distance Regime}
\subtitle{Tight Lower Bounds and Input-Sensitive Algorithms}
\titlerunning{Dynamic Time Warping in the Low-Distance Regime}

\author{Itai Boneh}{Institute of Computer Science, University of Wrocław, Wrocław, Poland\and \url{https://sites.google.com/view/itai-boneh}}{itai.boneh@cs.uni.wroc.pl}{https://orcid.org/0009-0007-8895-4069}{Partially supported by Israel Science Foundation grant 810/21. Partially supported by the Polish National Science Centre grant number 2023/51/B/ST6/01505}
\author{Shay Golan}{Department of Computer Science, Ariel University, Ariel, Israel\and \url{https://sites.google.com/view/shaygolan}}{shayg@ariel.ac.il}{https://orcid.org/0000-0001-8357-2802}{Partially supported by Israel Science Foundation grant 810/21.}
\author{Tomasz Kociumaka}{Max Planck Institute for Informatics, Saarland Informatics Campus, Saarbrücken, Germany\and \url{https://people.mpi-inf.mpg.de/~tkociuma/}}{tomasz.kociumaka@mpi-inf.mpg.de}{https://orcid.org/0000-0002-2477-1702}{}
\authorrunning{I. Boneh, S. Golan, and T. Kociumaka}

\date{}

\begin{document}

\maketitle

\begin{abstract}
Dynamic Time Warping (DTW) is a classical similarity measure for strings and time series, designed to compare sequences up to local stretching.
It is particularly natural when two sequences represent similar objects sampled or traversed at varying speeds.
Given non-empty strings $S$ and $T$ over an alphabet $\Sigma$ and a mismatch-cost function $\delta:\Sigma^2\to\R_{\ge 0}$, the dynamic time warping $\DTW_\delta(S,T)$ is the minimum value of $\sum_i \delta(S'[i],T'[i])$ over all pairs of equal-length strings $S'$ and $T'$ obtained by duplicating characters in $S$ and $T$, respectively.
For strings of length at most $n$, the value $\DTW_\delta(S,T)$ is computable in $\Oh(n^2)$ time via dynamic programming [Sakoe and Chiba; 1978], and the running time cannot be reduced beyond $n^{2-o(1)}$ under the Orthogonal Vectors Hypothesis (OVH), and hence also under the Strong Exponential Time Hypothesis [Abboud, Backurs, and Vassilevska Williams; 2015] [Bringmann and Künnemann; 2015].

In many settings, however, one is interested not in arbitrary pairs of sequences but in highly similar ones when the optimal expansions incur only a few mismatches. This leads to the low-distance regime, where the running time is expressed in terms of $n$ and an integer budget $k\ge1$ upper-bounding $\DTW_\delta(S,T)$.
Under this parameterization, the standard assumption is that $\delta(a,a)=0$ and $\delta(a,b)\ge 1$ for $a\ne b$; otherwise, the parameter $k$ can be made meaningless by rescaling $\delta$.
For several classical similarity measures, this low-distance regime admits much faster algorithms: a sequence of results beginning with unweighted edit distance [Landau and Vishkin; 1988] and recently extending to weighted edit distance and tree edit distance [Das, Gilbert, Hajiaghayi, Kociumaka, and Saha; 2023] yields running times of $\Oh(n+\poly(k))$. For DTW with metric costs, the best known bound is $\Oh(nk)$ [Kuszmaul; 2019], leaving open whether the dependence on $n$ and $k$ can be improved when $k$ is low.

We resolve this question negatively. Assuming the Orthogonal Vectors Hypothesis, we show that computing DTW requires $n^{1-o(1)}\cdot k$ time even for the discrete mismatch-cost function $\delta$ that assigns cost $1$ to every mismatch.
Our lower bound applies to the whole spectrum of thresholds $k$ between constant and linear in $n$.

The hardness proof is based on a new reduction from Orthogonal Vectors in which vector coordinates are encoded in the lengths of equal-character runs.
The resulting instances are very structured:
collapsing runs to single characters reveals long substrings with short periods.
We complement the lower bound with an algorithm that runs in $\Ohtilde(n+\poly(k))$ time whenever, after collapsing runs in the inputs, every substring with period $\Oh(k)$ has length $\poly(k)$.

Finally, we extend this lower bound to DTW pattern matching, which asks whether any non-empty substring of a length-$n$ text has DTW distance at most $k$ from a length-$m$ pattern. We prove that the classic $\Oh(nm)$-time dynamic-programming algorithm is near-optimal under OVH, even when the threshold $k$ is only logarithmic in~$n$.
\end{abstract}

\setcounter{page}{0}
\newpage

\section{Introduction}

Dynamic Time Warping (DTW) is a classical similarity measure for sequences, particularly well suited to time series recorded at varying speeds.
It was introduced in the context of speech recognition~\cite{Ita75,SC78} and has since found applications in a variety of areas, including data mining~\cite{RCMB12,WMDTSK13}, bioinformatics~\cite{AC01}, chemometrics~\cite{TBC04}, music processing~\cite{Mul07}, motion analysis~\cite{SMA11}, handwriting recognition~\cite{DRO08}, and signature verification~\cite{MP99}.

Dynamic Time Warping is defined for strings over an alphabet $\Sigma$ equipped with a mismatch-cost function $\delta : \Sigma^2 \to \mathbb{R}_{\ge 0}$.
For two strings $X,Y\in \Sigma^n$ of the same length $n$, the weighted Hamming distance is $\Ham_\delta(X,Y) \coloneqq \sum_{i=1}^n \delta(X[i],Y[i])$.
For non-empty strings $S,T\in \Sigma^+$, the DTW distance is defined as the minimum weighted Hamming distance between same-length expansions (time warps) of $S$ and $T$, obtained by arbitrarily duplicating characters in $S$ and $T$, respectively:
\[
\DTW_\delta(S,T)\coloneqq
\min_{
S',T'\;:\; \text{same-length expansions of } S\text{ and }T\;}
\Ham_\delta(S',T').
\]
For the discrete metric (unit mismatch-cost function), we omit the subscript and write $\DTW(S,T)$; for example,
$\DTW(\mathtt{aab},\mathtt{aba})=\Ham(\mathtt{aabb},\mathtt{aaba})=1.$

\begin{problem}[DTW]{Dynamic Time Warping (decision version)}\label{prob:dtw}
\PInput{Non-empty strings $S,T\in\Sigma^{\le n}$, a threshold $k\in\R_{\ge 0}$, and oracle access to a mismatch-cost function $\delta:\Sigma^2\to\R_{\ge 0}$.}
\PTask{Decide whether $\DTW_\delta(S,T)\le k$.}
\end{problem}

More generally, one may ask to compute $\DTW_\delta(S,T)$ and output optimal expansions of $S$ and~$T$; without loss of generality, their common length is at most $|S|+|T|$.
When $S,T\in \Sigma^{\le n}$ and $\delta$ can be evaluated in constant time (as we assume throughout), this can be done via the standard $\Oh(n^2)$-time dynamic program already presented in~\cite{SC78}.
More recently, Abboud, Backurs, and Vassilevska Williams~\cite{ABV14} as well as Bringmann and Künnemann~\cite{BK15} showed that, under the Strong Exponential Time Hypothesis (SETH), DTW does not admit an $n^{2-\Omega(1)}$-time algorithm.
Their lower bounds already hold under the Orthogonal Vectors Hypothesis (OVH), which asserts quadratic hardness for finding an orthogonal pair across two sets of binary vectors (see \cref{hyp:ovh}).
They hold even for simple costs: the discrete metric over a constant-size alphabet~\cite{ABV14} and the absolute-difference metric over constantly many integers~\cite{BK15}.

Only a few settings are known in which these lower bounds can be circumvented.
These include binary strings~\cite{ABV14,kuszmaul2021binary}, strings that are compressible with respect to run-length encoding~\cite{FJRW22,BGMW24}, and the low-distance regime~\cite{DBLP:conf/icalp/Kuszmaul19}, which is the main focus of this work, where the running time may depend on $k$ and is optimized for $k\ll n$.
Kuszmaul~\cite{DBLP:conf/icalp/Kuszmaul19} gave an $\Oh(n+nk)$-time algorithm for metric costs \emph{normalized} so that $\delta(a,a)=0$ and $\delta(a,b)\ge1$ for $a\ne b$, and used it to approximate DTW.

This $\Oh(n+nk)$ bound has not been improved even for the discrete metric despite the fact that the closely related edit distance~\cite{Lev65} has admitted a faster $\Oh(n+k^2)$-time algorithm for more than 35 years~\cite{LV88}.
More recently, the $\Oh(n+nk)$ bound for edit distance~\cite{Ukkonen85} has been improved for \emph{weighted} edit distance under normalized cost functions, first to $\Oh(n+k^5)$~\cite{DGHKS23}, then to $\Otild(n + \sqrt{nk^3})$~\cite{CKW23},\footnote{The $\Otild(\cdot)$ notation hides polylogarithmic factors in the input size, which is $n$ in the case of \ref{prob:dtw} and edit distance.} and, for integer weights bounded by a constant, to $\Otild(n+k^2)$~\cite{GK25}. These results raised hopes for analogous running times for DTW.
Our first main result rules out this possibility.

\begin{restatable}{mtheoremq}{mthmlb}\label{thm:lb-dtw}
Assuming the Orthogonal Vectors Hypothesis (\cref{hyp:ovh}), the following holds for every constant $0<\eps<1$.
Let $(k_n)_{n=1}^\infty$ be an integer sequence whose entries $k_n$ are computable in $\Oh(n)$ time and satisfy $1\le k_n \le \tfrac12 n$ for all sufficiently large $n$.
No deterministic algorithm for \cref{prob:dtw} can solve all instances with $|S|=|T|=n$ and $k=k_n$ in time $\Oh(n^{1-\eps}\cdot k_n)$.

The lower bound already holds for the unit mismatch-cost function $\delta$ over an alphabet of constant size.
Under randomized OVH, the same statement holds for bounded-error randomized algorithms.
\end{restatable}

In short, we show that $\Oh(n+nk)$ time is near-optimal even for the discrete (unit-cost) metric.
Our formal statement is stronger than what is usually proved in multivariate fine-grained complexity (see~\cite{BK18}): it rules out significant improvements for every threshold $k$ given as an efficiently computable function of $n$, rather than only for thresholds of the form $k\in \Theta(n^\kappa)$ with a fixed exponent $0 \le \kappa \le 1$.

\subparagraph*{Pattern Matching for Dynamic Time Warping}

Another setting where the low-distance regime is easier for edit distance is the pattern matching version of the problem.
Here, the task is to decide whether a pattern of length $m$ is similar to any non-empty substring of a text of length $n$. More generally, one may ask to locate all such substrings.
Dynamic Time Warping has been studied in this setting both in theoretical~\cite{Sakai2022,GDPS22} and applied contexts~\cite{KKS05,PKCP01,Wu2019}, under various names, including \emph{subsequence matching}, even though only substrings (contiguous subsequences) are sought.

\begin{problem}[PMDTW]{Pattern Matching for Dynamic Time Warping (decision version)}\label{prob:pmdtw}
\PInput{A non-empty pattern $P\in\Sigma^m$, a non-empty text $T\in\Sigma^n$, a threshold $k\in\R_{\ge 0}$, and oracle access to a mismatch-cost function $\delta:\Sigma^2\to\R_{\ge 0}$.}
\PTask{Decide whether there exists a non-empty substring $T\fragment{t}{t'}$ such that $\DTW_\delta(P,T\fragment{t}{t'})\le k$.}
\end{problem}

\Cref{prob:pmdtw} can be solved in $\Oh(nm)$ time by a straightforward adaptation of the textbook dynamic-programming solution to \cref{prob:dtw}, and the results of~\cite{ABV14,BK15} can be extended to show that no significantly faster algorithms exist under OVH.
Under the discrete metric, the zero-distance case is solvable in $\Oh(n+m)$ time, and the same bound is known for $k=1$~\cite{GDPS22}.

In contrast, the analogous problem for unit-cost edit distance admits the classical $\Oh(n+nk)$-time algorithm~\cite{LV89} and can even be solved in $\Oh(n+\ceil{n/m}k^4)$~\cite{CH02} or $\Otild(n+\ceil{n/m}k^{3.5})$~\cite{CKW22} time.
Recently, the first two running times have been recovered up to polylogarithmic factors for normalized weights; the third running time is recovered for integer metric weights bounded by a constant~\cite{CKW25}.

Our second main result shows that no analogous improvement is possible for DTW: the textbook $\Oh(nm)$ running time remains essentially optimal even when $k$ is logarithmic in the text length.

\begin{restatable}{mtheoremq}{mthmlbpm}\label{thm:lb-pmdtw}
Assuming the Orthogonal Vectors Hypothesis (\cref{hyp:ovh}), for every $0<\eps<1$, there exists $c_\eps>0$ such that the following holds.
Let $(m_n)_{n=1}^\infty$ and $(k_n)_{n=1}^\infty$ be integer sequences computable in $\Oh(n)$ time and satisfying $c_\eps \log n \le k_n \le \tfrac12 m_n \le \tfrac12 n$ for all sufficiently large $n$.
No deterministic algorithm for the \ref{prob:pmdtw} problem can solve all instances with $|P|=m_n$, $|T|=n$, and $k=k_n$ in time $\Oh(n^{1-\eps}\cdot m_n)$.

The lower bound already holds for the unit-cost function $\delta$ over an alphabet of constant size.
Under randomized OVH, the same statement holds for bounded-error randomized algorithms.
\end{restatable}

\subparagraph*{What Makes Dynamic Time Warping Hard?}
The instances we construct to prove \cref{thm:lb-dtw,thm:lb-pmdtw} not only establish hardness but also reveal a concrete structural obstruction for computing DTW fast.

To describe this obstruction, we use the \emph{shrunk} version $\shrink(X)$ of a string $X\in \Sigma^*$, obtained by collapsing each maximal run of consecutive equal characters into a single copy of that character; for example, $\shrink(\mathtt{aaaabba})=\mathtt{aba}$.
This notion is relevant for DTW because, under the discrete metric, $\DTW(X,Y)=0$ holds for strings $X,Y\in \Sigma^+$ if and only if $\shrink(X)=\shrink(Y)$.
We also use the notion of the \emph{period} $\per(X)$ of a string $X\in \Sigma^+$, defined as the smallest integer $p\in \fragment{1}{|X|}$ such that $X[i]=X[i+p]$ for all $i\in \fragment{1}{|X|-p}$; for example, $\per(\mathtt{abaabaabaa})=3$ because $\mathtt{abaabaabaa}=\mathtt{aba}\cdot \mathtt{aba}\cdot \mathtt{aba}\cdot \mathtt{a}$.

An interesting feature of the strings constructed to prove \cref{thm:lb-dtw,thm:lb-pmdtw} is that their shrunk versions contain very long substrings with very small periods. Concretely, each of the constructed length-$n$ strings contains, after shrinking, a substring of length $\Omega(n/\log n)$ and period $\Oh(\log n)$.

Motivated by this phenomenon, for an integer $p\ge1$, we define $L_p(X)$ to be the length of the longest substring of $\shrink(X)$ with period at most $p$, and set $L_0(X)\coloneqq0$.
We investigate how the complexity of \cref{prob:dtw} depends on this measure.
For a real threshold $k\ge0$, set $p\coloneqq24\ceil{k}$.
Our algorithm solves \cref{prob:dtw} in $\Otild(n+L_p(S)\cdot k^2)$ time for every normalized mismatch-cost function $\delta$.

\begin{restatable}{mtheoremq}{MainUB}\label{thm:main-UB}
There is an algorithm that, given an instance of \cref{prob:dtw} with a normalized mismatch-cost function and a real threshold $k\ge0$, solves it in time $\Otild(n+L_p(S)\cdot k^2)$, where $p\coloneqq24\ceil{k}$.
In case of a \Yes answer, the algorithm outputs an optimal alignment (that is, optimal expansions of $S$ and $T$).
\end{restatable}

In particular, inputs with $L_p(S)=\poly(k)$ admit an $\Otild(n+\poly(k))$-time algorithm.

\subparagraph*{Computational Model}
We use the word RAM with $w=\Theta(\log n)$-bit words, storing each input character in a separate word and assuming constant-time oracle access to the mismatch-cost function.
Our algorithm supports $\Oh(w)$-bit integer weights and extends to real weights in the real RAM model~\cite{EHM24}; precise conventions appear in \cref{sec:computational-model}.
Our deterministic lower bounds assume OVH, and their bounded-error randomized counterparts assume randomized OVH, defined alongside \cref{hyp:ovh}.


\section{Technical Overview}

In this section, we provide a high-level description of the ideas and techniques behind our results.

We start with a couple of standard definitions to set the language.
As discussed in the introduction, we make the routine assumption in the low-distance regime that the mismatch-cost function $\delta : \Sigma^2\to \mathbb{R}_{\ge 0}$ is \emph{normalized} so that $\delta(a,a)=0$ and $\delta(a,b)\ge 1$ holds for all distinct characters $a,b\in \Sigma$.
All our lower bounds are valid already for the \emph{discrete metric} $\delta$, where $\delta(a,a)=0$ and $\delta(a,b)=1$ for all distinct characters $a,b\in \Sigma$.

While we defined $\DTW_\delta(S,T)$ as the minimum weighted Hamming distance between same-length \emph{expansions} (time warps) of $S$ and $T$, it is often more convenient to interpret this value as the distance from the top-left corner to the bottom-right corner of a node-weighted\footnote{In comparison, alignment graphs constructed for edit distance and longest common subsequence are edge-weighted.} \emph{alignment graph} $\AG_\delta(S,T)$.
The vertex set of this graph is $[|S|]\times [|T|]$, where $[r] \coloneqq \{1,\ldots,r\}$ for a positive integer $r$, and the weight (cost) of the vertex $(i,j)$ is $\delta(S[i],T[j])$.
The graph has directed edges of three types: vertical edges $(i,j)\to (i+1,j)$, horizontal edges $(i,j)\to (i,j+1)$, and diagonal edges $(i,j)\to (i+1,j+1)$, each present whenever both endpoints are vertices of $\AG_\delta(S,T)$.
In this interpretation, $\DTW_{\delta}(S,T)$ is the minimum cost of a path from $(1,1)$ to $(|S|,|T|)$; see \cref{fig:ag} for an illustration on a concrete example.
We write $\DTW(S,T)$ and $\AG(S,T)$ without subscripts only for unit costs.

\begin{figure}[htb]
\begin{center}
\begin{tikzpicture}[x=0.36cm,y=0.36cm]
    \begin{scope}[yscale=-1]
        \fill[red!25!white]       (0,0) rectangle (1,2);
        \fill[red!65!green]       (1,0) rectangle (3,2);
        \fill[red!65!blue]        (3,0) rectangle (4,2);
        \fill[red!25!white]       (4,0) rectangle (7,2);
        \fill[red!65!blue]        (7,0) rectangle (9,2);
        \fill[red!65!green]       (9,0) rectangle (10,2);

        \fill[green!65!red]       (0,2) rectangle (1,3);
        \fill[green!25!white]     (1,2) rectangle (3,3);
        \fill[green!65!blue]      (3,2) rectangle (4,3);
        \fill[green!65!red]       (4,2) rectangle (7,3);
        \fill[green!65!blue]      (7,2) rectangle (9,3);
        \fill[green!25!white]     (9,2) rectangle (10,3);

        \fill[blue!65!red]        (0,3) rectangle (1,5);
        \fill[blue!65!green]      (1,3) rectangle (3,5);
        \fill[blue!25!white]      (3,3) rectangle (4,5);
        \fill[blue!65!red]        (4,3) rectangle (7,5);
        \fill[blue!25!white]      (7,3) rectangle (9,5);
        \fill[blue!65!green]      (9,3) rectangle (10,5);

        \fill[red!25!white]       (0,5) rectangle (1,8);
        \fill[red!65!green]       (1,5) rectangle (3,8);
        \fill[red!65!blue]        (3,5) rectangle (4,8);
        \fill[red!25!white]       (4,5) rectangle (7,8);
        \fill[red!65!blue]        (7,5) rectangle (9,8);
        \fill[red!65!green]       (9,5) rectangle (10,8);

        \fill[green!65!red]       (0,8) rectangle (1,9);
        \fill[green!25!white]     (1,8) rectangle (3,9);
        \fill[green!65!blue]      (3,8) rectangle (4,9);
        \fill[green!65!red]       (4,8) rectangle (7,9);
        \fill[green!65!blue]      (7,8) rectangle (9,9);
        \fill[green!25!white]     (9,8) rectangle (10,9);

        \draw[gray!55, line width=0.2pt,step=1] (0,0) grid (10,9);

        \draw[line width=0.9pt] (0,0) rectangle (10,9);
        \foreach \x in {1,3,4,7,9} {
            \draw[line width=0.9pt] (\x,0) -- (\x,9);
        }
        \foreach \y in {2,3,5,8} {
            \draw[line width=0.9pt] (0,\y) -- (10,\y);
        }

        \draw[white, line width=3.2pt, line cap=round, line join=round]
            (0.5,0.5) -- (0.5,1.5) -- (1.5,2.5) -- (2.5,2.5) -- (3.5,3.5)
            -- (3.5,4.5) -- (4.5,5.5) -- (5.5,5.5) -- (6.5,5.5) -- (6.5,6.5)
            -- (6.5,7.5) -- (7.5,8.5) -- (8.5,8.5) -- (9.5,8.5);
        \draw[black, line width=1.3pt, line cap=round, line join=round]
            (0.5,0.5) -- (0.5,1.5) -- (1.5,2.5) -- (2.5,2.5) -- (3.5,3.5)
            -- (3.5,4.5) -- (4.5,5.5) -- (5.5,5.5) -- (6.5,5.5) -- (6.5,6.5)
            -- (6.5,7.5) -- (7.5,8.5) -- (8.5,8.5) -- (9.5,8.5);
        \fill[black] (0.5,0.5) circle (0.08);
        \fill[black] (9.5,8.5) circle (0.08);
    \end{scope}


    \foreach \x/\lab in {0.5/$\ca$,1.5/$\cb$,2.5/$\cb$,3.5/$\cc$,4.5/$\ca$,5.5/$\ca$,6.5/$\ca$,7.5/$\cc$,8.5/$\cc$,9.5/$\cb$} {
        \node[above=2pt] at (\x,0) {\footnotesize \lab};
    }
    \foreach \y/\lab in {-0.5/$\ca$,-1.5/$\ca$,-2.5/$\cb$,-3.5/$\cc$,-4.5/$\cc$,-5.5/$\ca$,-6.5/$\ca$,-7.5/$\ca$,-8.5/$\cb$} {
        \node[left=2pt] at (0,\y) {\footnotesize \lab};
    }

\end{tikzpicture}
\end{center}
\caption{Alignment graph $\AG_\delta(S,T)$ for strings $S\coloneqq\ca\ca\cb\cc\cc\ca\ca\ca\cb$ (corresponding to rows from top to bottom) and $T\coloneqq\ca\cb\cb\cc\ca\ca\ca\cc\cc\cb$ (corresponding to columns from left to right).  Each grid cell represents one vertex and is colored according to the ordered pair $(S[i],T[j])$: light cells are matches of cost $0$, dark cells are mismatches of cost at least $1$.
Light gray grid lines mark cell boundaries, with block boundaries drawn in a darker shade of gray. The black polyline is an optimal alignment for the discrete metric; its total cost is $2$, accrued for aligning $S\fragment{9}{9}=\cb$ with $T\fragment{8}{9}=\cc\cc$.}
\label{fig:ag}
\end{figure}

Dynamic time warping is closely related to \emph{run-length} encoding, where each maximal run of consecutive equal characters is represented by the character itself and the number of copies. Formally, $\rle(X) \coloneqq (a_1,e_1)(a_2,e_2)\cdots (a_r,e_r)$ if and only if $X = a_1^{e_1}a_2^{e_2}\cdots a_r^{e_r}$ holds for some exponents $e_i \in \Z_{\ge 1}$ with $i\in \fragment{1}{r}$ and characters $a_i\in\Sigma$ with $a_i\ne a_{i-1}$ for $i\in \fragment{2}{r}$.
For example, $\rle(\mathtt{aaabba})=\rle(\dchara^3 \dcharb^2 \dchara^1)=(\dchara,3)(\dcharb,2)(\dchara,1)$.

For $a\in \Sigma$, we call $X\fragment{x}{x'}$ an \emph{$a$-run} if $X[\hat{x}]=a$ for all $\hat{x}\in \fragment{x}{x'}$ and neither $X[x-1]=a$ nor $X[x'+1]=a$, possibly because these positions are out of bounds.
For every $a$-run $S\fragment{s}{s'}$ in $S$ and $b$-run $T\fragment{t}{t'}$ in $T$, we refer to $\fragment{s}{s'}\times \fragment{t}{t'}$ as a \emph{block} in $\AG_\delta(S,T)$.
All vertices in the block have cost $\delta(a,b)$.

\subsection{Lower Bound for \ref{prob:dtw} (Section~\ref{sec:lb})}
Recall that, in the Orthogonal Vectors (\textsf{OV}) problem, the input consists of two sets $X,Y\subseteq \{0,1\}^d$ of $d$-dimensional binary vectors, and the task is to decide if there are vectors $\vec{x}\in X$ and $\vec{y}\in Y$ whose inner product $\vec{x}\cdot \vec{y} \coloneqq \sum_{i=1}^d \vec{x}[i]\cdot \vec{y}[i]$ is zero (this inner product is computed in $\Z$, not in $\Z_2$).
Although OVH concerns two sets of equal size, our reduction allows $X$ and $Y$ to have different sizes.

Our lower bound for \cref{prob:dtw} (in \cref{thm:lb-dtw}) primarily relies on the following reduction:%

\begin{restatable}{theorem}{thmovdtw}\label{thm:ov-to-dtw}
Given an instance of $\OV$ with non-empty sets $X,Y\subseteq\{0,1\}^d$ for an integer $d\ge1$, one can in $\Oh((|X|+|Y|)d^2)$ time construct an equivalent instance $(S,T,k)$ of the unit-cost \ref{prob:dtw} problem over an alphabet of size 5, with $|S|,|T|=\Oh((|X|+|Y|)d^2)$ and an integer threshold $k=\Oh(|Y|d)$.
\end{restatable}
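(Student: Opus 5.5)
We will follow the standard framework for OV-hardness of alignment problems (as in \cite{ABV14,BK15}), but encode the coordinates through run lengths. That is the only way to keep the threshold as small as $k=\Oh(|Y|d)$ rather than $\Theta(|X|\cdot|Y|\cdot d)$.

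\emph{Coordinate gadgets.} Each coordinate value becomes a short run-length-encoded string over a few characters, say $\ca,\cb$ together with a separator. The key property is this: under the discrete metric, aligning a run $a^{e}$ of $S$ against a run of the same character in $T$ costs nothing, whatever the lengths. Lengths only matter when a run must be "pushed through" a neighbouring run of another character. So we design $x$-gadgets $\mathrm{CG}_S(0),\mathrm{CG}_S(1)$ and $y$-gadgets $\mathrm{CG}_T(0),\mathrm{CG}_T(1)$ of length $\Oh(d)$ with a shared shrunk skeleton (e.g.\ $(\ca\cb)^{c}$). The runs are arranged so that a misalignment by one "period" costs exactly one extra unit for the pair $(1,1)$ and nothing otherwise. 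Equivalently, we arrange that $\DTW(\mathrm{CG}_S(x_i),\mathrm{CG}_T(y_i))$ is $\alpha$ if $x_iy_i=0$ and $\alpha+1$ if $x_iy_i=1$. Concatenating these with a separator run gives vector gadgets $\mathrm{VG}_S(\vec x)$ and $\mathrm{VG}_T(\vec y)$ of length $\Oh(d^2)$, with cost $d\alpha+\vec x\cdot\vec y$ when aligned to each other. The $d^2$ length comes from run lengths of order $d$, which are needed so that cheap alternative warpings inside a gadget (skipping coordinates) are dominated.

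\emph{Outer construction.} Following the selection-gadget approach, we set $T=\bigcirc_{\vec y\in Y}\mathrm{VG}_T(\vec y)$, padded on both sides. We set $S$ to contain every $\mathrm{VG}_S(\vec x)$, together with a dummy gadget $\mathrm{VG}_S(\vec d)$ whose alignment with any $\mathrm{VG}_T(\vec y)$ costs exactly $d\alpha+1$. The $S$-gadgets are laid out so that a path can "park" all but $|Y|$ of them cheaply against padding runs of $T$ (or vice versa). Because DTW lets one character stretch over an entire run for free when characters match, parking can be made zero-cost: the padding in $T$ is a long run of a character matching the whole shrunk skeleton of an unused $S$-gadget. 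This is where the low threshold comes from. We then set $k=|Y|(d\alpha+1)-1$. If some pair $(\vec x,\vec y)$ is orthogonal, the path aligns $\vec x$ with $\vec y$ at cost $d\alpha$, aligns the other $\vec y$'s with dummies at cost $d\alpha+1$ each, and parks everything else for free, for a total of at most $k$.

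\emph{Soundness, the main obstacle.} The hard direction is showing that every path of cost at most $k$ yields an orthogonal pair. We must prove that each $\mathrm{VG}_T(\vec y)$ incurs cost at least $d\alpha$, with equality only when it is aligned exactly to some $\mathrm{VG}_S(\vec x)$ with $\vec x\cdot\vec y=0$. We must also rule out a path that straddles two $S$-gadgets, or shifts by a period inside the periodic skeleton, to gain cost. We will handle this with a local lemma per $T$-gadget. Using separator characters that occur only at gadget boundaries, we first show that a cheap path must match separators to separators, since a mismatched separator run forces cost proportional to its length, which is $\Theta(d)$ and hence more than any available slack. This decomposes the path into per-coordinate pieces, and each piece then has a lower bound that can be checked on the explicit small gadget.

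The total length is $\Oh((|X|+|Y|)d^2)$ and the alphabet consists of $\ca,\cb$, two separators, and one padding character, five in all. Construction is linear in the output size.
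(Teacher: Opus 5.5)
There is a genuine gap, and it sits in the outer construction. A run $\sigma^L$ of one character cannot absorb an $S$-gadget at zero cost. Every position of the parked gadget whose character differs from $\sigma$ is mismatched (cf.\ \cref{fact:unary_DTW}), so parking costs $\Omega(d)$ or more per gadget. The obstruction is more basic than that. Iterating \cref{lem:tshrink-approx-dtw} and applying \cref{lem:DTW>ED/2} gives $\ED(\shrink(S),\shrink(T))\le 2\DTW(S,T)$. Hence any instance with $\DTW(S,T)\le k=\Oh(|Y|d)$ must satisfy $\bigl||\shrink(S)|-|\shrink(T)|\bigr|=\Oh(|Y|d)$.

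Your $S$ contains all $|X|$ vector gadgets, each with $\Omega(d)$ runs. Your $T$, made of $|Y|$ vector gadgets plus padding runs, has only $\Oh(|Y|d)$ runs. So in a \Yes instance with $|X|\gg|Y|$ the distance is already $\Omega(|X|d)>k$. That is exactly the regime in which \cref{thm:lb-dtw} applies the reduction to small batches of $Y$. Even padding made of copies of the shrunk skeleton would not help if it sits only before and after the $y$-gadgets. A fixed amount of padding absorbs a fixed number of $S$-gadgets at zero cost on each side, so the window of $x$-gadgets compared with $Y$ could not be chosen freely.

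The missing idea is to interleave shrunk copies into both strings. Put $\bot_X$ into $\hat Y$ and $\bot_Y$ into $\hat X$, repeat $Y$ cyclically over $|X|+|Y|$ units, and separate all gadgets by length-$12d$ walls. Then $\shrink(\hat X)$ and $\shrink(\hat Y)$ are prefixes of one periodic string, and a zero-cost diagonal exists at every even wall offset.

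Once these freeways exist, your soundness invariant fails. You claim every $\mathrm{VG}_T(\vec y)$ costs at least $d\alpha$, but a $y$-gadget can now be crossed for free against a narrow copy. The paper counts forced freeway switches instead: an alignment starts on $D_0$ and ends on $D_{-2m}$, so it must move between consecutive zero diagonals $m$ times. Localizing to a single switch assumed to cost at most $10d$, the walls cannot be skipped. That forces two narrow-gadget crossings of $4d$ each, with a vector--vector comparison of cost at least $2d+1$ in between.

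Your separator argument needs the same localization. The global slack is $k=\Theta(|Y|d)$, not $\Theta(d)$, so a $\Theta(d)$-length separator is not ``more than any available slack'' for the whole path.

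Finally, if the $S$- and $T$-coordinate gadgets literally share a shrunk skeleton, their DTW is $0$, and the $\alpha$ versus $\alpha+1$ encoding is impossible. The skeletons must differ, as in the paper's $\dchara^{1+x}\dcharb$ versus $\dcharb^{1+y}\dchara$ with $\DTW=2+xy$.
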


\subparagraph*{Encoding Vector Orthogonality}
The starting point of our reduction is the observation that, for $x,y\in \Z_{\ge 0}$, we have
$\DTW(\dchara^{1+x}\dcharb,\dcharb^{1+y}\dchara) = 2 + \min(x,y)$.
When $x,y\in \{0,1\}$, this corresponds to $2+x\cdot y$ and encodes the inner product of one-dimensional vectors.
To generalize this construction to $d$ dimensions, we interleave the $\dchara$- and $\dcharb$-runs with long $\dchare$- and $\dcharf$-runs.
More precisely, we use the following \emph{coordinate gadgets}:
\[
E_X(x)\coloneqq\mathtt e^{4d}\mathtt a^{1+x}\mathtt b\mathtt f^{4d},
\qquad\text{and}\qquad
E_Y(y)\coloneqq\mathtt e^{4d}\mathtt b^{1+y}\mathtt a\mathtt f^{4d}.
\]
Concatenating these coordinate gadgets gives \emph{vector gadgets} $G_X(\vec u)$ and $G_Y(\vec v)$; see \cref{fig:vector-gadget-alignment} for an example.

\begin{figure}[htbp]
\begin{center}
    \vspace{-.2cm}
\begin{tikzpicture}[x=0.18cm,y=0.18cm]
    \begin{scope}[yscale=-1]

        \fill[red!25!white]       (0,0) rectangle (8,8);
        \fill[red!65!blue]        (8,0) rectangle (9,8);
        \fill[red!65!green]       (9,0) rectangle (10,8);
        \fill[red!65!orange]      (10,0) rectangle (18,8);
        \fill[red!25!white]       (18,0) rectangle (26,8);
        \fill[red!65!blue]        (26,0) rectangle (28,8);
        \fill[red!65!green]       (28,0) rectangle (29,8);
        \fill[red!65!orange]      (29,0) rectangle (37,8);

        \fill[green!65!red]       (0,8) rectangle (8,10);
        \fill[green!65!blue]      (8,8) rectangle (9,10);
        \fill[green!25!white]     (9,8) rectangle (10,10);
        \fill[green!65!orange]    (10,8) rectangle (18,10);
        \fill[green!65!red]       (18,8) rectangle (26,10);
        \fill[green!65!blue]      (26,8) rectangle (28,10);
        \fill[green!25!white]     (28,8) rectangle (29,10);
        \fill[green!65!orange]    (29,8) rectangle (37,10);

        \fill[blue!65!red]        (0,10) rectangle (8,11);
        \fill[blue!25!white]      (8,10) rectangle (9,11);
        \fill[blue!65!green]      (9,10) rectangle (10,11);
        \fill[blue!65!orange]     (10,10) rectangle (18,11);
        \fill[blue!65!red]        (18,10) rectangle (26,11);
        \fill[blue!25!white]      (26,10) rectangle (28,11);
        \fill[blue!65!green]      (28,10) rectangle (29,11);
        \fill[blue!65!orange]     (29,10) rectangle (37,11);

        \fill[orange!65!red]      (0,11) rectangle (8,19);
        \fill[orange!65!blue]     (8,11) rectangle (9,19);
        \fill[orange!65!green]    (9,11) rectangle (10,19);
        \fill[orange!25!white]    (10,11) rectangle (18,19);
        \fill[orange!65!red]      (18,11) rectangle (26,19);
        \fill[orange!65!blue]     (26,11) rectangle (28,19);
        \fill[orange!65!green]    (28,11) rectangle (29,19);
        \fill[orange!25!white]    (29,11) rectangle (37,19);

        \fill[red!25!white]       (0,19) rectangle (8,27);
        \fill[red!65!blue]        (8,19) rectangle (9,27);
        \fill[red!65!green]       (9,19) rectangle (10,27);
        \fill[red!65!orange]      (10,19) rectangle (18,27);
        \fill[red!25!white]       (18,19) rectangle (26,27);
        \fill[red!65!blue]        (26,19) rectangle (28,27);
        \fill[red!65!green]       (28,19) rectangle (29,27);
        \fill[red!65!orange]      (29,19) rectangle (37,27);

        \fill[green!65!red]       (0,27) rectangle (8,29);
        \fill[green!65!blue]      (8,27) rectangle (9,29);
        \fill[green!25!white]     (9,27) rectangle (10,29);
        \fill[green!65!orange]    (10,27) rectangle (18,29);
        \fill[green!65!red]       (18,27) rectangle (26,29);
        \fill[green!65!blue]      (26,27) rectangle (28,29);
        \fill[green!25!white]     (28,27) rectangle (29,29);
        \fill[green!65!orange]    (29,27) rectangle (37,29);

        \fill[blue!65!red]        (0,29) rectangle (8,30);
        \fill[blue!25!white]      (8,29) rectangle (9,30);
        \fill[blue!65!green]      (9,29) rectangle (10,30);
        \fill[blue!65!orange]     (10,29) rectangle (18,30);
        \fill[blue!65!red]        (18,29) rectangle (26,30);
        \fill[blue!25!white]      (26,29) rectangle (28,30);
        \fill[blue!65!green]      (28,29) rectangle (29,30);
        \fill[blue!65!orange]     (29,29) rectangle (37,30);

        \fill[orange!65!red]      (0,30) rectangle (8,38);
        \fill[orange!65!blue]     (8,30) rectangle (9,38);
        \fill[orange!65!green]    (9,30) rectangle (10,38);
        \fill[orange!25!white]    (10,30) rectangle (18,38);
        \fill[orange!65!red]      (18,30) rectangle (26,38);
        \fill[orange!65!blue]     (26,30) rectangle (28,38);
        \fill[orange!65!green]    (28,30) rectangle (29,38);
        \fill[orange!25!white]    (29,30) rectangle (37,38);

        \draw[gray!45, line width=0.18pt, step=1] (0,0) grid (37,38);

        \draw[line width=0.75pt] (0,0) rectangle (37,38);
        \foreach \x in {8,9,10,18,26,28,29} {
            \draw[line width=0.75pt] (\x,0) -- (\x,38);
        }
        \foreach \y in {8,10,11,19,27,29,30} {
            \draw[line width=0.75pt] (0,\y) -- (37,\y);
        }

        \draw[line width=1.15pt] (18,0) -- (18,38);
        \draw[line width=1.15pt] (0,19) -- (37,19);

        \draw[white, line width=2.8pt, line cap=round, line join=round]
            (0.5,0.5) -- (9.5,9.5) -- (9.5,10.5) -- (36.5,37.5);
        \draw[black, line width=1.15pt, line cap=round, line join=round]
            (0.5,0.5) -- (9.5,9.5) -- (9.5,10.5) -- (36.5,37.5);
        \fill[black] (0.5,0.5) circle (0.12);
        \fill[black] (36.5,37.5) circle (0.12);
    \end{scope}

    \foreach \x in {0.5,1.5,2.5,3.5,4.5,5.5,6.5,7.5,18.5,19.5,20.5,21.5,22.5,23.5,24.5,25.5} {
        \node[font=\tiny, inner sep=0.5pt, anchor=south] at (\x,0.35) {$\mathtt e$};
    }
    \foreach \x in {8.5,26.5,27.5} {
        \node[font=\tiny, inner sep=0.5pt, anchor=south] at (\x,0.35) {$\mathtt b$};
    }
    \foreach \x in {9.5,28.5} {
        \node[font=\tiny, inner sep=0.5pt, anchor=south] at (\x,0.35) {$\mathtt a$};
    }
    \foreach \x in {10.5,11.5,12.5,13.5,14.5,15.5,16.5,17.5,29.5,30.5,31.5,32.5,33.5,34.5,35.5,36.5} {
        \node[font=\tiny, inner sep=0.5pt, anchor=south] at (\x,0.35) {$\mathtt f$};
    }

    \foreach \y in {-0.5,-1.5,-2.5,-3.5,-4.5,-5.5,-6.5,-7.5,-19.5,-20.5,-21.5,-22.5,-23.5,-24.5,-25.5,-26.5} {
        \node[font=\tiny, anchor=east] at (-0,\y) {$\mathtt e$};
    }
    \foreach \y in {-8.5,-9.5,-27.5,-28.5} {
        \node[font=\tiny, anchor=east] at (-0,\y) {$\mathtt a$};
    }
    \foreach \y in {-10.5,-29.5} {
        \node[font=\tiny, anchor=east] at (-0,\y) {$\mathtt b$};
    }
    \foreach \y in {-11.5,-12.5,-13.5,-14.5,-15.5,-16.5,-17.5,-18.5,-30.5,-31.5,-32.5,-33.5,-34.5,-35.5,-36.5,-37.5} {
        \node[font=\tiny, anchor=east] at (-0,\y) {$\mathtt f$};
    }
\end{tikzpicture}
\captionof{figure}{Alignment graph $\AG(G_X(\vec u),G_Y(\vec v))$ for $d=2$, $\vec u\coloneqq(1,1)$, and $\vec v\coloneqq(0,1)$, shown in the same cell-based style as \cref{fig:ag}. The black polyline is an optimal alignment for the unit-cost function, and its total cost is $5=2d+\vec u\cdot \vec v$.}
\label{fig:vector-gadget-alignment}
\end{center}
\end{figure}

As proved in \cref{lem:dist_vect_gadget}, we have $\DTW(G_X(\vec u),G_Y(\vec v))=2d + \vec{u}\cdot \vec{v}$.
This is already stronger than what is typical in edit-distance or LCS reductions: the gadget does not merely separate orthogonal from non-orthogonal pairs but recovers the inner product exactly.

The upper bound  $\DTW(G_X(\vec u),G_Y(\vec v))\le 2d + \vec{u}\cdot \vec{v}$ is obtained by aligning the gadgets coordinate by coordinate.
For the lower bound, consider any alignment of cost below $2d+\vec u\cdot \vec v$.
Since the length of each $\dchare$-run and each $\dcharf$-run exceeds the cost of the considered alignment, the alignment must visit a cost-$0$ block for every $\dchare$-run in $G_X(\vec{u})$ and $G_Y(\vec{v})$, and likewise for every $\dcharf$-run.
The alignment also cannot visit two such blocks for the same $\mathtt e$- or $\mathtt f$-run, because the gaps between consecutive $\mathtt e$-runs, and likewise between consecutive $\mathtt f$-runs, in the other string are already too large.
By monotonicity, the only possible visited separator blocks are therefore the diagonal ones.
This forces the alignment to pass through the gadget coordinate by coordinate, and the total cost is at least the sum of the $d$ middle contributions, namely $2+\vec u[i]\cdot \vec v[i]$ for each coordinate $i$.

\subparagraph*{From One Pair to Many Pairs}
Now let
\[
X\coloneqq(\vec u_0,\ldots,\vec u_{n-1}),\qquad Y\coloneqq(\vec v_0,\ldots,\vec v_{m-1}),
\]
and suppose we want to decide whether there is an orthogonal pair $(\vec u_i,\vec v_j)$.
We already know how to test vector orthogonality using $\AG(G_X(\vec u_i),G_Y(\vec v_j))$.
The main task is to embed all these gadgets into one large alignment graph and to connect them by a network of cheap paths, including long cost-$0$ \emph{freeways}, so that an optimal path can benefit from the orthogonality of any pair $(\vec u_i,\vec v_j)$ while incurring a total cost of $\Oh(md)$.

The only way to introduce freeways is to use strings with DTW distance 0, i.e., with equal \emph{shrunk versions}, in which all run exponents are reduced to $1$.
 For this, we introduce \emph{narrow gadgets}
\[
\bot_X \coloneqq (\dchare\dchara\dcharb\dcharf)^d=\shrink(G_X(\vec u))
\qquad\text{and}\qquad
\bot_Y \coloneqq (\dchare\dcharb\dchara\dcharf)^d=\shrink(G_Y(\vec v)),
\]
which are the shrunk versions of the vector gadgets.
We also use thick \emph{walls} $\wall \coloneqq \dcharw^{12d}$.
The global strings alternate vector gadgets, narrow gadgets, and walls.
Using $\bigodot$ to denote concatenation, we define
\[\hat X\coloneqq\wall\,\bigodot_{i=0}^{n-1} \left(G_X(\vec u_i)\,\wall\,\bot_Y\,\wall\right)\ \quad\text{and}\quad \hat Y\coloneqq\wall\,\bigodot_{i=0}^{n+m-1}\left(\bot_X\,\wall\,G_Y(\vec v_{i \bmod m})\,\wall\,\right).\]

\begin{figure}[ht]
\begin{center}
\begin{tikzpicture}[x=0.015cm,y=0.015cm]
    \begin{scope}[yscale=-1]
        \draw[line width=0.75pt] (0,0) rectangle (494,306);

        \foreach \y/\h in {0/24,62/24,94/24,156/24,188/24,250/24,282/24} {
            \foreach \x/\w in {24/8,56/38,118/8,150/38,212/8,244/38,306/8,338/38,400/8,432/38} {
                \fill[black!80] (\x,\y) rectangle ++(\w,\h);
            }
        }
        \foreach \y/\h in {24/38,86/8,118/38,180/8,212/38,274/8} {
            \foreach \x/\w in {0/24,32/24,94/24,126/24,188/24,220/24,282/24,314/24,376/24,408/24,470/24} {
                \fill[black!80] (\x,\y) rectangle ++(\w,\h);
            }
        }

        \foreach \y in {24,118,212} {
            \foreach \x in {24,118,212,306,400} {
                \fill[green!18] (\x,\y) rectangle ++(8,38);
            }
        }
        \foreach \y in {86,180,274} {
            \foreach \x in {56,150,244,338,432} {
                \fill[green!18] (\x,\y) rectangle ++(38,8);
            }
        }

        \foreach \y in {86,180,274} {
            \foreach \x in {24,118,212,306,400} {
                \fill[orange!20] (\x,\y) rectangle ++(8,8);
            }
        }

        \foreach \y in {24,118,212} {
            \foreach \x in {56,150,244,338,432} {
                \fill[red!18] (\x,\y) rectangle ++(38,38);
            }
        }

        \foreach \x in {24,32,56,94,118,126,150,188,212,220,244,282,306,314,338,376,400,408,432,470} {
            \draw[gray!70, line width=0.45pt] (\x,0) -- (\x,306);
        }

        \foreach \y in {24,62,86,94,118,156,180,188,212,250,274,282} {
            \draw[gray!70, line width=0.45pt] (0,\y) -- (494,\y);
        }

        \draw[blue!20, line width=1.5pt, line cap=round, line join=round]
            (0,0) -- (24,24) -- (32,62) -- (56,86) -- (94,94) -- (118,118)
            -- (126,156) -- (150,180) -- (188,188) -- (212,212) -- (220,250)
            -- (244,274) -- (282,282) -- (306,306);
        \draw[blue!60!black, line width=0.7pt, line cap=round, line join=round]
            (0,0) -- (24,24) -- (32,62) -- (56,86) -- (94,94) -- (118,118)
            -- (126,156) -- (150,180) -- (188,188) -- (212,212) -- (220,250)
            -- (244,274) -- (282,282) -- (306,306);

        \draw[blue!20, line width=1.5pt, line cap=round, line join=round]
            (94,0) -- (118,24) -- (126,62) -- (150,86) -- (188,94) -- (212,118)
            -- (220,156) -- (244,180) -- (282,188) -- (306,212) -- (314,250)
            -- (338,274) -- (376,282) -- (400,306);
        \draw[blue!60!black, line width=0.7pt, line cap=round, line join=round]
            (94,0) -- (118,24) -- (126,62) -- (150,86) -- (188,94) -- (212,118)
            -- (220,156) -- (244,180) -- (282,188) -- (306,212) -- (314,250)
            -- (338,274) -- (376,282) -- (400,306);

        \draw[blue!20, line width=1.5pt, line cap=round, line join=round]
            (188,0) -- (212,24) -- (220,62) -- (244,86) -- (282,94) -- (306,118)
            -- (314,156) -- (338,180) -- (376,188) -- (400,212) -- (408,250)
            -- (432,274) -- (470,282) -- (494,306);
        \draw[blue!60!black, line width=0.7pt, line cap=round, line join=round]
            (188,0) -- (212,24) -- (220,62) -- (244,86) -- (282,94) -- (306,118)
            -- (314,156) -- (338,180) -- (376,188) -- (400,212) -- (408,250)
            -- (432,274) -- (470,282) -- (494,306);
    \end{scope}

    \node[font=\scriptsize, rotate=90, anchor=west, inner sep=0.5pt] at (28,5) {$\bot_X$};
    \node[font=\scriptsize, rotate=90, anchor=west, inner sep=0.5pt] at (75,5) {$G_Y(\vec v_0)$};
    \node[font=\scriptsize, rotate=90, anchor=west, inner sep=0.5pt] at (122,5) {$\bot_X$};
    \node[font=\scriptsize, rotate=90, anchor=west, inner sep=0.5pt] at (169,5) {$G_Y(\vec v_1)$};
    \node[font=\scriptsize, rotate=90, anchor=west, inner sep=0.5pt] at (216,5) {$\bot_X$};
    \node[font=\scriptsize, rotate=90, anchor=west, inner sep=0.5pt] at (263,5) {$G_Y(\vec v_0)$};
    \node[font=\scriptsize, rotate=90, anchor=west, inner sep=0.5pt] at (310,5) {$\bot_X$};
    \node[font=\scriptsize, rotate=90, anchor=west, inner sep=0.5pt] at (357,5) {$G_Y(\vec v_1)$};
    \node[font=\scriptsize, rotate=90, anchor=west, inner sep=0.5pt] at (404,5) {$\bot_X$};
    \node[font=\scriptsize, rotate=90, anchor=west, inner sep=0.5pt] at (451,5) {$G_Y(\vec v_0)$};

    \node[font=\scriptsize, anchor=east] at (-6,-43) {$G_X(\vec u_0)$};
    \node[font=\scriptsize, anchor=east] at (-6,-90) {$\bot_Y$};
    \node[font=\scriptsize, anchor=east] at (-6,-137) {$G_X(\vec u_1)$};
    \node[font=\scriptsize, anchor=east] at (-6,-184) {$\bot_Y$};
    \node[font=\scriptsize, anchor=east] at (-6,-231) {$G_X(\vec u_2)$};
    \node[font=\scriptsize, anchor=east] at (-6,-278) {$\bot_Y$};

\end{tikzpicture}
\end{center}
\caption{Schematic picture of $\AG(\hat X,\hat Y)$ for $n=3$ and $m=2$. White regions are cost-$0$ wall--wall blocks, called portals, and dark gray regions have uniform cost $1$ (wall--gadget interactions). Light green regions correspond to vector--narrow and narrow--vector interactions, light orange regions to narrow--narrow interactions, and light red regions to vector--vector interactions. The three blue polylines are the freeways, i.e., the $m+1$ zero-cost paths stretching the entire height of the alignment graph.}
\label{fig:hatx-haty-schematic}
\end{figure}

After shrinking, both $\hat X$ and $\hat Y$ become prefixes of the same infinite string $(\dcharw (\dchare \dchara \dcharb \dcharf)^d \dcharw (\dchare \dcharb \dchara \dcharf)^d)^\infty$, which leads to multiple freeways spanning the entire height of the alignment graph $\AG(\hat X, \hat Y)$.
Formally, these freeways are zero-cost block diagonals, meaning sequences of cost-$0$ blocks that advance by one run in each string; see \cref{fig:hatx-haty-schematic}.
The top-left corner lies on the leftmost freeway, the bottom-right corner lies on the rightmost one, and the alignments must change freeways at least $m$ times.

\begin{figure}[hbt]
\begin{center}
\begin{tikzpicture}[x=0.04cm,y=0.04cm]
    \begin{scope}[yscale=-1]

        \draw[line width=0.75pt] (0,0) rectangle (150,86);

        \foreach \y/\h in {0/24,62/24} {
            \foreach \x/\w in {24/8,56/38,118/8} {
                \fill[black!80] (\x,\y) rectangle ++(\w,\h);
            }
        }
        \foreach \y/\h in {24/38} {
            \foreach \x/\w in {0/24,32/24,94/24,126/24} {
                \fill[black!80] (\x,\y) rectangle ++(\w,\h);
            }
        }

        \fill[green!18] (24,24) rectangle ++(8,38);
        \fill[green!18] (118,24) rectangle ++(8,38);

        \fill[red!18] (56,24) rectangle ++(38,38);

        \foreach \x in {24,32,56,94,118,126} {
            \draw[gray!70, line width=0.45pt] (\x,0) -- (\x,86);
        }
        \foreach \y in {24,62} {
            \draw[gray!70, line width=0.45pt] (0,\y) -- (150,\y);
        }

        \draw[blue!20, line width=1.5pt, line cap=round, line join=round]
            (0,0) -- (24,24) -- (32,62) -- (56,86);
        \draw[blue!60!black, line width=0.7pt, line cap=round, line join=round]
            (0,0) -- (24,24) -- (32,62) -- (56,86);

        \draw[blue!20, line width=1.5pt, line cap=round, line join=round]
            (94,0) -- (118,24) -- (126,62) -- (150,86);
        \draw[blue!60!black, line width=0.7pt, line cap=round, line join=round]
            (94,0) -- (118,24) -- (126,62) -- (150,86);

        \draw[white, line width=2.8pt, line cap=round, line join=round]
            (12,12) -- (44,12) -- (106,74) -- (138,74);
        \draw[red!70!black, line width=1.15pt, line cap=round, line join=round]
            (12,12) -- node[font=\scriptsize\bfseries, text=red!70!black,above]{\contour{white}{$4d$}} (44,12) -- node[font=\scriptsize\bfseries, text=red!70!black,above,rotate=-45]{$2d+\vec u\!\cdot\!\vec v$} (106,74) -- node[font=\scriptsize\bfseries, text=red!70!black,above]{\contour{white}{$4d$}} (138,74);

        \fill[red!70!black] (12,12) circle (1.2);
        \fill[red!70!black] (138,74) circle (1.2);
    \end{scope}

    \node[font=\scriptsize, anchor=south] at (12,2) {$\wall$};
    \node[font=\scriptsize, anchor=south] at (28,2) {$\bot_X$};
    \node[font=\scriptsize, anchor=south] at (44,2) {$\wall$};
    \node[font=\scriptsize, anchor=south] at (75,2) {$G_Y(\vec v)$};
    \node[font=\scriptsize, anchor=south] at (106,2) {$\wall$};
    \node[font=\scriptsize, anchor=south] at (122,2) {$\bot_X$};
    \node[font=\scriptsize, anchor=south] at (138,2) {$\wall$};

    \node[font=\scriptsize, anchor=east] at (-2,-12) {$\wall$};
    \node[font=\scriptsize, anchor=east] at (-2,-43) {$G_X(\vec u)$};
    \node[font=\scriptsize, anchor=east] at (-2,-74) {$\wall$};
\end{tikzpicture}
\end{center}
\caption{The canonical way to switch from one freeway to the next, both depicted by blue polylines. The red path leaves the left freeway, crosses a narrow gadget for cost $4d$, then crosses one vector-gadget pair diagonally for cost $2d+\vec u\cdot \vec v$, and finally crosses another narrow gadget for cost $4d$ before joining the next freeway.}
\label{fig:highway-switch-schematic}
\end{figure}

We call the block at the intersection of two $\dcharw$-runs a \emph{portal}.
A canonical way to switch from a given freeway to the next one is depicted in \cref{fig:highway-switch-schematic}:
it first crosses a narrow gadget for cost $4d$, then it traverses one pair $G_X(\vec u_i),G_Y(\vec v_j)$ diagonally for cost $2d+\vec u_i\cdot \vec v_j$, and finally it crosses another narrow gadget for another $4d$.
Hence, each switch can be executed with a cost of exactly
\[
10d+\vec u_i\cdot \vec v_j.
\]

To turn this local picture into a correct global reduction, we first preprocess the OV instance.
We add one extra coordinate that is $0$ on every vector of $X$ and $1$ on every vector of $Y$, and we prepend and append $m$ dummy vectors $(0,\ldots,0,1)$ to $X$.
These dummy vectors have inner product $1$ with every vector of $Y$.
We apply the construction to the modified lists, now writing $d$ for their dimension and $n$ for the length of $X$.
As a result, all boundary freeway changes cost exactly $10d+1$, while an orthogonal pair, if it exists, gives one interior freeway change of cost $10d$.
This yields an alignment of total cost $(10d+1)m-1$ in the \Yes case.

The main work is to show that, in a \No instance, no cheaper alignment exists.
We extract $m$ successive crossing subpaths between consecutive zero diagonals and analyze each separately.
Each crossing's endpoints can be extended along the zero diagonals to portals without increasing the cost.
We number the walls from $0$ in each string and identify a portal by the indices of its two walls.
For a crossing of cost at most $10d$, consecutive portal visits cannot skip a wall: doing so would mismatch all $12d$ copies of $\dcharw$ in that wall.
Thus, each wall index changes by at most one between consecutive portal visits.
This lets us isolate a part of the crossing that makes two horizontal moves between portals, with only diagonal moves between them.
Each horizontal move must cross a narrow gadget, costing $4d$, because crossing a vector gadget would exceed the budget.
The alternating gadget types then force an intervening vector--vector comparison, costing at least $2d+1$ in a \No instance.
These three costs already sum to $10d+1$, a contradiction.
Every crossing therefore costs at least $10d+1$, giving the required total of $(10d+1)m$.
This proves \cref{thm:ov-to-dtw}.

\subparagraph*{From \Cref{thm:ov-to-dtw} to \Cref{thm:lb-dtw}}
For every fixed $\eps>0$, the reduction above already rules out $\Oh(n^{1-\eps}k)$-time algorithms for many choices of the threshold $k$, but \cref{thm:lb-dtw} is stronger: it rules out this running time for \emph{every} efficiently computable threshold sequence $(k_n)_{n=1}^\infty$ with $1\le k_n\le n/2$.
To obtain this stronger statement, we combine the reduction with batching and padding.

Fix an OV instance with $|X|=|Y|=s$ and dimension $d=\Theta(\log s)$, and choose a target length $n=\Theta(sd^2)$.
Assume for contradiction that there is an algorithm running in time $\Oh(n^{1-\eps}k_n)$ on length-$n$ instances with threshold $k_n$.
A simple input-reading bound implies that such an algorithm must take $\Omega(n)$ time even on very simple binary instances.
This is consistent with our constraint only when $k_n=\Omega(n^\eps)$.
In particular, $k_n$ is large enough for us to group $Y$ into batches of size at most
$r = \Theta(\min\{s,k_n/d\})$.
Let $Y_1,\ldots,Y_t$ be the resulting partition, where $t=\Theta(\lceil s/r\rceil)=\Oh(1+sd/k_n)$.

For each batch $Y_i$, we apply \cref{thm:ov-to-dtw} to $(X,Y_i)$ and obtain an equivalent DTW instance $(S_i,T_i,\kappa_i)$ with $\kappa_i\le k_n/2$ and $|S_i|,|T_i|<n/4$.
We then pad both strings with fresh symbols in two stages.
First, we prepend a common fresh symbol enough times to bring both strings to length $n-(k_n-\kappa_i)$ without changing their distance.
Second, we prepend distinct fresh symbols to add exactly $k_n-\kappa_i$ to the DTW and bring both strings to length $n$.
Thus, each batch yields an equivalent length-$n$ instance $(\widehat S_i,\widehat T_i,k_n)$.
Running the hypothetical algorithm on all $t$ batches decides the original OV instance.

Up to logarithmic factors for error amplification, the total running time becomes
$
\Ohtilde\bigl(t\cdot n^{1-\eps}k_n\bigr)
=
\Ohtilde\!\left(n^{1-\eps}(n+sd)\right)
$.
Since $n=\Theta(sd^2)$ and $d=\Theta(\log s)$, this simplifies to $\Ohtilde(s^{2-\eps})$, contradicting the OV Hypothesis.
This completes the proof of \cref{thm:lb-dtw}.

\subparagraph*{Comparison with Previous Lower Bounds}
Even for $k=\Theta(n)$, our hard instances have a very different structure compared to those constructed by Abboud, Backurs, and Vassilevska Williams~\cite{ABV14} and Bringmann and K\"unnemann~\cite{BK15}.
In particular, we do not rely on the hard edit-distance instances used in~\cite{ABV14}; instead, we isolate the feature that makes DTW strictly harder.
We also do not follow the recipe of~\cite{BK15} (designed to work for many sequence similarity measures), where, in the structured alignments, \emph{all} vector gadgets in the shorter string are aligned against vector gadgets in the longer string.
Most importantly, the necessity to implement cost-$0$ freeways imposes a structural constraint that, on the one hand, limits our freedom but, on the other hand, leads to a particularly simple and elegant construction.

\subsection{Lower Bound for \ref{prob:pmdtw} (Section~\ref{sec:pmlb})}

Our lower bound for \cref{prob:pmdtw} (\cref{thm:lb-pmdtw}) relies on the following reduction from \textsf{OV}.
\begin{restatable}{theorem}{thmovpm}\label{thm:ov-to-pmdtw}
Given an instance of $\OV$ with non-empty sets $X,Y\subseteq\{0,1\}^d$ for an integer $d\ge1$, one can in $\Oh((|X|+|Y|)d^2)$ time construct an equivalent instance $(P,T,k)$ of the unit-cost \ref{prob:pmdtw} problem over an alphabet of size 7, with $|P|=\Oh(|Y|d^2)$, $|T|=\Oh((|X|+|Y|)d^2)$, and an integer threshold $k=\Oh(d)$.
In this instance, the first $k+1$ characters of $P$ are equal.
\end{restatable}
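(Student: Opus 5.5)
We reuse the gadgets of the proof of \cref{thm:ov-to-dtw}, namely the coordinate gadgets $E_X,E_Y$, the vector gadgets $G_X,G_Y$, the narrow gadgets $\bot_X,\bot_Y$ and the walls $\wall$. The budget is now $k=\Oh(d)$, so there is room for only one ``switch'', not $m$ of them. The pattern's job is therefore to test \emph{one} vector of $X$ against all of $Y$ simultaneously, while the free choice of the starting position in the text selects which vector of $X$ is tested.

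\textbf{Construction.} Write $X=(\vec u_0,\dots,\vec u_{n-1})$ and $Y=(\vec v_0,\dots,\vec v_{m-1})$, and preprocess them as in \cref{thm:ov-to-dtw}: add one extra coordinate and pad $X$ with dummy vectors. The pattern is
\[P\coloneqq \hash^{k+1}\,\wall\,\bigodot_{j=0}^{m-1}\bigl(\bot_X\,\wall\,G_Y(\vec v_j)\,\wall\bigr)\,\dol^{k+1}.\]
It uses two fresh anchor symbols $\hash,\dol$, which brings the alphabet to $7$ and makes the first $k+1$ characters of $P$ equal. The text is built from the same pieces, $\wall\,G_X(\vec u_i)\,\wall\,\bot_Y\,\wall$, one group per $i$. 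Each group is flanked by single copies of $\hash$ and $\dol$. The flanking is arranged so that the only cheap way to match the anchor runs is to start at a $\hash$ and end at a $\dol$, and so that between them the shrunk text agrees with $\shrink(P)$ except at one place, where a single group $G_X(\vec u_i)$ sits.

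The intended alignment proceeds in three phases. It runs along a zero-cost freeway, as in \cref{fig:hatx-haty-schematic}. It then performs a single ``switch'' that compares $G_X(\vec u_i)$ against some $G_Y(\vec v_j)$ at cost $10d+\vec u_i\cdot\vec v_j$. Finally it rides a freeway again to the end. The threshold is $k\coloneqq 10d$, with suitable constant offsets so that the extra dummy coordinate makes every non-orthogonal switch cost at least $k+1$. To let the single text copy of $G_X(\vec u_i)$ meet every $G_Y(\vec v_j)$, the text repeats the narrow gadget $\bot_X$ on both sides of $G_X(\vec u_i)$ $m$ times. The pattern's blocks other than the switched one are then matched against shrunk-equal narrow material at cost $0$; this works because $\shrink(G_Y)=\bot_Y$ and $\shrink(G_X)=\bot_X$. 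This gives $|T|=\Oh((|X|+|Y|)d^2)$ only if the repeated narrow material is shared between consecutive $i$. I would therefore lay the text out as one long periodic stretch of narrow gadgets, interleaved with the groups $G_X(\vec u_i)$ and the anchors, in the style of $\hat Y$ with the roles of $X$ and $Y$ swapped.

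\textbf{Correctness.} Completeness is the explicit alignment just described. For soundness, take any substring alignment of cost at most $k$. First, the $\hash^{k+1}$ prefix and the $\dol^{k+1}$ suffix of $P$ force the substring to start at a text $\hash$ and end at a text $\dol$; otherwise all $k+1$ copies mismatch. Second, walls of length $12d>k$ cannot be skipped, so the portal analysis from the proof of \cref{thm:ov-to-dtw} applies verbatim to the part of the alignment between the anchors. That analysis shows the alignment must leave a freeway to reach the $G_X$ gadget and must cross two narrow gadgets and one vector--vector block. Hence its cost is at least $10d+\min_j \vec u_i\cdot\vec v_j$, which exceeds $k$ unless some pair is orthogonal.

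\textbf{Main obstacle.} The hardest part is the soundness argument in the pattern-matching setting. The free endpoints mean the alignment is not forced to start on a prescribed freeway, so I must rule out cheap substrings that avoid $G_X(\vec u_i)$ entirely, for instance by aligning the pattern's $G_Y$ gadgets to text narrow gadgets only. I would handle this by designing the anchor placement so that every $\hash\dots\dol$ window contains exactly one $G_X$ group, and so that the freeway reached after $\hash$ is offset from the freeway reaching $\dol$ by exactly one step. This forces exactly one switch, and the crossing lemma from \cref{thm:ov-to-dtw} then gives the $10d+1$ lower bound in \No instances.
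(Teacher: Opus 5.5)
There is a genuine gap, and it lies in the narrow padding of the text. Since $\shrink(G_X(\vec u_i))=\bot_X$, the zero-cost freeways cannot distinguish the single real gadget $G_X(\vec u_i)$ from the copies of $\bot_X$ around it. So nothing forces the one switch created by your anchors to happen at $G_X(\vec u_i)$; it can happen at any narrow $X$-slot of the window. There it costs at most $4d+\DTW(G_Y(\vec v_j),\bot_X)+4d\le 10d$. Indeed, splitting coordinatewise and using \cref{obs:coordination-dist} with $x=0$ gives $\DTW(\dcharb^{1+y}\dchara,\dchara\dcharb)=2$ per coordinate, while the $\dchare$- and $\dcharf$-runs match for free. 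In effect, $\bot_X$ behaves like the gadget of the all-zero vector. That vector stays orthogonal to every $\vec v_j$ even after you add the dummy coordinate, so the narrow switch is never more expensive than an orthogonal one, and every instance acquires a $k$-occurrence. This is exactly where ``the portal analysis applies verbatim'' breaks. Item~2 of \cref{lem:dtw-portal-transitions} needs both compared gadgets to be genuine vector gadgets of a \No instance. Item~1 needs the gadgets of one type to be too long to cross. In your padding, both kinds of text slots are narrow and can each be crossed for $4d$.

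This cannot be repaired within your anchor framework. If you keep one group per window, $G_X(\vec u_i)$ must occur at $\Theta(|Y|)$ relative positions of windows containing no other group. Then consecutive groups are $\Omega(|Y|)$ narrow units apart, and $|T|=\Omega(|X||Y|d)$, violating the size bound. If instead you put a real gadget into every $X$-slot, as a $\hat Y$-style layout with roles swapped would, each window spans $\Theta(|Y|)$ groups. Every anchor strictly inside the window is then mismatched, since $\hash$ and $\dol$ do not occur in the body of $P$, so the cost grows with $|Y|$ rather than being $\Oh(d)$.

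The paper takes the second route but controls endpoints with walls instead of anchors. Its text is $\hat T=\bigodot_i(\wall_i\,G_X(\vec u_i)\,\wall^{\dol}_{\bigz}\,\bot_Y)$, so every switch, wherever it occurs, compares a genuine pair. Small and big walls share the run compression $\dcharw\dol_1\dol_2\dcharw$, so freeways through them cost $0$. The reversed pattern walls $\dcharw^{2k}\dol_2^{2k}\dol_1\dcharw^{2k}$ let both endpoints be moved, at no extra cost, into small text walls, whose indices are divisible by $6$ (\cref{lem:pmlb-start-end-portal}). \Cref{lem:pmlb-ov-variant} supplies three cyclic shifts of $X$ and all-ones sentinels, which make every orthogonal pair reachable, and it ensures $|Y'|\equiv1\pmod3$. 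Consequently the two endpoint wall offsets are distinct even numbers, and \cref{lem:pmlb-wall-offset} charges at least $10d+9>k=10d+8$ for changing offset in a \No instance. The ninefold coordinate repetition makes the inner products, which are at least $9$, outweigh the constant overhead of the $\dol$-mismatches.
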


Compared to \cref{thm:ov-to-dtw}, the main difference is that the threshold is only $\Oh(d)$, which can be as small as logarithmic in $|T|$, while $|Y|$ now contributes to the pattern length rather than to the threshold.

The starting point for our reduction is the construction behind \cref{thm:ov-to-dtw}.
However, \ref{prob:pmdtw} asks for the minimum-cost path from anywhere in the top row to anywhere in the bottom row of the alignment graph.
Hence, if we used the DTW construction unchanged, then every freeway would already yield a cost-$0$ occurrence, so the resulting instance would always be a \Yes instance of \cref{prob:pmdtw}.

For the construction below, $n,m,d$ denote the list lengths and dimension after the preprocessing described below.
The pattern $\hat P$ alternates narrow gadgets $\bot_X$ with vector gadgets $G_Y(\vec v_j)$, while the text $\hat T$ alternates $G_X(\vec u_i)$ with $\bot_Y$; walls separate consecutive gadgets.
At an idealized level, we would like to force every affordable path to start at the top of one freeway and end at the bottom of an adjacent freeway.
Then, an orthogonal pair would yield one such path of cost $10d$, whereas in a \No instance every such path would cost at least $10d+1$.
Enforcing this exact behavior is unrealistic, but we can come close.

To implement this idea, we set $k\coloneqq10d+8$ and modify the wall gadgets by inserting two fresh symbols $\dol_1,\dol_2$ in the middle.
Specifically, walls are now sufficiently long $\mathtt w$-runs surrounding a middle part with fresh characters: the small wall $W_{\mathrm s}$ has middle $\dol_1\dol_2$, the large wall $W_{\mathrm L}$ has middle $\dol_1^{2k}\dol_2^{2k}$, and the reversed wall $W_{\mathrm{rev}}$ has middle $\dol_2^{2k}\dol_1$.
This way, $\DTW(W_{\mathrm{rev}},W_{\mathrm L})=2k+1$, $\DTW(W_{\mathrm{rev}},W_{\mathrm s})=2$, and $\DTW(W_{\mathrm s},W_{\mathrm L})=0$.
The wall placement is asymmetric; see \cref{fig:pmdtw-wall-layout}.

With walls indexed from $0$ in each string, every cheap occurrence can be adjusted, without increasing its cost, to start and end inside small text walls, whose indices are divisible by $6$.
When $m\equiv1\pmod3$, the two endpoint portals have distinct even \emph{wall offsets}, where the wall offset is the pattern wall index minus the text wall index.
This restriction leaves only one third of all subgraphs $\AG(G_Y(\vec v),G_X(\vec u))$ within reach.

\begin{figure}[t!]
\input{fig_pmdtw_wall_layout}
\caption{
Schematic illustration of the wall layout in the \ref{prob:pmdtw} reduction, shown for $m=4$ and $n=7$. In the pattern $\hat P$, the boundary walls are reversed walls, and every internal wall is a small wall.
In the text $\hat T$, the wall immediately after each $G_X(\vec u_i)$ is always a large wall, while the wall immediately before $G_X(\vec u_i)$ is small when $i\bmod 3 = 0$ and large otherwise.
The blue polylines are the freeways that affordable alignments may use.
Note that these freeways do not span the entire height of the alignment graph since aligning a reversed wall against a large wall is expensive; even aligning a reversed wall against a small wall costs $2$ units, which we indicate by a small hollow blue circle.
The red polylines show canonical switches between freeways; compared to \cref{fig:highway-switch-schematic}, the switch cost increases by $4$ units since each switch aligns the middle parts $\dol_1\dol_2$ of two small walls in $\hat P$ against $\dcharw$ characters in $\hat T$.}
\label{fig:pmdtw-wall-layout}
\end{figure}

To ensure that a \Yes instance of OV yields a path of cost $k$, we use three cyclic shifts of a suitably padded $X$ and, additionally, pad $X$ with $m$ sentinel $(1,1,\ldots,1)$ vectors (which are not orthogonal to any non-zero vector in $Y$) both at the beginning and at the end.
We also pad $Y$ to place its original vectors in the interior and ensure $m\equiv1\pmod3$.
This way, every orthogonal vector pair $(\vec u,\vec v)\in X\times Y$ corresponds to a subgraph $\AG(G_Y(\vec{v}),G_X(\vec{u}))$ that can be traversed using an affordable path.

For \No instances, the proof follows the same high-level plan as for \cref{thm:ov-to-dtw}, but several new technical issues arise.
Among other issues, walls now contain two $\dcharw$-runs instead of one, and a sought path need not start and end within portals.
To make these complications manageable, we repeat every coordinate nine times, which lets us work with coarser cost estimates while preserving the equivalence.

\subparagraph*{From \Cref{thm:ov-to-pmdtw} to \Cref{thm:lb-pmdtw}}
\Cref{thm:lb-pmdtw} strengthens \cref{thm:ov-to-pmdtw} to arbitrary threshold sequences $(k_n)$ and pattern lengths $(m_n)$.
Here the time complexity depends on the pattern length rather than the threshold.
Still, a simple lower bound using Yao's principle implies that an algorithm running in time $\Oh(n^{1-\eps}m_n)$ is only possible when $m_n=\Omega(n^\eps)$; otherwise, its running time would be sublinear in $n$.

We therefore start from an OV instance with sets of size $s$, choose a target text length $n=\Theta(sd^2)$, and partition $Y$ into groups of size roughly $r = \Theta(m_n/d^2)$.
Applying \cref{thm:ov-to-pmdtw} to each group yields \ref{prob:pmdtw} instances $(P_i,T_i,\kappa_i)$ with threshold $\kappa_i=\Theta(d)$, text length at most $n$, and pattern length well below $m_n$.
At this point, we use three simple padding facts.
Prepending a fresh symbol to the pattern increases the minimum distance by exactly the number of added copies, repeating the first text symbol preserves the minimum distance, and prepending more copies of the first pattern symbol preserves the answer at threshold $\kappa_i$ because the first $\kappa_i+1$ characters of $P_i$ are identical.
Together, these allow us to pad the text to length exactly $n$, pad the pattern to length exactly $m_n$, and raise the threshold from $\kappa_i$ to the target value $k_n$, as formalized in \cref{lem:pmdtw-padding}.
Running the hypothetical \ref{prob:pmdtw} algorithm on all groups would solve OV in truly subquadratic time.

\subsection{Upper Bound (\cref{sec:UB})}

The high-level structure of the algorithm behind \cref{thm:main-UB} follows the blueprint of the state-of-the-art algorithm for weighted edit distance~\cite{GK25}, which itself builds on~\cite{DGHKS23,CKW23}.
Its main ingredient is a recursive divide-and-conquer procedure that, given a cost-$\Oh(k)$ alignment $\A : X\to Y$, recovers an optimal alignment $\A^*$ of the same strings.
In the edit-distance setting, the bottleneck is the \emph{conquer} phase, which solves the original task under an extra assumption that $X$ is the concatenation of $\Oh(k)$ strings, each of period $\Oh(k)$; this automatically implies an analogous structure for $Y$.
Such instances can be solved in $\Oh(\poly(k))$ time because, as observed already in \cite{DGHKS23}, once a period is repeated $\omega(k)$ times, inserting or deleting one more repetition in both strings does not materially change alignments of cost $\Oh(k)$.

For DTW, the same high-level plan still works, albeit with a larger running time because we cannot reduce the number of period repetitions.
The basic structural fact is that cheap alignments must intersect often.
More precisely, if two cost-$\Oh(k)$ alignments map the same fragment of $X$ to nearby fragments of $Y$, then they cannot stay disjoint for long.
Away from the $\Oh(k)$ runs that contribute mismatches, both paths move through cost-$0$ blocks.
If they keep visiting distinct cost-$0$ blocks for many consecutive runs, then they trace two parallel diagonals of cost-$0$ blocks.
This is possible only when the corresponding substring of $\shrink(X)$ has period $\Oh(k)$, because two cost-$\Oh(k)$ alignments cannot drift farther apart than $\Oh(k)$.
By definition of $L_p(X)$ as the maximum length of a substring of $\shrink(X)$ with period at most $p$, such a disjoint stretch has length at most $L_{\Oh(k)}(X)$.
Combining these periodic stretches with the $\Oh(k)$ runs involved in mismatches yields an overall separation bound of $\Oh(kL_{\Oh(k)}(X))$, measured in the number of runs.

Suppose we are given a cost-$\Oh(k)$ alignment $\A:X\to Y$.
We cut out central substrings $X_c$ and $Y_c$ of run-length size $\Oh(kL_{\Oh(k)}(X))$ around the middle of $X$ so that $\A$ aligns $X_c$ with $Y_c$ and compute an optimal alignment $\A_c^*:X_c\to Y_c$ for this middle instance (see the next paragraph).
Combining $\A_c^*$ with suitable prefix and suffix paths derived from $\A$ gives a cost-$\Oh(k)$ alignment $\bar{\A}$ of the entire instance.
We select a cost-$0$ block $B$ visited by $\A_c^*$ near the middle of $X$.
Using the frequent-intersection property, we compare $\A_c^*$ with an arbitrary optimal alignment of $X$ and $Y$: they have common cost-$0$ blocks on both sides of $B$.
Replacing the subpath of the optimal alignment between these two common blocks by the corresponding subpath of $\A_c^*$ shows that there exists an optimal alignment that visits $B$.
We therefore split $\bar{\A}$ at $B$ into two cost-$\Oh(k)$ alignments and recurse on the two corresponding instances, which overlap at $B$.
The optimal alignments returned by the two recursive calls can be combined through the zero-cost block $B$; since some global optimum visits $B$, their combination is an optimal alignment of $X$ and $Y$.
This is the DTW analogue of the edit-distance divide-and-conquer step.

The remaining issue is therefore how to solve this middle instance $(X_c,Y_c)$.
Here the analogy between edit distance and DTW breaks: unlike in edit distance, we cannot hope to shorten long periodic regions by deleting excess repetitions.
At first sight, this seems to force the $\Oh(|X_c|k)$-time algorithm of Kuszmaul~\cite{DBLP:conf/icalp/Kuszmaul19}, which we proved to be near-optimal.
However, we can improve this to $\Ohtilde(|\shrink(X_c)|k)=\Ohtilde(k^2 L_{\Oh(k)}(X))$ using the recent DTW algorithm of Boneh, Golan, Mozes, and Weimann~\cite{BGMW24}, which exploits the run-length encoding of the input.
The only caveat is that their algorithm does not recover a witness alignment, so we show how to do this with a short argument that treats most of the machinery of~\cite{BGMW24} as a black box.
Since the recursive calls always work on substrings of the original strings $S$ and $T$ and the measure $L_p(X)$ is monotone, each middle instance has only $\Oh(kL_{\Oh(k)}(S))$ runs, which makes this subroutine fast enough for the conquer step.

With the main divide-and-conquer procedure adapted, the remaining task is to obtain an $\Oh(k)$-cost DTW alignment of the original strings $S$ and $T$.
This is easiest under the discrete metric and the additional assumption that every run in $S$ and $T$ has exponent $1$.
In that regime, we prove that the edit distance $\ED(\shrink(S),\shrink(T))$ is a constant-factor approximation of $\DTW(S,T)$, and that an optimal edit-distance alignment translates into a DTW alignment of no greater cost.
In general, the approximation ratio grows linearly with the largest exponent in $\rle(S)$ and $\rle(T)$.
Instead of reducing to edit distance in one shot, we therefore proceed gradually.
Let $S'$ and $T'$ be obtained from $S$ and $T$ by replacing each run $a^e$ with $a^{\ceil{e/2}}$.
Then
\[
\DTW(S',T') \le \DTW(S,T) \le 2\DTW(S',T'),
\]
and witness alignments translate naturally in both directions.
This yields a second recursion whose successive levels halve the largest run exponent.
At each level, to compute $\DTW(S,T)$, we first recurse on $(S',T')$, then translate the resulting optimal alignment of $S'$ and $T'$ into an approximate alignment of $S$ and $T$, and finally refine it to an optimal alignment using the divide-and-conquer procedure described above.
Since every run exponent is at most $n$, this recursion has depth
$O(\log n)$.

For weighted DTW, we can proceed  with the mismatch-cost function as follows: cap all costs $\delta(a,b)$ at $k+1$, replace $\delta$ by $\delta'(a,b)\coloneqq \max\left\{\delta(a,b)/2,1\right\}$ (for all unequal symbols $a\ne b$), and repeat until all positive costs become $1$.
This takes $\Oh(\log k)$ levels, just as in~\cite{GK25} for large integer weights.
To avoid paying separately for two top-level recursions, one can even reduce run exponents and mismatch costs simultaneously.

\section{Preliminaries}\label{lem:prelim}

\para{Computational model.}\label{sec:computational-model}
We use a word RAM with word size $w=\Theta(\log n)$, where $n$ denotes the input size.
Each input character is represented by one word and stored in a separate memory cell; accessing and comparing characters takes constant time.
The input weights and threshold are nonnegative integers represented using $\Oh(w)$ bits, and a query to the mismatch-cost function $\delta$ takes constant time.

For the upper bound with real weights and thresholds, we use the real RAM of Erickson, van der Hoog, and Miltzow~\cite[Section~6.1]{EHM24}, with the same word size and character representation.
In addition to word registers, this model has real registers supporting exact addition, subtraction, multiplication, division, and comparison in constant time.
Words may be converted to reals, but the model provides neither conversion of arbitrary reals to integers nor access to their binary representations.
Bounded rounding operations needed by our algorithm are implemented using comparisons, as explained in Section~\ref{sec:UB}.

All running-time bounds are worst-case bounds, including over the random choices of randomized algorithms.
A bounded-error randomized algorithm must return the correct answer with probability at least $2/3$ on every input.
The lower bounds are in the word RAM model; the real RAM extension concerns the upper bound.

\para{Dynamic time warping.}

A mismatch-cost function $\delta:\Sigma^2\to\R_{\ge 0}$ is \emph{normalized} if $\delta(\sigma,\sigma)=0$ for every $\sigma\in\Sigma$ and $\delta(\sigma,\sigma')\ge 1$ for every $\sigma\ne\sigma'$.
Throughout the paper, we assume that mismatch-cost functions are normalized.
This normalization ensures that an alignment of cost at most $k$ contains at most $\lfloor k\rfloor$ mismatches.
In particular, every alignment has cost $0$ or at least $1$.

For two non-empty strings $S$ and $T$, and a mismatch-cost function $\delta$, we define the corresponding alignment graph $\AG_\delta(S,T)$, which is a directed vertex-weighted graph, as follows.
The set of vertices is $[|S|]\times [|T|]$.
From every vertex $(i,j)$ there are (at most) three outgoing edges to $(i+1,j)$, $(i,j+1)$, and $(i+1,j+1)$, whenever these target vertices exist.
Every vertex $v=(i,j)$ has cost $c(v)\coloneqq\delta(S[i],T[j])$.
The cost of a directed path $\A$ in $\AG_\delta(S,T)$ is $\cost(\A)\coloneqq\sum_{v\in\A}c(v)$, counting each visited vertex once, including both endpoints.
We allow paths consisting of a single vertex $v$, whose cost is $c(v)$.
For vertices $u$ and $v$, we denote by $\dist(u,v)$ the minimum cost of a directed path from $u$ to $v$, or $\infty$ if $v$ is unreachable from $u$.
Every path $\A$ from $(1,1)$ to $(|S|,|T|)$ in $\AG_\delta(S,T)$ is called an \emph{alignment} from $S$ to $T$.
We define $\DTW_\delta(S,T)$ to be the minimum cost over all alignments in $\AG_\delta(S,T)$.

A particularly important special case is the \emph{unit-cost function} $\unit$, defined by
\[
\unit(\sigma,\sigma')\coloneqq
\begin{cases}
0 & \text{if }\sigma=\sigma',\\
1 & \text{if }\sigma\ne\sigma',
\end{cases}
\]
which corresponds to the classical \emph{unweighted} DTW problem.
We write $\DTW(S,T)\coloneqq\DTW_\unit(S,T)$ and $\AG(S,T)\coloneqq\AG_\unit(S,T)$ for unit costs and retain the cost-function subscript otherwise.
All our lower bounds (in \cref{sec:lb,sec:pmlb}) hold for unit costs.

We also view the alignment graph as embedded in the plane with the $i$-axis directed downward and the $j$-axis directed to the right.
In this orientation, $(i,j)$ is above $(i+1,j)$ and to the left of $(i,j+1)$.

We make the following observations:

\begin{observation}\label{obs:sub_DTW}
    Let $(i_1,j_1),(i_2,j_2)$ be two vertices in $\AG_\delta(S,T)$ such that $(i_2,j_2)$ is reachable from $(i_1,j_1)$.
    Then, $\dist((i_1,j_1),(i_2,j_2))=\DTW_\delta(S\fragment{i_1}{i_2},T\fragment{j_1}{j_2})$.
\lipicsEnd
\end{observation}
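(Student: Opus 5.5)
The plan is to set up an explicit correspondence between paths in the two graphs. The subgraph of $\AG_\delta(S,T)$ induced on the rectangle $\fragment{i_1}{i_2}\times\fragment{j_1}{j_2}$ is isomorphic to $\AG_\delta(S\fragment{i_1}{i_2},T\fragment{j_1}{j_2})$. The isomorphism is the shift $(i,j)\mapsto(i-i_1+1,\,j-j_1+1)$.

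First I will check that this shift preserves everything that matters.

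\emph{Costs.} The cost of $(i,j)$ is $\delta(S[i],T[j])$. This equals the cost of the shifted vertex in the smaller graph, because $S\fragment{i_1}{i_2}[i-i_1+1]=S[i]$, and likewise for $T$.

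\emph{Edges.} The three edge types (vertical, horizontal, diagonal) are defined by the same coordinate increments in both graphs. An edge is present exactly when both endpoints lie in the vertex set. So edges inside the rectangle correspond bijectively to edges of the smaller graph.

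\emph{Endpoints.} Reachability of $(i_2,j_2)$ from $(i_1,j_1)$ means $i_1\le i_2$ and $j_1\le j_2$. Hence the substrings are non-empty and the smaller graph is well defined. Its corners $(1,1)$ and $(i_2-i_1+1,\,j_2-j_1+1)$ are the images of $(i_1,j_1)$ and $(i_2,j_2)$.

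The key step is showing that every path from $(i_1,j_1)$ to $(i_2,j_2)$ in $\AG_\delta(S,T)$ stays inside the rectangle. Every edge weakly increases both coordinates. So along any such path we have $i_1\le i\le i_2$ and $j_1\le j\le j_2$ at every vertex.

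It follows that these paths are exactly the images, under the inverse shift, of alignments of $S\fragment{i_1}{i_2}$ with $T\fragment{j_1}{j_2}$, and corresponding paths have equal cost (each vertex counted once). Taking minima on both sides gives
\[\dist((i_1,j_1),(i_2,j_2))=\DTW_\delta(S\fragment{i_1}{i_2},T\fragment{j_1}{j_2}).\]
The minimum is finite because the path set is non-empty by reachability.

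The whole argument is bookkeeping. The only point needing care is the monotonicity argument confining paths to the rectangle, and it is immediate from the edge definitions.
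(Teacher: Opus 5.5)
Your proof is correct. The paper states this as an observation without proof, and your argument is exactly the routine verification it leaves implicit: the shift $(i,j)\mapsto(i-i_1+1,\,j-j_1+1)$ is a cost-preserving isomorphism onto the alignment graph of the two substrings, and monotonicity confines every path from $(i_1,j_1)$ to $(i_2,j_2)$ to the rectangle.
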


\begin{observation}\label{obs:split_by_A*}
    If a shortest path $\A^*$ from $(1,1)$ to $(|S|,|T|)$ visits some vertex $(i,j)$ then \[\DTW_\delta(S,T)=\DTW_\delta(S\fragment{1}{i},T\fragment{1}{j})+\DTW_\delta(S\fragment{i}{|S|},T\fragment{j}{|T|})-\delta(S[i],T[j]).\]
\lipicsEnd
\end{observation}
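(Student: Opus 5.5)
The statement is \cref{obs:split_by_A*}: if an optimal alignment $\A^*$ visits $(i,j)$, the DTW value splits additively at $(i,j)$, with the cost of that shared vertex subtracted once. The plan is to prove the two inequalities separately, using \cref{obs:sub_DTW} to identify subpath costs with DTW values of substrings.

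For the inequality "$\ge$", split $\A^*$ at the vertex $(i,j)$. This gives a prefix path $\A_1$ from $(1,1)$ to $(i,j)$ and a suffix path $\A_2$ from $(i,j)$ to $(|S|,|T|)$. The two paths share exactly the vertex $(i,j)$, because paths are monotone and visit each vertex once. Hence
\[\cost(\A^*)=\cost(\A_1)+\cost(\A_2)-\delta(S[i],T[j]).\]
Since $(i,j)$ is reachable from $(1,1)$, \cref{obs:sub_DTW} gives $\cost(\A_1)\ge\dist((1,1),(i,j))=\DTW_\delta(S\fragment{1}{i},T\fragment{1}{j})$. By the same observation, $\cost(\A_2)\ge\DTW_\delta(S\fragment{i}{|S|},T\fragment{j}{|T|})$. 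Substituting, $\DTW_\delta(S,T)=\cost(\A^*)$ is at least the right-hand side.

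For the inequality "$\le$", take optimal paths realizing $\dist((1,1),(i,j))$ and $\dist((i,j),(|S|,|T|))$; these distances are the two substring DTW values by \cref{obs:sub_DTW}. Concatenate them at $(i,j)$. The result is an alignment from $(1,1)$ to $(|S|,|T|)$. Its cost is the sum of the two path costs minus $\delta(S[i],T[j])$, since $(i,j)$ is counted once. Therefore $\DTW_\delta(S,T)$ is at most the right-hand side.

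Combining the two inequalities gives equality. The only step needing care is the vertex-counting convention, namely that the shared vertex is subtracted exactly once. There is no real obstacle.
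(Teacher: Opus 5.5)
Your proof is correct: you split $\A^*$ at $(i,j)$ to get ``$\ge$'', concatenate two optimal subpaths to get ``$\le$'', use \cref{obs:sub_DTW} to identify distances with substring DTW values, and count the shared vertex once. The paper states this observation without proof, and your argument is the routine justification it leaves implicit; you could shorten it by noting that subpaths of a shortest path are themselves shortest, which gives the equality directly.
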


\begin{fact}\label{fact:unary_DTW}
    Let $S=\sigma^{|S|}$ for some $\sigma\in\Sigma$ and let $T\in\Sigma^*\sigma \Sigma^*$.
    Then, $\DTW(S,T)=|\{i\mid T[i]\ne \sigma\}|$.
\lipicsEnd
\end{fact}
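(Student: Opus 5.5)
We prove \cref{fact:unary_DTW} by matching an upper bound with a lower bound, working in the alignment graph $\AG(S,T)$. Since $S=\sigma^{|S|}$, the cost of a vertex $(i,j)$ depends only on $j$: it is $0$ if $T[j]=\sigma$ and $1$ otherwise. Let $N\coloneqq|\{j\mid T[j]\ne\sigma\}|$.

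\emph{Lower bound.} Every alignment goes from $(1,1)$ to $(|S|,|T|)$ using steps that increase $j$ by at most one. It therefore visits, for each column $j\in\fragment{1}{|T|}$, at least one vertex $(i,j)$. Every column $j$ with $T[j]\ne\sigma$ thus contributes at least $1$ to the cost, so every alignment has cost at least $N$.

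\emph{Upper bound.} We need an alignment that visits exactly one vertex in each column $j$ with $T[j]\ne\sigma$. Because $T$ contains at least one occurrence of $\sigma$, we can fix a position $j_0$ with $T[j_0]=\sigma$. The path is then as follows:
\begin{enumerate}
\item move horizontally along row $1$ from $(1,1)$ to $(1,j_0)$;
\item move vertically in column $j_0$ down to $(|S|,j_0)$;
\item move horizontally along row $|S|$ to $(|S|,|T|)$.
\end{enumerate}
All the extra vertical vertices lie in column $j_0$, whose vertices cost $0$. Every other column is visited exactly once. The total cost is therefore exactly $N$, so $\DTW(S,T)=N$.

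There is no real obstacle here. The only point needing care is the role of the hypothesis $T\in\Sigma^*\sigma\Sigma^*$. Without it, the vertical moves would have to occur in a column of cost $1$ whenever $|S|>1$. The existence of $j_0$ is exactly what makes the upper bound match the lower bound.
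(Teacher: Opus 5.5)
Your proof is correct. The paper states this fact without proof, and your column-counting lower bound together with the horizontal--vertical--horizontal path through a $\sigma$-column is exactly the routine verification it leaves to the reader. One minor aside: your closing remark overstates the role of the hypothesis, which is needed only when $|S|>|T|$, since otherwise diagonal steps can absorb the extra rows.
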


\begin{fact}\label{fact:dtw_partition}
    For any four strings $A_1,A_2,B_1,B_2$ we have $\DTW_\delta(A_1\cdot A_2,B_1\cdot B_2)\le \DTW_\delta(A_1,B_1)+\DTW_\delta(A_2,B_2)$.
\lipicsEnd
\end{fact}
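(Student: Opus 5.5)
We must prove Fact \ref{fact:dtw_partition}: $\DTW_\delta(A_1 A_2,B_1 B_2)\le \DTW_\delta(A_1,B_1)+\DTW_\delta(A_2,B_2)$. The statement implicitly requires all four strings to be non-empty, since DTW is defined only for non-empty strings.

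The plan is to concatenate optimal alignments in the alignment graph. Let $\A_1$ be an optimal alignment in $\AG_\delta(A_1,B_1)$, going from $(1,1)$ to $(|A_1|,|B_1|)$. Let $\A_2$ be an optimal alignment in $\AG_\delta(A_2,B_2)$.

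Inside $\AG_\delta(A_1A_2,B_1B_2)$, the vertex set $[|A_1|]\times[|B_1|]$ induces a copy of $\AG_\delta(A_1,B_1)$ with identical vertex costs, because $(A_1A_2)[i]=A_1[i]$ and $(B_1B_2)[j]=B_1[j]$ in that range. Likewise, shifting by $(|A_1|,|B_1|)$ embeds $\AG_\delta(A_2,B_2)$ with identical costs. So $\A_1$ and the shifted copy of $\A_2$ are paths in the big graph with the same costs as before.

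The first path ends at $(|A_1|,|B_1|)$ and the shifted second path starts at $(|A_1|+1,|B_1|+1)$. These two vertices are joined by a diagonal edge. Concatenating gives a path from $(1,1)$ to $(|A_1A_2|,|B_1B_2|)$. The two vertex sets are disjoint, so its cost is exactly $\cost(\A_1)+\cost(\A_2)$. Minimality of $\DTW_\delta$ then gives the inequality.

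Equivalently, in terms of expansions: take same-length expansions $A_1',B_1'$ and $A_2',B_2'$ realizing the optimal values. Then $A_1'A_2'$ and $B_1'B_2'$ are same-length expansions of $A_1A_2$ and $B_1B_2$, and Hamming distance is additive over the concatenation.

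There is no real obstacle here. The only point needing care is that the diagonal edge exists, which holds precisely because all four strings are non-empty. This is also why the bound can be strict: an optimal alignment of the concatenation need not pass through that corner.
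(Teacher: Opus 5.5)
Your proof is correct. Concatenating an optimal alignment of $(A_1,B_1)$ with the shifted optimal alignment of $(A_2,B_2)$, via the diagonal edge from $(|A_1|,|B_1|)$ to $(|A_1|+1,|B_1|+1)$, is exactly the argument the paper leaves implicit: it states this as a fact without proof and invokes it in this concatenation form, e.g., in \cref{lem:sig_antsig_sig}. Your non-emptiness caveat is also right, since $\DTW_\delta$ is defined only for non-empty strings.
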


\begin{lemma}\label{lem:sig_antsig_sig}
    Let $\hat X,\hat Y\in (\Sigma\setminus\{\sigma,\alpha\})^+$ and let $X\coloneqq \sigma^a\hat X\alpha^b$ and $Y\coloneqq\sigma^c\hat Y\alpha^d$ for some positive integers $a,b,c,d$.
    Then, $\DTW(X,Y)=\DTW(\hat X,\hat Y)$.
    The symbols $\sigma$ and $\alpha$ need not be distinct.
\end{lemma}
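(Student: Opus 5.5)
We prove the two inequalities separately, working in the alignment graph $\AG(X,Y)$ with unit costs.

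\emph{Upper bound.} We have $\DTW(X,Y)\le\DTW(\hat X,\hat Y)$ by \cref{fact:dtw_partition}. Split $X=\sigma^a\cdot(\hat X\alpha^b)$ and $Y=\sigma^c\cdot(\hat Y\alpha^d)$. The prefix pair has $\DTW(\sigma^a,\sigma^c)=0$. Apply the fact once more to $\hat X\cdot\alpha^b$ and $\hat Y\cdot\alpha^d$, using $\DTW(\alpha^b,\alpha^d)=0$. This gives $\DTW(X,Y)\le 0+\DTW(\hat X,\hat Y)+0$.

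\emph{Lower bound.} Fix an optimal alignment $\A$ of $X$ and $Y$. We transform it into an alignment of $\hat X$ and $\hat Y$ of no greater cost. Index rows so that $\hat X$ occupies rows $\fragment{a+1}{a+|\hat X|}$ and $\hat Y$ occupies columns $\fragment{c+1}{c+|\hat Y|}$; call this the inner rectangle $R$. Every vertex of $R$ costs the same in $\AG(\hat X,\hat Y)$ as in $\AG(X,Y)$.

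The main step is to show that the portion of $\A$ inside $R$ can be completed to a path from the top-left to the bottom-right corner of $R$ without extra cost. We project $\A$ onto $R$ by clamping: map each vertex $(i,j)$ to $(\min(\max(i,a+1),a+|\hat X|),\ \min(\max(j,c+1),c+|\hat Y|))$. Clamping is monotone, so consecutive vertices of $\A$ map to equal or adjacent vertices. Removing repetitions therefore yields a path in $R$, and it runs from the top-left corner of $R$ to the bottom-right corner.

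It remains to bound the cost of the projected path. Each vertex of the projected path is the image of some vertex of $\A$; charge it to one such preimage $v$, chosen injectively (distinct image vertices have distinct preimages automatically). If $v$ lies in $R$, its cost is unchanged. Otherwise we must show the image costs at most $c(v)$, and this is the main obstacle.

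Suppose $v=(i,j)$ with $i\le a$, so $X[i]=\sigma$ and $c(v)=0$ only when $j\le c$. Consider the case where $v$ is in the top band with $j$ inside $\hat Y$'s columns. Then $c(v)=1$, since $\hat Y$ avoids $\sigma$, so any image costs at most $1\le c(v)$. The same holds for the left band with $i$ in $\hat X$'s rows, and symmetrically for the bottom and right bands, where $\hat X$ and $\hat Y$ avoid $\alpha$.

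The problematic preimages are the corner regions, for example $i\le a$ and $j\le c$ at cost $0$. Their image is the top-left corner of $R$, which may cost $1$. The top-left corner is, however, mandatory in any alignment of $\hat X,\hat Y$. So we instead charge the corner vertex to the first vertex at which $\A$ enters a row $>a$ or a column $>c$. Such a first vertex lies in the top band, the left band, or $R$ itself, so it costs at least as much as any mismatch, unless it equals the corner. A symmetric argument handles the bottom-right corner.

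The bands contribute vertices costing $1$ each. Their images lie on the boundary rows and columns of $R$, and are charged injectively because $\A$ visits each column at least once in the top band, and similarly for the other bands. With this careful accounting of the boundary, the cost bound follows, giving $\DTW(\hat X,\hat Y)\le\DTW(X,Y)$.

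Finally, the argument never uses $\sigma\ne\alpha$: it relies only on $\hat X$ and $\hat Y$ avoiding both symbols. Hence the lemma holds even when $\sigma$ and $\alpha$ coincide.
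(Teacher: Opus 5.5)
Your strategy (clamp $\A$ onto the inner rectangle $R$ and charge each projected vertex to a preimage of at least the same cost) is the projection idea the paper uses, and your upper bound is fine. The gap is in the corner accounting.

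You write that for $i\le a$, ``$c(v)=0$ only when $j\le c$.'' This fails in exactly the case the lemma singles out. If $\sigma=\alpha$, then every vertex with $i\le a$ and $j>c+|\hat Y|$ also costs $0$, and so does every vertex with $i>a+|\hat X|$ and $j\le c$. You only treat the top-left and bottom-right corner regions. These two off-diagonal regions project onto the corners $(a+1,c+|\hat Y|)$ and $(a+|\hat X|,c+1)$ of $R$, which may cost $1$.

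Such alignments do arise. With $\sigma=\alpha=\dcharw$, as the lemma is later applied, a path can run through the top band, cross the cost-$0$ top-right region, and descend through the right band without ever entering $R$. So your closing remark that the argument never uses $\sigma\ne\alpha$ is not supported by what you wrote.

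The repair is short. Since $\A$ starts at $(1,1)$, the vertex just before its first visit to $\{i\le a,\ j>c+|\hat Y|\}$ must be $(i',c+|\hat Y|)$ with $i'\le a$. This is a top-band vertex of cost $1$ with the same image. The bottom-left region is symmetric.

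Separately, drop the justification that ``$\A$ visits each column at least once in the top band.'' It is false: $\A$ may step diagonally from $(a,c)$ to $(a+1,c+1)$. Injectivity of your charging is automatic anyway, since clamping is a function.

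For comparison, the paper avoids corner bookkeeping with a case split:
\begin{itemize}
\item If $\A$ never visits $R$, every one of the $|\hat X|+|\hat Y|$ interior positions is compared only with guard symbols, at distinct vertices. Hence $\cost(\A)\ge|\hat X|+|\hat Y|\ge\DTW(\hat X,\hat Y)$.
\item If $\A$ does visit $R$, monotonicity rules out the off-diagonal corner regions altogether. Restricting to the subpath between the first and last vertices having some interior coordinate then removes the diagonal corner regions; your ``first entry'' vertex is the start of this subpath. Every remaining vertex outside $R$ is a band vertex of cost $1$.
\end{itemize}
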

\begin{proof}
Concatenating an optimal alignment of $\hat X$ and $\hat Y$ with zero-cost alignments of the corresponding guards gives
\[
\DTW(X,Y)\le \DTW(\hat X,\hat Y)
\]
by \cref{fact:dtw_partition}.

For the reverse inequality, let $\A$ be any unit-cost alignment from $X$ to $Y$.
Put $r\coloneqq|\hat X|$, $s\coloneqq|\hat Y|$, and let
\[
 I\coloneqq\fragment{a+1}{a+r},\qquad J\coloneqq\fragment{c+1}{c+s},\qquad Q\coloneqq I\times J
\]
be the interior intervals and their alignment rectangle.
If $\A$ never visits $Q$, every position of $\hat X$ and every position of $\hat Y$ is aligned against a guard symbol, hence incurs a mismatch.
These contributions are distinct, since no visited vertex has both coordinates in the interior intervals.
Thus $\cost(\A)\ge r+s\ge\DTW(\hat X,\hat Y)$.

Otherwise, let $\A_0$ be the subpath from the first to the last vertex of $\A$ having at least one coordinate in its respective interior interval.
Project every vertex $(i,j)$ of $\A_0$ onto $Q$ by replacing it with
\[
 \bigl(\min\{a+r,\max\{a+1,i\}\},
       \min\{c+s,\max\{c+1,j\}\}\bigr).
\]
Since $\A$ visits $Q$ and its coordinates are nondecreasing, the projected sequence starts at $(a+1,c+1)$ and ends at $(a+r,c+s)$.
Each coordinate still increases by at most one per step.
After removing consecutive repetitions of a vertex, the sequence is therefore an alignment of $\hat X$ and $\hat Y$, with coordinates shifted by $(a,c)$.

Vertices already in $Q$ are unchanged.
Every vertex of $\A_0$ outside $Q$ has exactly one coordinate in its interior interval: by monotonicity, a vertex with neither coordinate in its interior interval cannot lie between the chosen endpoints of $\A_0$ when $\A$ visits $Q$.
Such a vertex aligns an interior symbol against a guard and has cost $1$, while its projection has cost at most $1$.
Removing repeated vertices cannot increase the cost either.
The resulting interior alignment consequently has cost at most $\cost(\A_0)\le\cost(\A)$.
Minimizing over $\A$ proves the reverse inequality.
\end{proof}

The alignment grid is naturally partitioned into blocks, where a block is the induced subgraph corresponding to one run of $S$ and one run of $T$.
For a run $\fragment{i_1}{i_2}$ of $S$ and a run $\fragment{j_1}{j_2}$ of $T$, the corresponding block, $B$, contains all vertices in $\fragment{i_1}{i_2}\times \fragment{j_1}{j_2}$.
The \emph{first vertex} of $B$ is $(i_1,j_1)$ and the \emph{last vertex} of $B$ is $(i_2,j_2)$.
A block is said to be a $0$-block if the corresponding runs in $S$ and $T$ have the same character.
Let $S=s_1^{e_1}s_2^{e_2}\ldots s_n^{e_n}$ and $T= t_1^{\ell_1}t_2^{\ell_2}\ldots t_m^{\ell_m}$ be the representations of $S$ and $T$ as concatenations of runs.
For $(i,j)\in [n]\times [m]$, we denote by $B_{i,j}$ the block corresponding to the $i$th run of $S$ and the $j$th run of $T$ in the alignment graph.
Formally, $B_{i,j} \coloneqq \fragment{i_1}{i_2}\times\fragment{j_1}{j_2}$ with $i_1 \coloneqq 1+\sum_{k=1}^{i-1}e_k$, $i_2 \coloneqq i_1+e_i-1 $, $j_1 \coloneqq 1+ \sum_{k=1}^{j-1}\ell_k$, and $j_2 \coloneqq j_1+\ell_j-1$.

For vertices $v=(i,j)$ and $u=(i',j')$, we write $v\preceq u$ if $i\le i'$ and $j\le j'$, equivalently if $u$ is reachable from $v$.

\begin{lemma}\label{lem:first_and_last_in_0_block}
Let $B$ be a $0$-block with first vertex $f$ and last vertex $\ell$, and let $w\in B$.
For every vertex $v\preceq f$, there is a shortest path from $v$ to $w$ that visits $f$.
Similarly, for every vertex $u$ with $\ell\preceq u$, there is a shortest path from $w$ to $u$ that visits $\ell$.
\end{lemma}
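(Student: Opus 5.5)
The plan is to take an arbitrary shortest path $P$ from $v$ to $w$ and reroute it through $f$ without increasing its cost; the second claim then follows by the symmetric argument, or by reversing both strings. Write $f=(i_1,j_1)$ and $w=(i,j)$ with $i_1\le i$ and $j_1\le j$. The path $P$ is monotone, so its coordinates are nondecreasing, and it starts at $v\preceq f$ and ends at $w\in B$. Hence $P$ has a first vertex $x=(x_1,x_2)$ satisfying $x_1\ge i_1$ and $x_2\ge j_1$, and this $x$ lies in the rectangle $\fragment{i_1}{i}\times\fragment{j_1}{j}\subseteq B$. We do not yet know that $x=f$: the path may enter $B$ through its top or left boundary rather than at the corner.

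We replace the prefix of $P$ from $v$ to $x$ by a path from $v$ to $f$ followed by a path from $f$ to $x$ that stays inside $B$. The second piece exists because $f\preceq x$ and both lie in $B$, and it costs $0$ since every vertex of the $0$-block $B$ has cost $\delta(a,a)=0$. Everything then reduces to showing $\dist(v,f)\le\cost(P_{v\to x})$, where $P_{v\to x}$ denotes the prefix of $P$ up to $x$.

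To prove this, we project the prefix $P_{v\to x}$ coordinatewise by $(p,q)\mapsto(\min\{p,i_1\},\min\{q,j_1\})$. Projection preserves monotonicity, each step increases every coordinate by at most one, and the image runs from $v$ to $f$; after deleting consecutive repeats it is a valid path. For the costs, take a vertex $(p,q)$ before $x$. By the choice of $x$, either $p<i_1$ or $q<j_1$. Suppose $p<i_1\le q$, with the symmetric case handled the same way. Then $q$ lies in $\fragment{j_1}{j}$, because $q\le x_2\le j$, so $T[q]=T[j_1]$ and the projected vertex $(p,j_1)$ has the same cost as $(p,q)$. If both $p<i_1$ and $q<j_1$, the vertex is unchanged. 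Vertices of the projection coincide only when they come from consecutive original vertices, and removing repeats can only lower the cost because costs are nonnegative. Therefore $\dist(v,f)\le\cost(P_{v\to x})$ minus the cost of $x$, which is $0$, and this suffices.

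The main obstacle is the cost comparison under projection. It relies on two facts: before reaching $x$ the path has at most one coordinate inside the run intervals of $B$, and that coordinate also stays at most $i$ (respectively $j$), which holds because the path ends at $w$. This is exactly the same style of argument as in the proof of \cref{lem:sig_antsig_sig}, which we will mirror closely.
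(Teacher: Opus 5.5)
Your proof is correct, but it follows a different route from the paper's.

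\textbf{The paper's route.} The paper proves the exit-corner claim directly and reroutes only locally:
\begin{enumerate}
\item It takes the last vertex $(i_3,j')$ of the path in $B$. If $\ell$ is avoided, this vertex lies on exactly one of the bottom and right sides.
\item It then takes the last vertex $(i',j_3)$ of the path in column $j_3$.
\item It replaces the segment between these two vertices by the L-shaped path that runs along row $i_3$ inside $B$ and then down column $j_3$.
\end{enumerate}
The replacement costs exactly $\sum_{\hat\imath=i_3+1}^{i'}\delta(S[\hat\imath],\sigma)$. The old segment pays at least this much, because it meets every such row at a column inside the run of $B$.

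\textbf{Your route.} You prove the entry-corner claim. You clamp the entire prefix, up to the first vertex of the quadrant $\{(p,q): p\ge i_1,\ q\ge j_1\}$, onto the corner $f$, and compare costs vertex by vertex. This is exactly the projection technique of \cref{lem:sig_antsig_sig}. Its key inputs are that before $x$ at most one coordinate lies in the run intervals of $B$, and that this coordinate is capped by $w$.

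\textbf{Trade-offs.} Your approach avoids choosing which side of $B$ the path crosses and needs no auxiliary vertex such as $(i',j_3)$. It would also let this lemma and \cref{lem:sig_antsig_sig} share one argument. The paper's approach modifies only a short subpath with an explicit replacement whose cost is computed exactly. It therefore needs neither deduplication nor a check that the clamped sequence is a valid path.

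\textbf{Minor fix.} Your sentence ``vertices of the projection coincide only when they come from consecutive original vertices'' is not literally true: three consecutive vertices can all clamp to the same point. The correct statement is that, by monotonicity of the clamped sequence, equal projected vertices form contiguous runs. That is exactly what deleting consecutive repeats needs.
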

\begin{proof}
We prove the second claim; the first one is symmetric.
Write $B=\fragment{i_2}{i_3}\times\fragment{j_2}{j_3}$, so $\ell=(i_3,j_3)$.
Let $\sigma$ be the common character of the runs $S\fragment{i_2}{i_3}$ and $T\fragment{j_2}{j_3}$, and let $\A$ be a shortest path from $w$ to $u$.
If $\A$ already visits $(i_3,j_3)$, in particular if $u=(i_3,j_3)$, there is nothing to prove.
Otherwise, $u$ lies outside $B$, so $\A$ must leave $B$.

Let $(x,y)$ be the last vertex of $\A$ that belongs to $B$.
Since $\A$ does not visit $(i_3,j_3)$, exactly one of the equalities $x=i_3$ and $y=j_3$ holds.
By symmetry, we may assume that $x=i_3$ and $j'\coloneqq y<j_3$.
Let $(i',j_3)$ be the last vertex of $\A$ whose second coordinate is $j_3$, and let $\A'$ be the subpath of $\A$ from $(i_3,j')$ to $(i',j_3)$.

For every $\hat\imath \in \fragmentoc{i_3}{i'}$, the subpath $\A'$ visits at least one vertex $(\hat\imath,\hat\jmath)$ with $\hat\jmath\in \fragment{j'}{j_3}\subseteq \fragment{j_2}{j_3}$, at the cost of $\delta(S[\hat\imath],\sigma)$.
All other vertices of $\A'$ have nonnegative cost, therefore
\[
\cost(\A')\ge \sum_{\hat\imath=i_3+1}^{i'} \delta(S[\hat\imath],\sigma).
\]

Now let $\A^*$ be the path consisting of the horizontal segment from $(i_3,j')$ to $(i_3,j_3)$ followed by the vertical segment from $(i_3,j_3)$ to $(i',j_3)$.
The horizontal segment has cost $0$, because both coordinates stay inside the $0$-block $B$.
Along the vertical segment, every visited vertex except $(i_3,j_3)$ has the form $(\hat\imath,j_3)$ with $\hat \imath \in \fragmentoc{i_3}{i'}$, and hence contributes exactly $\delta(S[\hat\imath],\sigma)$.
Consequently,
\[
\cost(\A^*)=\sum_{\hat\imath=i_3+1}^{i'} \delta(S[\hat\imath],\sigma)\le \cost(\A').
\]
Replacing $\A'$ by $\A^*$ in $\A$ therefore yields another shortest path from $w$ to $u$ that visits $(i_3,j_3)$.
\end{proof}

The following is a direct consequence of \cref{lem:first_and_last_in_0_block}.
\begin{lemma}\label{lem:corners_0_block}
Let $B$ be a $0$-block with first vertex $f$ and last vertex $\ell$, and let $v,u$ be vertices with $v\preceq f$ and $\ell\preceq u$.
If a shortest path from $v$ to $u$ visits $B$, then some shortest path from $v$ to $u$ visits both $f$ and $\ell$.
\end{lemma}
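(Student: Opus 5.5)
Start from a shortest path $\A$ from $v$ to $u$ that visits $B$, and pick any vertex $w\in B$ on $\A$. Split $\A$ at $w$ into a prefix $\A_1$ from $v$ to $w$ and a suffix $\A_2$ from $w$ to $u$. Both are shortest paths between their endpoints, since $\A$ is a shortest path and costs are nonnegative. Also
\[
\cost(\A)=\cost(\A_1)+\cost(\A_2)-c(w)=\dist(v,w)+\dist(w,u),
\]
because $c(w)=0$ for $w$ in the $0$-block $B$. The plan is to replace $\A_1$ and $\A_2$ separately using \cref{lem:first_and_last_in_0_block} and then glue the two pieces at $w$.

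\emph{Prefix.} Since $v\preceq f$, the first part of \cref{lem:first_and_last_in_0_block} gives a shortest path $\A_1'$ from $v$ to $w$ that visits $f$, with $\cost(\A_1')=\dist(v,w)$.

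\emph{Suffix.} Since $\ell\preceq u$, the second part gives a shortest path $\A_2'$ from $w$ to $u$ that visits $\ell$, with $\cost(\A_2')=\dist(w,u)$.

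Concatenate $\A_1'$ and $\A_2'$, sharing the vertex $w$. This is a directed path from $v$ to $u$ of cost $\dist(v,w)+\dist(w,u)-c(w)=\cost(\A)$, so it is again a shortest path, and it visits both $f$ and $\ell$.

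The only point needing care is that the concatenation is a legitimate path. Each piece is monotone, and the shared vertex is counted once, which is exactly why the $-c(w)=0$ correction appears. Degenerate cases such as $w=f$ or $w=\ell$ are already covered by the lemma and need no separate treatment.
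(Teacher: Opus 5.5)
Your proof is correct and is essentially the paper's proof: split a shortest path at a vertex $w\in B$, apply \cref{lem:first_and_last_in_0_block} to the prefix and the suffix, and concatenate the two resulting paths at $w$. Your explicit cost accounting, using $c(w)=0$, spells out what the paper leaves implicit.
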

\begin{proof}
Let $\A$ be a shortest path from $v$ to $u$ that visits a vertex $w\in B$.
Let $\A_1$ be the subpath of $\A$ from $v$ to $w$ and $\A_2$ be the subpath from $w$ to $u$.
The claim follows by applying \cref{lem:first_and_last_in_0_block} on both $\A_1$ and $\A_2$ and considering the concatenation of the two obtained paths.
\end{proof}

\para{An equivalent definition of DTW.}
Recall that one can describe DTW using the terminology of string expansions.
For a string $X=X[1]X[2]\dots X[|X|]$, an expansion of $X$ is a string obtained by local duplications of $X$'s characters.
Formally, $X'$ is an expansion of $X$ if
\[
X'=X[1]^{e_1}X[2]^{e_2}\cdots X[|X|]^{e_{|X|}},
\]
where $e_i\in\Z_{\ge1}$ for every $i\in[|X|]$.
In other words, each character of $X$ may be repeated an arbitrary positive number of times, but the relative order of the characters must be preserved.

Using this terminology, $\DTW_\delta$ is the minimum weighted Hamming distance between same-length expansions of the two strings (see, e.g., \cite[Section 4.3]{SankoffKruskal83}).
Formally,
\[
\DTW_\delta(S,T)=
\min_{\substack{
S' \text{ is an expansion of } S\\
T' \text{ is an expansion of } T\\
|S'|=|T'|
}}
\sum_{i=1}^{|S'|}\delta(S'[i],T'[i]).
\]

The following lemma describes how to move between these two views.

\begin{lemma}\label{lem:expensions-alignments}
Let $S$ and $T$ be two non-empty strings.

\begin{enumerate}
\item Given same-length expansions $S'$ and $T'$ of $S$ and $T$, respectively, one can compute in $\Oh(|S'|)$ time an alignment $\A$ in $\AG_\delta(S,T)$ whose cost is at most $\sum_{i=1}^{|S'|}\delta(S'[i],T'[i])$.

\item Given an alignment $\A$ in $\AG_\delta(S,T)$, one can compute in $\Oh(|S|+|T|)$ time same-length expansions $S'$ and $T'$ of $S$ and $T$, respectively, with $\sum_{i=1}^{|S'|}\delta(S'[i],T'[i])=\cost(\A)$.
\end{enumerate}
\end{lemma}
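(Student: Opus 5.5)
Both parts come from a direct correspondence between positions of the common expansion and vertices of the alignment graph. We handle them in turn.

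For part 1, write $S'=S[1]^{e_1}\cdots S[|S|]^{e_{|S|}}$ and $T'=T[1]^{f_1}\cdots T[|T|]^{f_{|T|}}$. For each position $p\in\fragment{1}{|S'|}$, let $\iota(p)$ be the index $i$ such that $S'[p]$ is a copy of $S[i]$. Concretely, $\iota(p)$ is the least $i$ with $e_1+\dots+e_i\ge p$. Define $\kappa(p)$ analogously for $T'$. Both maps can be tabulated in $\Oh(|S'|)$ time by a single left-to-right scan that maintains counters of the remaining copies in the current block. The candidate path is the sequence of vertices $(\iota(p),\kappa(p))$ for $p=1,\dots,|S'|$. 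We check three things.
\begin{itemize}
\item It starts at $(1,1)$, since $\iota(1)=\kappa(1)=1$.
\item It ends at $(|S|,|T|)$.
\item Each coordinate is nondecreasing and increases by at most one per step, because every $e_i,f_j\ge1$.
\end{itemize}
Consecutive equal vertices are possible: the same vertex $(i,j)$ repeats exactly when both $S'[p]$ and $S'[p+1]$ are copies of $S[i]$ and both $T'[p]$ and $T'[p+1]$ are copies of $T[j]$. Removing these repetitions leaves a valid directed path $\A$ in $\AG_\delta(S,T)$. Since $\delta(S'[p],T'[p])=c(\iota(p),\kappa(p))\ge0$, each vertex of $\A$ is charged at least once in the Hamming sum. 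Hence $\cost(\A)\le\sum_p\delta(S'[p],T'[p])$. The only inequality comes from repeated vertices being counted once in $\cost(\A)$ but possibly several times in the Hamming sum.

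For part 2, let $\A=(v_1,\dots,v_L)$ with $v_p=(i_p,j_p)$. Because every step increases $i+j$ by at least $1$, we get $L\le|S|+|T|-1$. Set $S'\coloneqq S[i_1]S[i_2]\cdots S[i_L]$ and $T'\coloneqq T[j_1]\cdots T[j_L]$. The sequence $i_1,\dots,i_L$ starts at $1$, ends at $|S|$, is nondecreasing, and moves in unit steps. It therefore visits every index of $[|S|]$ a positive number of times, and in order. So $S'$ is an expansion of $S$, and symmetrically $T'$ is an expansion of $T$. Both have length $L$. Since the vertices of a path are distinct, $\sum_p\delta(S'[p],T'[p])=\sum_p c(v_p)=\cost(\A)$ exactly. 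Producing $S'$ and $T'$ takes $\Oh(L)=\Oh(|S|+|T|)$ time.

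The main care point is the vertex-repetition issue in part 1. It is the reason part 1 states only an inequality. Repeated vertices are handled by the deduplication together with nonnegativity of $\delta$, so no step is a real obstacle.
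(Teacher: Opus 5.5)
Your proof matches the paper's: in part 1 you map each position of the common expansion to its pair of source indices and delete consecutive repeats, and in part 2 you read the expansions off the path vertex by vertex. The one detail the paper spells out that you gloss over is how to obtain the source-index map in $\Oh(|S'|)$ time when only the strings $S'$ and $S$ are given, since the exponents $e_i$ need not be unique when $S$ has equal adjacent characters; the paper uses a greedy rule that advances the index whenever $S'[k]$ equals the next character of $S$.
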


\begin{proof}
(1)
Let $S'$ and $T'$ be same-length expansions of $S$ and $T$.
For every position $k\in[|S'|]$, let $p(k)$ be the index in $S$ from which $S'[k]$ was duplicated, and let $q(k)$ be the index in $T$ from which $T'[k]$ was duplicated.
Notice that $p(\cdot)$ (and similarly $q(\cdot)$) can be defined greedily by setting $p(1)\coloneqq1$ and, for $i>1$, letting $p(i)\coloneqq p(i-1)+1$ if $p(i-1)<|S|$ and $S'[i]=S[p(i-1)+1]$, and $p(i)\coloneqq p(i-1)$ otherwise.
Then
\[
1=p(1)\le p(2)\le\dots\le p(|S'|)=|S|
\qquad
1=q(1)\le q(2)\le\dots\le q(|T'|)=|T|.
\]

We construct $\A$ in one left-to-right pass over the sequence $((p(k),q(k)))_{k=1}^{|S'|}$, keeping the first pair and appending each subsequent pair only if it differs from the last retained pair.
Removing consecutive repetitions preserves the first and last pairs and ensures that at least one coordinate increases at each step.

Since each coordinate increases by at most one between consecutive positions, every pair of consecutive vertices of $\A$ is connected by an edge of $\AG_\delta(S,T)$.
Moreover, $\A$ starts at $(1,1)$ and ends at $(|S|,|T|)$, so it is a valid alignment.

For every $k$ the vertex $(p(k),q(k))$ has cost
\[
c((p(k),q(k)))=\delta(S[p(k)],T[q(k)])=\delta(S'[k],T'[k]).
\]
Hence the cost of $\A$ is at most $\sum_{k=1}^{|S'|}\delta(S'[k],T'[k])$.
Clearly, one can compute $\A$ in $\Oh(|S'|)$ time.

\medskip
(2)
Let $\A\coloneqq ((i_1,j_1),(i_2,j_2),\ldots,(i_\ell,j_\ell))$
be an alignment in $\AG_\delta(S,T)$.
Define strings $S'$ and $T'$ of length $\ell$ by
\[
S'[k]\coloneqq S[i_k],
\qquad
T'[k]\coloneqq T[j_k].
\]

Since $\A$ starts at $(1,1)$, ends at $(|S|,|T|)$, and each step increases every coordinate by at most one, the sequence
$
i_1,i_2,\dots,i_\ell
$
contains each index of $S$ in non-decreasing order and each index appears at least once.
Thus $S'$ is an expansion of $S$.
The same argument shows that $T'$ is an expansion of $T$.

Moreover,
\[
\sum_{k=1}^{\ell}\delta(S'[k],T'[k])
=
\sum_{k=1}^{\ell}\delta(S[i_k],T[j_k])
=
\sum_{k=1}^{\ell} c((i_k,j_k)),
\]
which equals the cost of $\A$.
Every step of $\A$ increases at least one coordinate, so $\ell\le |S|+|T|-1$.
Thus, one can compute $S'$ and $T'$ in $\Oh(\ell)=\Oh(|S|+|T|)$ time.
\end{proof}

\para{Edit Distance.}
The edit distance of two strings $S$ and $T$ is the minimum number of character insertions, deletions, and substitutions required to transform $S$ into $T$.
Equivalently, one can define the edit-distance alignment graph $\AGed(S,T)$, which is a directed edge-weighted graph, as follows.
Its vertex set is $\fragment{0}{|S|}\times\fragment{0}{|T|}$.
Every vertex $(i,j)$ has up to three incoming edges, from $(i-1,j)$, $(i,j-1)$, and $(i-1,j-1)$, whenever these vertices exist.
The edges from $(i-1,j)$ and $(i,j-1)$ have weight $1$, while the edge from $(i-1,j-1)$ has weight $0$ if $S[i]=T[j]$ and weight $1$ otherwise.
Then $\ED(S,T)$ is exactly the distance from $(0,0)$ to $(|S|,|T|)$ in $\AGed(S,T)$.
As with the DTW alignment graph $\AG_\delta(S,T)$, we view $\AGed(S,T)$ as embedded in the plane with the $i$-axis directed downward and the $j$-axis directed to the right.
In this orientation, $(i,j)$ lies above $(i+1,j)$ and to the left of $(i,j+1)$.

\para{Periodicity.}
For a string $S\fragment{1}{n}$, an integer $p\ge 1$ is a period of $S$ if for every $i\in \fragment{1}{n-p}$ it holds that $S[i]=S[i+p]$.
The minimal period of $S$ is called the period of $S$ and is denoted $\per(S)$.
The following is a straightforward observation.
\begin{observation}\label{obs:two-occs-period}
    Let $S$ and $T$ be two strings such that $S=T\fragmentco{i}{i+|S|}$ and $S=T\fragmentco{j}{j+|S|}$.
    Then, either $i=j$ or $|i-j|$ is a period of $S$.
\end{observation}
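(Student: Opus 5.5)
This is immediate from the definition of a period. Write $d\coloneqq |i-j|$ and assume without loss of generality that $i<j$, so $d=j-i\ge 1$. If $d\ge |S|$, then $d$ is a period of $S$ vacuously, because the range $\fragment{1}{|S|-d}$ of indices to check is empty. (Under the paper's convention, periods lie in $\fragment{1}{|S|}$, so this case only matters when $d\le |S|$. The statement is used for overlapping occurrences, so I would note the vacuous case and move on.)

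So suppose $1\le d<|S|$, and fix an index $k\in\fragment{1}{|S|-d}$; indexing is as in the statement, with $S[1]$ corresponding to $T[i]$. From the occurrence at $i$, $S[k]=T[i+k-1]$. From the occurrence at $j$, since $1\le k\le |S|-d$ we have $T[j+k-1]=S[k]$, so $T[i+(k+d)-1]=S[k]$. Because $k+d\le |S|$, the occurrence at $i$ also gives $T[i+(k+d)-1]=S[k+d]$. Chaining these equalities yields $S[k]=S[k+d]$ for every $k\in\fragment{1}{|S|-d}$. This is exactly the statement that $d$ is a period of $S$.

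The only step needing care is the index bookkeeping: checking that $k+d$ stays within $\fragment{1}{|S|}$, so that both occurrences can be invoked at the shifted position. There is no real obstacle here.
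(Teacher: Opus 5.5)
Your proof is correct and matches the paper's argument, which likewise chains $S[k+d]=T[i+k+d-1]=T[j+k-1]=S[k]$ for every $k\in\fragment{1}{|S|-d}$. One small remark: the paper's preliminaries define a period as any integer $p\ge 1$ satisfying the shift condition, and the restriction to $\fragment{1}{|X|}$ appears only in the definition of $\per$, so the case $d\ge|S|$ is covered vacuously and needs no further caveat.
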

\begin{proof}
If $i=j$ there is nothing to prove.
Assume $i<j$ and let $p\coloneqq j-i$.
Since $S$ appears in $T$ at positions $i$ and $j$, for every $k\in\fragment{1}{|S|}$ we have $S[k]=T[i+k-1]=T[j+k-1]$.
For every $k\in\fragment{1}{|S|-p}$, we have $S[k+p]=T[i+k+p-1]=T[j+k-1]=S[k]$.
Hence $p=|i-j|$ is a period of $S$.
\end{proof}

\para{Bounded Distances.}
For a threshold $k\ge0$, we define
\[
\kDTW_\delta(S,T)\coloneqq
\begin{cases}
\DTW_\delta(S,T),&\DTW_\delta(S,T)\le k,\\
\infty,&\DTW_\delta(S,T)>k.
\end{cases}
\]
For unit costs, we write $\kDTW(S,T)\coloneqq\kDTW_\unit(S,T)$.
We define $\kED(S,T)$ analogously from $\ED(S,T)$.
Thus, a bounded distance equals the true distance when it is at most $k$ and is $\infty$ otherwise.
We also use $\infty$ as the failure output of algorithms that return an alignment when the distance is at most $k$.

The following lemma shows that the computation of $\kDTW_\delta$ can focus on a band of blocks in $\AG_\delta(S,T)$.
This is an adaptation of the band idea defined for the $\kED$ problem by Ukkonen~\cite{Ukkonen85}.

\begin{lemma}[{\cite[{Claim 16}]{BGMW24}}]\label{lem:kband}
Let $(x,y)$ be a vertex in the alignment graph.
Let $B_{i,j}$ be the block containing $(x,y)$.
If $|j-i| > 2k$, then $\dist((1,1),(x,y)) > k$.
\lipicsEnd
\end{lemma}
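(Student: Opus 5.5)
The plan is a charging argument on the sequence of blocks visited by an arbitrary path, using only that $\delta$ is normalized. Fix any path $\A$ from $(1,1)$ to $(x,y)$ and list the blocks its vertices lie in. Since each edge of $\AG_\delta(S,T)$ increases each coordinate by at most one, consecutive vertices of $\A$ either lie in the same block $B_{a,b}$ or move to $B_{a+1,b}$, $B_{a,b+1}$, or $B_{a+1,b+1}$. I will call these steps vertical, horizontal, and diagonal \emph{block transitions}, respectively.

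Track the block-diagonal index $b-a$ along $\A$. It starts at $0$, since $(1,1)\in B_{1,1}$, and it ends at $j-i$. Only vertical and horizontal block transitions change it, and each changes it by exactly one. Hence $\A$ makes at least $|j-i|>2k$ such transitions.

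Next I would show that each vertical or horizontal transition touches a mismatch vertex. Take a horizontal transition between consecutive vertices $u\in B_{a,b}$ and $u'\in B_{a,b+1}$. Both vertices lie in the same run of $S$, with some character $\sigma$. Consecutive runs $b$ and $b+1$ of $T$ carry distinct characters, so at most one of them equals $\sigma$. Therefore at least one of $u,u'$ has cost $\delta(\cdot,\cdot)\ge1$ by normalization. The vertical case is symmetric, using that consecutive runs of $S$ differ.

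Finally, count the charges. A single vertex of $\A$ is an endpoint of at most two transitions, its incoming and its outgoing edge. So the number of distinct mismatch vertices on $\A$ is at least half the number of vertical and horizontal transitions, i.e.\ at least $|j-i|/2>k$. Each such vertex costs at least $1$, giving $\cost(\A)>k$. Since $\A$ was arbitrary, $\dist((1,1),(x,y))>k$.

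The only delicate step is this double-counting. A vertex may be charged by two adjacent transitions, which is exactly why the threshold is $2k$ and not $k$. I would also check that an edge crossing both run boundaries is classified as diagonal and so is never charged, while an edge crossing a single boundary, even a diagonal vertex step, is correctly classified as vertical or horizontal.
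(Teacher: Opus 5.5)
Your proof is correct. The paper gives no proof of its own for this statement; it imports it as Claim~16 of \cite{BGMW24}, which is phrased for an edge-weighted alignment graph. Your charging argument is therefore a self-contained proof in the paper's vertex-weighted formulation, not an alternative to an in-paper argument.

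Your key ingredient is that orthogonally adjacent blocks cannot both be $0$-blocks, so every horizontal or vertical block transition has a mismatched endpoint. This is the same structural fact the paper uses elsewhere, for example in the proof of \cref{lem:aper-intersect-or-mismatch} and in the injective charging of \cref{lem:DTW>ED/2}. You also handle the one subtle case correctly: a diagonal vertex step that crosses a single run boundary is still a horizontal or vertical block transition, and it still has a mismatched endpoint.

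The factor-$2$ loss from double charging is unavoidable. For $S=\mathtt{a}$ and $T=(\mathtt{ab})^r\mathtt{a}$, the vertex $(1,2r+1)$ lies in $B_{1,2r+1}$. The only path to it is horizontal, and its cost is exactly $r=|j-i|/2$. So the threshold $2k$ in the statement is tight.
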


The following lemma states that one can compute $\kDTW_\delta(S,T)$ in $\Otild(nk)$ time for $k\ge1$, where $n$ bounds the number of runs in each string.
\begin{lemma}[{\cite{BGMW24}}]\label{lem:nk-computation-ICALP}
    There exists an algorithm that, given two run-length encoded strings $S$ and $T$ with at most $n$ runs each, a threshold $k\ge1$, and oracle access to a mismatch-cost function $\delta$, computes $\kDTW_\delta(S,T)$ in $\Otild(nk)$ time.
\lipicsEnd
\end{lemma}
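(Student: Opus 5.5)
The plan is a run-length-compressed version of the banded dynamic program. We work at the level of blocks $B_{i,j}$ rather than individual vertices. By \cref{lem:kband}, every vertex of a block $B_{i,j}$ with $|j-i|>2k$ is at distance more than $k$ from $(1,1)$, so such blocks can be discarded. This leaves $\Oh(nk)$ blocks inside the band. We process them in row-major order of block indices. For each retained block we compute a representation of the function $\dist((1,1),\cdot)$, truncated to $\infty$ above $k$, along its \emph{output boundary}: its last row and last column. Every path entering a block does so from the output boundary of the block above, to the left, or diagonally above-left. The \emph{input boundary} of $B_{i,j}$ is therefore determined by the output boundaries of $B_{i-1,j}$, $B_{i,j-1}$ and $B_{i-1,j-1}$ through shifts by one and pointwise minima. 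The answer $\kDTW_\delta(S,T)$ is the value stored at the last vertex of $B_{n,m}$.

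The key local fact is that all vertices of a block $B$ have the same cost $c=\delta(a,b)$. A monotone path inside $B$ from $(x,y)$ to $(x',y')$ therefore costs at least $c\cdot(\max(x'-x,y'-y)+1)$, and this is attained by diagonal moves followed by straight moves. Hence for an output boundary vertex $w$ we get
\[
  \dist((1,1),w)=\min_{v \text{ on input boundary}}\bigl(g(v)+c\cdot \max(w_1-v_1,w_2-v_2)\bigr),
\]
where $g$ stores the distance to $v$ excluding $v$'s own cost, plus $c$. The two cases behave differently. If $c=0$, the formula is a prefix minimum: the output value along the bottom row and right column is the minimum of $g$ over the corresponding dominated part of the input boundary. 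If $c\ge 1$, only vertices within Chebyshev distance $k/c\le k$ of an input vertex with finite value can receive a finite value. In that case the output is the lower envelope of ``cones'' of slope $c$, truncated at $k$.

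I would represent each boundary function as a piecewise linear function with slopes in $\{0,\pm c\}$ for the relevant costs $c$, and values in $\fragment{0}{k}\cup\{\infty\}$, stored in a balanced search tree keyed by coordinate. Shifting and pointwise minima can be done by merging, and prefix minima and cone envelopes by a left-to-right sweep over the breakpoints. Each block operation then costs time proportional to the number of breakpoints touched, up to a logarithmic factor. The finite values are at most $k$ and normalized costs are at least $1$. So on a mismatch block each finite piece has slope magnitude at least $1$ and spans only $\Oh(k)$ positions, while on a zero block the output has a monotone staircase shape with at most $k+1$ distinct values.

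The main obstacle is the amortization. A naive bound of $\Oh(k)$ breakpoints per block over $\Oh(nk)$ blocks gives only $\Oh(nk^2)$. To reach $\Otild(nk)$, I would try a potential argument. A breakpoint created in one block and passed through a zero block is either absorbed by a prefix minimum or survives with the same value. Across a mismatch block it can only persist if its value stays at most $k$, and each crossing of a mismatch block raises its value by at least $1$. I would charge breakpoints to (value, diagonal) pairs and show that each such pair is charged $\Otild(1)$ times per row of blocks. If this accounting fails, the fallback is to share unchanged portions of boundary representations between neighbouring blocks with persistent trees, so that each block pays only for the $\Oh(1)$ new breakpoints it introduces. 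This part, namely controlling the total number of breakpoints, is where the real work of \cite{BGMW24} lies; the rest is routine bookkeeping.
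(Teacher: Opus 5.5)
\textbf{There is a genuine gap: the $\Otild(nk)$ bound, which is the entire content of the lemma, is left unproved.}

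Your framework is sound, and it matches what the paper recalls about \cite{BGMW24} in the proof of \cref{lem:rle-return-alignment}. You work in the band $|i-j|\le 2k$ given by \cref{lem:kband}, process blocks in a fixed order, and use that in-block distances are $c$ times a Chebyshev distance. But you leave exactly the $\Otild(nk)$ step open (``I would try'', ``if this accounting fails''). The paper does not prove this lemma either; it imports it from \cite{BGMW24}. So as written, your argument gives nothing better than the naive bound, and neither of your suggested routes closes the gap.

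\textbf{The charging scheme.} Even if it succeeds, it cannot beat $\Otild(nk^2)$. The band has $\Theta(k)$ block diagonals, and the finite values occupy $\Theta(k)$ integer levels. Charging every (value, diagonal) pair $\Otild(1)$ times per block row therefore already costs $\Otild(k^2)$ per row. For real weights, which the lemma allows, the values do not even form a set of size $\Oh(k)$.

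\textbf{Incidences, not breakpoints.} The cost of explicit per-block sweeps is governed by breakpoint--block incidences, not by the number of breakpoints. A value can be carried straight through up to about $2k$ blocks of its block column or row. No two consecutive blocks there are both zero blocks, and each mismatch block adds at least $1$. Your sweep touches the value in every one of these blocks, so even $\Oh(1)$ new breakpoints per block would only give $\Oh(nk^2)$.

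\textbf{The persistence fallback.} It presupposes that a block modifies only $\Oh(1)$ breakpoints, but the transformations you wrote down modify all of them. For a mismatch block spanning rows $i_1,\ldots,i_2$, every value entering from the top is increased by $c(i_2-i_1)$. The bottom-row function becomes a sliding-window minimum combined with slope-$c$ rays. A zero block takes prefix minima over its whole L-shaped input. Path copying avoids copying unchanged data; it does nothing for data that has been transformed.

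\textbf{What is missing.} You need a lazy representation of the propagation itself, together with an amortized analysis. As recalled in the proof of \cref{lem:rle-return-alignment} and its footnote, \cite{BGMW24} maintain a single structure over the current staircase frontier $F_{i,j}$. It consists of segments and \emph{pending rays}; a pending ray's slope-$c$ growth is never materialized but is evaluated at lookup time by two predecessor queries and a minimum. Processing a block then costs only $\Otild(1)$ amortized primitive operations, for $\Otild(nk)$ in total over the $\Oh(nk)$ band blocks. Establishing such an invariant is the real work, and your proposal does not contain it. That means specifying which pieces a block creates or removes, and giving a potential argument bounding their total number.

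\textbf{Local claims to repair.}
\begin{itemize}
\item The boundary functions are not continuous piecewise-linear functions: each new cone causes a downward jump.
\item Mismatch-block outputs do contain flat pieces. The contribution $g+c\max\{i_2-i_1,\,y'-y\}$ of a top input at column $y$ is constant for $y'\in\fragment{y}{y+i_2-i_1}$, so ``slope magnitude at least $1$'' is false.
\item ``At most $k+1$ distinct values'' on zero blocks holds only for integer weights.
\end{itemize}
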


The following lemma is the celebrated Landau-Vishkin~\cite{LV88} algorithm for computing $\kED$ between two strings.
\begin{lemma}[\cite{LV88}]\label{lem:LV}
    There is an algorithm that, given strings $X, Y \in \Sigma^*$ with $\max\{|X|,|Y|\}\le n$ along with a parameter $k$, in time $\Oh(n+k^2)$ computes $\kED(X,Y)$.
    Moreover, if $\kED(X,Y)\le k$ the algorithm also returns an optimal (unweighted) alignment of $X$ onto $Y$.
\lipicsEnd
\end{lemma}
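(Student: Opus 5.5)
This statement is the classical result of Landau and Vishkin~\cite{LV88}. The paper imports it without proof; here is how I would reconstruct it. The plan is the diagonal-transition (``furthest reaching'') method in $\AGed(X,Y)$ combined with constant-time longest-common-extension (LCE) queries.

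\textbf{Step 1: LCE structure.} Preprocess $X\cdot\dol\cdot Y$, with $\dol$ a fresh separator, so that one can answer in $\Oh(1)$ time the query $\mathrm{LCE}(i,j)$, the length of the longest common prefix of $X\fragmentco{i+1}{|X|+1}$ and $Y\fragmentco{j+1}{|Y|+1}$. I would rename characters to ranks in $\fragment{1}{2n+1}$ and build a suffix tree in linear time with an LCA structure on top (equivalently, a suffix array with an LCP array and range-minimum queries). I expect this to be the main technical obstacle: the alphabet consists of arbitrary word-size integers, so the renaming must be done by sorting. That takes linear time when the characters are polynomially bounded (radix sort). Otherwise one can use hashing-based renaming, which preserves equalities and which is all LCE needs, or accept an extra $\Oh(\log\log n)$ factor. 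I would handle this by citing standard linear-time LCE constructions rather than reproving them.

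\textbf{Step 2: diagonal transition.} For a diagonal $d=j-i$ and a cost $e\in\fragment{0}{k}$, define $L(e,d)$ as the largest row $i$ such that $(i,i+d)$ is reachable from $(0,0)$ at cost at most $e$, or $-\infty$ if there is no such row. Set $L(0,0)=\mathrm{LCE}(0,0)$. For $e\ge1$ and $|d|\le e$, compute
\[
i^\ast=\max\{L(e-1,d)+1,\;L(e-1,d-1),\;L(e-1,d+1)+1\},
\]
covering the substitution, insertion and deletion moves; clip $i^\ast$ to the grid, and set $L(e,d)=i^\ast+\mathrm{LCE}(i^\ast,i^\ast+d)$. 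Each entry costs $\Oh(1)$, and there are $\Oh(k^2)$ entries. The answer is the least $e\le k$ with $L(e,|Y|-|X|)\ge|X|$, and $\infty$ if no such $e$ exists.

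\textbf{Step 3: correctness.} The key fact is monotonicity along diagonals: the distance table satisfies $D[i+1][j+1]\ge D[i][j]$, because any path to $(i+1,j+1)$ can be rerouted to end at $(i,j)$ at no greater cost. Consequently the set of rows on diagonal $d$ with cost at most $e$ is downward closed within the reachable range, so storing only its maximum $L(e,d)$ loses nothing. I would then prove by induction on $e$ that the recurrence computes exactly this maximum. Every optimal path of cost $e$ decomposes as a path of cost $e-1$, one unit-cost edge, and a run of free diagonal matches. Greedily extending that free run (the LCE step) is safe by the same monotonicity.

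\textbf{Step 4: recovering the alignment.} For each entry $L(e,d)$, store which of the three options attained the maximum. Backtracking from the final entry gives $\Oh(k)$ unit-cost edges interleaved with match runs, which can be written out in $\Oh(n+k)$ time.

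The total running time is $\Oh(n)$ for preprocessing plus $\Oh(k^2)$ for the table and backtracking, i.e., $\Oh(n+k^2)$ as claimed.
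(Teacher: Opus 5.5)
Your reconstruction is correct and is essentially the Landau–Vishkin argument that the paper imports by citation; the paper gives no proof of its own. The argument tracks furthest-reaching points $L(e,d)$ on the $\Oh(k)$ relevant diagonals, extends them with $\Oh(1)$-time LCE queries from a suffix tree with LCA (or a suffix array with LCP and RMQ), and recovers the alignment by backtracking. The alphabet obstacle you flag does not arise in this paper's model: the word size $w=\Theta(\log n)$ makes every character an integer bounded by $n^{\Oh(1)}$, so radix-sort renaming already gives deterministic linear-time preprocessing.
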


\para{Orthogonal Vectors.}
An instance of the Orthogonal Vectors problem, denoted $\OV$, consists of two sets
\[
X=\{\vec u_1,\ldots,\vec u_n\},\qquad Y=\{\vec v_1,\ldots,\vec v_m\}\subseteq \{0,1\}^d.
\]
The goal is to decide whether there exist vectors $\vec u\in X$ and $\vec v\in Y$ such that $\vec u\cdot \vec v=0$, where
\[
\vec u\cdot \vec v\coloneqq\sum_{i=1}^d \vec u[i]\cdot \vec v[i]
\]
is the standard inner product.
We say that the instance is a \Yes instance if such a pair exists, and a \No instance otherwise.
We also allow $X$ and $Y$ to be represented as indexed lists with repetitions, with $|X|$ and $|Y|$ denoting their lengths.
Repetitions do not affect whether an orthogonal pair exists.

We use the following logarithmic-dimension form of the Orthogonal Vectors Hypothesis (OVH); see, e.g.,~\cite{KW19}. It is implied by SETH via a reduction of Williams~\cite{Wil05}.
\begin{hypothesis}[Orthogonal Vectors Hypothesis (OVH)]\label{hyp:ovh}
For every $\eps>0$, there is $c > 0$ such that no deterministic algorithm for $\OV$, restricted to $|X|=|Y|=n$ and $d = \ceil{c \log n}$, runs in time $\Oh(n^{2-\eps})$.
\lipicsEnd
\end{hypothesis}

The \emph{randomized Orthogonal Vectors Hypothesis} is the same statement with bounded-error randomized algorithms in place of deterministic algorithms.

\section{Lower Bound for the \ref{prob:dtw} Problem}\label{sec:lb}
We first prove the reduction from $\OV$ in \cref{thm:ov-to-dtw} and then use padding to derive \cref{thm:lb-dtw}.
The reduction encodes vectors by strings whose distances reveal their inner products.
We combine these strings with shorter gadgets and separating runs, then analyze which comparisons an alignment must make.

\subsection{Reduction}
Fix a vector dimension $d\ge1$.
We use the alphabet $\Sigma\coloneqq\{\dchara,\dcharb,\dchare,\dcharf,\dcharw\}$ and the unit-cost function.
For each bit, a coordinate gadget contains two short runs encoding its value, surrounded by long runs of $\dchare$ and $\dcharf$.
For $x,y\in\{0,1\}$, define
\[
E_X(x)\coloneqq\dchare^{4d}\dchara^{1+x}\dcharb^1\dcharf^{4d},
\qquad
E_Y(y)\coloneqq\dchare^{4d}\dcharb^{1+y}\dchara^1\dcharf^{4d}.
\]
The short runs have the following distance.
\begin{observation}\label{obs:coordination-dist}
For every $x,y\in\{0,1\}$, we have $\DTW(\dchara^{1+x}\dcharb,\dcharb^{1+y}\dchara)=2+x\cdot y$.
\end{observation}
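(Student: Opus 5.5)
We need $\DTW(\dchara^{1+x}\dcharb,\dcharb^{1+y}\dchara)=2+xy$. The plan is to handle the four cases $x,y\in\{0,1\}$ uniformly by working with the $2\times 2$ block structure of the alignment graph. Write $S\coloneqq\dchara^{1+x}\dcharb$ and $T\coloneqq\dcharb^{1+y}\dchara$. The runs of $S$ are $A_S=\dchara^{1+x}$ and $B_S=\dcharb$, and the runs of $T$ are $B_T=\dcharb^{1+y}$ and $A_T=\dchara$.

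This gives four blocks. The first vertex $(1,1)$ lies in the mismatch block $A_S\times B_T$, and the last vertex $(|S|,|T|)$ lies in the mismatch block $B_S\times A_T$. The two $0$-blocks are the off-diagonal ones: $A_S\times A_T$, which is the column of $T$'s final $\dchara$, and $B_S\times B_T$, which is the row of $S$'s final $\dcharb$.

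\textbf{Lower bound.} Every alignment pays $1$ at $(1,1)$ and $1$ at $(|S|,|T|)$, and these are distinct vertices, so its cost is at least $2$. This already settles the case $xy=0$ once the upper bound is in place. For $x=y=1$, i.e.\ $S=\dchara\dchara\dcharb$ and $T=\dcharb\dcharb\dchara$, I must show that no alignment has cost exactly $2$. A cost-$2$ alignment would visit only $0$-blocks between its endpoints. After leaving $(1,1)$, it must step to $(1,2)$, $(2,1)$, or $(2,2)$, all of which lie in the mismatch block $\fragment{1}{2}\times\fragment{1}{2}$ because its sides have length $2$. So every such alignment visits at least one more mismatch vertex, giving cost at least $3$.

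\textbf{Upper bound.} I give explicit paths.
\begin{itemize}
\item If $x=0$: take $(1,1)\to(2,1)\to\cdots\to(2,1+y)\to(2,2+y)$. The middle vertices lie in $B_S\times B_T$ and cost $0$, so the total cost is $2$.
\item If $y=0$: symmetrically, walk down column $2$, which lies in $A_S\times A_T$, for cost $2$.
\item If $x=y=1$: take $(1,1)\to(2,2)\to(3,3)$, which costs $3$.
\end{itemize}

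The only step requiring care is the lower bound in the case $x=y=1$, and it reduces to the observation that all three out-neighbours of $(1,1)$ are mismatches. I do not expect a real obstacle; the whole argument is a finite case check, which I would also confirm directly with the small dynamic-programming table.
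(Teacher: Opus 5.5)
Your proof is correct and follows the paper's argument. The two endpoint mismatches give the lower bound $2$. For $x=y=1$, every out-neighbour of $(1,1)$ is a further mismatch distinct from $(3,3)$, which raises the bound to $3$. Your explicit alignments then attain these bounds in all four cases.
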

\begin{proof}
The first and last vertices of every alignment are mismatches, giving a lower bound of $2$.
When $x=y=1$, every vertex immediately following the first vertex is also a mismatch, so the cost is at least $3$.
These bounds are attained by direct alignments of the four possible pairs of strings.
\end{proof}

For vectors $\vec u,\vec v\in\{0,1\}^d$, concatenating coordinate gadgets gives the vector gadgets
\[
G_X(\vec u)\coloneqq\bigodot_{i=1}^d E_X(\vec u[i]),
\qquad
G_Y(\vec v)\coloneqq\bigodot_{i=1}^d E_Y(\vec v[i]).
\]
Their \emph{narrow gadgets}, obtained by shortening every run to one character, are $\bot_X\coloneqq(\dchare\dchara\dcharb\dcharf)^d$ and $\bot_Y\coloneqq(\dchare\dcharb\dchara\dcharf)^d$.
Thus, $\shrink(G_X(\vec u))=\bot_X$ and $\shrink(G_Y(\vec v))=\bot_Y$ for every pair of vectors.
Each narrow gadget has length $4d$, and each vector gadget has length between $8d^2+2d$ and $8d^2+3d$.

The long $\dchare$- and $\dcharf$-runs force a sufficiently cheap alignment to compare corresponding coordinates.
Consequently, the coordinate costs add exactly.
\begin{lemma}\label{lem:dist_vect_gadget}
For every $\vec u,\vec v\in\{0,1\}^d$, we have $\DTW(G_X(\vec u),G_Y(\vec v))=2d+\vec u\cdot\vec v$.
\end{lemma}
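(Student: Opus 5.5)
The upper bound comes from a coordinate-by-coordinate alignment, and the lower bound from a block-counting argument that forces any cheap alignment to stay on the diagonal of coordinate gadgets.

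\textbf{Upper bound.} Write $G_X(\vec u)=\bigodot_i E_X(\vec u[i])$ and $G_Y(\vec v)=\bigodot_i E_Y(\vec v[i])$. By \cref{fact:dtw_partition}, it suffices to show $\DTW(E_X(x),E_Y(y))\le 2+x\cdot y$. The strings $E_X(x)$ and $E_Y(y)$ have the form $\dchare^{4d}\hat X\dcharf^{4d}$ and $\dchare^{4d}\hat Y\dcharf^{4d}$ with $\hat X,\hat Y\in\{\dchara,\dcharb\}^+$. Hence \cref{lem:sig_antsig_sig} together with \cref{obs:coordination-dist} gives equality $2+x\cdot y$. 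Summing over the $d$ coordinates yields $2d+\vec u\cdot\vec v$.

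\textbf{Lower bound: structure of a cheap alignment.} Suppose, for contradiction, that an alignment $\A$ has cost $c<2d+\vec u\cdot\vec v\le 3d<4d$.

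\begin{enumerate}
\item Every $\dchare$-run and every $\dcharf$-run of either string has length $4d>c$. Some position of such a run must therefore be aligned without a mismatch, so $\A$ visits a $0$-block in the row (resp.\ column) of that run.
\item Consider the $i$th $\dchare$-run of $G_X(\vec u)$. The $0$-blocks in its row are its intersections with the $\dchare$-runs of $G_Y(\vec v)$. Suppose $\A$ visits two of them, with indices $j<j'$. Then, inside a single row strip, $\A$ traverses the whole $\dcharf$-run of index $j$ in $G_Y(\vec v)$ horizontally. That costs $4d$ mismatches, because an $\dchare$-row never matches $\dcharf$.
\item So each $\dchare$-run of $G_X(\vec u)$ meets exactly one $\dchare$-run of $G_Y(\vec v)$, and symmetrically. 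By monotonicity, this defines an increasing bijection, which must be the identity: the $i$th $\dchare$-run of one string is matched with the $i$th of the other. The same holds for the $\dcharf$-runs.
\item Using \cref{lem:corners_0_block}, one can assume $\A$ passes through the last vertex of block $(\dchare_i,\dchare_i)$ and the first vertex of block $(\dcharf_i,\dcharf_i)$. I must double-check that this rerouting preserves the other visits; it should, because those blocks are disjoint in both coordinates.
\end{enumerate}

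\textbf{Lower bound: summing the costs.} The segments of $\A$ between the last vertex of $(\dchare_i,\dchare_i)$ and the first vertex of $(\dcharf_i,\dcharf_i)$ are vertex-disjoint for different $i$. Each such segment is an alignment of $\dchare\,\dchara^{1+x}\dcharb\,\dcharf$ with $\dchare\,\dcharb^{1+y}\dchara\,\dcharf$, up to the guard characters. By \cref{lem:sig_antsig_sig} and \cref{obs:coordination-dist}, each segment costs at least $2+\vec u[i]\vec v[i]$. Summing over $i$ gives $c\ge 2d+\vec u\cdot\vec v$, the desired contradiction.

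\textbf{Main obstacle.} The delicate step is justifying that $\A$ visits the diagonal $0$-blocks. In particular, I need that whenever $\A$ visits the $\dchare$-row of run $i$ in a $0$-block, the matched column run is the same one as in the $\dchare$-column argument. I plan to handle this cleanly by a counting and monotonicity argument: $\A$ has at least $d$ distinct visited $(\dchare,\dchare)$ blocks, arranged as an increasing chain of pairs, with each row and each column used exactly once. Such a chain is necessarily the diagonal.
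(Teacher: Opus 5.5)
Your proof is correct and follows the paper's argument essentially step for step. The upper bound is coordinate-wise via \cref{fact:dtw_partition,lem:sig_antsig_sig,obs:coordination-dist}; the lower bound uses the one-block-per-row-and-column argument for the long $\dchare$- and $\dcharf$-runs to force the diagonal blocks, and then sums the per-coordinate costs. Your step~4, rerouting through block corners via \cref{lem:corners_0_block}, is unnecessary (and that lemma would also require taking $\A$ to be optimal). As the paper does, you can take any vertices of $\A$ in the two diagonal blocks of coordinate $i$ and apply \cref{obs:sub_DTW,lem:sig_antsig_sig}, which already allow guard runs of arbitrary positive length.
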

\begin{proof}
Align corresponding coordinate gadgets, matching their outer runs at cost $0$.
By \cref{obs:coordination-dist,fact:dtw_partition}, this gives cost at most $2d+\vec u\cdot\vec v\le3d$.
For the lower bound, let $\A$ be an optimal alignment, whose cost is therefore at most $3d$.
We first show that it matches the long runs of each coordinate to their counterparts, then bound the cost between these matches.

For $i,j\in[1\dd d]$, let $B^{\dchare}_{i,j}$ be the block induced by the $\dchare$-run of coordinate $i$ in $G_X(\vec u)$ and the $\dchare$-run of coordinate $j$ in $G_Y(\vec v)$.
Define $B^{\dcharf}_{i,j}$ analogously for the $\dcharf$-runs, and write $B^{\dchare}_i\coloneqq B^{\dchare}_{i,i}$ and $B^{\dcharf}_i\coloneqq B^{\dcharf}_{i,i}$.

Among the blocks $B^{\dchare}_{i,j}$, the alignment must visit a block in every row $i$ and every column $j$.
Otherwise, an entire run of $4d$ copies of $\dchare$ in one string would be matched only against other symbols, exceeding the cost bound.
It cannot visit two such blocks in the same row either: between any two $\dchare$-runs of the other string lies an $\dcharf$-run of length $4d$, all of which would then mismatch the fixed $\dchare$-run.
The same argument excludes two distinct blocks in one column.
Hence, exactly one block in each row and column is visited.
Monotonicity forces these blocks to be $B^{\dchare}_1,\ldots,B^{\dchare}_d$.
Applying the same argument to the $\dcharf$-runs shows that $\A$ also visits $B^{\dcharf}_1,\ldots,B^{\dcharf}_d$.

For each coordinate $i\in[1\dd d]$, consider the subpath between a visited vertex in $B^{\dchare}_i$ and a visited vertex in $B^{\dcharf}_i$.
In each string, the corresponding substring consists of that coordinate's two short runs, preceded by a non-empty $\dchare$-run and followed by a non-empty $\dcharf$-run.
By \cref{obs:sub_DTW,lem:sig_antsig_sig}, this subpath has cost at least $\DTW(\dchara^{1+\vec u[i]}\dcharb,\dcharb^{1+\vec v[i]}\dchara)=2+\vec u[i]\cdot\vec v[i]$.
These subpaths occur in disjoint coordinate intervals and in increasing order along $\A$.
Their costs therefore sum to at least $2d+\vec u\cdot\vec v$, proving the lower bound.
\end{proof}

For the global construction, let $X\coloneqq(\vec u_0,\ldots,\vec u_{n-1})$ and $Y\coloneqq(\vec v_0,\ldots,\vec v_{m-1})$ be non-empty lists of binary vectors of dimension $d\ge1$.

We separate the gadgets by copies of the \emph{wall} $\wall\coloneqq\dcharw^{12d}$.
In the first string, each vector gadget is followed by a narrow gadget of the opposite type; the second string uses these two types in the opposite order.
Define
\[
\begin{aligned}
\hat X&\coloneqq\wall\,\bigodot_{i=0}^{n-1}[G_X(\vec u_i)\,\wall\,\bot_Y\,\wall\,],\\
\hat Y&\coloneqq\wall\,\bigodot_{i=0}^{n+m-1}[\bot_X\,\wall\,G_Y(\vec v_{i\bmod m})\,\wall\,].
\end{aligned}
\]
The first string contains $n$ two-gadget units, while the second contains $n+m$ such units and repeats the list $Y$ cyclically.
Since a vector gadget and its narrow version have the same run compression, their corresponding runs can be aligned at cost $0$.
We next describe the resulting zero-cost paths through the alignment graph.

\subsubsection{The Alignment Graph}
We work in the unit-cost alignment graph $\AG(\hat X,\hat Y)$, with vertex set $V\coloneqq[1\dd|\hat X|]\times[1\dd|\hat Y|]$.
For any directed path $\A$, its cost $\cost(\A)$ is the sum of its vertex costs, counting each visited vertex once, including the endpoints.
For vertices $u,v\in V$, write $\dist(u,v)$ for the minimum cost of a directed path from $u$ to $v$; set this distance to $\infty$ if no such path exists.

We number the walls of $\hat X$ from $0$ to $2n$, and those of $\hat Y$ from $0$ to $2(n+m)$.
A \emph{portal} is the block induced by one wall in each string.
For wall indices $a\in[0\dd2n]$ and $b\in[0\dd2(n+m)]$, let $P_{a,b}$ denote the corresponding portal.
The \emph{wall offset} of $P_{a,b}$ is $a-b$.
Every wall consists of $12d$ copies of $\dcharw$, so every portal is a $0$-block.

For $a\in[0\dd2n)$, the gadget after wall $a$ of $\hat X$ is $G_X(\vec u_{a/2})$ when $a$ is even and $\bot_Y$ when $a$ is odd.
For $b\in[0\dd2(n+m))$, the gadget after wall $b$ of $\hat Y$ is $\bot_X$ when $b$ is even and $G_Y(\vec v_{((b-1)/2)\bmod m})$ when $b$ is odd.
Every gadget has $4d$ runs.

We now extend the portal indexing to full diagonals of blocks, including the blocks inside gadget comparisons.
For each integer $s$, begin with the union of all portals $P_{a,b}$ with $a-b=s$.
For every such pair with $a<2n$ and $b<2(n+m)$, also include the blocks pairing run $i$ of the gadget after wall $a$ with run $i$ of the gadget after wall $b$, for all $i\in[1\dd4d]$.
The added blocks are exactly those between $P_{a,b}$ and $P_{a+1,b+1}$ when advancing by one run in each string.
Let $D_s$ denote the resulting set of vertices.
Thus, whenever it is non-empty, $D_s$ is a consecutive chain of blocks through all portals with wall offset $a-b=s$.
Each such chain begins and ends in a portal, since both strings begin and end with a wall.
We call $D_s$ a \emph{zero diagonal} when $s$ is even.

On a zero diagonal, each vector gadget is paired with its narrow version, so every vertex has cost $0$.
Whenever $u,v$ lie on the same zero diagonal and $v$ is reachable from $u$, monotone paths within blocks and the diagonal edges joining consecutive blocks give $\dist(u,v)=0$.

\subsubsection{YES Instances}
We construct an alignment with $m$ crossings between consecutive zero diagonals, using an orthogonal pair for one crossing and vectors from the ends of $X$ for the others.
\begin{lemma}\label{lem:ov-yes-to-dtw}
Suppose that $(X,Y)$ is a \Yes instance of $\OV$ with $n>2m$.
If each of the first and last $m$ vectors of $X$ has inner product $1$ with every vector of $Y$,
then $\DTW(\hat X,\hat Y)<(10d+1)m$.
\end{lemma}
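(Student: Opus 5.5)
We construct an explicit alignment from $(1,1)$ to $(|\hat X|,|\hat Y|)$ that uses zero diagonals for free travel and makes exactly $m$ crossings, each between consecutive zero diagonals $D_s$ and $D_{s-2}$. The alignment starts in portal $P_{0,0}$, which lies on $D_0$, and ends in $P_{2n,2(n+m)}$, which lies on $D_{-2m}$. It therefore has to decrease the wall offset by $2$ exactly $m$ times. Since $n>2m$ and the orthogonal pair $(\vec u_{i^*},\vec v_{j^*})$ has $i^*\in[m\dd n-m)$, the indices $i^*$ and $j^*$ can be placed suitably.

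The basic move is the canonical switch from \cref{fig:highway-switch-schematic}. Suppose the path is in portal $P_{a,b}$ with $a=2i$ even, so the next gadget of $\hat X$ is $G_X(\vec u_i)$, and $b$ is even. Then the next gadget of $\hat Y$ is $\bot_X$. The switch proceeds in four moves:
\begin{enumerate}
\item Move horizontally from $P_{a,b}$ across $\bot_X$ (length $4d$, all mismatching $\dcharw$) into $P_{a,b+1}$, at cost $4d$.
\item Traverse the pair $G_X(\vec u_i)$ versus $G_Y(\vec v_j)$ with $j=(b/2)\bmod m$, entering from the last vertex of $P_{a,b+1}$ and exiting at the first vertex of $P_{a+1,b+2}$. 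By \cref{lem:dist_vect_gadget} and \cref{obs:sub_DTW} (the surrounding $\dcharw$ vertices cost $0$), this costs $2d+\vec u_i\cdot\vec v_j$.
\item Move horizontally across $\bot_X$ after wall $b+2$, reaching $P_{a+1,b+3}$, at cost $4d$.
\item Continue from $P_{a+1,b+3}$ on $D_{a-b-2}$ at cost $0$.
\end{enumerate}
The switch has total cost $10d+\vec u_i\cdot\vec v_j$. One point needs checking: the gadget after wall $a+1$ is $\bot_Y$ and the gadget after wall $b+3$ is $G_Y$, so these match on a zero diagonal. This holds because $b+3$ is odd. Paths inside portals and within zero diagonals are free, and consecutive blocks are joined by diagonal edges, so the whole thing is a valid monotone path.

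Next we schedule the switches. Travelling along $D_{-2t}$ (after $t$ switches) from portal $P_{2i,2i+2t}$, a switch at unit $i$ compares $\vec u_i$ with $\vec v_{(i+t)\bmod m}$. We perform the switches at units $i_0,i_0+1,\dots$ consecutively, or more generally at chosen increasing $i$. For the orthogonal pair, we pick the switch index $t^*$ and unit $i^*$ so that $(i^*+t^*)\equiv j^*\pmod m$; choosing $t^*\in[0\dd m)$ accordingly makes this always possible. The remaining switches use units from the first $m$ or last $m$ vectors of $X$, each costing $10d+1$ by assumption. Concretely, the $t^*$ switches before it are made at units $0,\dots,t^*-1$, all $<m\le i^*$. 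The $m-1-t^*$ switches after it are made at units $n-(m-1-t^*),\dots,n-1$, all $\ge n-m>i^*$. Moving on a zero diagonal between these units costs nothing. The diagonal must exist at those portals: $D_{-2t}$ contains $P_{2i,2i+2t}$ whenever $2i+2t\le 2(n+m)$, which holds. The final diagonal $D_{-2m}$ reaches $P_{2n,2n+2m}$, the last vertex.

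The total cost is $(m-1)(10d+1)+10d=(10d+1)m-1<(10d+1)m$. The main obstacle is bookkeeping: verifying the parity/type alternation so that each switch really crosses $\bot_X$ twice and a $G_X/G_Y$ pair, and that the landing portal lies on the next zero diagonal with the correct $\vec v$ index. The cyclic repetition of $Y$ in $\hat Y$ guarantees that the required index $j^*$ is available, and for that step it suffices to set $t^*$ as the unique value in $[0\dd m)$ with $t^*\equiv j^*-i^*\pmod m$.
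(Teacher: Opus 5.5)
Your proposal is correct and takes essentially the same route as the paper. It uses the same switching path through $p_{a,b},p_{a,b+1},p_{a+1,b+2},p_{a+1,b+3}$ of cost $10d+\vec u_{a/2}\cdot\vec v_{(b/2)\bmod m}$, and the same schedule with $t^*=(j^*-i^*)\bmod m$ using units $[0\dd t^*)$, then $i^*$, then $[n-m+t^*+1\dd n)$, for a total of $(10d+1)m-1$.

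One small bookkeeping point: the landing portal $P_{a+1,b+3}$ must exist, so the range check should be $2i+2t\le 2(n+m)-4$ rather than merely $2i+2t\le 2(n+m)$. This follows immediately from $i\le n-1$ and $t\le m-1$.
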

\begin{proof}
Call the first and last $m$ vectors of $X$ the \emph{boundary vectors}.
Choose an orthogonal pair $\vec u_{i^*},\vec v_{j^*}$.
The assumed inner products exclude the boundary vectors, giving $i^*\in [m\dd n-m)$ and $j^*\in [0\dd m)$.
The alignment will follow zero diagonals between $m$ paths that each move to the next zero diagonal.

For wall indices $a\in[0\dd 2n]$ and $b\in[0\dd 2(n+m)]$, let $p_{a,b}$ be the first vertex of the portal $P_{a,b}$.
For even wall indices $a\in[0\dd 2n-2]$ and $b\in[0\dd 2(n+m)-4]$, we will construct a \emph{switching path} $\A_{a,b}$ from $p_{a,b}$ to $p_{a+1,b+3}$ of cost $10d+\vec u_{a/2}\cdot\vec v_{(b/2)\bmod m}$.
This path moves from $D_{a-b}$ to $D_{a-b-2}$.
We first describe how to assemble the switching paths into the full alignment, and then construct each switching path.

After following $r$ switching paths, the alignment is on $D_{-2r}$.
A switching path using a vector $\vec u_q$ therefore starts at $p_{2q,2(q+r)}$ and compares it with $\vec v_{(q+r)\bmod m}$.
To compare the chosen orthogonal pair, set $t\coloneqq(j^*-i^*)\bmod m$.
We use $t$ boundary vectors before the chosen orthogonal pair and $m-t-1$ boundary vectors after it.
Form a sequence $q_0,\ldots,q_{m-1}$ by listing the indices in $[0\dd t)$, then $i^*$, and finally the indices in $[n-m+t+1\dd n)$, with each range in increasing order.
The inequalities $t<m\le i^*\le n-m-1<n-m+t+1$ show that the sequence is strictly increasing, with $q_t=i^*$.

For each $r\in[0\dd m)$, use the switching path $\A_{2q_r,2(q_r+r)}$.
These indices are in the required ranges because $0\le q_r<n$ and $0\le q_r+r\le n+m-2$.
The path starts on $D_{-2r}$ and ends on $D_{-2(r+1)}$.
Consecutive switching paths can therefore be connected along zero diagonals: the end of one can reach the start of the next because $q_{r+1}\ge q_r+1$.
Similarly, the start vertex $p_{0,0}=(1,1)$ can reach the start of the first switching path along $D_0$, and the end of the last switching path can reach $p_{2n,2(n+m)}$ along $D_{-2m}$, since $q_0\ge0$ and $q_{m-1}<n$.
All these connecting paths stay on zero diagonals and therefore cost $0$.
After the last connection, traverse the final portal to $(|\hat X|,|\hat Y|)$ at cost $0$.

The switching path with $r=t$ compares $\vec u_{i^*}$ with $\vec v_{j^*}$, because $q_t=i^*$ and $(i^*+t)\bmod m=j^*$, so it costs $10d$.
Every other switching path uses a boundary vector of $X$ and therefore costs $10d+1$.
The endpoints shared by consecutive path pieces lie in portals and have cost $0$, so the full alignment costs $(m-1)(10d+1)+10d=(10d+1)m-1$.
It remains to supply the switching paths with the promised costs.

Construct the switching path by concatenating three paths between consecutive vertices in the sequence
\[
p_{a,b},\qquad p_{a,b+1},\qquad p_{a+1,b+2},\qquad p_{a+1,b+3}.
\]
The first and third pieces keep the first coordinate fixed and traverse a narrow gadget of $\hat Y$ between two walls.
Their only mismatches are the $4d$ characters of the narrow gadget, so each costs $4d$.
For the second piece, traverse the starting portal to its last vertex at cost $0$, then take a diagonal edge into the two vector gadgets.
Follow an optimal alignment of $G_X(\vec u_{a/2})$ and $G_Y(\vec v_{(b/2)\bmod m})$, and enter the next portal by a diagonal edge.
By \cref{lem:dist_vect_gadget}, this piece costs $2d+\vec u_{a/2}\cdot\vec v_{(b/2)\bmod m}$.
All shared endpoints belong to portals and hence have cost $0$, so the three costs add to $10d+\vec u_{a/2}\cdot\vec v_{(b/2)\bmod m}$, as required.
\end{proof}

\subsubsection{NO Instances}
We now show that a \No instance admits no alignment of cost below $(10d+1)m$.
We first study paths of cost at most $10d$, which will let us bound the cost of each crossing between consecutive zero diagonals.
A \emph{transition} is the subpath from the last visited vertex of one portal to the first visited vertex of the next distinct portal.
A horizontal transition keeps the first wall index fixed, whereas a diagonal transition increases both wall indices by one.

\begin{lemma}\label{lem:dtw-portal-transitions}
Suppose that $(X,Y)$ is a \No instance of $\OV$, and let $\A$ be a path in $\AG(\hat X,\hat Y)$ with $\cost(\A)\le10d$.
Consider a transition of $\A$ from $P_{a,b}$ to $P_{a',b'}$.
Then $a'-a,b'-b\in\{0,1\}$, and the following hold:
\begin{enumerate}
    \item If $(a',b')=(a,b+1)$, then $b$ is even and the transition costs at least $4d$.
    \item If $(a',b')=(a+1,b+1)$, with $a$ even and $b$ odd, then the transition costs at least $2d+1$.
\end{enumerate}
\end{lemma}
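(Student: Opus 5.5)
We work with the portion of the transition strictly between the two portals. Its vertices lie in blocks pairing a gadget (or wall) of $\hat X$ with a gadget (or wall) of $\hat Y$, and we charge its cost against the runs it must skip over.

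\textbf{Step~1: wall indices advance by at most one.} Suppose $a'\ge a+2$. Wall $a+1$ of $\hat X$ lies strictly between walls $a$ and $a'$. Since the transition visits no portal in between, every row of wall $a+1$ is visited only in vertices whose column lies in a gadget of $\hat Y$, never in a wall. Each of these $12d$ rows contributes a distinct mismatch, so the cost is at least $12d>10d$, a contradiction. The symmetric argument applied to columns gives $b'\le b+1$. Since the next portal is distinct and reachable, $(a',b')\in\{(a,b+1),(a+1,b),(a+1,b+1)\}$.

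\textbf{Step~2: horizontal transitions, claim~1.} For $(a',b')=(a,b+1)$, the path stays in the rows of wall $a$ while crossing the columns of the gadget after wall $b$. Every vertex there has the form $(\text{row in wall } a, \text{column in that gadget})$. The wall consists of $\dcharw$, which occurs in no gadget, so each column of the gadget yields a distinct mismatch. Hence the cost is at least the gadget's length. If $b$ were odd, this gadget would be a vector gadget $G_Y$ of length at least $8d^2+2d>10d$, which is impossible. So $b$ is even, the gadget is $\bot_X$ of length $4d$, and the cost is at least $4d$.

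\textbf{Step~3: diagonal transitions, claim~2.} For $(a',b')=(a+1,b+1)$ with $a$ even and $b$ odd, the gadgets between the portals are $G_X(\vec u_{a/2})$ and $G_Y(\vec v_{(b-1)/2 \bmod m})$. Let $Q$ be the rectangle of rows of the first gadget and columns of the second. The transition starts at the last vertex of $P_{a,b}$ and ends at the first vertex of $P_{a+1,b+1}$. Its intermediate vertices involve only rows in wall $a$, the gadget, or wall $a+1$, and only columns in wall $b$, the gadget, or wall $b+1$; none of them is a wall--wall vertex, since any such vertex would lie in a portal.

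We apply the projection argument from \cref{lem:sig_antsig_sig}, with $\dcharw$ as the guard symbol and the two gadgets as $\hat X$ and $\hat Y$. Here the guards are walls of length $12d$ surrounding both gadgets, and every vertex of the transition outside $Q$ pairs $\dcharw$ with a gadget character, so it costs $1$. This is exactly the situation of that lemma, applied to the substrings (wall $a$ tail, gadget, wall $a+1$ head) and (wall $b$ tail, gadget, wall $b+1$ head), using \cref{obs:sub_DTW}. It yields a cost of at least $\DTW(G_X(\vec u),G_Y(\vec v))$.

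By \cref{lem:dist_vect_gadget}, this equals $2d+\vec u\cdot\vec v$. Since the instance is a \No instance, $\vec u\cdot\vec v\ge1$, so the transition costs at least $2d+1$.

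The main obstacle is making the guard argument of Step~3 airtight: the transition's endpoints are portal corners rather than the full strings, so the substring bookkeeping needed to invoke \cref{lem:sig_antsig_sig} and \cref{obs:sub_DTW} must be set up carefully. Step~1's row-distinctness argument is routine.
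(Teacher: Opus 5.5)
Your proposal follows the paper's proof essentially step for step. A skipped wall forces $12d$ mismatches. A single-index transition mismatches every character of the crossed gadget. A diagonal transition reduces, via \cref{obs:sub_DTW,lem:sig_antsig_sig}, to $\DTW(G_X(\vec u),G_Y(\vec v))=2d+\vec u\cdot\vec v$.

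One inequality needs repair. In Step~2 you use $|G_Y(\vec v)|\ge 8d^2+2d>10d$, but this fails for $d=1$: there $G_Y(\vec 0)$ has length exactly $10=10d$, so crossing it horizontally would fit within the budget, and odd $b$ would not be excluded. The paper closes this by observing that in a \No instance with non-empty lists every vector is nonzero, since a zero vector would be orthogonal to everything. Hence every vector gadget has length at least $8d^2+2d+1>10d$ for all $d\ge1$.

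A minor imprecision: a transition starts at the last \emph{visited} vertex of $P_{a,b}$, which need not be the portal's corner. This is harmless, because your Step~3 argument only needs non-empty $\dcharw$-runs around both gadgets.
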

\begin{proof}
Since $(X,Y)$ is a \No instance with non-empty lists, every vector is nonzero.
Each vector gadget therefore has length at least $8d^2+2d+1>10d$, and each wall has length $12d>10d$.
Monotonicity gives $a'\ge a$ and $b'\ge b$.
If $a'\ge a+2$, the transition traverses the entire wall of $\hat X$ with index $a+1$.
No vertex visited in that wall can belong to a portal, since the transition joins consecutive portal visits.
All $12d$ characters of the wall are therefore mismatched, contradicting the cost bound.
The same argument for $\hat Y$ excludes $b'\ge b+2$, so each wall index increases by at most $1$.

If only the second wall index increases, the transition crosses the gadget between walls $b$ and $b+1$ of $\hat Y$ while staying within one wall of $\hat X$.
The gadget contains no $\dcharw$, so each of its characters contributes a mismatch.
For odd $b$, this is a vector gadget, whose length exceeds the cost of $\A$.
Thus, $b$ is even and the crossed gadget is $\bot_X$, of length $4d$, proving the first bound.

Finally, suppose that both indices increase, with $a$ even and $b$ odd.
The intervening gadgets are $G_X(\vec u)$ and $G_Y(\vec v)$ for some $\vec u\in X$ and $\vec v\in Y$.
In each string, the substring between the transition's endpoints consists of the corresponding gadget preceded and followed by non-empty $\dcharw$-runs.
By \cref{obs:sub_DTW,lem:sig_antsig_sig}, the transition costs at least $\DTW(G_X(\vec u),G_Y(\vec v))$.
This distance is $2d+\vec u\cdot\vec v\ge2d+1$ by \cref{lem:dist_vect_gadget}, since $(X,Y)$ is a \No instance.
\end{proof}

We now lower-bound the cost of moving from one zero diagonal to the next with smaller wall offset.
After extending its endpoints to portals at cost $0$, any such path of cost at most $10d$ must contain two horizontal transitions, each crossing a narrow gadget, with a vector-gadget comparison between them.
For a \No instance, these three transitions already cost at least $4d+(2d+1)+4d=10d+1$.

\begin{lemma}\label{lem:movediagonal}
Let $t$ be an integer, and let $u\in D_{2t}$ and $v\in D_{2t-2}$ be vertices such that $v$ is reachable from $u$.
If $(X,Y)$ is a \No instance of $\OV$, then $\dist(u,v)\ge10d+1$.
\end{lemma}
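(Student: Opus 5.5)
We argue by contradiction. Suppose $\dist(u,v)\le 10d$, and let $\A$ be a shortest path from $u$ to $v$, so that $\cost(\A)\le 10d$.

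\textbf{Extending the endpoints to portals.} Since $u\in D_{2t}$, it can be reached at cost $0$ along $D_{2t}$ from the first vertex of the portal of $D_{2t}$ preceding it. Likewise, $v$ reaches at cost $0$ the last vertex of the next portal of $D_{2t-2}$ after it. Such portals exist because every zero diagonal begins and ends in a portal. Prepending and appending these zero-cost subpaths yields a path $\A'$ of cost at most $10d$ that starts in a portal with wall offset $2t$ and ends in a portal with wall offset $2t-2$.

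\textbf{Transition structure.} We apply \cref{lem:dtw-portal-transitions} to $\A'$. Every transition increments each wall index by $0$ or $1$. Hence the wall offset changes by $-1$ for a horizontal transition, by $+1$ for a vertical one, and by $0$ for a diagonal one. We need to know which transitions are possible.
\begin{itemize}
\item Vertical transitions cross a gadget of $\hat X$ (either $G_X$ or $\bot_Y$) while staying within one wall of $\hat Y$. The symmetric argument gives cost at least $4d$, and only when $a$ is odd.
\item Horizontal transitions cost at least $4d$ by item~1 of \cref{lem:dtw-portal-transitions}.
\end{itemize}
The offset must drop by $2$ in total, so $\A'$ contains at least two horizontal transitions. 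Two horizontal transitions already cost $8d$, so the budget $10d$ leaves no room for a vertical transition, which would cost another $4d$. Therefore all other transitions are diagonal.

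\textbf{Locating the vector comparison.} Consider the first two horizontal transitions. Call them $P_{a,b}\to P_{a,b+1}$ and, later, $P_{a'',b''}\to P_{a'',b''+1}$. By \cref{lem:dtw-portal-transitions}, both $b$ and $b''$ are even. Between these two transitions there are only diagonal transitions, so $b''-(b+1)=a''-a$. Since $b+1$ is odd and $b''$ is even, $a''-a$ is odd and in particular at least $1$.

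We claim that some diagonal transition in between starts from a portal $P_{a_0,b_0}$ with $a_0$ even and $b_0$ odd. The diagonal steps go from $(a,b+1)$ to $(a'',b'')$. Along them $b_0-a_0$ stays equal to $b+1-a$, and $b+1$ is odd. So $b_0$ is odd exactly when $a_0$ is even. The indices $a_0=a,a+1,\dots,a''-1$ form at least one value, and if $a$ is odd then $a''-a\ge 1$ odd gives $a''-a\ge 1$ so $a+1\le a''-1$ only when $a''-a\ge 2$... this parity bookkeeping needs care. The main check is that among $a,\dots,a''-1$ there is an even $a_0$. If $a$ is even this is immediate. If $a$ is odd, $a''-a$ odd and positive with... we must verify directly. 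When $a$ is odd, the gadget after wall $a$ of $\hat X$ is $\bot_Y$, and the horizontal step at $b$ even crosses $\bot_X$. We expect such configurations to be ruled out by tracking offsets, or by observing that a diagonal transition pairing $\bot_Y$ with $\bot_X$ (crossing from the $+1$ offset) also costs $\Omega(d)$.

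I expect this parity step to be the main obstacle. If needed, one can additionally bound the diagonal transitions with $a_0$ odd and $b_0$ even ($\bot_Y$ against $\bot_X$, whose $\DTW$ is $\ge 2d$ by reasoning similar to \cref{lem:dist_vect_gadget}). Either way, some intervening transition costs at least $2d+1$, using item~2 of \cref{lem:dtw-portal-transitions}.

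\textbf{Summing.} The two horizontal transitions and this intervening transition are disjoint subpaths, except possibly for shared endpoints, and those lie in portals of cost $0$. Their costs therefore add to at least $4d+(2d+1)+4d=10d+1$. This contradicts $\cost(\A')\le 10d$, so $\dist(u,v)\ge 10d+1$.
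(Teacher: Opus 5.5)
There is a genuine gap at the parity step you flagged, and your proposed fallback does not close it. Up to that point your argument is sound. This includes the symmetric bound for vertical transitions and the additivity of costs over transitions with disjoint interiors. Together these leave exactly two horizontal transitions, with all other transitions diagonal. But suppose the first horizontal transition started at $P_{a,b}$ with $a$ odd. Then every diagonal transition between the two horizontal ones would start at a portal whose wall indices have equal parity. Such a transition pairs $\bot_Y$ with some $G_Y(\vec v)$, or some $G_X(\vec u)$ with $\bot_X$. Both pairs have equal run compressions, so these transitions can run along a zero diagonal at cost $0$, and no bound of order $d$ can come from them. The pairing of $\bot_Y$ with $\bot_X$ that you invoke arises only from portals with odd wall offset, which is not the situation in that case. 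As written, your argument certifies only $8d$.

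The missing observation uses only what you have already established. Since vertical transitions are excluded, the wall offset never increases, and every transition before the first horizontal one is diagonal. Hence the source $P_{a,b}$ of the first horizontal transition still has wall offset $a-b=2t$. Item~1 of \cref{lem:dtw-portal-transitions} gives $b$ even, so $a$ is even as well. That transition ends at $P_{a,b+1}$, whose offset $2t-1$ differs from the final offset, so another transition follows. It cannot be horizontal, since $b+1$ is odd, and it cannot be vertical. It is therefore diagonal from a portal with even first and odd second wall index, and item~2 gives cost at least $2d+1$; your summation then goes through. This is exactly the paper's argument. The paper takes the last portal of offset $2t$ before the first portal of offset $2t-2$, so all portals strictly between them have offset $2t-1$ and are joined by diagonal transitions, and it deduces that $a$ is even from $a-b=2t$. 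That choice of portals avoids your separate budget argument against vertical transitions, but either route needs the parity of $a$ pinned down via the offset.
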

\begin{proof}
Suppose, for a contradiction, that $\dist(u,v)\le10d$.
We first extend the path endpoints to portals without increasing the cost.
If $u$ lies in a portal, let $u^-$ be its first vertex; otherwise, let $u^-$ be the first vertex of the preceding portal on $D_{2t}$.
Define $v^+$ analogously, using the last vertex of the portal containing $v$ or of the following portal on $D_{2t-2}$.
These portals exist because each diagonal chain begins and ends in a portal.
The vertex $u^-$ can reach $u$, and $v$ can reach $v^+$, with both extensions following zero diagonals at cost $0$.
We can therefore choose a path $\A$ from $u^-$ to $v^+$ with $\cost(\A)\le10d$.

By \cref{lem:dtw-portal-transitions}, each wall index increases by $0$ or $1$ between consecutive distinct portals visited by $\A$.
Thus, the wall offset changes by at most $1$.

List the visited portals as $Q_0,\ldots,Q_L$, and write $\delta_\ell$ for the wall offset of $Q_\ell$.
The endpoint offsets are $\delta_0=2t$ and $\delta_L=2t-2$.
Let $j$ be the first index with $\delta_j=2t-2$, and let $i<j$ be the last preceding index with $\delta_i=2t$.
The step bound gives $j\ge i+2$ and $\delta_\ell=2t-1$ for every $\ell\in[i+1\dd j-1]$.
Indeed, a larger intermediate offset would force a later return to $2t$ on the way to $Q_j$, and a smaller one would reach $2t-2$ before $j$.
Thus, the transitions from $Q_i$ to $Q_{i+1}$ and from $Q_{j-1}$ to $Q_j$ are horizontal, while all transitions between the intermediate portals are diagonal.

By \cref{lem:dtw-portal-transitions}, both selected horizontal transitions start at walls of $\hat Y$ with even indices and cost at least $4d$ each.
Let $P_{a,b}$ be the portal $Q_i$.
The index $b$ is even, and so is $a$, because $a-b=2t$.
The first horizontal transition reaches $Q_{i+1}=P_{a,b+1}$.
The next transition cannot already be the second horizontal one, since $b+1$ is odd.
Hence, $i+2<j$, and the next portal is the diagonal neighbor $Q_{i+2}=P_{a+1,b+2}$.
This diagonal transition starts at an even first wall index and an odd second wall index, so it costs at least $2d+1$ by \cref{lem:dtw-portal-transitions}.

The two selected horizontal transitions and this diagonal transition have disjoint interiors, and any shared endpoints lie in portals and cost $0$.
Their lower bounds therefore add, yielding
\[
\cost(\A)\ge4d+(2d+1)+4d=10d+1,
\]
contrary to our choice of $\A$.
Since all path costs are integers, this proves $\dist(u,v)\ge10d+1$.
\end{proof}

We can now compare the costs of complete alignments.
Every alignment starts on $D_0$ and ends on $D_{-2m}$, so it must cross $m$ successive pairs of zero diagonals.
For a \No instance, each crossing costs at least $10d+1$ by \cref{lem:movediagonal}.

\begin{lemma}\label{lem:ov-no-to-dtw}
If $(X,Y)$ is a \No instance of $\OV$, then $\DTW(\hat X,\hat Y)\ge(10d+1)m$.
\end{lemma}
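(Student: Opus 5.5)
Take an arbitrary alignment $\A$ from $(1,1)$ to $(|\hat X|,|\hat Y|)$. We will cut it into $m$ consecutive subpaths, each going from some zero diagonal $D_{2t}$ to $D_{2t-2}$, and apply \cref{lem:movediagonal} to each. The start vertex $(1,1)$ lies in the portal $P_{0,0}\subseteq D_0$. The end vertex lies in the portal $P_{2n,2(n+m)}\subseteq D_{-2m}$.

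For each $r\in[1\dd m]$, let $v_r$ be the first vertex of $\A$ lying on $D_{-2r}$. Set $u_1\coloneqq(1,1)$, and for $r\ge2$ let $u_r$ be the last vertex of $\A$ on $D_{-2(r-1)}$ that precedes $v_r$. We need to know that $v_r$ exists and that $u_r$ is well defined; this reduces to the claim below.

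\textbf{Claim.} Before $\A$ reaches any vertex of $D_s$ with $s\le -2r$, it must already have visited $D_{-2(r-1)}$.

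To prove it, use the block structure. The sets $D_s$ partition the blocks by run-offset within each aligned gadget/wall chunk. Each edge of $\AG$ either stays in a block or moves to an adjacent block. That move changes the run index of one string by at most one, so it changes the offset index $s$ by at most one between consecutive blocks. Consequently $s$ changes by at most $1$ per step. Since it starts at $0$ and ends at $-2m$, it passes through every even value $-2r$ in order, which gives the claim.

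The main obstacle is making this ``offset changes by at most one'' intuition rigorous. The sets $D_s$ are defined through walls, not through raw run indices, and the gadgets in $\hat X$ and $\hat Y$ have the same number $4d$ of runs. Because of that, I would define the diagonal index of a block as (global run index in $\hat X$) $-$ (global run index in $\hat Y$). Each wall and gadget is $1$ resp.\ $4d$ runs, and they align one-to-one, so this global run offset is a fixed multiple relation: blocks in $D_s$ have global run offset $s(4d+1)$. An edge changes the global run offset by at most $1$. So moving from offset $-2(r-1)(4d+1)$ to $-2r(4d+1)$ forces any path to pass through intermediate run offsets, and the one with $s=-2(r-1)$ is touched last. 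Under this reformulation, $u_r$ is the last visit to run offset $-2(r-1)(4d+1)$ before $v_r$. The subpaths $\A[u_r,v_r]$ are pairwise internally disjoint and appear in order, since $v_r\preceq u_{r+1}$ along $\A$ (possibly with $v_r=u_{r+1}$).

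By \cref{lem:movediagonal}, each subpath costs at least $10d+1$. If consecutive subpaths share an endpoint $v_r=u_{r+1}$, that vertex lies on a zero diagonal and has cost $0$, so no cost is double counted. Summing over $r\in[1\dd m]$ gives $\cost(\A)\ge(10d+1)m$. Minimizing over $\A$ gives $\DTW(\hat X,\hat Y)\ge(10d+1)m$.
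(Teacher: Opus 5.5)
Your proof is correct and follows the paper's argument. The paper likewise identifies $D_s$ with the vertices whose run-index difference is $(4d+1)s$, uses that this difference changes by at most $1$ per edge to show that $\A$ first visits $D_0,D_{-2},\ldots,D_{-2m}$ in order, and applies \cref{lem:movediagonal} to the $m$ resulting subpaths, whose shared endpoints lie on zero diagonals and cost $0$. The only difference is cosmetic: the paper starts the $r$th subpath at the first visit to $D_{-2(r-1)}$ rather than at your last visit $u_r$ before $v_r$, and either choice meets the hypotheses of \cref{lem:movediagonal}.
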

\begin{proof}
Let $\A$ be an optimal alignment of $\hat X$ and $\hat Y$.
Its first vertex is in $P_{0,0}\subseteq D_0$, while its last vertex is in $P_{2n,2(n+m)}\subseteq D_{-2m}$.

Number the runs of both strings from $0$.
Each wall contributes one run and each gadget contributes $4d$.
Thus, wall $a$ has run index $(4d+1)a$, and corresponding gadget runs have equal offsets from their preceding walls.
It follows that $D_s$ consists precisely of the vertices whose run index in $\hat X$ minus that in $\hat Y$ is $(4d+1)s$.
This difference changes by at most $1$ along any edge of $\A$ and goes from $0$ to $-2m(4d+1)$.
Hence, $\A$ first visits $D_0,D_{-2},\ldots,D_{-2m}$ in this order.
For each $h\in[0\dd m]$, let $v_h$ be the first vertex of $\A$ on $D_{-2h}$.

For each $h\in[0\dd m)$, the subpath from $v_h$ to $v_{h+1}$ costs at least $10d+1$ by \cref{lem:movediagonal}, applied with $t=-h$.
These subpaths have disjoint interiors, and their shared endpoints have cost $0$ because they lie on zero diagonals.
We may therefore add their costs, obtaining $\cost(\A)\ge(10d+1)m$.
\end{proof}

\subsubsection{Completing the Reduction}
The first direction (\cref{lem:ov-yes-to-dtw}) requires prescribed inner products at the ends of $X$.
The following observation shows how to obtain them from an arbitrary instance.
\begin{observation}\label{obs:dtw-boundary-preprocessing}
Let $X,Y$ be non-empty lists of binary vectors of dimension $d\ge1$, with $n\coloneqq|X|$ and $m\coloneqq|Y|$.
In $\Oh((n+m)d)$ time, one can construct an equivalent $\OV$ instance $(X',Y')$ of dimension $d+1$, with $|X'|=n+2m$ and $|Y'|=m$, such that each of the first and last $m$ vectors of $X'$ has inner product $1$ with every vector of $Y'$.
\end{observation}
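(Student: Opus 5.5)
The construction follows the one sketched in the technical overview. We extend every vector by one extra coordinate. For $\vec u\in X$ we set $\vec u'\coloneqq(\vec u,0)$, and for $\vec v\in Y$ we set $\vec v'\coloneqq(\vec v,1)$. We then let
\[
X'\coloneqq(\vec z,\ldots,\vec z,\vec u_0',\ldots,\vec u_{n-1}',\vec z,\ldots,\vec z),\qquad Y'\coloneqq(\vec v_0',\ldots,\vec v_{m-1}'),
\]
where $\vec z\coloneqq(0,\ldots,0,1)\in\{0,1\}^{d+1}$ occurs $m$ times at the start and $m$ times at the end. The size claims $|X'|=n+2m$, $|Y'|=m$, and dimension $d+1$ are immediate. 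Writing out each of the $n+3m$ vectors takes $\Oh(d)$ time, so the construction runs in $\Oh((n+m)d)$ time.

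Next I would verify the boundary property. Each of the first and last $m$ vectors of $X'$ equals $\vec z$. For any $\vec v'=(\vec v,1)\in Y'$, the first $d$ coordinates of $\vec z$ are zero, so $\vec z\cdot\vec v'=0+1\cdot1=1$, as required.

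It remains to prove equivalence by computing all inner products between $X'$ and $Y'$. For an original pair, $\vec u'\cdot\vec v'=\vec u\cdot\vec v+0\cdot1=\vec u\cdot\vec v$, so orthogonality is preserved exactly. Every pair involving a padding vector has inner product $1$ by the previous step, so padding never produces an orthogonal pair. Hence $(X',Y')$ contains an orthogonal pair if and only if $(X,Y)$ does.

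No step here is a real obstacle. The only point needing care is to put $0$ in the new coordinate for original vectors of $X$, so that the extra coordinate contributes nothing to original inner products. At the same time, $\vec z$ must have its $1$ exactly in the coordinate where every vector of $Y'$ has $1$.
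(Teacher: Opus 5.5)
Your proposal is correct and matches the paper's proof: append a coordinate that is $0$ on $X$ and $1$ on $Y$, then prepend and append $m$ copies of $(0,\ldots,0,1)$ to $X$. Original inner products are then unchanged, every padding vector has inner product exactly $1$ with every vector of $Y'$, and the $\Oh((n+m)d)$ time and size accounting agree with the paper's.
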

\begin{proof}
Append one coordinate to every vector, setting it to $0$ in $X$ and to $1$ in $Y$.
Then, prepend and append $m$ copies of the vector $(0,\ldots,0,1)$ to the first list.
The old inner products are unchanged, and every added vector has inner product $1$ with every vector of the second list.
Thus, the answer is preserved, and the resulting lists have the required sizes, dimension, and inner products.
The construction takes $\Oh((n+m)d)$ time.
\end{proof}

\restatementwithproof{\thmovdtw*}
\begin{proof}\label{pg:proof}
Let $n\coloneqq|X|$ and $m\coloneqq|Y|$.
By \cref{obs:dtw-boundary-preprocessing}, we obtain an equivalent instance $(X',Y')$ with $|X'|=n+2m>2m$, $|Y'|=m$, and dimension $d'\coloneqq d+1$.
It satisfies the inner-product condition of \cref{lem:ov-yes-to-dtw}.
Construct $S\coloneqq\hat X$ and $T\coloneqq\hat Y$ from this instance, and set $k\coloneqq(10d'+1)m-1$.
By \cref{lem:ov-no-to-dtw,lem:ov-yes-to-dtw}, the original instance is a \Yes instance if and only if $\DTW(S,T)\le k$.

It remains to bound the sizes and construction time.
Each coordinate gadget has length at most $8d'+3$, so each vector gadget has length at most $8(d')^2+3d'$.
Each concatenated unit of either string consists of one vector gadget, one narrow gadget of length $4d'$, and two walls of length $12d'$ each.
Thus, a unit has length at most $8(d')^2+31d'$.
The string $S$ has an initial wall followed by $n+2m$ units, and $T$ has an initial wall followed by $n+3m$ units.
Since $d'\le2d$, each unit has length at most $94d^2$, and the initial wall has length at most $24d^2$.
Consequently,
\[
|S|,|T|\le 24d^2+94(n+3m)d^2\le400(n+m)d^2.
\]
The threshold satisfies
\[
k=(10d+11)m-1\le21md.
\]
Both strings use the five-symbol alphabet $\Sigma$, and the preprocessing and construction take $\Oh((n+m)d^2)$ time.
These bounds prove the theorem.
\end{proof}

\subsection{Proof of \Cref{thm:lb-dtw}}
The reduction in \cref{thm:ov-to-dtw} gives upper bounds on the string lengths and threshold, whereas \cref{thm:lb-dtw} requires equal, prescribed string lengths and a prescribed threshold.
We first establish the padding property needed to reach these values.
The construction uses a common fresh symbol to increase the string lengths without changing their distance, and two distinct fresh symbols to increase the distance by a prescribed amount.

\begin{lemma}[Padding to Prescribed Parameters]\label{lem:dtw-padding}
Let $S$ and $T$ be non-empty strings, and let $\kappa\ge0$ be an integer.
Given integers $n,k$ satisfying $n>\max\{|S|,|T|\}+k-\kappa$ and $k\ge\kappa$,
one can construct in $\Oh(n)$ time strings $\widehat S$ and $\widehat T$ of length $n$, using at most three additional alphabet symbols, such that, for the unit-cost function, $\DTW(\widehat S,\widehat T)\le k$ if and only if $\DTW(S,T)\le\kappa$.
\end{lemma}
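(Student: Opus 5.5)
Following the technical overview, I plan a two-stage padding with three fresh symbols $\hash$, $\dol_1$, $\dol_2$ not occurring in $S$ or $T$. Set $r\coloneqq k-\kappa\ge0$. In the first stage, prepend $\hash$-runs to equalize lengths: let $S_1\coloneqq\hash^{n-r-|S|}S$ and $T_1\coloneqq\hash^{n-r-|T|}T$. Both exponents are positive because $n>\max\{|S|,|T|\}+r$, and both strings have length $n-r$. Both strings start with a $\hash$-run and $\hash$ does not occur elsewhere, so a version of \cref{lem:sig_antsig_sig} with only a leading guard gives $\DTW(S_1,T_1)=\DTW(S,T)$. I do not want to rely on trailing guards, so I will reprove the one-sided version by the same projection argument rather than cite it verbatim; alternatively one could apply the lemma to $S_1\hash$ and $T_1\hash$, but that would alter lengths. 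The projection argument carries over directly: the vertices outside the interior rectangle have exactly one interior coordinate (or, before entry, none, and those cost $0$), so projecting onto the rectangle does not increase cost.

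In the second stage, add exactly $r$ to the distance while adding $r$ characters to each string. I take $\widehat S\coloneqq\dol_1^{\,r}S_1$ and $\widehat T\coloneqq\dol_2^{\,r}T_1$, both of length $n$. For the upper bound, align $\dol_1^r$ with $\dol_2^r$ diagonally at cost $r$, then take a diagonal edge into an optimal alignment of $S_1,T_1$. This is exactly \cref{fact:dtw_partition}, giving $\DTW(\widehat S,\widehat T)\le r+\DTW(S_1,T_1)$.

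For the lower bound, every position of $\dol_1^r$ is a mismatch wherever it is aligned, since $\dol_1$ occurs nowhere in $\widehat T$; the same holds for $\dol_2^r$. Consider an alignment and the first vertex $(i,j)$ with $i>r$ or $j>r$. The prefix up to it has cost at least $\max$ of the rows and columns it covers within the $\dol$ region, which is at least $r$ once one coordinate reaches $r$. From there, the remaining path is a path from a vertex $(i,j)$ with one coordinate equal to $r+1$. I will argue that its cost, counted only on vertices with both coordinates $>r$ or exactly one coordinate $\le r$, is at least $\DTW(S_1,T_1)$. Concretely, project all vertices onto the rectangle $[r+1\dd n]^2$ as in \cref{lem:sig_antsig_sig}, clamping the small coordinate up to $r+1$. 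Vertices with one coordinate $\le r$ are mismatches (a $\dol$ against a non-$\dol$), so clamping does not increase cost.

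The main obstacle is the accounting: the mismatches charged to the $\dol$ rows and columns must be disjoint from the vertices whose costs are used after projection. I handle this by charging each index $x\le r$ of $\widehat S$ to one vertex in row $x$ and each index of $\widehat T$ similarly. A vertex lies in at most one row and one column, so a cleaner count uses the fact that the path touches $\ge$ (rows) $+$ (columns) $-$ $1$ vertices. To make this clean, I split the path at its last vertex with both coordinates $\le r$, call it $(x,y)$. The part before and including it covers rows $1..x$ and columns $1..y$, costing at least $\max(x,y)$. The part after uses rows $x+1..r$ or columns $y+1..r$ only with interior coordinates on the other side, costing at least $(r-x)+(r-y)$ minus overlap. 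Its projection costs at least $\DTW(S_1,T_1)$ plus the surplus. Totals are at least $r+\DTW(S_1,T_1)$. The equivalence $\DTW(\widehat S,\widehat T)\le k\iff\DTW(S,T)\le\kappa$ follows, and the construction is clearly $\Oh(n)$.
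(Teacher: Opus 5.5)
Your proposal is correct and follows the paper's proof. It uses the same padded strings $\dol_1^{r}\hash^{n-r-|S|}S$ and $\dol_2^{r}\hash^{n-r-|T|}T$, the same upper bound, and the same lower bound (charge at least $r$ to the part of the path lying entirely in the padding, then clamp the rest onto an alignment of the original strings); the only difference is that the paper handles both paddings in a single projection, discarding all vertices with $i\le\ell_S$ and $j\le\ell_T$, instead of your two stages.

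Your first split, at the first vertex leaving $[1\dd r]^2$, already completes the lower bound. The two parts are disjoint vertex sets, and the unit-step bound forces $\max(x,y)=r$ at the last vertex inside $[1\dd r]^2$, so the prefix alone pays $r$. The ``overlap/surplus'' bookkeeping in your final paragraph is therefore unnecessary, and as stated it does not hold in general, since clamped vertices need not become cheaper.
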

\begin{proof}
Set $\Delta\coloneqq k-\kappa$, and let $\ell_S\coloneqq n-|S|$ and $\ell_T\coloneqq n-|T|$ be the required padding lengths.
The hypotheses give $\Delta\ge0$ and $\ell_S,\ell_T>\Delta$.
Choose three pairwise distinct symbols $\hash,\dol_S,\dol_T$ not occurring in $S$ or $T$, and define
\[
\widehat S\coloneqq\dol_S^\Delta\hash^{\ell_S-\Delta}S,
\qquad
\widehat T\coloneqq\dol_T^\Delta\hash^{\ell_T-\Delta}T.
\]
Both strings have length $n$ and can be constructed in $\Oh(n)$ time.
We prove the stronger identity $\DTW(\widehat S,\widehat T)=\Delta+\DTW(S,T)$.

For the upper bound, align the $\dol_S$- and $\dol_T$-prefixes position by position at cost $\Delta$, then match the two non-empty $\hash$-runs at cost $0$, and finally follow an optimal alignment of $S$ and $T$.
When $\Delta=0$, the first step is simply omitted.

For the lower bound, consider any alignment $\A$ of $\widehat S$ and $\widehat T$.
Delete its initial portion in which both coordinates lie in the added prefixes, that is, all vertices $(i,j)$ with $i\le\ell_S$ and $j\le\ell_T$.
The coordinate that first enters its original string has already traversed all $\Delta$ positions of its $\dol_S$- or $\dol_T$-prefix.
Each of these positions is mismatched, since its symbol is absent from the entire opposite string.
Thus, the deleted portion has cost at least $\Delta$.

Replace each remaining vertex $(i,j)$ by $(\max\{1,i-\ell_S\},\max\{1,j-\ell_T\})$, and delete consecutive repeated vertices.
Each projected coordinate is nondecreasing and increases by at most one per step.
At the first retained vertex, one coordinate has just entered its original string, while the other has not passed its first original character.
The resulting path starts at $(1,1)$ and ends at $(|S|,|T|)$, so it is an alignment of $S$ and $T$.
Pairs of original characters retain their costs.
Every other remaining pair originally compared a padding symbol with a character of $S$ or $T$, so it already had cost $1$ and its replacement cannot cost more.
The resulting alignment therefore has cost at most $\cost(\A)-\Delta$, proving the lower bound.

The two bounds give the claimed identity, and $k=\kappa+\Delta$ then gives the required equivalence.
\end{proof}

We also need a lower bound on how many input symbols an algorithm must inspect.
This will ensure that the assumed running time pays for constructing each padded instance and that the target threshold is large enough to accommodate the reduction.
\begin{lemma}\label{lem:dtw-linear-time}
Consider integers $0\le k<n-2$.
Any Monte Carlo algorithm for the unit-cost binary \ref{prob:dtw} problem, restricted to instances satisfying $|S|=|T|=n$ and threshold $k$, and having error probability at most $1/3$, must inspect $\Omega(n)$ input symbols.
In particular, its worst-case running time is $\Omega(n)$.
\end{lemma}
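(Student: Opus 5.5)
We will prove \cref{lem:dtw-linear-time} by Yao's minimax principle: construct a distribution over binary instances with $|S|=|T|=n$ such that any deterministic algorithm inspecting $o(n)$ symbols errs with probability above $1/3$. The idea is a base pair whose DTW is exactly $k$, plus a single hidden position whose value flips the answer.

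Take $S\coloneqq(\dchara\dcharb)^{\ceil{n/2}}$ truncated to length $n$, so $\shrink(S)=S$. For $T$, start from a string with the same run-structure as $S$ except for exactly $k$ designated mismatches, arranged so that $\DTW(S,T)=k$. A simple choice is to keep $T$ alternating but insert $k$ \emph{defects}: replace a character so that two equal neighbours merge, forcing one unit of cost. Each defect must be placed far from the others (distance at least $3$), which fits because $k<n-2$; in the extreme case $k$ close to $n$ we would instead make a prefix of $T$ constant and apply \cref{fact:unary_DTW}. We verify $\DTW(S,T)=k$ by exhibiting an alignment and lower-bounding the cost through a counting argument on $\shrink$: each merged run reduces $|\shrink(T)|$ relative to $|\shrink(S)|$, and the lengths of shrunk versions are otherwise incompatible.

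Next define the hard distribution. With probability $1/2$ output $(S,T)$. With probability $1/2$ choose a uniformly random position $p$ in a region of $\Omega(n)$ positions of $T$ untouched by the defects and flip $T[p]$. This flip creates a new constant run of length $3$, or merges runs, in the alternating pattern, so $\DTW$ becomes $k+1$. Hence the answer is \Yes in the first case and \No in the second. We still need to check that a single flip raises the distance by at least one, for instance by the $\shrink$-length parity/counting argument or via \cref{lem:sig_antsig_sig}-style localization. We expect this verification, together with the boundary cases for $k$ near $n-3$, to be the main technical obstacle.

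Finally, a deterministic algorithm that reads $q$ symbols, run on the unflipped instance, reads a fixed set of positions. On the flipped distribution it detects the flip only if $p$ lies among the $q$ positions it reads adaptively, and the adaptive path coincides with the unflipped one until the flip is seen. This happens with probability at most $q/\Omega(n)$. Otherwise its views of both cases agree, so it errs with probability at least $\tfrac12(1-\Oh(q/n))$. Requiring error at most $1/3$ forces $q=\Omega(n)$, and the running-time bound follows.
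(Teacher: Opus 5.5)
Your indistinguishability argument is sound: fix the coins, and observe that on the flipped instance the run coincides with the run on the base instance unless the flipped position is read. The gap is that the hard family you feed into it does not exist for most of the range $0\le k<n-2$ covered by the lemma.

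You calibrate exactly at the threshold (distance $k$ versus $k+1$), and with an alternating $S$ this is impossible for large $k$. For every binary $T$ of length $n$, consider the alignment that first pays at most one initial mismatch, then traverses $T$ at cost $0$ by advancing in $S$ exactly when $T$ changes symbol, and finally finishes along the last column, where only every other symbol of $S$ mismatches. It shows $\DTW(S,T)\le (n-|\shrink(T)|)/2+\Oh(1)\le n/2+\Oh(1)$. Hence no choice of defects gives $\DTW(S,T)=k$ once $k>n/2+\Oh(1)$. Your fallback (a constant prefix in $T$ plus \cref{fact:unary_DTW}) obeys the same bound as long as $S$ stays alternating. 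So the case of $k$ close to $n-3$ is not a boundary technicality but a failure of the construction.

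Even for $k<n/2$ the argument is weaker than claimed. Placing $k$ defects at pairwise distance at least $3$ needs about $3k$ positions, which does not follow from $k<n-2$. Moreover, in any arrangement, distance $k$ forces $|\shrink(T)|\le n-2k+\Oh(1)$. So at most $\Oh(n-2k)$ positions of $T$ are singleton runs, and these are the only positions whose flip merges runs. You therefore get $\Omega(n-2k)$ inspected symbols, which is $\Omega(n)$ only when $k\le(1/2-c)n$ for a constant $c>0$. In that regime, the verification you left open is easy. By \cref{lem:tshrink-approx-dtw,lem:DTW>ED/2}, $\DTW(S,T)\ge \ED(S,\shrink(T))/2\ge (n-|\shrink(T)|)/2$, and the Hamming distance gives the matching upper bound for the base instance.

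The missing idea is to avoid calibrating at $k$ altogether and instead separate \Yes from \No instances by a gap that contains every admissible threshold. The paper takes $S=\cb\ca^{n-2}\cb$ and $T_0=\cb^n$, so $\DTW(S,T_0)=n-2>k$ by \cref{fact:unary_DTW}. For $2\le j\le n-1$, it takes $T_j=\cb^{j-1}\ca\cb^{n-j}$, which satisfies $\shrink(T_j)=\shrink(S)$ and hence $\DTW(S,T_j)=0\le k$. The base instance is now a \No instance whose $n-2$ single-symbol neighbours are \Yes instances, the reverse of your orientation. Your indistinguishability argument then goes through verbatim for all $0\le k<n-2$ at once. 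The paper phrases it directly for randomized algorithms: it couples the runs on $T_0$ and $T_j$ to show that position $j$ is read on $T_0$ with probability at least $1/3$, then sums over $j$.
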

\begin{proof}
Fix distinct symbols $\ca$ and $\cb$, and let $S\coloneqq\cb\ca^{n-2}\cb$ and $T_0\coloneqq\cb^n$.
Every $\ca$ in $S$ must be mismatched, and the two $\cb$-runs can be matched at cost $0$, so $\DTW(S,T_0)=n-2>k$.
For each $j\in[2\dd n-1]$, let $T_j\coloneqq\cb^{j-1}\ca\cb^{n-j}$.
Aligning the corresponding runs of $S$ and $T_j$ gives cost $0$.
Thus, $(S,T_0,k)$ is a \No instance and each $(S,T_j,k)$ is a \Yes instance.

Run the algorithm on $(S,T_0,k)$ and $(S,T_j,k)$ using the same random choices.
Until position $j$ of the second string is inspected, the two executions receive identical answers and therefore make identical queries.
Their outputs can thus differ only if the execution on $(S,T_0,k)$ inspects that position.
The error bound implies that the probability of a \Yes output is at least $2/3$ on $(S,T_j,k)$ and at most $1/3$ on $(S,T_0,k)$.
The difference between these probabilities is at most the probability that the two outputs disagree under the shared random choices.
Consequently, position $j$ must be inspected on $(S,T_0,k)$ with probability at least $1/3$.

By linearity of expectation, summing over $j\in[2\dd n-1]$ gives an expected number of at least $(n-2)/3=\Omega(n)$ inspected positions.
The worst-case running time is therefore $\Omega(n)$ as well.
\end{proof}

We now combine the reduction with \cref{lem:dtw-padding} to prove the lower bound for prescribed string lengths and thresholds.
We divide one OV vector list into groups, reduce each group together with the other list, and pad the resulting instances to the required parameters.
\restatementwithproof{\mthmlb*}
\begin{proof}
Fix $0<\eps<1$.
Let $c_{\mathrm{red}}\ge1$ be a common upper bound on the constants hidden in the size and threshold bounds of \cref{thm:ov-to-dtw}, and choose $c_{\mathrm{ov}}>0$ as promised by \cref{hyp:ovh}, or its randomized version for randomized algorithms, with $\eps/2$ in place of $\eps$.

Let $(k_n)_{n=1}^\infty$ be an integer sequence computable in $\Oh(n)$ time and satisfying $1\le k_n\le n/2$ for all sufficiently large $n$.
Suppose, towards a contradiction, that an algorithm $\mathbf A$ solves \cref{prob:dtw} on instances with $|S|=|T|=n$ and threshold $k_n$ in time $\Oh(n^{1-\eps}\cdot k_n)$.
For sufficiently large $n$, we have $k_n\le n/2<n-2$, so \cref{lem:dtw-linear-time} gives a running-time lower bound of $\Omega(n)$.
Comparing this with the assumed upper bound gives $n^{1-\eps}\cdot k_n=\Omega(n)$, or equivalently $k_n=\Omega(n^\eps)$.

For sufficiently large $s$, consider an instance of $\OV$ with $|X|=|Y|=s$ and $d\coloneqq\ceil{c_{\mathrm{ov}}\log s}$.
By \cref{thm:ov-to-dtw}, keeping all of $X$ and any non-empty subset of $Y$ produces two strings of length at most $2c_{\mathrm{red}}sd^2$.
We choose the target length to be four times this bound, leaving room for padding, and take the corresponding threshold:
\[
n\coloneqq\ceil{8c_{\mathrm{red}}sd^2},
\qquad
k\coloneqq k_n.
\]
The sequence's computability lets us obtain $k$ in $\Oh(n)$ time.
Since $d=\Theta(\log s)$ and $n=\Theta(sd^2)$, the bound $k=\Omega(n^\eps)$ implies $k\ge4c_{\mathrm{red}}d$ for sufficiently large $s$.

We will partition $Y$ into groups and apply the reduction to each group together with all of $X$.
A group of at most $r$ vectors yields a threshold of at most $c_{\mathrm{red}}dr$.
To bound this threshold by $k/2$, we set
\[
r\coloneqq\min\!\left\{s,\left\lfloor\tfrac{k}{2c_{\mathrm{red}}d}\right\rfloor\right\}.
\]
Since $k\ge4c_{\mathrm{red}}d$, rounding down loses at most a factor of two, so $r\ge1$ and $r=\Theta(\min\{s,k/d\})$.
Set $t\coloneqq\ceil{s/r}$, and partition $Y$ into $t$ non-empty groups $Y_1,\ldots,Y_t$ of size at most $r$.
The original instance $(X,Y)$ is a \Yes instance if and only if $(X,Y_i)$ is a \Yes instance for at least one $i\in[1\dd t]$.

Fix a group $Y_i$.
Applying \cref{thm:ov-to-dtw} to $(X,Y_i)$ yields an equivalent instance $(S_i,T_i,\kappa_i)$ of the unit-cost \hyperref[prob:dtw]{DTW} problem with an integer threshold $\kappa_i$ and
\[
\kappa_i\le c_{\mathrm{red}}dr\le\tfrac12k,
\qquad
|S_i|,|T_i|\le c_{\mathrm{red}}(s+r)d^2\le\tfrac14n.
\]
Since $k\le n/2$, these bounds give $\max\{|S_i|,|T_i|\}+k-\kappa_i\le3n/4<n$.
Thus, all hypotheses of \cref{lem:dtw-padding} hold, and it produces strings $\widehat S_i$ and $\widehat T_i$ of length $n$ such that $\DTW(\widehat S_i,\widehat T_i)\le k$ if and only if $\DTW(S_i,T_i)\le\kappa_i$.
By the reduction, the latter condition holds if and only if $(X,Y_i)$ is a \Yes instance.
We can therefore use $\mathbf A$ on $(\widehat S_i,\widehat T_i,k)$ to decide whether $(X,Y_i)$ is a \Yes instance.
The construction uses unit costs, and padding adds at most three symbols to the reduction's constant-size alphabet.

We perform this construction for every group and answer \Yes if at least one group is accepted.
If $\mathbf A$ is a bounded-error randomized algorithm, repeat each call independently $\Oh(\log(t+1))$ times and take the majority answer to reduce its error to at most $1/(3t)$.
By the union bound, all group answers, and hence the final answer, are correct with probability at least $2/3$.

For each group, the reduction takes $\Oh((s+r)d^2)=\Oh(n)$ time, and \cref{lem:dtw-padding} takes $\Oh(n)$ time.
As $n^{1-\eps}\cdot k=\Omega(n)$, this construction time and the initial computation of $k$ are absorbed by the time for the calls to $\mathbf A$.
Allowing for error amplification, the total running time is therefore $\Oh(t\cdot n^{1-\eps}\cdot k\log(t+1))$.

Since $1\le r\le s$ and $r=\Theta(\min\{s,k/d\})$, the number of groups satisfies $t=\Oh(1+sd/k)$ and $t\le s$.
Since $k\le n/2$ and $sd=\Oh(n)$, we have $tk=\Oh(k+sd)=\Oh(n)$.
The total running time is therefore $\Oh(n^{2-\eps}\log s)$.
Using $n=\Theta(s\log^2s)$ and absorbing the resulting polylogarithmic factor into $s^{\eps/2}$, we obtain $\Oh(s^{2-\eps/2})$.
This contradicts the choice of $c_{\mathrm{ov}}$, which rules out such an algorithm for $\OV$ on sets of size $s$ and dimension $d=\ceil{c_{\mathrm{ov}}\log s}$ under the corresponding version of OVH.
\end{proof}

\section{Lower Bound for the \ref{prob:pmdtw} Problem}\label{sec:pmlb}
This section is dedicated to proving \cref{thm:lb-pmdtw}.
We first provide a reduction from $\mathsf{OV}$ to \cref{prob:pmdtw}, proving \cref{thm:ov-to-pmdtw}, and only then derive \cref{thm:lb-pmdtw} using appropriate padding.

\subsection{Reduction}

\subsubsection{OV Variant}
The following preprocessing places a suitable orthogonal pair, if one exists, at interior indices that agree modulo $3$, leaving room to align the parts of the pattern before and after the pair.
It also amplifies nonzero inner products to at least $9$, so their contribution exceeds the construction's constant alignment overhead.

\begin{lemma}\label{lem:pmlb-ov-variant}
Given an instance $(X,Y)$ of $\OV$ with $n\coloneqq|X|\ge1$, $m\coloneqq|Y|\ge1$, and dimension $d\ge1$, one can in $\Oh((n+m)d)$ time construct an equivalent instance $(X',Y')$, represented by the indexed lists
\[
X'\coloneqq(\Vec{u}'_0,\ldots,\Vec{u}'_{n'-1}),\qquad
Y'\coloneqq(\Vec{v}'_0,\ldots,\Vec{v}'_{m'-1})
\]
of vectors in $\{0,1\}^{d'}$, where $n'=\Oh(n+m)$, $m'=\Oh(m)$, $d'\coloneqq9d$, and $m'\equiv1\pmod3$, such that:
\begin{enumerate}
    \item If $(X,Y)$ is a \Yes instance, then $\Vec{u}'_i\cdot\Vec{v}'_j=0$ for some $i\in[m'\dd n'-m']$ and $j\in[1\dd m'-2]$ with $i\equiv j\pmod3$.
    \item If $(X,Y)$ is a \No instance, then $\Vec{u}'\cdot\Vec{v}'\ge9$ for every $\Vec{u}'\in X'$ and $\Vec{v}'\in Y'$.
\end{enumerate}
\end{lemma}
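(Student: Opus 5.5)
We will build $(X',Y')$ in three stages: amplify inner products, pad $Y$ to the right length and residue, and pad $X$ with sentinels and shifted copies.

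\textbf{Amplification.} Replace every vector by its ninefold coordinate repetition. Each coordinate is repeated nine times, so $d'=9d$ and every inner product gets multiplied by exactly $9$. Zero inner products stay zero, and positive ones become at least $9$.

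\textbf{Padding $Y$.} We need the original vectors of $Y$ to occupy indices inside $[1\dd m'-2]$ and we need $m'\equiv1\pmod3$. Prepend one vector and append between two and four vectors so that the congruence holds. For these fillers we take the all-ones vector $\vec 1\in\{0,1\}^{d'}$.

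\textbf{Condition (2) and the choice of fillers.} We must make sure the fillers never create a spurious orthogonal pair. In a \No instance every original $X$-vector is nonzero, since $Y$ is non-empty and a zero vector would be orthogonal to everything. Hence each amplified original $X$-vector has at least nine ones, and so its inner product with $\vec 1$ is at least $9$.

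\textbf{Padding $X$.} Let $X_9$ be the amplified list of length $n$. We want the chosen orthogonal pair to sit at interior indices with $i\equiv j\pmod 3$. Define
\[
X'\coloneqq \vec 1^{\,m'}\cdot X_9\cdot X_9\cdot X_9\cdot \vec 1^{\,m'+c}.
\]
Here $c\in\{0,1,2\}$ is any small constant padding, or we may simply use three consecutive copies of $X_9$, optionally separated by one $\vec 1$ vector. The point is that consecutive copies start at positions $m'$, $m'+n$, $m'+2n$. If $n\not\equiv0\pmod 3$, these three starting positions cover all residues modulo $3$. If $n\equiv0\pmod3$, we instead insert a single $\vec 1$ separator between consecutive copies, so the copies start at $m'$, $m'+n+1$, $m'+2n+2$, which again cover all residues.

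\textbf{Condition (1).} Suppose $\vec u_a\cdot\vec v_b=0$ in the original instance. Then $\vec v_b$ sits at some index $j\in[1\dd m'-2]$ of $Y'$. Among the three copies of $\vec u_a$ in $X'$, one lies at an index $i\equiv j\pmod3$. That index satisfies $m'\le i$, and $i\le n'-m'$ because of the trailing sentinels. So the required pair exists.

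For the converse direction of condition (2), observe that every vector of $X'$ is either $\vec 1$ or an amplified original. Pairs involving $\vec 1$ on either side have inner product at least $9$, because all vectors on both sides are nonzero. All remaining pairs are amplified original pairs, whose inner products are at least $9$ in a \No instance. Therefore $(X',Y')$ is equivalent to $(X,Y)$, with $n'=3n+2m'+\Oh(1)=\Oh(n+m)$ and $m'=\Oh(m)$.

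\textbf{Running time and main obstacle.} The construction clearly takes $\Oh((n+m)d)$ time. The only delicate point is the index bookkeeping: ensuring simultaneously that the congruence holds, that the range bounds $i\in[m'\dd n'-m']$ and $j\in[1\dd m'-2]$ are met, and that all residues mod $3$ are covered. The separator trick handles the residues.
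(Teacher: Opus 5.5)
Your proof is correct and follows essentially the same route as the paper: ninefold coordinate repetition, padding $Y$ with all-ones vectors to reach $m'\equiv1\pmod3$, three copies of $X$ that cover every residue modulo $3$, and $m'$ all-ones sentinels at each end of $X'$. The only difference is cosmetic. The paper pads $X$ to a length divisible by $3$ and concatenates it with two cyclic shifts, whereas you concatenate plain copies whose stride ($n$, or $n+1$ when $3\mid n$) is not divisible by $3$; both place a copy of every original vector in each residue class.
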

\begin{proof}
We first repeat each coordinate of every vector in $X$ and $Y$ nine times, increasing the dimension to $d'=9d$ and multiplying every inner product by nine.
We use copies of the all-ones vector $1^{d'}$ for padding; this preserves the answer because $1^{d'}$ is orthogonal only to $0^{d'}$, and an instance containing $0^{d'}$ is already a \Yes instance.

We pad $Y$ to place all its original vectors in the interior and make its length congruent to $1$ modulo $3$.
To this end, we prepend one copy of $1^{d'}$ and append between one and three copies, choosing their number so that the resulting list $Y'$ has length $m'\equiv1\pmod3$.

Next, we pad $X$ with at most two copies of $1^{d'}$ to obtain a list $X_0$ whose length $q$ is divisible by $3$.
For $r\in\{1,2\}$, define the cyclic shift $X_r[i]\coloneqq X_0[(i+r)\bmod q]$ for $i\in[0\dd q)$.
In $X_0\cdot X_1\cdot X_2$, every vector of $X$ has a copy at each index residue modulo $3$.
We then prepend and append $m'$ copies of $1^{d'}$, obtaining a list $X'$ of length $n'=3q+2m'$.
The prepended copies shift all indices by the same amount, so each original vector still has a copy at every residue modulo $3$.
Thus, every original orthogonal pair yields a pair in $X',Y'$ satisfying the required index conditions.

All output vectors, including the padding vectors, have each coordinate repeated nine times, so every inner product is a multiple of $9$.
Since the transformation preserves the answer, a \No input therefore yields inner products of at least $9$.
Finally, $m'\le m+4=\Oh(m)$ and $n'\le3(n+2)+2m'=\Oh(n+m)$, and the construction takes $\Oh((n+m)d)$ time.
\end{proof}

We apply \cref{lem:pmlb-ov-variant} to the input instance and denote the resulting lists by $X\coloneqq(\Vec{u}_0,\ldots,\Vec{u}_{n-1})$ and $Y\coloneqq(\Vec{v}_0,\ldots,\Vec{v}_{m-1})$ in $\{0,1\}^d$, with $n\coloneqq|X|$ and $m\coloneqq|Y|$.
By construction, $n\ge m$, $d\ge9$, and $m\equiv1\pmod3$.
We now construct a triple $(\hat P,\hat T,k)$ from these lists.

\subsubsection{Gadgets}
Set $k\coloneqq10d+8$.
We reuse the vector gadgets $G_X,G_Y$ and their narrow versions $\bot_X,\bot_Y$ from \cref{sec:lb}.
Recall that the narrow gadgets are the run compressions of the corresponding vector gadgets.

For the wall gadgets, we reuse the base wall $\wall=\dcharw^{12d}$ and introduce two fresh symbols $\dol_1$ and $\dol_2$.
We define the \emph{small wall} $\wall^{\dol}_{\smallz}\coloneqq\wall\dol_1\dol_2\wall$ and the \emph{big wall} $\wall^{\dol}_{\bigz}\coloneqq\wall\dol_1^{2k}\dol_2^{2k}\wall$.
For the pattern boundaries, we also define the \emph{reversed wall} $\wall^{\dol}_{\mathsf{rev}}\coloneqq\dcharw^{2k}\dol_2^{2k}\dol_1\dcharw^{2k}$.
Small and big walls have the same run compression, so they can be aligned at cost $0$.
The reversed boundary walls ensure that any alignment of cost at most $k$ from the pattern to a substring of the text can be adjusted, without increasing its cost, to start and end inside small text walls; see \cref{lem:pmlb-start-end-portal}.

For every integer $i$, let $\wall_i\coloneqq\wall^{\dol}_{\smallz}$ if $i\equiv0\pmod3$, and $\wall_i\coloneqq\wall^{\dol}_{\bigz}$ otherwise.
We define the text by
\[\hat T\coloneqq\bigodot_{i=0}^{n-1}\left(\wall_i\,G_X(\Vec{u}_i)\,\wall^{\dol}_{\bigz}\,\bot_Y\,\right).\]

All internal pattern walls are small, and the two boundary walls are reversed:

\[\hat P \coloneqq \wall^{\dol}_{\mathsf{rev}} \bot_X \wall^{\dol}_{\smallz} G_Y(\Vec{v}_0) \bigodot_{i=1}^{m-1}\left(\wall^{\dol}_{\smallz}\, \bot_X \,\wall^{\dol}_{\smallz} \, G_Y(\Vec{v}_i)\,\right) \wall^{\dol}_{\mathsf{rev}}.
\]

We call a non-empty substring $S$ of $\hat T$ a \emph{$k$-DTW occurrence} of $\hat P$ if $\DTW(\hat P,S)\le k$.
For a non-empty substring $\hat T\fragment{x}{y}$, we view an alignment from $\hat P$ to this substring as a path from $(1,x)$ to $(|\hat P|,y)$ in $\AG(\hat P,\hat T)$.
Thus, text coordinates refer to the full text, and path costs count each visited vertex once, including the endpoints.
We next prove that $(X,Y)$ is a \Yes instance of OV if and only if there is a $k$-DTW occurrence of $\hat P$ in $\hat T$.

\subsubsection{YES Instances}
We will show that if $(X,Y)$ is a \Yes instance of OV, there is a $k$-$\DTW$ occurrence of $\hat P$ in $\hat T$.

\begin{lemma}\label{lem:pmlb-yes-so-yes}
If $(X,Y)$ is a \Yes instance, then there is a $k$-DTW occurrence of $\hat P$ in $\hat T$.
\end{lemma}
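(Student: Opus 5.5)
The plan is to exhibit one explicit path in $\AG(\hat P,\hat T)$ of cost at most $k=10d+8$. It mirrors the single-switch alignment from the proof of \cref{lem:ov-yes-to-dtw}, with rows and columns exchanged: the pattern $\hat P$ indexes the rows and the text $\hat T$ indexes the columns. Wall bookkeeping is as follows.
\begin{itemize}
\item Number the walls of $\hat P$ as $0,\dots,2m$. Walls $0$ and $2m$ are reversed and all others are small. The gadget after pattern wall $b$ is $\bot_X$ for even $b$ and $G_Y(\vec v_{(b-1)/2})$ for odd $b$.
\item Number the walls of $\hat T$ so that wall $2i$ is $\wall_i$ and wall $2i+1$ is a big wall. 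The gadget after text wall $a$ is $G_X(\vec u_{a/2})$ for even $a$ and $\bot_Y$ for odd $a$.
\item If pattern wall $b$ is paired with text wall $a$ and $a\equiv b\pmod 2$, then the following gadgets have equal run compressions. Small and big walls also share their compression. So, as in \cref{sec:lb}, such portals lie on cost-$0$ chains of blocks (zero diagonals) determined by the offset $b-a$.
\end{itemize}

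By \cref{lem:pmlb-ov-variant}, fix an orthogonal pair $\vec u_i,\vec v_j$ with $i\in[m\dd n-m]$, $j\in[1\dd m-2]$, and $i\equiv j\pmod 3$. The path has three parts.
\begin{enumerate}
\item It starts on the diagonal pairing pattern wall $b$ with text wall $b+2(i-j)$. Pattern wall $0$ then faces text wall $2(i-j)$, which is $\wall_{i-j}$. This wall is small because $i\equiv j\pmod 3$, and it exists because $i-j\ge 2$.
\item It switches at pattern walls $2j,\dots,2j+3$.
\item It continues on the diagonal pairing pattern wall $b$ with text wall $b-2j-2+2i$. Pattern wall $2m$ then faces text wall $2(m-1+i-j)$, which is $\wall_{m-1+i-j}$. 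This wall is small because $m\equiv 1\pmod 3$ and $i\equiv j\pmod 3$, and it exists because $m-1+i-j\le n-2$.
\end{enumerate}
The occurrence starts inside the first $\dcharw$-run of text wall $2(i-j)$ and ends inside the last $\dcharw$-run of text wall $2(m-1+i-j)$. Aligning the reversed wall $\dcharw^{2k}\dol_2^{2k}\dol_1\dcharw^{2k}$ against the small text wall $\wall\dol_1\dol_2\wall$ costs $2$, for instance by matching $\dol_2^{2k}$ with $\dol_2$ and $\dol_1$ with the $\dcharw$'s, with the $\dcharw$-runs matched freely. The two boundaries therefore contribute $4$. Since $j\ge 1$ and $2j+3\le 2m-1$, the switch never touches a reversed wall.

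The switch itself proceeds in five steps.
\begin{enumerate}
\item Follow the first diagonal into the portal of pattern wall $2j$ and text wall $2i$. Align the $\dol$-parts of these walls at cost $0$, since both are non-empty $\dol_1$- and $\dol_2$-runs.
\item Inside the last $\dcharw$-run of text wall $2i$, move vertically through the rest of pattern wall $2j$ and then through pattern $\bot_X$ (cost $4d$). Continue through pattern wall $2j+1$, whose middle $\dol_1\dol_2$ is mismatched against $\dcharw$ (cost $2$), and stop in its last $\dcharw$-run.
\item Take a diagonal edge into the pair $G_Y(\vec v_j)$ versus $G_X(\vec u_i)$. Follow an optimal alignment of cost $2d+\vec u_i\cdot\vec v_j=2d$, given by \cref{lem:dist_vect_gadget}, and take a diagonal edge into the portal of pattern wall $2j+2$ and text wall $2i+1$.
\item Align these two walls at cost $0$. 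The pattern wall is small, the text wall is big, and they have the same compression.
\item Inside the last $\dcharw$-run of text wall $2i+1$, move vertically through pattern $\bot_X$ (cost $4d$) and through the $\dol_1\dol_2$ of pattern wall $2j+3$ (cost $2$). This places the path on the second diagonal.
\end{enumerate}

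Every other segment runs along a zero diagonal at cost $0$, and all joins happen at cost-$0$ $\dcharw$-vertices, so costs simply add. The total is $4+4d+2+2d+4d+2=10d+8=k$.

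I expect the main care to be needed in the index bookkeeping. Three things must be checked: that the wall parities line up so that both diagonals are genuine zero diagonals; that the switch's vertical moves happen within $\dcharw$-runs of the text; and that both boundary text walls are small. Each of these checks is settled by the congruences $i\equiv j\pmod 3$ and $m\equiv 1\pmod 3$ together with the index ranges from \cref{lem:pmlb-ov-variant}.
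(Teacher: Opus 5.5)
Your proposal is correct and builds essentially the paper's alignment. Before the pair, units are paired at zero cost with text-wall offset $2(i-j)$; the central piece matches $P_jP_{j+1}$ against the single text unit containing $G_X(\vec u_i)$ at cost $(4d+2)+2d+(4d+2)$; afterwards the offset is $2(i-j)-2$; and each reversed boundary wall costs $2$ against a small text wall. The differences are cosmetic: you phrase it as a switch between zero diagonals, cross each $\bot_X$ against the last rather than the first $\dcharw$ of the text wall, and leave the text's $\dol_1$ mismatch at the boundary walls implicit (the stated cost $2$ is still right).
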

\begin{proof}
By \cref{lem:pmlb-ov-variant}, choose an orthogonal pair $\Vec{u}_i\in X,\Vec{v}_j\in Y$ with $i\in[m\dd n-m]$, $j\in[1\dd m-2]$, and $i\equiv j\pmod3$.
Set $\ell\coloneqq i-j$.
Before and after the selected pair, we will align vector gadgets with narrow gadgets at cost $0$.
The central comparison will consume two pattern vector gadgets but only one text vector gadget, shifting the subsequent correspondence by one.

To describe the alignment, define the text and pattern units
\[
\begin{aligned}
T_z&\coloneqq \wall_{\ell+z}G_X(\Vec{u}_{\ell+z})\wall^{\dol}_{\bigz}\bot_Y
    &&\text{for }z\in[0\dd m-2],\\
P_z&\coloneqq \wall^{\dol}_{\smallz}\bot_X\wall^{\dol}_{\smallz}G_Y(\Vec{v}_z)
    &&\text{for }z\in[1\dd m-1].
\end{aligned}
\]
We align $\hat P$ to the text substring
\[
    \tilde T\coloneqq T_0T_1\cdots T_{m-2}\wall_{\ell+m-1}.
\]
The bounds on $i$ and $j$ give $\ell\ge0$ and $\ell+m-1<n$, so this substring exists.
Since $\ell\equiv0\pmod3$ and $m\equiv1\pmod3$, its initial wall $\wall_\ell$ and final wall $\wall_{\ell+m-1}$ are both small.
The pattern consists of its initial reversed wall, the substring $\bot_X\wall^{\dol}_{\smallz}G_Y(\Vec{v}_0)$, the units $P_1,\ldots,P_{m-1}$, and its final reversed wall.

We construct the alignment in three phases:
\begin{enumerate}
    \item Align the initial reversed wall of $\hat P$ to $\wall_\ell$, and the following substring $\bot_X\wall^{\dol}_{\smallz}G_Y(\Vec{v}_0)$ to the rest of $T_0$.
    Then align $P_z$ to $T_z$ for each $z\in[1\dd j)$.
    \item For the central comparison, align $P_jP_{j+1}$ to $T_j$.
    These units exist because $1\le j\le m-2$, and $T_j$ contains the selected gadget $G_X(\Vec{u}_i)$ because $\ell+j=i$.
    \item The central comparison has consumed two pattern units but only one text unit, so the next unused units, if any, are $P_{j+2}$ and $T_{j+1}$.
    Continue by aligning $P_z$ to $T_{z-1}$ for each $z\in[j+2\dd m)$, and align the final reversed wall of $\hat P$ to $\wall_{\ell+m-1}$.
\end{enumerate}
These pieces cover both strings $\hat P$ and $\tilde T$ in order.
We now describe how to carry out each comparison and bound its cost.

Each reversed pattern wall can be aligned to a small text wall at cost $2$.
Match their outer $\dcharw$-runs, align the first pattern $\dol_2$ with the text $\dol_1$, and align all remaining pattern $\dol_2$'s and the trailing $\dol_1$ with the text $\dol_2$.
The only mismatches are the two pairs involving $\dol_1$.
This implements the initial and final wall comparisons, since both endpoint walls of $\tilde T$ are small.

All other comparisons in the first and third phases have cost $0$.
Recall that $\shrink(G_X(\Vec{u}))=\bot_X$ and $\shrink(G_Y(\Vec{v}))=\bot_Y$ for every $\Vec{u}\in X$ and $\Vec{v}\in Y$.
The small and big walls also have the same run compression, namely $\dcharw\dol_1\dol_2\dcharw$.
To align a pattern unit $P_z$ with a text unit $T_{z'}$, pair their first walls, then $\bot_X$ with $G_X(\Vec{u}_{\ell+z'})$, then their second walls, and finally $G_Y(\Vec{v}_z)$ with $\bot_Y$.
Each pair has identical run compressions, so we can align corresponding runs of equal symbols at cost $0$.
Omitting the first wall pair gives a zero-cost alignment of $\bot_X\wall^{\dol}_{\smallz}G_Y(\Vec{v}_0)$ with $G_X(\Vec{u}_\ell)\wall^{\dol}_{\bigz}\bot_Y$, the part of $T_0$ after its initial wall.

It remains to implement the central comparison of $P_jP_{j+1}$ with $T_j$.
We align $G_Y(\Vec{v}_j)$ with the selected gadget $G_X(\Vec{u}_i)$ and $G_Y(\Vec{v}_{j+1})$ with the following $\bot_Y$.
We align the remaining pieces of $P_jP_{j+1}$ to the two text walls, giving the following partition (pattern above, text below):
\[
\begin{array}{c|c|c|c}
\wall^{\dol}_{\smallz}\bot_X\wall^{\dol}_{\smallz}
    &G_Y(\Vec{v}_j)
    &\wall^{\dol}_{\smallz}\bot_X\wall^{\dol}_{\smallz}
    &G_Y(\Vec{v}_{j+1})\\
\wall_i&G_X(\Vec{u}_i)&\wall^{\dol}_{\bigz}&\bot_Y
\end{array}
\]
In each of the first and third columns, align the first small pattern wall and $\bot_X$ to the first $\dcharw$ of the text wall.
Keep this text position fixed while traversing the pattern substring.
All $4d$ symbols of $\bot_X$ mismatch $\dcharw$, and the small pattern wall contributes exactly two further mismatches, from its $\dol_1$ and $\dol_2$.
The cost is therefore $4d+2$.
Then move to the remaining small pattern wall while keeping the text at the same position, and align these two walls at cost $0$ by matching corresponding runs.
In the second column, \cref{lem:dist_vect_gadget} gives cost $2d+\Vec{u}_i\cdot\Vec{v}_j=2d$.
The fourth column costs $0$ by the run-compression identity above.
Thus, the central comparison costs $(4d+2)+2d+(4d+2)=10d+4$.

Concatenating these local alignments in the prescribed order yields an alignment of $\hat P$ to $\tilde T$.
The two boundary comparisons cost $2$ each, the central comparison costs $10d+4$, and all other comparisons cost $0$.
The total cost is therefore $2+(10d+4)+2=10d+8=k$, as required.
\end{proof}

\subsubsection{NO Instances}
It remains to show that if $(X,Y)$ is a \No instance of OV, there is no $k$-$\DTW$ occurrence of $\hat P$ in $\hat T$.
We number all wall occurrences from left to right, starting from $0$: the pattern walls have indices $0,\ldots,2m$, and the text walls have indices $0,\ldots,2n-1$.
In particular, the copy of $\wall_i$ preceding $G_X(\Vec{u}_i)$ has text wall index $2i$, so the small text walls are exactly those whose indices are divisible by $6$.

Between consecutive pattern walls, the gadget is $\bot_X$ after an even wall and $G_Y(\Vec{v})$ after an odd wall, for some $\Vec{v}\in Y$.
Between consecutive text walls, it is $\bot_Y$ after an odd wall and $G_X(\Vec{u})$ after an even wall, for some $\Vec{u}\in X$.
Every $\dcharw$-run has length at least $12d$.
The narrow gadgets have length $4d$, whereas every vector gadget has length at least $8d^2$.

As in \cref{sec:lb}, we call a block of $\AG(\hat P,\hat T)$ corresponding to a $\dcharw$-run in each string a \emph{portal}.
A portal has \emph{wall indices} $(a,b)$ if its pattern run belongs to wall $a$ and its text run belongs to wall $b$.
Each pair of walls contains four portals, one for each choice of their two $\dcharw$-runs.
We call $a-b$ the \emph{wall offset} of such a portal.

We first use the boundary walls to move the endpoints of a hypothetical alignment of cost at most $k$ into portals whose text walls are small, without increasing its cost.
Since the pattern boundary walls have indices $0$ and $2m\equiv2\pmod6$ and small text walls have indices divisible by $6$, the resulting wall offsets are distinct even integers.
We will then use bounds on the costs of moving between consecutive portals to show that such a low-cost alignment must align two narrow gadgets against walls and compare a pair of vector gadgets, costing at least $4d+(2d+9)+4d=10d+9>k$.
\begin{lemma}\label{lem:pmlb-start-end-portal}
Let $\A$ be an alignment of cost at most $k$ from $\hat P$ to a substring of $\hat T$.
There is an alignment $\A'$ from $\hat P$ to a substring of $\hat T$, with $\cost(\A')\le\cost(\A)$, that starts in a portal with wall indices $(0,b_0)$ and ends in a portal with wall indices $(2m,b_1)$, where $b_0\equiv b_1\equiv0\pmod6$.
\end{lemma}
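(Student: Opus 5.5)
The plan is to handle the start of $\A$ and the end of $\A$ separately, by symmetric arguments. At each end, I would show that the path must touch the pattern's boundary reversed wall inside one text wall, that this text wall must be small, and that the stretch of $\A$ before (respectively after) that touch can be replaced by a straight cost-$0$ segment inside a portal.

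Consider the initial reversed wall $\dcharw^{2k}\dol_2^{2k}\dol_1\dcharw^{2k}$ of $\hat P$ and write $R_1,R_2$ for its leading $\dcharw$-run and its $\dol_2$-run. Both runs have length $2k>k\ge\cost(\A)$. Hence $\A$ must visit a vertex pairing a row of $R_1$ with a text $\dcharw$, and a vertex pairing a row of $R_2$ with a text $\dol_2$. Text $\dol_2$'s occur only inside walls. Let $b$ be the text wall containing the first such $\dol_2$-touch, and let $v$ be the last vertex of $\A$ before it that lies in a block (row of $R_1$) $\times$ (text $\dcharw$-run). Every vertex of $\A$ strictly between $v$ and the $\dol_2$-touch has its row in $R_1\cup R_2$.

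The key step is to show that $v$ lies in the first $\dcharw$-run of wall $b$.
\begin{itemize}
\item If $v$ lay in an earlier text $\dcharw$-run, the whole first $\dcharw$-run of wall $b$ (length $12d$) would be traversed between $v$ and the $\dol_2$-touch. By the choice of $v$, none of its columns could meet a row of $R_1$, so all of them would be aligned only with $\dol_2$ rows and would mismatch. That costs at least $12d>10d+8=k$, using $d\ge9$, which is a contradiction.
\item Once $v$ is in the first $\dcharw$-run of wall $b$, every text column of the $\dol_1$-run of wall $b$ lies strictly between $v$ and the $\dol_2$-touch. Each such column is aligned with a $\dcharw$ or $\dol_2$ row, so it contributes a mismatch. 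If wall $b$ were big, this would cost $2k>k$. Hence wall $b$ is small, which means $b\equiv0\pmod6$ by the wall indexing noted before the lemma.
\end{itemize}

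To modify the start, let $(i,j)$ be the first vertex of $\A$ in the portal $R_1\times(\text{first }\dcharw\text{-run of wall }b)$. I replace the prefix of $\A$ before $(i,j)$ by the vertical segment from $(1,j)$ down to $(i,j)$. All its vertices pair $\dcharw$ with $\dcharw$, so it costs $0$. The discarded prefix had nonnegative cost, so the cost does not increase. The new path still starts in pattern row $1$ and is an alignment to the substring of $\hat T$ beginning at column $j$. It starts in a portal with wall indices $(0,b_0)$, where $b_0\coloneqq b\equiv0\pmod6$.

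The end is handled symmetrically, reading the final reversed wall $\dcharw^{2k}\dol_2^{2k}\dol_1\dcharw^{2k}$ of $\hat P$ in reverse against the text wall read in reverse, $\dcharw\,\dol_2^{\ast}\dol_1^{\ast}\dcharw$. The pattern's trailing $\dcharw^{2k}$-run must touch a text $\dcharw$ and the pattern's $\dol_2^{2k}$-run must touch a text $\dol_2$. Working from the last $\dol_2$-touch forward to the first subsequent vertex pairing the trailing $\dcharw$-run with a text $\dcharw$, the same two-step argument applies. First, the $12d$-long trailing $\dcharw$-run of the text wall cannot be skipped. Second, the text $\dol_1$-run between them is fully mismatched against $\dol_2$, $\dol_1$ or $\dcharw$ rows of the pattern; the single pattern $\dol_1$ can absorb at most one text $\dol_1$ column at cost $0$, so at least $2k-1>k$ mismatches remain if that wall is big. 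This forces the wall index $b_1$ to be small, so $b_1\equiv0\pmod6$. I then truncate the suffix after the last vertex in that portal and replace it by a vertical segment down to row $|\hat P|$, again at cost $0$. The pattern wall index at the end is $2m$.

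The two modifications act on disjoint parts of the path, so they combine into the required $\A'$. The step I expect to be the main obstacle is the localization argument in the key step above. One must choose $v$ carefully as the last $\dcharw$-touch before the $\dol_2$-touch, so that the $12d>k$ bound rules out touches in earlier walls, while gadgets such as $\bot_Y$ of length $4d<k$ could otherwise be crossed cheaply. At the end of the pattern, one must also account for the single pattern $\dol_1$ possibly matching one text $\dol_1$ position.
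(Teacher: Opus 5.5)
Your argument at the initial reversed wall is sound. Replacing the prefix by a zero-cost vertical segment inside the portal is a legitimate, slightly simpler alternative to the paper's cost-$1$ replacement up to the $\dol_2$-match. The gap is at the final reversed wall, in the step meant to show that text wall $b_1$ is small.

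In forward order every text wall reads $\dcharw^{12d}\dol_1^{\ast}\dol_2^{\ast}\dcharw^{12d}$, so the $\dol_1$-run of wall $b_1$ lies \emph{before} its $\dol_2$-run. Consider the stretch from your last $\dol_2$-touch (inside that $\dol_2$-run) to the first later vertex pairing the trailing $\dcharw^{2k}$-run with a text $\dcharw$. By your correct first step, that vertex lies in the second $\dcharw$-run of wall $b_1$. Along this stretch the path crosses only the rest of the $\dol_2$-run and part of the trailing $\dcharw$-run. No text $\dol_1$ lies in between, so nothing you have shown excludes a big wall.

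The claimed symmetry fails because the reversed wall is not symmetric. Read backwards, the text has $\dol_2$ next to its trailing $\dcharw$-run, whereas the pattern has $\dol_1$ there. Moreover, your remark that the single pattern $\dol_1$ can absorb at most one text $\dol_1$ column is false for DTW. A horizontal segment along that one row matches an arbitrarily long $\dol_1$-run at cost $0$.

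The fix is to use the \emph{leading} $\dcharw^{2k}$-run of the final reversed wall, as the paper does for both boundary walls. This run has $2k>k$ rows, so some vertex $x$ of $\A$ pairs it with a text $\dcharw$. Then $x$ precedes the last $\dol_2$-touch, and every vertex in between has its row in $\dcharw^{2k}\dol_2^{2k}$. If wall $b_1$ were big, the column of $x$ would lie before its $\dol_1^{2k}$-run, since no $\dcharw$ occurs between that run and the $\dol_2$'s. All $2k$ of those columns would then be mismatched, giving cost at least $2k>k$. Once wall $b_1$ is shown to be small this way, your localization of the trailing $\dcharw$-touch and the zero-cost vertical suffix complete the proof.
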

\begin{proof}
Consider either the initial or the final copy of
$\wall^{\dol}_{\mathsf{rev}}=\dcharw^{2k}\dol_2^{2k}\dol_1\dcharw^{2k}$ in $\hat P$.
Its first $\dcharw$-run and the subsequent $\dol_2$-run each contain a symbol matched by $\A$: otherwise, all $2k$ positions of one of these runs would incur mismatches, exceeding the cost bound $k$.
Fix one such match in each run.
If the matched $\dol_2$ in $\hat T$ belonged to a big wall $\wall^{\dol}_{\bigz}=\dcharw^{12d}\dol_1^{2k}\dol_2^{2k}\dcharw^{12d}$, the earlier matched $\dcharw$ would lie before that wall's $\dol_1^{2k}$-run: there is no $\dcharw$ between this run and the matched $\dol_2$.
The subpath between the two fixed matches would therefore traverse every position of that $\dol_1^{2k}$-run.
The corresponding pattern substring contains only $\dcharw$ and $\dol_2$, so all $2k$ positions of this text run would mismatch, again exceeding the cost bound.
Hence, the matched $\dol_2$ in $\hat T$ belongs to a small wall $\wall^{\dol}_{\smallz}=\dcharw^{12d}\dol_1\dol_2\dcharw^{12d}$.

Let $u$ and $v$ be the vertices of $\A$ realizing the chosen $\dol_2$-matches in the initial and final reversed walls, respectively.
We will replace the prefix of $\A$ ending at $u$ and the suffix starting at $v$.
The run compressions of these pattern pieces are $\dcharw\dol_2$ and $\dol_2\dol_1\dcharw$, respectively.
A cost-$0$ alignment would require the corresponding text pieces to have the same run compressions.
However, neither $\dcharw\dol_2$ nor $\dol_2\dol_1$ occurs consecutively in the run compression of $\hat T$.
Thus, both pieces have cost at least $1$.

To construct replacements of cost $1$, we use the following identities, valid for every integer $r\ge1$:
\[
\begin{aligned}
\DTW(\dcharw^{2k}\dol_2^r,\;\dcharw^{12d}\dol_1\dol_2)&=1,\\
\DTW(\dol_2^r\dol_1\dcharw^{2k},\;\dol_2\dcharw^{12d})&=1.
\end{aligned}
\]
In each identity, match the corresponding $\dcharw$- and $\dol_2$-runs, with a single mismatch between the sole $\dol_1$ and a~$\dol_2$.
In the first identity, take $r$ to be the number of $\dol_2$ positions in the pattern prefix ending at $u$.
This gives a replacement prefix starting at the beginning of the small text wall matched at~$u$ and ending at the same vertex $u$.
The second identity applies to the pattern suffix starting at $v$ and gives a replacement suffix ending at the end of the small text wall matched there.

The replacements join the unchanged subpath from $u$ to $v$ at its original endpoints, yielding the required alignment $\A'$.
Each replacement costs~$1$, while each original piece costs at least $1$.
Both joining vertices have cost $0$, so these comparisons give $\cost(\A')\le\cost(\A)$.
The new endpoints lie in portals belonging to the first and last pattern walls and to small text walls, whose indices are divisible by $6$.
\end{proof}

We now follow a path through its successive portal visits.
A \emph{transition} is the subpath from the last visited vertex in one portal to the first visited vertex in the next distinct portal.
Transitions within the same pair of walls keep both wall indices fixed; the following lemma gives the bounds needed below for transitions that change at least one wall index.

\begin{lemma}\label{lem:pmlb-portal-transitions}
Suppose that $(X,Y)$ is a \No instance of OV, and let $\A$ be a path in $\AG(\hat P,\hat T)$ with $\cost(\A)<10d+9$.
Consider a transition of $\A$ between portals with wall indices $(a,b)$ and $(a',b')$.
Then $a'-a,b'-b\in\{0,1\}$, and the following hold:
\begin{enumerate}
    \item If $(a',b')=(a+1,b)$, then $a$ is even and the transition costs at least $4d$.
    \item If $(a',b')=(a,b+1)$, then $b$ is odd and the transition costs at least $4d$.
    \item If $(a',b')=(a+1,b+1)$, with $a$ odd and $b$ even, then the transition costs at least $2d+9$.
\end{enumerate}
\end{lemma}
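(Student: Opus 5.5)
The plan follows \cref{lem:dtw-portal-transitions}, with extra care for the two-run walls and the fresh symbols. Throughout, $\cost(\A)<10d+9\le 12d$ because $d\ge9$, and every vector gadget has length at least $8d^2>10d+9$.

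\textbf{Wall indices advance by at most one.} Monotonicity gives $a'\ge a$ and $b'\ge b$. Suppose $b'\ge b+2$. Then the transition runs from a $\dcharw$-run of text wall $b$ to a $\dcharw$-run of text wall $b'$. It therefore traverses every position of text wall $b+1$, and in particular its first $\dcharw$-run of length $12d$. No vertex of the transition interior lies in a portal, so each of these $12d$ text positions is aligned only with non-$\dcharw$ pattern symbols. This costs at least $12d>\cost(\A)$, a contradiction. The same argument applies to a pattern wall $a+1$ in the middle; its $\dcharw$-runs have length $12d$, or $2k\ge12d$ for reversed walls.

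\textbf{Horizontal-type transitions.} For case 1, the pattern wall index increases while the text stays within wall $b$. The interior cannot pass from one $\dcharw$-run of wall $b$ to its other $\dcharw$-run while staying inside that wall. Doing so would enter a different portal of the same wall pair, which is allowed, but then that portal would be the next one, not $(a+1,b)$. So I would argue that the text coordinate stays within a single $\dcharw$-run of wall $b$ together with possibly its adjacent $\dol$-runs. Meanwhile the pattern traverses the whole gadget between walls $a$ and $a+1$, and this gadget contains no $\dcharw$ and none of $\dol_1,\dol_2$. Hence each gadget character is mismatched. If $a$ is odd, this gadget is a vector gadget, whose length exceeds the budget, which is impossible. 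So $a$ is even and the cost is at least $|\bot_X|=4d$. Case 2 is symmetric: the gadget after text wall $b$ is $\bot_Y$ exactly when $b$ is odd, and otherwise it is a vector gadget.

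\textbf{Diagonal transition, case 3.} With $a$ odd and $b$ even, the gadgets crossed are $G_Y(\vec v)$ in the pattern and $G_X(\vec u)$ in the text. The transition starts in a $\dcharw$-run and ends in a $\dcharw$-run, and both strings pass through the gadget in between. The substring between the endpoints, in each string, has the form (non-empty $\dcharw$-run or $\dcharw$-run plus $\dol$ part), then the gadget, then the same on the other side. I would apply \cref{lem:sig_antsig_sig} after trimming the $\dol$ parts. The main obstacle is that the endpoint may lie in the first $\dcharw$-run of a wall, so the $\dol$-runs of that wall sit between the endpoint and the gadget. The cleanest route is to decompose the transition path into three pieces. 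The first is the piece until both coordinates have reached the last $\dcharw$-run of their current walls. The middle piece is the part between the last vertex with both coordinates in the $\dcharw$-runs just before the gadgets and the first vertex with both in the $\dcharw$-runs just after. By \cref{obs:sub_DTW} and \cref{lem:sig_antsig_sig} with $\sigma=\alpha=\dcharw$, this middle piece costs at least $\DTW(G_Y(\vec v),G_X(\vec u))=2d+\vec u\cdot\vec v\ge2d+9$, using \cref{lem:dist_vect_gadget} and item 2 of \cref{lem:pmlb-ov-variant}. Justifying that such boundary vertices exist, given that the interior visits no portal, is the delicate point. If the path never has both coordinates in matching $\dcharw$-runs before the gadgets, I would instead project the path onto the gadget rectangle, as in the proof of \cref{lem:sig_antsig_sig}. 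Vertices with exactly one coordinate inside the gadgets are mismatches of cost $1$ and project to cost at most $1$, so the cost still dominates $\DTW$ of the gadgets, provided the path actually enters the gadget rectangle. If it does not enter, the two whole gadgets are mismatched separately, at cost $\ge 16d^2$. The preceding $\dol$ and $\dcharw$ pieces only add nonnegative cost.
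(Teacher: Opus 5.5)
Your proof is correct, but in case 3 it takes a more roundabout route than the paper.

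\textbf{Case 3.} The paper's first step proves something stronger than yours. No $\dcharw$-\emph{run}, not merely no whole wall, can lie strictly between the runs containing the transition's endpoints, in either string. All of its at least $12d$ positions would be matched against non-$\dcharw$ symbols, because a $\dcharw$--$\dcharw$ vertex would be a portal inside the transition.

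Apply this to the second $\dcharw$-run of the source wall and to the first $\dcharw$-run of the target wall. It follows that in case 3 the transition runs from the last $\dcharw$-runs of walls $a$ and $b$ to the first $\dcharw$-runs of walls $a+1$ and $b+1$. Each string's segment is then a non-empty $\dcharw$-run, the vector gadget, and a non-empty $\dcharw$-run. So \cref{lem:sig_antsig_sig} with $\sigma=\alpha=\dcharw$ applies verbatim, and the ``delicate point'' you worry about never arises.

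Your fallback, projecting onto the gadget rectangle, is nevertheless valid on its own and in all cases. Within the transition's span, the symbols flanking the two gadgets are $\dcharw,\dol_1,\dol_2$, none of which occurs in a gadget. Hence the projection argument from the proof of \cref{lem:sig_antsig_sig} goes through unchanged; that lower-bound argument never used that the guards are single runs. You can therefore drop the three-piece decomposition and use the projection directly.

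The two routes trade off as follows. Yours is independent of where inside the walls the endpoints sit. The paper's pins down the endpoints first and then needs only a one-line appeal to the lemma as stated.

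\textbf{Case 1.} One sentence here is false, though harmless: the claim that the text coordinate cannot pass from one $\dcharw$-run of wall $b$ to the other without entering another portal. It can. The text coordinate can cross the $\dol$-runs while the pattern coordinate is inside $\bot_X$. Every such vertex is already paid for as a gadget mismatch, so this transition still costs exactly $4d$ and visits no intermediate portal.

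What you actually need is only that the text coordinate stays inside wall $b$. That follows from $b'=b$ and monotonicity, and wall $b$ contains only $\dcharw,\dol_1,\dol_2$. So every gadget position is mismatched, as you conclude.
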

\begin{proof}
Since $d\ge9$, the assumed cost of \(\A\) is less than both $12d$ and $8d^2$.
Suppose that, in either string, a $\dcharw$-run lay strictly between the runs containing the endpoints of the transition.
The path would traverse every position of this intermediate run.
No such position could be matched: a $\dcharw$-match there would place the path in another portal between the two consecutive portal visits.
The whole run would therefore be mismatched, giving a cost of at least $12d$, a contradiction.
Thus, in each string, the transition either stays in the same $\dcharw$-run or advances to the next one, and each wall index increases by at most $1$.

If only the pattern wall index increases, the transition traverses the gadget between pattern walls $a$ and $a+1$, while its text segment lies within one wall and contains only $\dcharw$, $\dol_1$, and $\dol_2$.
The pattern gadget contains none of these three wall symbols, so every position of the gadget contributes a mismatch.
For odd~$a$, the gadget is a vector gadget of length at least $8d^2$, which exceeds the cost of the entire path.
Thus, $a$ must be even, and the crossed gadget is $\bot_X$ of length $4d$, giving the first bound.

If only the text wall index increases, the transition crosses the gadget between text walls $b$ and $b+1$, while the pattern segment stays inside wall $a$.
Again, the gadget contains none of the wall symbols, so every position of the text gadget is mismatched.
For even $b$, the gadget is $G_X(\Vec{u})$ for some $\Vec{u}\in X$ and has length at least $8d^2$, exceeding the cost bound.
Hence $b$ is odd, and the gadget is $\bot_Y$, giving a cost of at least $4d$.

Finally, suppose that both indices increase, with $a$ odd and $b$ even.
Because no $\dcharw$-run is skipped, the transition goes from the last $\dcharw$-run of each source wall to the first $\dcharw$-run of the next wall.
The parities of $a$ and $b$ ensure that the intervening gadget is a vector gadget in each string.
The pattern and text segments of the transition therefore consist of $G_Y(\Vec{v})$ and $G_X(\Vec{u})$, respectively, each preceded and followed by a nonempty $\dcharw$-run, for some $\Vec{v}\in Y$ and $\Vec{u}\in X$.
Since neither vector gadget contains $\dcharw$, \cref{obs:sub_DTW,lem:sig_antsig_sig} identify the distance between the transition's endpoints with $\DTW(G_Y(\Vec{v}),G_X(\Vec{u}))$.
By \cref{lem:dist_vect_gadget} and symmetry of DTW, this distance is $2d+\Vec{u}\cdot\Vec{v}$.
The transition costs at least this distance, and the \No-instance guarantee in \cref{lem:pmlb-ov-variant} gives $\Vec{u}\cdot\Vec{v}\ge9$.
This proves the last bound.
\end{proof}

The local bounds prevent a low-cost path from changing between different even wall offsets.
\begin{lemma}\label{lem:pmlb-wall-offset}
Suppose that $(X,Y)$ is a \No instance of OV.
Every path in $\AG(\hat P,\hat T)$ whose endpoints lie in portals with distinct even wall offsets has cost at least $10d+9$.
\end{lemma}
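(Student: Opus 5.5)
We would follow the proof of \cref{lem:movediagonal}, replacing \cref{lem:dtw-portal-transitions} with \cref{lem:pmlb-portal-transitions}. Suppose, for a contradiction, that a path $\A$ has cost less than $10d+9$ and that its endpoints lie in portals with distinct even wall offsets $s_0\ne s_1$. List the distinct portals visited by $\A$ as $Q_0,\ldots,Q_L$, and write $\delta_\ell$ for the wall offset of $Q_\ell$. By \cref{lem:pmlb-portal-transitions}, each wall index increases by $0$ or $1$ per transition. Transitions within the same pair of walls and diagonal transitions therefore preserve the offset. A vertical transition $(a,b)\to(a+1,b)$ increases the offset by $1$, and a horizontal transition $(a,b)\to(a,b+1)$ decreases it by $1$. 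Consequently, $\delta_\ell$ changes by at most $1$ per step.

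Since $s_0$ and $s_1$ are distinct even integers, the offset sequence must move from some even value $e$ to $e-2$, or to $e+2$, through odd values only. Concretely, we take $j$ to be the first index whose offset is the even value adjacent to $s_0$ in the direction of $s_1$, namely $s_0\mp2$. We then take $i<j$ to be the last earlier index with offset $s_0$. As in \cref{lem:movediagonal}, every offset strictly between $i$ and $j$ equals the intermediate odd value. Hence the transitions $Q_i\to Q_{i+1}$ and $Q_{j-1}\to Q_j$ are both offset-changing in the same direction, and all transitions in between are diagonal or stay within a pair of walls.

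In the decreasing case, both boundary transitions are horizontal and each costs at least $4d$. Write $Q_i$ with wall indices $(a,b)$. Lemma \cref{lem:pmlb-portal-transitions}(2) forces $b$ to be odd, and $a$ is then odd because $a-b$ is even. After reaching wall indices $(a,b+1)$, the next wall-changing transition cannot be horizontal, because $b+1$ is even. It is therefore diagonal, starting from an odd pattern wall and an even text wall, and it costs at least $2d+9$ by part (3). In the increasing case, both boundary transitions are vertical. Part (1) forces $a$ to be even, so $b$ is even as well. From $(a+1,b)$ a second vertical transition is impossible, since $a+1$ is odd, so the next wall-changing transition is diagonal from an odd pattern wall and an even text wall, again costing at least $2d+9$. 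One point needs care here: transitions that keep both wall indices fixed may occur between these, but they do not affect the parity argument.

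The three selected transitions have disjoint interiors, and any shared endpoints lie in portals of cost $0$. Their costs therefore add to at least $4d+(2d+9)+4d=10d+9$, contradicting the assumption. The main obstacle we anticipate is rigorously justifying that the two boundary transitions of the window, together with a diagonal transition between them, are distinct and appear in this order. This needs the parity bookkeeping above to be carried out with each wall containing two $\dcharw$-runs, and therefore four portals per wall pair, rather than one.
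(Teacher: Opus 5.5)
Your proof is correct and follows essentially the paper's argument: the same choice of $Q_i$ and $Q_j$ around the first change between even offsets, two boundary transitions of cost at least $4d$ each, and an intermediate diagonal transition of cost at least $2d+9$ via \cref{lem:pmlb-portal-transitions}. The only minor difference is how the diagonal transition is located. You take the first wall-changing transition after the entering one and use parity to rule out that it is the leaving transition, as in \cref{lem:movediagonal}. The paper instead observes that the relevant wall index is odd at the first intermediate portal and even at the last, so some intermediate transition must advance it. Your restriction to wall-changing transitions already handles the four portals per wall pair that you flagged as the main obstacle.
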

\begin{proof}
Suppose, for a contradiction, that such a path $\A$ has cost less than $10d+9$.
List the distinct portals visited by $\A$, in order, as $Q_1,\ldots,Q_z$.
By \cref{lem:pmlb-portal-transitions}, each wall index either stays fixed or increases by $1$ in a transition, so the wall offset changes by at most $1$.

Write $2t$ for the wall offset of $Q_1$.
If the final offset is greater than $2t$, let $Q_j$ be the first portal with offset $2t+2$; otherwise, let it be the first portal with offset $2t-2$.
Such a portal exists because the final offset is a different even integer and each transition changes the offset by at most $1$.
Let $Q_i$ be the last portal before $Q_j$ with offset $2t$.
Because offsets change by at most $1$, the intermediate list $Q_{i+1},\ldots,Q_{j-1}$ is nonempty.
The choices of $Q_i$ and $Q_j$ exclude an intermediate visit to either endpoint offset; together with the bound on offset changes, this keeps all intermediate offsets strictly between them.
Thus, all intermediate offsets equal $2t+1$ in the first case and $2t-1$ in the second.
We consider the two cases separately.

Suppose first that $Q_j$ has offset $2t+2$.
The transitions entering and leaving the intermediate list each increase the offset by $1$.
Since neither wall index decreases and each increases by at most $1$, these transitions advance only the pattern wall index.
By \cref{lem:pmlb-portal-transitions}, both cost at least $4d$ and start at an even pattern wall index.
The entering transition ends at an odd pattern wall index, while the leaving transition starts at an even one.
These are the pattern wall indices of the first and last intermediate portals, respectively.
Since this index increases only in steps of $1$, some transition between intermediate portals must increase it from odd to even.
The wall offset is $2t+1$ at both ends of that transition, so the text wall index must also increase by $1$.
At its source, the odd wall offset and odd pattern wall index imply that the text wall index is even.
By \cref{lem:pmlb-portal-transitions}, this transition costs at least $2d+9$.

Suppose now that $Q_j$ has offset $2t-2$.
The entering and leaving transitions each decrease the offset by $1$, so they advance only the text wall index.
By \cref{lem:pmlb-portal-transitions}, both cost at least $4d$ and start at an odd text wall index.
The entering transition ends at an even text wall index, while the leaving transition starts at an odd one.
Thus, the first intermediate text wall index is even and the last is odd.
Some transition between intermediate portals must therefore increase it from even to odd.
The wall offset is $2t-1$ at both ends of that transition, so the pattern wall index must also increase by $1$.
At its source, the odd wall offset and even text wall index imply that the pattern wall index is odd.
By \cref{lem:pmlb-portal-transitions}, this transition again costs at least $2d+9$.

In either case, the entering transition, the identified internal transition, and the leaving transition occur in order along $\A$ and have disjoint interiors.
Any shared endpoints lie in portals and have cost $0$, so adding the three lower bounds counts no positive cost twice.
This gives $\cost(\A)\ge4d+(2d+9)+4d=10d+9$, a contradiction.
\end{proof}

We can now conclude the soundness of the reduction.
\begin{lemma}\label{lem:pmlb-no-so-no}
If $(X,Y)$ is a \No instance of OV, there is no $k$-DTW occurrence of $\hat P$ in $\hat T$.
\end{lemma}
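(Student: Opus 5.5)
We argue by contradiction and combine \cref{lem:pmlb-start-end-portal} with \cref{lem:pmlb-wall-offset}. Suppose $\hat P$ has a $k$-DTW occurrence, i.e., there is an alignment $\A$ of cost at most $k=10d+8$ from $\hat P$ to some substring of $\hat T$. Apply \cref{lem:pmlb-start-end-portal} to obtain an alignment $\A'$ with $\cost(\A')\le\cost(\A)\le k$. This alignment starts in a portal with wall indices $(0,b_0)$ and ends in a portal with wall indices $(2m,b_1)$, where $b_0\equiv b_1\equiv 0\pmod 6$.

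Next we compute the two wall offsets. The starting portal has offset $-b_0$, which is even. The ending portal has offset $2m-b_1$, which is also even.

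To see that the offsets are distinct, reduce modulo $6$. Since $m\equiv1\pmod3$, we have $2m\equiv2\pmod6$. Hence $2m-b_1\equiv2\pmod6$, while $-b_0\equiv0\pmod6$. The two offsets are therefore different even integers.

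Now $\A'$ is a path in $\AG(\hat P,\hat T)$ whose endpoints lie in portals with distinct even wall offsets, so \cref{lem:pmlb-wall-offset} gives $\cost(\A')\ge10d+9>k$. This contradicts $\cost(\A')\le k$, so no $k$-DTW occurrence exists.

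There is no real obstacle here, since the substantive work was done in the two cited lemmas. The only points to check are that \cref{lem:pmlb-wall-offset} applies to arbitrary paths, not just full alignments, and that \cref{lem:pmlb-start-end-portal} indeed places both endpoints of $\A'$ inside portals.
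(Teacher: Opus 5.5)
Your proposal is correct and matches the paper's proof step for step. Like the paper, you use \cref{lem:pmlb-start-end-portal} to move the endpoints into portals with wall indices $(0,b_0)$ and $(2m,b_1)$, show the offsets $-b_0\equiv 0$ and $2m-b_1\equiv 2 \pmod 6$ are distinct even integers, and apply \cref{lem:pmlb-wall-offset} (valid here since $(X,Y)$ is a \No instance) to get cost at least $10d+9>k$.
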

\begin{proof}
Suppose, for a contradiction, that there is an alignment $\A$ of cost at most $k$ from $\hat P$ to a substring of~$\hat T$.
By \cref{lem:pmlb-start-end-portal}, we may assume that its endpoint portals have wall indices $(0,b_0)$ and $(2m,b_1)$, where $b_0\equiv b_1\equiv0\pmod6$.
The starting wall offset is $-b_0\equiv0\pmod6$.
For the ending offset, $m\equiv1\pmod3$ gives $2m\equiv2\pmod6$, and hence $2m-b_1\equiv2\pmod6$.
Thus, both offsets are even, and their different residues modulo $6$ ensure that they are distinct.
By \cref{lem:pmlb-wall-offset}, the alignment must have cost at least $10d+9$, exceeding $k=10d+8$ and giving a contradiction.
\end{proof}

\subsubsection{Completing the Reduction}
\restatementwithproof{\thmovpm*}
\begin{proof}
Let $n\coloneqq|X|$ and $m\coloneqq|Y|$.
By \cref{lem:pmlb-ov-variant}, we obtain an equivalent instance $(X',Y')$ with $n'\coloneqq|X'|=\Oh(n+m)$, $m'\coloneqq|Y'|=\Oh(m)$, and dimension $d'\coloneqq9d$.
Construct $P\coloneqq\hat P$ and $T\coloneqq\hat T$ from this instance as above, with threshold $k\coloneqq10d'+8$.
By \cref{lem:pmlb-yes-so-yes,lem:pmlb-no-so-no}, the original instance is a \Yes instance if and only if $P$ has a $k$-DTW occurrence in $T$.

It remains to verify the parameter bounds and construction time.
Each narrow gadget has length $4d'$, and each vector gadget has length at most $8(d')^2+3d'$.
Since $k\le18d'$, each wall gadget has length at most $\max\{4k+24d',6k+1\}\le110d'$.
The text consists of $n'$ units, each containing two walls, one narrow gadget, and one vector gadget.
Hence,
\[
|T|\le n'\bigl(220d'+4d'+8(d')^2+3d'\bigr)\le300(d')^2n'.
\]
The pattern contains $m'$ vector gadgets, $m'$ narrow gadgets, $2m'-1$ small walls, and two reversed walls.
Thus,
\[
|P|\le m'\bigl(8(d')^2+7d'\bigr)+110d'(2m'+1)\le300(d')^2m',
\]
where the last inequality uses $m'\ge1$ and $d'\ge9$.
Substituting the bounds on $n',m',d'$ gives $|P|=\Oh(md^2)$, $|T|=\Oh((n+m)d^2)$, and $k=\Oh(d)$.
The initial reversed wall begins with $2k\ge k+1$ copies of $\dcharw$, so the first $k+1$ characters of $P$ are equal.

Both strings use the seven-symbol alphabet $\{\dchara,\dcharb,\dchare,\dcharf,\dcharw,\dol_1,\dol_2\}$ and can be constructed in time proportional to their lengths.
Including the preprocessing, the construction takes $\Oh((n+m)d^2)$ time.
\end{proof}

\subsection{Proof of \cref{thm:lb-pmdtw}}
The reduction in \cref{thm:ov-to-pmdtw} gives upper bounds on the string lengths and threshold, whereas \cref{thm:lb-pmdtw} requires prescribed values for all three parameters.
We first establish the padding properties needed to reach those values.
For non-empty strings $P$ and $T$, let
\[
\PMDTW(P,T)\coloneqq \min_{S\ \text{non-empty substring of }T}\DTW(P,S).
\]
The following lemma combines three padding operations: repeating the first text character, extending the initial pattern run, and prepending copies of a fresh symbol to the pattern.
\begin{lemma}[Padding to Prescribed Parameters]\label{lem:pmdtw-padding}
Let $P$ and $T$ be non-empty strings, and let $\kappa\ge0$ be an integer such that the first $\kappa+1$ characters of $P$ are equal.
Given integers $n,m,k$ satisfying
\[
n\ge |T|,
\qquad m\ge |P|+k-\kappa,
\qquad k\ge\kappa,
\]
one can construct strings $P'$ and $T'$ of lengths $m$ and $n$, respectively, in $\Oh(m+n)$ time, using at most one additional alphabet symbol, such that, for the unit-cost function, $\PMDTW(P',T')\le k$ if and only if $\PMDTW(P,T)\le\kappa$.
\end{lemma}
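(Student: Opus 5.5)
Set $\Delta\coloneqq k-\kappa\ge0$ and let $\sigma\coloneqq P[1]$. Choose a fresh symbol $\dol$ that occurs in neither $P$ nor $T$. Define
\[
P'\coloneqq\dol^{\Delta}\sigma^{\,m-|P|-\Delta}P,
\qquad
T'\coloneqq T[1]^{\,n-|T|}T.
\]
Both strings have the required lengths, since $m-|P|-\Delta\ge0$. They use one extra symbol and are built in $\Oh(m+n)$ time. We verify three padding facts in turn.

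\textbf{Fact (a): repeating the first text character.} We claim $\PMDTW(P,T')=\PMDTW(P,T)$. A substring of $T'$ either lies inside $T$, or it is $T[1]^{r}T\fragment{1}{y}$. In the second case, aligning the extra copies of $T[1]$ together with the first copy costs nothing more. Hence $\DTW(P,T[1]^rT\fragment{1}{y})\le\DTW(P,T\fragment{1}{y})$. Conversely, take any alignment to such a substring and collapse the whole $T[1]$-prefix onto text position $1$. This never increases the cost. If the substring lies entirely in the padding, it is a power of $T[1]$, and it can be replaced by $T\fragment{1}{1}$ with equal DTW.

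\textbf{Fact (b): extending the initial pattern run.} Write $Q\coloneqq\sigma^{r}P$ with $r\ge0$. We claim $\PMDTW(Q,U)\le\kappa$ iff $\PMDTW(P,U)\le\kappa$ for every text $U$.
\begin{itemize}
\item Direction $\PMDTW(Q,U)\le\PMDTW(P,U)$: in an alignment, the extra copies of $\sigma$ can simply be aligned with the first text vertex that $P[1]$ uses. This adds no new text column, so the cost does not grow.
\item Converse: take an alignment of $Q$ of cost at most $\kappa$. The prefix $\sigma^{r+\kappa+1}$ of $Q$ cannot be all mismatched, so some copy of $\sigma$ is matched at a text position $y$. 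Restrict to a suitable suffix of the path, using the matched vertex, or argue via the first text position $y$ aligned with the initial run. Then $P$, which starts with $\sigma^{\kappa+1}$, aligns to the text substring starting at $y$ at no greater cost.
\end{itemize}
The cleanest way to do the converse is a projection argument like the one in \cref{lem:sig_antsig_sig}. Map every vertex whose pattern index falls in the extra prefix to pattern index $r+1$. Then drop the text columns before the first matched column. Any text column that loses its only vertex was aligned only with padding $\sigma$'s. If it held a mismatch, that mismatch is removed. If it held a match, this is exactly where the $\kappa+1$ equal characters are needed: they guarantee a matched text column can be chosen as the new start. \textbf{This is the step I expect to be the main obstacle.} I would first try to prove the stronger statement that starting the substring at the first text column where the run $\sigma^{r+\kappa+1}$ has a match loses nothing.

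\textbf{Fact (c): prepending the fresh symbol.} We claim $\PMDTW(\dol^{\Delta}Q,U)=\Delta+\PMDTW(Q,U)$.
\begin{itemize}
\item Upper bound: align all of $\dol^{\Delta}$ against the first text character of an optimal occurrence. This costs exactly $\Delta$, and that vertex column is shared with $Q[1]$.
\item Lower bound: each $\dol$ is mismatched at some vertex, and these vertices are distinct. Project the $\dol$-rows onto the row of $Q[1]$, exactly as in the lower-bound argument of \cref{lem:dtw-padding}. This removes at least $\Delta$ cost and leaves an alignment of $Q$ to the same substring.
\end{itemize}

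\textbf{Combining.} By (a), (b) and (c),
\[
\PMDTW(P',T')=\Delta+\PMDTW(\sigma^{m-|P|-\Delta}P,\,T'),
\]
so $\PMDTW(P',T')\le k$ iff $\PMDTW(\sigma^{m-|P|-\Delta}P,T')\le\kappa$. By (b) this holds iff $\PMDTW(P,T')\le\kappa$, and by (a) iff $\PMDTW(P,T)\le\kappa$.
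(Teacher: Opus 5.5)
Your decomposition (repeat $T[1]$, lengthen the initial run of $P$, prepend a fresh symbol) matches the paper's. However, Fact~(b), the only step that uses the hypothesis on the first $\kappa+1$ characters, has its two directions the wrong way round.

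\textbf{Fact~(b).} The argument for $\PMDTW(Q,U)\le\PMDTW(P,U)$ fails. Appending the $r$ extra copies of $\sigma$ at the first vertex $(1,x)$ of an alignment of $P$ adds $r$ new vertices in column $x$, each costing $\unit(\sigma,U[x])$. Cost is paid per vertex, not per text column, so the cost grows by $r$ whenever $U[x]\ne\sigma$. The inequality itself is false in general: $P=\sigma$, $U=\dcharb$ gives $r+1$ versus $1$.

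The missing step is the idea you attached to the other direction. An alignment of $P$ of cost at most $\kappa$ cannot mismatch all of its first $\kappa+1$ rows, which are all $\sigma$. So it visits a vertex $(i,y)$ with $i\le\kappa+1$ and $U[y]=\sigma$. Inserting the $r$ extra rows as a vertical run at that vertex costs nothing. In the paper's expansion language, this is duplicating that matched pair $r$ times; equivalently, reroute the alignment to start at column $y$, after which your recipe works.

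The direction you flag as the main obstacle is immediate and needs no hypothesis: every expansion of $Q$ is an expansion of $P$. Your own projection of rows $1,\dots,r+1$ of $Q$ onto the first row of $P$ already preserves every vertex cost and the text substring. No text columns need to be dropped.

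\textbf{Fact~(c).} The lower bound is wrong as stated. Projecting the $\dol$-rows onto the row of $Q[1]$ while keeping the same substring need not save $\Delta$, because the projected vertices may themselves be mismatches. For $\Delta\ge1$, $Q=\sigma$ and the substring $\dcharb^{\Delta}\sigma$, we have $\DTW(\dol^{\Delta}\sigma,\dcharb^{\Delta}\sigma)=\Delta=\DTW(\sigma,\dcharb^{\Delta}\sigma)$. So no alignment of $Q$ to the same substring is cheaper by $\Delta$. The analogy with \cref{lem:dtw-padding} breaks because there both strings are padded, and the saved $\Delta$ comes from the portion where both coordinates lie in padding. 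The fix, which the paper uses, is truncation rather than projection. Delete all vertices in the first $\Delta$ rows (at least $\Delta$ vertices, each of cost $1$). Keep the remaining suffix of the path, which aligns $Q$ to a possibly shorter non-empty substring. This suffices because $\PMDTW$ minimizes over substrings.

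\textbf{Fact~(a).} The inequality $\DTW(P,T[1]^{r}T\fragment{1}{y})\le\DTW(P,T\fragment{1}{y})$ comes from the same per-column misconception and is false (take $P=\mathtt{a}$, $T=\dcharb$, $y=1$). The claim of ``equal DTW'' for a pure power of $T[1]$ is also false. Neither is needed: substrings of $T$ are substrings of $T'$, and your collapsing projection correctly handles the other direction.
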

\begin{proof}
We first pad the text by repeating its first character.
Let $b\coloneqq T[1]$, set $q\coloneqq n-|T|\ge0$, and define $T'\coloneqq b^q T$.
Every substring of $T$ remains available in $T'$, so $\PMDTW(P,T')\le\PMDTW(P,T)$.
Conversely, consider an alignment of $P$ to a substring of $T'$, with text coordinates indexed in the whole string $T'$.
Replace each visited vertex $(i,j)$ by $(i,\max\{1,j-q\})$ and delete consecutive repeated vertices.
The resulting path is an alignment to a non-empty substring of $T$, and the mapped text characters are unchanged, so its cost does not increase.
Hence, $\PMDTW(P,T')=\PMDTW(P,T)$.

We next extend the initial pattern run, reserving room for the fresh prefix that will raise the threshold.
Set $\Delta\coloneqq k-\kappa$ and $r\coloneqq m-|P|-\Delta$; both are nonnegative by the hypotheses.
Let $a\coloneqq P[1]$ and define $Q\coloneqq a^rP$.
Fix a non-empty substring $S$ of $T'$.
Every expansion of $Q$ is also an expansion of $P$, since the added copies only lengthen its initial run of $a$'s.
Thus, any pair of same-length expansions witnessing $\DTW(Q,S)\le\kappa$ also witnesses $\DTW(P,S)\le\kappa$.

For the converse, let $X$ and $Y$ be same-length expansions of $P$ and $S$ with $\Ham(X,Y)\le\kappa$.
The first $\kappa+1$ positions of $X$ all contain $a$, so at least one of them must be matched by an $a$ in $Y$; otherwise these positions alone would contribute $\kappa+1$ mismatches.
Duplicate this matched pair $r$ additional times at that position in $X$ and $Y$.
The insertion lengthens the initial run of $X$ by $r$, making it an expansion of $Q$, while the modified $Y$ remains an expansion of $S$.
All inserted pairs match, so the cost remains at most $\kappa$.
Since $S$ was arbitrary, we have $\PMDTW(Q,T')\le\kappa$ if and only if $\PMDTW(P,T')\le\kappa$.

Finally, we raise the threshold by adding a fresh pattern prefix.
Choose a symbol $\hash$ not occurring in $P$ or $T$, and define $P'\coloneqq\hash^\Delta Q$.
In any alignment of $P'$ to a substring of $T'$, each of the $\Delta$ added pattern positions contributes at least one mismatch, since $\hash$ does not occur in $T'$.
Removing all vertices in these first $\Delta$ pattern rows leaves a suffix of the path that still visits every position of $Q$.
Its text coordinates form a non-empty substring of $T'$, so the original cost is at least $\Delta+\PMDTW(Q,T')$.
Conversely, extend an optimal alignment of $Q$ to a substring of $T'$ by aligning the $\Delta$ added copies of $\hash$ to its first text position.
This adds exactly $\Delta$ mismatches, proving $\PMDTW(P',T')=\Delta+\PMDTW(Q,T')$.

Together with $k=\kappa+\Delta$ and the preceding equivalences, this proves the required equivalence.
The constructed strings have lengths $|P'|=\Delta+r+|P|=m$ and $|T'|=n$.
The construction takes $\Oh(m+n)$ time, and $\hash$ is its only additional symbol.
\end{proof}

We also need a lower bound on how many text symbols an algorithm must inspect.
This will ensure that the assumed running time pays for constructing each padded instance and that the target pattern has enough room for the reduction's gadgets.
\begin{lemma}\label{lem:pmdtw-linear-time}
Consider integers $0\le k<m\le n$.
Any Monte Carlo algorithm for the unit-cost \ref{prob:pmdtw} problem, restricted to instances satisfying $|P|=m$, $|T|=n$, and threshold $k$, and having error probability at most $1/3$, must inspect $\Omega(n)$ text symbols.
In particular, its worst-case running time is $\Omega(n)$.
\end{lemma}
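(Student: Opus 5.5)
We mirror the adversary argument of \cref{lem:dtw-linear-time}: fix one \No instance and $\Omega(n)$ \Yes instances, each differing from it at a single text position, so that any correct algorithm must, with constant probability, read each such position.

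For the \No instance, take $P\coloneqq\cb^m$ over distinct symbols $\ca,\cb$, and $T_0\coloneqq\ca^n$. Every non-empty substring $S$ of $T_0$ is a run of $\ca$'s. By \cref{fact:unary_DTW} with the roles of pattern and text swapped (the unary string is now $S$, and $P$ contains no $\ca$), every alignment mismatches all $m$ positions of $P$. Hence $\PMDTW(P,T_0)=m>k$.

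For each $j\in[1\dd n]$, let $T_j$ be $T_0$ with position $j$ replaced by $\cb$. The length-$1$ substring $T_j\fragment{j}{j}=\cb$ satisfies $\DTW(\cb^m,\cb)=0\le k$, so $(P,T_j,k)$ is a \Yes instance. This yields $n$ \Yes instances, each differing from $(P,T_0,k)$ only at text position $j$; the pattern and $k$ are fixed across all of them. The constraints $k<m\le n$ guarantee that all these instances are valid.

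Next, couple the randomness. Run the algorithm on $(P,T_0,k)$ and on $(P,T_j,k)$ with the same random bits. Until text position $j$ is inspected, the two executions see identical answers, so they are identical. Since the algorithm outputs \Yes with probability at least $2/3$ on $T_j$ and at most $1/3$ on $T_0$, the outputs disagree with probability at least $1/3$. Disagreement requires position $j$ to be inspected on $T_0$, so that happens with probability at least $1/3$.

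Finally, sum over $j\in[1\dd n]$ using linearity of expectation. The expected number of inspected text symbols on $(P,T_0,k)$ is at least $n/3$. Hence the worst-case number of inspected symbols, and therefore the worst-case running time, is $\Omega(n)$.

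The only point needing care is the \No-instance distance claim: \cref{fact:unary_DTW} is stated with the unary string as the first argument, and DTW is symmetric, so it applies to $\DTW(S,P)$ with $S=\ca^{|S|}$. Its hypothesis $T\in\Sigma^*\sigma\Sigma^*$ fails here because $P$ has no $\ca$. So we argue directly: every vertex of every alignment pairs $\ca$ with $\cb$ and costs $1$, and each of the $m$ pattern rows is visited at least once. The remaining steps are routine.
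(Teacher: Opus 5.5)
Your proof is correct and essentially the paper's own argument. The paper uses $P=\ca^m$, $T_0=\cb^n$, and $T_j=\cb^{j-1}\ca\cb^{n-j}$, which is your construction with the two symbols swapped, followed by the same shared-randomness coupling and linearity-of-expectation sum giving at least $n/3$ expected inspections on $T_0$. Your direct justification of the \No instance (every vertex pairs $\ca$ with $\cb$, and all $m$ pattern rows are visited) is the right one; as you note yourself, the initial appeal to the unary-DTW fact is unnecessary because its hypothesis fails here.
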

\begin{proof}
Fix distinct symbols $\ca$ and $\cb$, and let $P\coloneqq\ca^m$ and $T_0\coloneqq\cb^n$.
Every alignment of $P$ to a substring of $T_0$ mismatches all $m$ pattern positions, so $(P,T_0,k)$ is a \No instance.
For each $j\in[1\dd n]$, let $T_j\coloneqq\cb^{j-1}\ca\cb^{n-j}$.
Each $(P,T_j,k)$ is a \Yes instance, witnessed by the single-character substring $\ca$.

Run the algorithm on $T_0$ and $T_j$ using the same random choices.
Until position $j$ is inspected, the two executions receive identical answers and therefore make identical queries.
In particular, their outputs can differ only if the execution on $T_0$ inspects position $j$.
The error bound implies that the probability of a \Yes output is at least $2/3$ on $T_j$ and at most $1/3$ on $T_0$.
The difference between these probabilities is at most the probability that the two outputs disagree under the shared random choices.
Thus, the two outputs differ with probability at least $1/3$, and position $j$ must be inspected on $T_0$ with probability at least $1/3$.

By linearity of expectation, summing these inspection probabilities over all $j\in[1\dd n]$ gives an expected number of at least $n/3$ distinct text positions inspected on $T_0$.
The worst-case running time is therefore $\Omega(n)$ as well.
\end{proof}

We now combine the reduction with \cref{lem:pmdtw-padding} to prove the lower bound for prescribed pattern lengths and thresholds.
We divide one OV vector list into groups and pad the reduced instances to the required parameters.
\restatementwithproof{\mthmlbpm*}
\begin{proof}
Fix $0<\eps<1$.
Let $c_{\mathrm{red}}\ge1$ be a common upper bound on the constants hidden in the size and threshold bounds of \cref{thm:ov-to-pmdtw}.
Choose $c_{\mathrm{ov}}>0$ as promised by \cref{hyp:ovh}, or its randomized version for randomized algorithms, with $\eps/2$ in place of $\eps$, and set
$
c_\eps \coloneqq 8c_{\mathrm{red}}c_{\mathrm{ov}}.
$

Let $(m_n)_{n=1}^\infty$ and $(k_n)_{n=1}^\infty$ be integer sequences computable in $\Oh(n)$ time and satisfying
$
c_\eps\log n \le k_n \le \tfrac12 m_n \le \tfrac12 n
$
for all sufficiently large $n$.
Suppose, towards a contradiction, that an algorithm $\mathbf A$ solves \cref{prob:pmdtw} on instances with $|P|=m_n$, $|T|=n$, and $k=k_n$ in time $\Oh(n^{1-\eps}\cdot m_n)$.
By \cref{lem:pmdtw-linear-time}, the running time of $\mathbf A$ on instances with text length $n$ is $\Omega(n)$.
Comparing this lower bound with the assumed upper bound gives $n^{1-\eps}\cdot m_n=\Omega(n)$, or equivalently $m_n=\Omega(n^\eps)$.

For sufficiently large $s$, consider an instance of $\OV$ with $|X|=|Y|=s$ and
$
d\coloneqq\ceil{c_{\mathrm{ov}}\log s}.
$
By \cref{thm:ov-to-pmdtw}, keeping all of $X$ and any non-empty subset of $Y$ produces a text of length at most $2c_{\mathrm{red}}sd^2$.
We therefore choose the target text length and the corresponding pattern length and threshold as
\[
n\coloneqq \ceil{2c_{\mathrm{red}}sd^2},
\qquad
m\coloneqq m_n,
\qquad
k\coloneqq k_n.
\]
The sequences' computability lets us obtain $m$ and $k$ in $\Oh(n)$ time.
Since $d=\Theta(\log s)$ and $n=\Theta(sd^2)$, the bound $m=\Omega(n^\eps)$ implies $m=\omega(d^2)$.

We allocate at most $m/4$ positions to each pattern produced by the reduction, leaving the remaining positions for padding.
The reduction's pattern length is at most $c_{\mathrm{red}}d^2$ times the number of vectors retained from $Y$, so we choose the maximum group size
\[
r\coloneqq \left\lfloor \tfrac{m}{4c_{\mathrm{red}}d^2}\right\rfloor.
\]
Because $m=\omega(d^2)$, we have $r=\Theta(m/d^2)$ and $r\ge1$ for sufficiently large $s$.
Moreover, $m\le n=\ceil{2c_{\mathrm{red}}sd^2}\le4c_{\mathrm{red}}sd^2$ for such $s$, so $r\le s$.
Set $t\coloneqq\ceil{s/r}$, and partition $Y$ into $t$ non-empty groups $Y_1,\ldots,Y_t$ of size at most $r$.
The original instance $(X,Y)$ is a \Yes instance if and only if $(X,Y_i)$ is a \Yes instance for at least one $i\in\fragment{1}{t}$.

Fix a group $Y_i$.
Applying \cref{thm:ov-to-pmdtw} to $(X,Y_i)$ yields an equivalent instance $(P_i,T_i,\kappa_i)$ of the unit-cost \ref{prob:pmdtw} problem with
\[
\kappa_i \le c_{\mathrm{red}}d,
\qquad
|P_i| \le c_{\mathrm{red}}rd^2 \le \tfrac14m,
\qquad\text{and}\qquad
|T_i|\le c_{\mathrm{red}}(s+r)d^2\le n.
\]
The reduction also guarantees that the first $\kappa_i+1$ characters of $P_i$ are equal.

For sufficiently large $s$, we have $n\ge s$ and $d=\ceil{c_{\mathrm{ov}}\log s}\le 2c_{\mathrm{ov}}\log n$.
Together with $k\ge c_\eps\log n=8c_{\mathrm{red}}c_{\mathrm{ov}}\log n$, this gives $\kappa_i\le c_{\mathrm{red}}d\le k$.
Since $|P_i|\le m/4$ and $k\le m/2$, we also have $|P_i|+k-\kappa_i\le m/4+m/2\le m$.
Thus, all hypotheses of \cref{lem:pmdtw-padding} hold, and it produces strings $\widehat P_i$ and $\widehat T_i$ of lengths $m$ and $n$, respectively, such that $\PMDTW(\widehat P_i,\widehat T_i)\le k$ if and only if $\PMDTW(P_i,T_i)\le\kappa_i$.
By the reduction, the latter condition holds if and only if $(X,Y_i)$ is a \Yes instance.
We can therefore use $\mathbf A$ on $(\widehat P_i,\widehat T_i,k)$ to decide whether $(X,Y_i)$ is a \Yes instance.
The construction uses unit costs, and padding adds at most one symbol to the reduction's constant-size alphabet.

We perform this construction for every group and answer \Yes if at least one group is accepted.
If $\mathbf A$ is a bounded-error randomized algorithm, repeat each call independently $\Oh(\log(t+1))$ times and take the majority answer to reduce its error to at most $1/(3t)$.
By the union bound, all group answers, and hence the final answer, are correct with probability at least $2/3$.

For each group, the reduction takes $\Oh((s+r)d^2)=\Oh(n)$ time, and \cref{lem:pmdtw-padding} takes $\Oh(m+n)=\Oh(n)$ time.
As $n^{1-\eps}\cdot m=\Omega(n)$, this construction time and the initial computation of $m$ and $k$ are absorbed by the time for the calls to $\mathbf A$.
Allowing for error amplification, the total running time is therefore $\Oh(t\cdot n^{1-\eps}\cdot m\log(t+1))$.

Since $r\le s$ and $r=\Theta(m/d^2)$, the number of groups satisfies $t=\ceil{s/r}=\Oh(s/r)=\Oh(sd^2/m)$.
Together with $t\le s$ and $n=\Theta(sd^2)$, this gives
\[
t\cdot n^{1-\eps}\cdot m\log(t+1)
=\Oh\!\left(s^{2-\eps}d^{4-2\eps}\log s\right).
\]
Since $d=\Theta(\log s)$, the factor $d^{4-2\eps}\log s$ is polylogarithmic in $s$ and is therefore $\Oh(s^{\eps/2})$.
The resulting running time is $\Oh(s^{2-\eps/2})$.
This contradicts the choice of $c_{\mathrm{ov}}$, which rules out such an algorithm for $\OV$ on sets of size $s$ and dimension $d=\ceil{c_{\mathrm{ov}}\log s}$ under the corresponding version of OVH.
\end{proof}

\section{Upper Bound}\label{sec:UB}

In this section we introduce an upper bound for the $\kDTW_\delta$ problem.
We show an algorithm that supports weighted DTW.
Recall that in this setting, we have oracle access to some mismatch-cost function $\delta:\Sigma^2\to \R_{\ge 0}$ such that for every $\sigma\in\Sigma$ we have $\delta(\sigma,\sigma)=0$ and for every $\sigma'\neq \sigma$ we have $\delta(\sigma,\sigma')\ge 1$.
Let $n\coloneqq\max\{|S|,|T|\}$.
Recall that an alignment $\A$ from $S$ to $T$ is a path from $(1,1)$ to $(|S|,|T|)$ in $\AG_{\delta}(S,T)$.
We use $\cost(\A)\coloneqq \sum_{(i,j)\in\A}\delta(S[i],T[j])$.

\MainUB*

\tnote{Moved what was here to the first 10 pages.}

\para{Assumptions and Notations.}
The auxiliary lemmas below assume an integer threshold $1\le k\le n$; the proof of \cref{thm:main-UB} handles arbitrary real thresholds.
When considering the input strings $S$ and $T$, as well as their substrings, we assume that they are given in run-length encoded form.
Notice that this assumption can be ensured after linear-time preprocessing.
In addition, when considering a path $\A$ in the alignment graph $\AG_\delta(S,T)$ with cost $\cost(\A)$, we assume that it is represented in compressed form.
The representation is a sequence of segment descriptors.
A subpath contained in one non-zero block is represented by its
endpoints and, if needed, the turning point described in \cref{lem:path-inside-block}.
A maximal zero-cost subpath is represented by its entry and exit
vertices together with the first and last zero blocks that it visits.
Since two orthogonally adjacent blocks cannot both be zero blocks,
the zero blocks visited by such a subpath form a block diagonal;
we traverse these blocks in some canonical order, e.g., preferring diagonal steps until only horizontal or vertical steps are possible (see \cref{lem:path-inside-block}).
Thus the representation determines a specific alignment and, in
particular, the sequence of blocks visited by the alignment.
Since every visited non-zero block contributes at least one to the
cost, an alignment $\A$ has a compressed representation of
$\Oh(\cost(\A)+1)$ segment descriptors.
As specified in Section~\ref{sec:computational-model}, each query to the mismatch-cost function $\delta$ takes constant time.


Let $A$ be a string.  
Assume the run-length encoding of $A$ is
\[
A = a_1^{x_1} a_2^{x_2} a_3^{x_3} \dots a_m^{x_m},
\]
where $a_i \ne a_{i+1}$ for every $1 \le i < m$.  
Thus, $x_i$ is the length of the $i$-th run and $a_i$ is the character appearing in that run.

For an index $i \in [|A|]$, we denote by $R_A(i)$ the index of the run of the run-length encoding that contains position $i$.  
Formally,
\[
R_A(i) \coloneqq \min \left\{ t \;\middle|\; \sum_{j=1}^{t} x_j \ge i \right\}.
\]
In particular, the character at position $i$ satisfies $A[i] = a_{R_A(i)}$.

\para{Example.}
Let
\[
A \coloneqq \mathtt{aaabbbccaaaa}  = \ca^3 \cb^3 \cc^2 \ca^4.
\]
We have:
\[
R_A(2)=1,\qquad
R_A(4)=2,\qquad
R_A(7)=3,\qquad
R_A(10)=4.
\]
Indeed, position $7$ lies in the block $\cc^2$, hence $R_A(7)=3$ and $A[7]=c=a_3$.

\subsection{An $\Otild(nk)$ Algorithm Computing Optimal Alignment}\label{sec:nk-alignment}

Boneh et al.~\cite{BGMW24} proved that one can compute $\kDTW_\delta(S,T)$ in $\Otild(nk)$ for every mismatch-cost function $\delta$ and two strings $S$ and $T$ with $|\shrink(S)|=\Oh(n)$ (see \cref{lem:nk-computation-ICALP}).
However, their algorithm outputs only the numerical value of $\kDTW_\delta(S,T)$.
For our algorithm we also need to find, in cases $\kDTW_\delta(S,T)\le k$, an optimal alignment $\A$ from $S$ to $T$.
In this section we describe how to extend \cref{lem:nk-computation-ICALP} to output an optimal alignment as well.

We begin with two structural lemmas regarding shortest paths in $\AG_\delta(S,T)$.

\begin{lemma}\label{lem:path-inside-block}
    Given $S$, $T$, and two vertices $u$ and $v$ in the same block $B_{i,j}$ such that $u$ can reach $v$, we can compute $\dist(u,v)$ and a constant-size segment descriptor of a shortest path from $u$ to $v$ in $\Oh(1)$ time.
\end{lemma}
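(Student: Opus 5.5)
Within a block $B_{i,j}$ every vertex has the same cost $c\coloneqq\delta(a,b)$, where $a$ and $b$ are the characters of the two runs. Hence the cost of any path inside the block is $c$ times its number of vertices, and the problem reduces to minimizing the number of visited vertices. The plan is to show that a path staying inside the block is optimal, compute its length directly, and describe it by at most one turning point.

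\textbf{Step 1: staying inside the block is without loss of generality.} Write $u=(x_1,y_1)$ and $v=(x_2,y_2)$ with $u\preceq v$. The box $\fragment{x_1}{x_2}\times\fragment{y_1}{y_2}$ is contained in $B_{i,j}$, because a block is a product of intervals. Any path from $u$ to $v$ is monotone in both coordinates, so it never leaves this box. Thus every path from $u$ to $v$ lies inside $B_{i,j}$ and all its vertices cost $c$.

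\textbf{Step 2: the minimum length.} Every edge increases each coordinate by at most one. Therefore any path visits at least $1+\max\{x_2-x_1,\,y_2-y_1\}$ vertices. This bound is attained by taking diagonal steps until one coordinate reaches its target, then straight steps along the remaining axis. Consequently,
\[
\dist(u,v)=c\cdot\bigl(1+\max\{x_2-x_1,\,y_2-y_1\}\bigr).
\]
If $c=0$, every path is optimal, and we still output the canonical path.

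\textbf{Step 3: the descriptor.} Set $t\coloneqq\min\{x_2-x_1,\,y_2-y_1\}$. The descriptor is the triple consisting of $u$, the turning point $(x_1+t,\,y_1+t)$, and $v$. It is computed from the run boundaries of $S$ and $T$ (needed only to confirm membership in $B_{i,j}$), one query to $\delta$, and $\Oh(1)$ arithmetic operations.

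There is no real obstacle here. The only point requiring care is Step 1, namely that the bounding box of $u$ and $v$ lies within the block. This is what justifies treating the cost as uniform, and it follows because blocks are rectangles.
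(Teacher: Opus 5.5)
Your proof is correct and follows essentially the same route as the paper: a lower bound of $1+\max\{x_2-x_1,\,y_2-y_1\}$ visited vertices, each of the uniform block cost, attained by the diagonal-then-straight path encoded by the triple $(u,u',v)$. Your bounding-box argument in Step 1 just makes explicit what the paper asserts directly, namely that every $u$-to-$v$ path stays inside $B_{i,j}$.
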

\begin{proof}
    Let $u = (x_u,y_u)$ and $v= (x_v,y_v)$.
    Since every edge increases the $x$ and $y$ coordinates by at most 1, we have that the path from $u$ to $v$ contains at least $1+\max(x_v-x_u,y_v-y_u)$ vertices.
    Every vertex on a path from $u$ to $v$ lies in the same block $B_{i,j}$ and therefore has cost $c(u)$.
    It follows that $\dist(u,v) \ge (1+\max(x_v-x_u,y_v-y_u)) \cdot c(u)$.

    We can greedily construct such a path, by making diagonal steps from $u$ until either the $x$- or the $y$-coordinate matches the corresponding coordinate of $v$.
    From this point on, we proceed only with orthogonal paths in the corresponding direction.
    Formally, the path proceeds in diagonal edges from $u$ to $u'\coloneqq (x_u + \delta,y_u+\delta)$ for $\delta \coloneqq \min(x_v-x_u,y_v-y_u)$.
    Then, the path proceeds from $u'$ to $v$ in orthogonal edges (horizontal if $\delta= x_v-x_u$, vertical otherwise).
    Thus, this path is represented by the segment descriptor $(u,u',v)$, which can be computed in $\Oh(1)$ time given $u$ and $v$.
\end{proof}


Let $v=(x,y)$ be a vertex in block $B_{i,j}$.
For a vertex $u=(x_u,y_u)$ in block $B_{i,j}$, we call an edge $(u',u)$ entering $u$ an external edge if $u'\notin B_{i,j}$. 
We say that $u'$ is an external neighbor of $u$.
We also denote for every input vertex $u=(x_u,y_u)$ from which $v$ is reachable, the cost of a shortest path from $u$ to $v$, $\delta_u \coloneqq (1+\max(x-x_u,y-y_u))\cdot c(u)$.
Notice that $\dist(u,v) = \delta_u$.

\begin{lemma}\label{lem:rle-return-alignment}
Given two strings $S$ and $T$ with $\min(|\shrink(S)|,|\shrink(T)|) = n$, oracle access to a mismatch-cost function $\delta$, if $\kDTW_\delta(S,T) \le k$, we can compute a compressed representation of an optimal alignment $\A^*$ from $S$ to $T$ in $\Otild(n k)$ time.
\end{lemma}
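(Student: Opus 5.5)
We will obtain the alignment by self-reduction on top of the value oracle of \cref{lem:nk-computation-ICALP}, which we treat as a black box. The approach is divide-and-conquer over the run index of $S$.

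First we run the oracle on $(S,T)$ and check that the value is at most $k$. By \cref{lem:kband}, every vertex of cost-$\le k$ prefix paths lies in blocks $B_{i,j}$ with $|i-j|\le 2k$. The same holds symmetrically for suffix paths.

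Next we take the middle run $i$ of $S$ and look at the band of $\Oh(k)$ runs $j$ of $T$ with $|i-j|\le 2k$. We aim to locate a vertex of some optimal alignment inside row-run $i$. An optimal alignment enters run $i$ of $S$ through an external edge, at its first vertex $u$ in that run. This vertex lies on the top boundary of some block $B_{i,j}$: either it is the block's first row, or it sits on a left boundary reached horizontally from the previous block. Computing prefix distances to every boundary vertex is too expensive, since runs may be long. Instead we use the structure of \cref{lem:path-inside-block}. Inside a block the cost is linear in $\max$ of the coordinate differences. So for a fixed block the relevant candidate entry vertices are constant in number: the corners, and the points where the alignment's diagonal movement would end.

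So for each of the $\Oh(k)$ candidate blocks we query only $\Oh(1)$ candidate vertices $w$. For each $w$ we evaluate the prefix and suffix values $\DTW_\delta(S\fragment{1}{x},T\fragment{1}{y})$ and $\DTW_\delta(S\fragment{x}{|S|},T\fragment{y}{|T|})$ using the oracle on truncated run-length strings (\cref{obs:sub_DTW}). By \cref{obs:split_by_A*}, we then test whether the prefix value plus the suffix value minus $c(w)$ equals $\DTW_\delta(S,T)$. Any $w$ that passes lies on some optimal path.

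That test costs $\Oh(k)$ oracle calls of $\Otild(nk)$ each, which is too slow as written. The fix I would try first is to avoid calling the oracle separately per candidate. The BGMW24 dynamic program already computes, for every block in the band, the distances from $(1,1)$ to the block boundaries in some compressed form; their value is read off at the last block. Running it forward on $S$ and backward on reversed strings should give, in a single $\Otild(nk)$ pass, prefix distances to $\Oh(1)$ designated boundary vertices per band block in row-run $i$, and likewise suffix distances. This is the main obstacle: we must extract these $\Oh(k)$ boundary values from their black box without reopening it. A fallback is to argue that a candidate split vertex can be taken to be a block corner, by exchange arguments in the spirit of \cref{lem:first_and_last_in_0_block}. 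Then only $\Oh(k)$ corners need prefix distances, and these are exactly what the DP stores.

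Having a vertex $w=(x,y)$ on an optimal path in the middle run, we recurse on $(S\fragment{1}{x},T\fragment{1}{y})$ and $(S\fragment{x}{|S|},T\fragment{y}{|T|})$, each with threshold $k$. The pieces roughly halve the number of runs of $S$, while both keep bands of width $\Oh(k)$. A level of recursion therefore processes subproblems whose run counts sum to $\Oh(n+k\cdot\#\text{pieces})$. To control this, we note that a piece whose optimal value is $0$ can be solved directly by following the zero block diagonal, and other pieces number at most $\Oh(k)$ per level. This gives $\Otild(nk)$ per level over $\Oh(\log n)$ levels. At the leaves, a single run of $S$ against a short stretch of $T$ is solved directly by \cref{fact:unary_DTW} and \cref{lem:path-inside-block}. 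Concatenating the segment descriptors at the shared vertices, with $c(w)$ counted once, gives the compressed optimal alignment.
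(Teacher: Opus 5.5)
Your plan has a genuine gap at its central step. You never obtain, within the time budget, a set of candidate vertices in the middle row-run that provably contains a vertex of some optimal alignment, together with the prefix and suffix distances at those candidates. With \cref{lem:nk-computation-ICALP} as a value-only oracle, each candidate costs a fresh $\Otild(nk)$ call, as you note. Getting the values from ``what the dynamic program already stores'' would mean intervening in the sweep, since its structure answers queries only for the current frontier and is updated in place.

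More importantly, the candidate set itself is missing. ``The points where the alignment's diagonal movement would end'' are defined through the unknown alignment and cannot be enumerated in advance. Your fallback, that an exchange argument lets the split vertex be a block corner, is false: \cref{lem:first_and_last_in_0_block} works only for $0$-blocks. Take unit costs, pairwise distinct symbols $\mathtt{p},\mathtt{c},\mathtt{e},q_1,\ldots,q_m$ with $m\ge3$, $S\coloneqq\mathtt{p}\,q_1\cdots q_m\,\mathtt{e}$, and $T\coloneqq\mathtt{p}\,\mathtt{c}^m\,\mathtt{e}$. Only $(1,1)$ and $(m+2,m+2)$ have cost $0$, so every alignment costs its number of vertices minus $2$. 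Hence the main diagonal is the unique optimal alignment. For $1<i<m$, its only vertex in the row-run of $q_i$ is $(i+1,i+1)$, which is not a corner of the block $\{i+1\}\times\fragment{2}{m+1}$, and the other two blocks of that row are not visited. So no optimal alignment visits a block corner in the middle row-run.

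The paper avoids split vertices and suffix distances entirely. It opens the black box just enough to make the ordered structures of \cite{BGMW24} partially persistent by path copying. For every block $B_{i,j}$, one can then query $\dist((1,1),\cdot)$ at the external neighbors of its input boundary, in the version saved just before $B_{i,j}$ was processed. It then backtracks from $(|S|,|T|)$ block by block, using prefix distances only.
\begin{itemize}
\item In a $0$-block it moves to the first vertex (\cref{lem:first_and_last_in_0_block}) and tests its three possible predecessors.
\item In a non-zero block $\fragment{i_1}{i_2}\times\fragment{j_1}{j_2}$ with current vertex $v=(x,y)$ and, say, $x-i_1\le y-j_1$, it scans the input-boundary vertices clockwise from $(i_1,y)$. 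It stops at the first $u$ with an external neighbor $u'$ satisfying $\dist((1,1),u')+\dist(u,v)=\dist((1,1),v)$.
\end{itemize}
Since the $z$th scanned vertex $u_z$ satisfies $\dist(u_z,v)\ge z/2$, the scan takes $\Oh(\dist(u^*,v))$ steps. These distances sum to at most $k$ over the distinct non-zero blocks of the returned path, so backtracking adds only $\Otild(k)$ to the $\Otild(nk)$ forward pass. Persistence of the forward structure and this amortized clockwise scan are the ideas your proposal lacks.
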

\begin{proof}
    Let us recall the framework of \cref{lem:nk-computation-ICALP}, which was proved by Boneh et al.~\cite{BGMW24}\footnote{The paper of \cite{BGMW24} defines the alignment graph as an edge-weighted graph, whereas in this paper we use a vertex-weighted formulation. While this differs from their presentation, their algorithm can be interpreted within our formulation without any essential changes, and the two viewpoints are interchangeable for our purposes.}.
    For every  $i,j$ such that $|i-j| \le 2k$, the $i,j$ frontier of $\AG_\delta(S,T)$, denoted as $F_{i,j}$, is defined based on the block $B_{i,j} = [i_1\dd i_2]\times [j_1\dd j_2]$.
    The frontier $F_{i,j}$ consists of all the vertices $v=(x,y)$ such that if $v\in B_{a,b}$ then $|a-b|\le 2k$  and either $x < i_2$ and $y= j_2$, or $x= i_2$ and $y \in [j_1-1\dd j_2]$, or $x > i_2$ and $y=j_1-1$.
    For convenience, we regard the distance to any vertex belonging to a block $B_{i,j}$ with $|i-j|>2k$ as infinity; such a vertex is therefore never queried in $D$.

    The algorithm of \cite{BGMW24} initiates a data structure $D$ and iterates all pairs $i,j$ with $|i-j| \le 2k$ in increasing counting order (i.e., all pairs with $j=1$ in increasing order of $i$, followed by all pairs with $j=2$ in increasing order of $i$, and so on).
    When the pair $(i,j)$ is processed, the data structure $D$ is modified to support distance queries from $(1,1)$ to any vertex in $F_{i,j}$ in $\Otild(1)$ time.
    
    In order to recover an alignment, we run a partially-persistent version of the algorithm of \cite{BGMW24}.\footnote{
Although \textsc{Lookup} in~\cite{BGMW24} invokes \textsc{Flush}, Invariant~3 shows that the queried value can instead be obtained from the maintained segment and pending-ray representations by two predecessor queries and evaluating their minimum, without modifying the data structure.
We make the ordered structures storing these representations partially persistent by standard path copying, and save their roots after every processed block.
Since the algorithm of~\cite{BGMW24} performs only $\Otild(nk)$ primitive operations on these structures, persistence increases their total cost by only a polylogarithmic factor, and each historical \textsc{Lookup} also takes polylogarithmic time.
Thus all versions needed for backtracking, as well as all historical distance queries, are supported within $\Otild(nk)$ total time.

}
    We back-track the alignment as follows.
    Assume that we already know that there is an optimal alignment visiting block $B_{i,j}$ in output vertex $v$, and we also know $\dist((1,1),v)$.
    We will show how to find a block $B_{i',j'}$ with $i'+j' < i + j$ and an output vertex $v'$ of $B_{i',j'}$ such that there is an optimal alignment from $(1,1)$ to $v$ visiting $v'$.
    We also find an optimal path from $v'$ to $v$, and the value of $\dist((1,1),v')$.
    We store each such path using the segment descriptors defined above.
    Consecutive zero-cost portions are merged into a single zero-diagonal descriptor.
    It is easy to see that if we start this process with $v\coloneqq(|S|,|T|)$ in the block $B_{R_S(|S|),R_T(|T|)}$, the concatenation of all $v'$-to-$v$ paths is an optimal alignment.

    Let us describe how to find $v'$ and $B_{i',j'}$.
    Denote $B_{i,j} = [i_1\dd i_2]\times [j_1\dd j_2]$.
    We retrieve the version $D$ just before the pair $(i,j)$ was processed.
    Notice that this version of $D$ supports distance queries from $(1,1)$ to every node of the form $(i_1-1,j_3)$ for $j_3\in[j_1\dd j_2]$, to the node $(i_1-1,j_1-1)$, and to every node of the form $(i_3,j_1-1)$ for $i_3\in[i_1\dd i_2]$ (except for non-existent vertices with one of the coordinates being $0$).
    If $B_{i,j} = B_{1,1}$, we know that the path must start in $(1,1)$.
    We create such a path using \cref{lem:path-inside-block}.
    
    If $B_{i,j}$ is a zero block, we know from \cref{lem:first_and_last_in_0_block} that there is an optimal path to $v$ visiting $(i_1,j_1)$.
    We check each of the 3 candidates for the previous step from a block outside of $B_{i,j}$ to $(i_1,j_1)$ (specifically, $(i_1-1,j_1)$, $(i_1,j_1-1)$, and $(i_1-1,j_1-1)$), all candidates belonging to blocks $B_{a,b}$ with $|a-b|\le 2k$ are available at the version of $D$ we have at hand, while the remaining candidates are regarded as having distance infinity.
    So, we can obtain the distance from $(1,1)$ to each of them in $\Otild(1)$ time and add the cost of the extra step into $(i_1,j_1)$.
    We pick as $v'$ any candidate for which this sum is exactly $\dist((1,1),v)$.
    We retrieve an optimal path from $v'$ to $v$ via $(i_1,j_1)$ by taking the edge connecting $v'$ to $(i_1,j_1)$ and attaching an in-block optimal path computed via \cref{lem:path-inside-block}.

    Otherwise, $B_{i,j}= [i_1\dd i_2]\times [j_1\dd j_2]$ is a non-zero block.
    Denote $v = (x,y)$.
    We run a different procedure depending on the relationship between $x - i_1$ and $y-j_1$, i.e., between the horizontal and the vertical distances of $v$ from the input boundary of $B_{i,j}$.
    We present the algorithm for the case $x-i_1 \le y-j_1$, the complementary case is symmetrical.
    Iterate the input vertices of $B_{i,j}$ starting with $(i_1, y)$ in clockwise order (i.e. decreasing the second coordinate until $(i_1,j_1)$ is reached, and then increasing the first coordinate until $(x,j_1)$ is reached).
    Let $u$ be a vertex iterated in this manner.

    When iterating the vertex $u$, we check every external neighbor $u'$ of $u$ as a candidate for being the predecessor of $u$ in a shortest path from $(1,1)$ to $v$ going through $u$.
    Namely, we check if $\dist((1,1),u')  +\dist(u,v) = \dist((1,1),v)$ (recall that $\AG_\delta(S,T)$ is a vertex-weighted graph, so the cost of a path with edge $(u',u)$ is exactly the sum of the weight of the prefix up to $u'$ and the suffix from $u$).
    If a candidate $u'$ satisfying this equation is found, return $u'$ with the block containing $u'$, and the distance $\dist((1,1),u')$.
    Note that for every external neighbor $u'$ of $u$ that belongs to a block $B_{a,b}$ with $|a-b|\le 2k$, we have access to $\operatorname{dist}((1,1),u')$ via $D$; all other external neighbors are regarded as having distance infinity.
    The path from $u'$ to $v$ is constructed by starting with the edge $(u',u)$ and proceeding with an in-block shortest path from $u$ to $v$ computed by \cref{lem:path-inside-block}.
    
    The correctness of the backtracking procedure is clear: for a zero block, it follows directly from \cref{lem:first_and_last_in_0_block} and for a non-zero block we simply iterate the vertices until a valid candidate is found (there must be some input vertex that is first visited by an optimal path to $v$).

    \para{Complexity.}
    On top of the $\Otild(nk)$ time for running the algorithm of \cref{lem:nk-computation-ICALP}, we apply the backtracking.
    Since we backtrack from a block $B_{i,j}$ to a block $B_{i',j'}$ that precedes it (i.e. $i'\le i$, $j'\le j$, and $B_{i,j} \neq B_{i',j'}$), it is clear that we never process the same block twice.
    A zero block is clearly processed in $\Otild(1)$ time.
    As for a non-zero block $B_{i,j}=[i_1\dd i_2]\times [j_1\dd j_2]$, let $v=(x,y)$ and let $u^*=(x',y')$ be the first input vertex of $B_{i,j}$ visited by the optimal alignment to $v$ found by the algorithm.
    The algorithm iterates the input vertices of $B_{i,j}$ in clockwise order (recall that we analyze the case where $x-i_1 \le y-j_1$).
    Each iterated vertex is processed in $\Otild(1)$ time by applying $\Oh(1)$ queries to $D$ and computing simple arithmetic after each query.
    Clearly, when the iteration over the input vertices of $B_{i,j}$ reaches $u^*$, the processing of $B_{i,j}$  terminates without processing further vertices.
    We claim that the number of input vertices inspected before $u^*$ when $B_{i,j}$ is processed is $\Oh(\delta_{u^*})$ where (as defined above) $\delta_u=\dist(u,v)$.
    The $u^*$-to-$v$ paths found throughout the backtracking process belong to distinct non-zero blocks and form subpaths of the returned optimal alignment.
    Hence, the sum of their costs, and therefore the sum of the corresponding values $\delta_{u^*}$, is at most $k$.
    Thus, the total time spent processing non-zero blocks is $\Otild(k)$. 
    Moreover, the returned alignment contains $\Oh(k+1)$ segment descriptors: there are at most $k$ visited non-zero blocks, and maximal zero-cost portions between them are represented by one descriptor each.
    Hence constructing the compressed output does not affect the claimed running time.

    It only remains to prove that, indeed, for every input vertex $u$ the iteration will reach $u$ (or terminate) within $\Oh(\delta_u)$ steps.
    Let $u_1,u_2,\dots, u_I$ be the input vertices starting from $u_0 \coloneqq (i_1,y)$ in clockwise order.
    We show that $\delta_{u_z} \ge z/2=\Omega(z)$ for every input vertex $u_z$ from which $v$ is reachable.
    Let $c:=c(v)$, which is the common vertex cost of $B_{i,j}$.
    Since $B_{i,j}$ is non-zero, $c\ge1$.

    For $z \in [1\dd x-i_1]$, we have $u_z = (i_1,y-z)$ , and therefore $\delta_{u_z} = (1+\max\{x-i_1,z\})c \ge x-i_1\ge z$.

    For $z\in [x-i_1\dd y-j_1]$ we have $u_z = (i_1,y-z)$, and therefore $\delta_{u_z} = (1+\max\{x-i_1,z\})c \ge z$.

    Finally, for $z\in [y-j_1\dd y-j_1 +x-i_1]$ we have $u_z = (i_1 + z-(y-j_1) ,j_1)$ and hence  $\delta_{u_z} = \bigl(1+\max\{x-i_1-(z-(y-j_1)),\,y-j_1\}\bigr)c \ge y-j_1 \ge z/2$, where the last inequality follows from $x-i_1\le y-j_1$.
    Also $\delta_{u_0}\ge1$, so the initial inspected vertex is covered as well.
    \end{proof}

\subsection{Approximating $L_p(S)$}

In this section we show how to compute, given a string $S$ of length $n$ and some integer $p$, a $2$-approximation of $L_p(S)$ in $\Otild(n)$ time.
Recall that $L_p(S)$ is the length of the longest substring of $\shrink(S)$ with period at most $p$.
A \emph{maximal repetition} in a string $X$ is a substring occurrence $R$ with $|R|\ge2\per(R)$ that cannot be extended by one character to either side in $X$ while retaining $\per(R)$ as a period.
We use a near-linear time algorithm that computes all maximal repetitions of a given string, proved by Ellert et al.~\cite{EGG23} (see also Bannai et al.~\cite{BIINTT17}, Ellert and Fischer~\cite{EF21}).


\begin{lemma}[\cite{EGG23}]\label{lem:all_runs}
Given a string $X$ of length $n$, all maximal repetitions of $X$, together with their
periods, can be computed in $\Otild(n)$ time.
\lipicsEnd
\end{lemma}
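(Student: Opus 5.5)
Since this lemma is the cited result of~\cite{EGG23}, I would not reprove it from scratch. Instead I would reconstruct the standard Lyndon-root argument of Bannai et al.~\cite{BIINTT17}, which already meets the $\Otild(n)$ bound we need. First, I relabel the alphabet of $X$ by sorting its distinct characters in $\Oh(n\log n)$ time, so that $X$ lives over an integer alphabet of size at most $n$. On this relabeled string I build a suffix array, an LCP array and a range-minimum structure, for $X$ and for its reverse. This gives constant-time longest-common-extension queries in both directions after $\Oh(n\log n)$ preprocessing.

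The combinatorial core is the Lyndon-root characterization. Fix the two lexicographic orders $<_0$ and $<_1$ on $\Sigma$, the second reversing the first. Let $R=X\fragmentco{i}{j}$ be a maximal repetition with period $p$, and let $\ell\in\{0,1\}$ be the order for which $X[j]<_\ell X[j-p]$, taking $X[j]$ to be smallest if $j$ is past the end. Then every Lyndon root of $R$ with respect to $<_\ell$, except possibly one starting at $i$, equals the longest $<_\ell$-Lyndon word starting at its position. Consequently, every maximal repetition is witnessed by some position $s$ and some $\ell\in\{0,1\}$ such that the longest $<_\ell$-Lyndon prefix of $X\fragmentco{s}{n}$ has length exactly $p$.

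The algorithm then follows directly:
\begin{enumerate}
\item Compute the Lyndon arrays for both orders in linear time, using the standard stack-based next-smaller-suffix computation over inverse suffix arrays.
\item For every position $s$ and order $\ell$, let $p$ be the stored Lyndon length at $s$. Compute the extension of period $p$ by one forward LCE between $s$ and $s+p$ and one backward LCE between $s-1$ and $s+p-1$.
\item Keep the candidate if the resulting interval has length at least $2p$; it is then a maximal repetition with period $p$, since the LCE values make it non-extendable on both sides.
\item Deduplicate the candidates by sorting the triples (start, end, period).
\end{enumerate}
By the runs theorem there are fewer than $n$ maximal repetitions, so at most $2n$ candidates arise and the output stays linear.

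The main obstacle is the correctness step: proving that every run is witnessed by a longest Lyndon prefix for one of the two orders, with the order chosen according to the character just past the run. I would follow~\cite{BIINTT17}. Take a Lyndon root $\lambda=X\fragmentco{s}{s+p}$ with $s>i$. Any longer Lyndon word starting at $s$ would have to extend past the end of the run, and comparing with the shifted suffix gives $X[j]<_\ell X[j-p]$ for the wrong order, which is a contradiction. Once this is established, the remaining steps are routine data-structure bookkeeping, and the total running time is $\Oh(n\log n)=\Otild(n)$.
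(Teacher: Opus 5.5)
Your proposal is correct, but it does not follow the paper, which gives no proof and only cites \cite{EGG23}. That algorithm targets general alphabets supporting only equality tests and runs in $\Oh(n\log\sigma)$ time. You instead use the fact that, in the paper's word RAM, characters are words that can be sorted. You relabel to an integer alphabet and then run the Lyndon-root algorithm of Bannai et al.~\cite{BIINTT17}, which the paper itself lists as an alternative. Your route is self-contained and uses only standard tools (suffix arrays, LCE queries, Lyndon arrays), at the price of an $\Oh(n\log n)$ sorting step that is harmless for the $\Otild(n)$ bound. The cited route only matters in weaker comparison models, which this paper does not need.

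A few justifications are missing.

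In step~3, the LCE values show only that the interval cannot be extended while keeping period $p$. A maximal repetition also needs $p$ to be the \emph{smallest} period of the interval. This holds because $X\fragmentco{s}{s+p}$ is Lyndon, hence primitive. If the interval, of length at least $2p$, had a smaller period $q$, Fine--Wilf would give period $\gcd(p,q)<p$ and make $X\fragmentco{s}{s+p}$ a proper power. This is exactly where the Lyndon length matters: with $p=2$ in $\mathtt{aaaa}$, the reported period would be wrong.

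For completeness, an $L$-root with $s>i$ exists because $|R|\ge 2p$ leaves at least $p+1$ window starts. The exception for $s=i$ is unnecessary anyway, since your final argument never uses $s>i$. That argument should be stated as follows: if $X\fragment{s}{k}$ with $k\ge j$ were Lyndon, its suffix starting at $s+p$ would have to be larger. That suffix agrees with the word up to position $j$, so this forces $X[j-p]<_\ell X[j]$, contradicting the choice of $\ell$.

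Two smaller points. The $<_1$ Lyndon array needs its own suffix array built under the reversed character order. It cannot be obtained by reversing the $<_0$ suffix array, because proper prefixes are smaller under both orders. Also, the bound of $2n$ candidates follows simply from there being $2n$ pairs $(s,\ell)$, not from the runs theorem.
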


We show that one can compute a $2$-approximation of $L_p(S)$ in $\Otild(n)$ time.

\begin{lemma}\label{lem:approx_LpS}
Given a string $S$ of length $n$ and an integer $p\ge1$, one can compute  a value $\widehat L_p(S)$ such that
$L_p(S)\le \widehat L_p(S)\le 2L_p(S)$. 
The algorithm takes
$\Otild(n)$ time.
\end{lemma}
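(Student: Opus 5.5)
We compute $X\coloneqq\shrink(S)$ in $\Oh(n)$ time and then run the algorithm of \cref{lem:all_runs} on $X$, obtaining all maximal repetitions of $X$ together with their periods in $\Otild(n)$ time. We set
\[
\widehat L_p(S)\coloneqq\max\Bigl(\{\min\{|X|,2p\}\}\cup\{|R| : R\text{ maximal repetition of }X,\ \per(R)\le p\}\Bigr).
\]
This is a single scan over the output, so the running time stays $\Otild(n)$.

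\textbf{Upper bound $\widehat L_p(S)\le 2L_p(S)$.} Every maximal repetition $R$ with $\per(R)\le p$ is a substring of $X$ with period at most $p$, so $|R|\le L_p(S)$. For the other term, it suffices to show $L_p(S)\ge\min\{|X|,p\}$. Any substring of length at most $p$ trivially has period at most $p$ (its own length), so $X$ contains such a substring of length $\min\{|X|,p\}$. Hence $\min\{|X|,2p\}\le 2\min\{|X|,p\}\le 2L_p(S)$.

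\textbf{Lower bound $L_p(S)\le\widehat L_p(S)$.} Let $U$ be a longest substring of $X$ with $\per(U)=q\le p$.
\begin{itemize}
\item If $|U|\le 2p$, then $|U|\le\min\{|X|,2p\}\le\widehat L_p(S)$.
\item Otherwise $|U|>2p\ge 2q$, so $U$ is periodic with exponent at least $2$. Extend $U$ maximally to both sides while keeping $q$ as a period; this yields a substring $R\supseteq U$ with $|R|\ge 2q$ that cannot be extended.
\end{itemize}
In the second case, we need $R$ to be a maximal repetition in the sense used by \cref{lem:all_runs}, i.e., $\per(R)=q$ rather than some smaller period. Since $U\subseteq R$ and $\per(U)=q$, any period of $R$ is also a period of $U$, so $\per(R)\ge q$; conversely $q$ is a period of $R$ by construction. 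Thus $\per(R)=q\le p$, and $R$ cannot be extended with period $\per(R)$. Therefore $R$ is reported, and $|U|\le|R|\le\widehat L_p(S)$.

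The one delicate point is this identification of the extended fragment with a reported maximal repetition. It is handled by the inheritance of periods to substrings noted above. The short-string case is exactly why the cap $\min\{|X|,2p\}$ is needed: maximal repetitions only cover exponent at least $2$.
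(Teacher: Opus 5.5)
Your proof is correct and follows the paper's approach. You use the same estimator $\max\{M,\min\{2p,|X|\}\}$ built from the maximal repetitions of $X=\shrink(S)$, and the same two-case argument for the lower bound. The differences are cosmetic. First, you avoid the paper's separate case $p\ge|X|$ by means of $\min\{|X|,2p\}\le 2\min\{|X|,p\}\le 2L_p(S)$. Second, you explicitly justify, via $\per(R)=q$, the step the paper states without proof, namely that a substring of exponent at least $2$ lies inside a reported maximal repetition with period at most $p$.
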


\begin{proof}
Let $X\coloneqq\shrink(S)$ and $m\coloneqq|X|$.
If $p\ge m$, return $m$.
Otherwise, compute all maximal repetitions of $X$ using \cref{lem:all_runs}.
Let $M$ be the maximum length of a maximal repetition of $X$ with period at most $p$, and set $M\coloneqq0$ if no such repetition exists.
Return
$\widehat L_p(S)\coloneqq\max\{M,\min\{2p,m\}\}$.

Let $Y$ be a longest substring of $X$ with period $q\le p$, hence $|Y|=L_p(S)$.
Clearly $|Y|\le m$.
If $|Y|<2q$, then $|Y|<2p$ and therefore $L_p(S)=|Y|\le \min\{2p,m\}\le\max\{M,\min\{2p,m\}\}$.
Otherwise, $Y$ is contained in a maximal repetition of $X$ with period at most $q\le p$, and hence, $L_p(S)=|Y|\le M\le\max\{M,\min\{2p,m\}\}$.
Thus $L_p(S)\le\widehat L_p(S)$.

On the other hand, if $p\ge m$ we have $\widehat L_p(S)=m$ and $L_p(S)=m$ as well, and in particular $\widehat L_p(S)\le2L_p(S)$.
Otherwise, $p<m$.
In this case $M\le L_p(S)$. 
Moreover, since $p$ is an integer and $p<m$, any substring of $X$ of length $p$ has period at most $p$, and hence $p\le L_p(S)$.
Therefore, $\widehat L_p(S)=\max\{M,\min\{2p,m\}\}\le \max\{L_p(S),2L_p(S)\}=2L_p(S)$.
\end{proof}

\subsection{From Approximate Alignment to an Optimal Alignment}\label{sec:approx-to-optimal}

In this section we show how to efficiently compute an optimal alignment from $S$ to $T$, given some alignment $\A$. 
The time complexity depends on $\cost(\A)$.
This algorithm is the most technical part of our algorithm. 

We begin by showing the monotonicity of $L_k(S)$ with respect to both the parameter $k$ and the input string (with respect to the superstring relation).

\begin{observation}\label{lem:Lk-monotone}
    Let $k_1<k_2$ be two integers and let $S$ be a string.
    Then $L_{k_1}(S)\le L_{k_2}(S)$.
\end{observation}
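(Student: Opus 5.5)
This follows directly from the definition of $L_p$, so the plan is simply to unfold it. Recall that $L_p(S)$ is the length of the longest substring of $\shrink(S)$ with period at most $p$, and that $L_0(S)\coloneqq 0$.

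First, I will dispose of the degenerate case $k_1\le 0$. For $k_1=0$ we have $L_{k_1}(S)=0$, while $L_{k_2}(S)\ge 0$ holds trivially because it is a length (or because $L_0=0$ when $k_2=0$ cannot occur as $k_2>k_1$). If negative integers are admitted, I would interpret $L_{k_1}(S)$ as $0$ in the same way, so this case is immediate.

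For $1\le k_1<k_2$, I will take a witness: a substring $Y$ of $\shrink(S)$ with $|Y|=L_{k_1}(S)$ and $\per(Y)\le k_1$. Such a $Y$ exists whenever $\shrink(S)$ is non-empty, since any single character has period $1\le k_1$; if $S$ is empty, both quantities are $0$. Then $\per(Y)\le k_1<k_2$, so $Y$ is also admissible in the maximization defining $L_{k_2}(S)$. Consequently $L_{k_2}(S)\ge|Y|=L_{k_1}(S)$.

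There is no real obstacle here. The only care needed is the boundary conventions ($L_0$ and the empty string), and these are handled by noting that every length is nonnegative.
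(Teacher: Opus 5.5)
Your proof is correct and matches the paper's: both take a longest substring of $\shrink(S)$ with period at most $k_1$ and observe that it is also admissible for $k_2>k_1$. Your extra treatment of the boundary cases ($k_1\le 0$, empty $S$) is harmless bookkeeping that the paper leaves implicit.
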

\begin{proof}
Let $R$ be a longest substring of $\shrink(S)$ with period at most $k_1$.
Then $|R|=L_{k_1}(S)$.
Notice that the period of $R$ is at most $k_1<k_2$.
Hence, $R$ is a substring of $\shrink(S)$ with period at most $k_2$.
Therefore, $L_{k_2}(S)\ge |R| = L_{k_1}(S)$, as required.
\end{proof}

\begin{observation}\label{lem:Lk-substring}
    Let $S$ be a string and let $S[i\dd j]$ be a substring of $S$.
    For every $k$, we have $L_{k}(S[i\dd j])\le L_{k}(S)$.
\end{observation}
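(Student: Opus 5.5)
The plan is to show that $\shrink(S\fragment{i}{j})$ is (essentially) a substring of $\shrink(S)$, and then use the fact that a substring of a substring is again a substring.

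First I will relate the run-length encodings. Write $\rle(S)=(a_1,e_1)\cdots(a_r,e_r)$ and let $R_S(i)=\alpha$ and $R_S(j)=\beta$ be the runs containing positions $i$ and $j$. The fragment $S\fragment{i}{j}$ then decomposes as
\[
a_\alpha^{e'_\alpha}a_{\alpha+1}^{e_{\alpha+1}}\cdots a_{\beta-1}^{e_{\beta-1}}a_\beta^{e'_\beta}
\]
with positive exponents $e'_\alpha$ and $e'_\beta$ (a single run when $\alpha=\beta$). Consecutive characters $a_t\ne a_{t+1}$ remain distinct, so this is exactly $\rle(S\fragment{i}{j})$. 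Hence
\[
\shrink(S\fragment{i}{j})=a_\alpha a_{\alpha+1}\cdots a_\beta=\shrink(S)\fragment{\alpha}{\beta}.
\]

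Next, let $R$ be a substring of $\shrink(S\fragment{i}{j})$ with period at most $k$ and length $L_k(S\fragment{i}{j})$. By the identity above, $R$ is also a substring of $\shrink(S)$. The period is intrinsic to $R$, so it is still at most $k$. By the definition of $L_k(S)$ as a maximum, $|R|\le L_k(S)$.

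Finally, the degenerate case $k\le 0$ is trivial: the convention $L_0\coloneqq0$ gives both sides equal to $0$. There is no real obstacle in this argument; the only point needing care is checking that the boundary runs are only truncated, never removed. This holds because $i\le j$ means each boundary run keeps at least one character.
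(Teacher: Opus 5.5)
Your proposal is correct and follows the paper's own argument: take a longest substring of $\shrink(S\fragment{i}{j})$ with period at most $k$, note that it is also a substring of $\shrink(S)$, and conclude by maximality of $L_k(S)$. The only difference is that you explicitly justify $\shrink(S\fragment{i}{j})=\shrink(S)\fragment{R_S(i)}{R_S(j)}$ via the run-length encoding, a step the paper asserts without proof.
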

\begin{proof}
Let $R$ be a longest substring of $\shrink(S[i\dd j])$ with period at most $k$, so $|R|=L_k(S[i\dd j])$.
Since $\shrink(S[i\dd j])$ is a substring of $\shrink(S)$, the string $R$ is also a substring of $\shrink(S)$.
Hence, $R$ is also a substring of $\shrink(S)$ with period at most $k$.
Therefore, $L_k(S)\ge |R|=L_k(S[i\dd j])$.
\end{proof}

In the following lemma we show that two cheap long paths in $\AG_\delta(S,T)$ where the corresponding substring of $S$ has a long period, must either intersect at some zero-block or induce a mismatch.
Later, we will use this to show that two cheap paths which are even longer must intersect at some zero-block, since they cannot afford to have too many mismatches, as each of them has at most $k$ mismatches.

\begin{lemma}\label{lem:aper-intersect-or-mismatch}
    Let $S$ and $T$ be two strings.
    Let $T[\alpha_1\dd \beta_1]$ and $T[\alpha_2\dd \beta_2]$ be two substrings of $T$ such that $|R_T(\alpha_1) -R_T(\alpha_2)|\le 4k$.
    Let $\A_1$ and $\A_2$ be alignments of cost at most $2k$, aligning $S$ to $T[\alpha_1\dd \beta_1]$ and to $T[\alpha_2\dd \beta_2]$, respectively.
    Let $S'\coloneqq\shrink(S)$ and let $i_1,i_2$ be indices such that $P\coloneqq S'[i_1\dd i_2]$ is a string with $\per(P) > 12k$.

    There is a block $B_{i^*,j^*}$ with $i^* \in [i_1\dd i_2]$ satisfying one of the following.
    \begin{enumerate}
        \item $B_{i^*,j^*}$ is a zero block and $B_{i^*,j^*}\cap \A_1\cap \A_2 \ne \emptyset$.
        \item $B_{i^*,j^*}$ is a non-zero block and $B_{i^*,j^*}\cap (\A_1\cup \A_2) \ne \emptyset$.
    \end{enumerate}
\end{lemma}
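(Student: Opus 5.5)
We argue by contradiction: assume neither alternative holds. Then for every run index $i\in[i_1\dd i_2]$ of $S$, every block $B_{i,j}$ visited by $\A_1$ or $\A_2$ is a zero block, and no zero block in these rows is visited by both paths. The first step is to describe the visited blocks of each $\A_\ell$ restricted to rows $[i_1\dd i_2]$ (in run coordinates). Since orthogonally adjacent blocks cannot both be zero blocks, a path that visits only zero blocks must move from block $B_{i,j}$ to $B_{i+1,j+1}$ whenever it changes blocks. Hence within these rows $\A_\ell$ visits exactly the blocks $B_{i,\,i+s_\ell}$ for $i\in[i_1\dd i_2]$, for a fixed shift $s_\ell$. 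Because all these blocks are zero blocks, $S'[i]=T'[i+s_\ell]$ for all $i\in[i_1\dd i_2]$, where $T'\coloneqq\shrink(T)$. Thus $P$ occurs in $T'$ at positions $i_1+s_1$ and $i_1+s_2$.

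The second step uses \cref{obs:two-occs-period}. If $s_1=s_2$, the two paths visit the same zero block $B_{i_1,i_1+s_1}$, contradicting our assumption. Otherwise $|s_1-s_2|$ is a period of $P$, so $|s_1-s_2|\ge\per(P)>12k$. It therefore suffices to prove $|s_1-s_2|\le 12k$.

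The third step, which I expect to be the main obstacle, is this drift bound. A natural approach is to use a shifted version of \cref{lem:kband}. For the alignment $\A_\ell$ of $S$ to $T[\alpha_\ell\dd\beta_\ell]$, the run indices of $T[\alpha_\ell\dd\beta_\ell]$ are those of $T$ shifted by $R_T(\alpha_\ell)-1$, up to the first run possibly being truncated, which does not change the run count. Since $\cost(\A_\ell)\le 2k$, every vertex on $\A_\ell$, viewed as a path from the start with cost at most $2k$, lies in a block whose run-index difference in this local grid is at most $2\cdot 2k=4k$. This gives $|s_\ell-(R_T(\alpha_\ell)-1)|\le 4k$, and therefore
\[
|s_1-s_2|\le 4k+4k+|R_T(\alpha_1)-R_T(\alpha_2)|\le 12k.
\]
One detail needs care. \cref{lem:kband} bounds the distance from the start; applying it to a prefix of $\A_\ell$ ending at a vertex in row $i_1$ is valid because that prefix has cost at most $\cost(\A_\ell)\le 2k$. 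I would also check that shrinking the substring $T[\alpha_\ell\dd\beta_\ell]$ really matches the block indexing of the full $T$ up to that shift, including when $\alpha_\ell$ falls mid-run. This yields $|s_1-s_2|\le 12k<\per(P)$, which gives the contradiction.
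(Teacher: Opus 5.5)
Your argument tracks the paper's proof closely: only diagonal block steps inside the strip, two occurrences of $P$ in $\shrink(T)$, the shifted band bound from \cref{lem:kband} giving $|s_1-s_2|\le 12k$, and then \cref{obs:two-occs-period}. However, the last step does not yet prove the statement. Alternative~1 asks for a common \emph{vertex}, $B_{i^*,j^*}\cap\A_1\cap\A_2\ne\emptyset$. Negating it therefore only tells you that no zero block of the strip contains a vertex lying on both paths. You instead assumed that no zero block of the strip is \emph{visited} by both paths. Your contradiction (``$s_1=s_2$, so both paths visit $B_{i_1,i_1+s_1}$'') refutes only this stronger assumption.

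This is not a cosmetic issue, because two monotone paths can cross the same block without meeting. For example, take $k=0$, $S=\mathtt{a}$, $T=\mathtt{aa}$, $\A_1=((1,1))$, and $\A_2=((1,2))$. These satisfy all hypotheses and both visit the single zero block, yet they share no vertex.

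The missing step, which the paper gives at the end of its proof, uses the height of the strip. Under the section's standing assumption $k\ge 1$, we have $|P|\ge\per(P)>12k\ge 12$, so $i_2>i_1$. Write $s\coloneqq s_1=s_2$. By your own observation about diagonal block steps, each path must move from $B_{i_1,i_1+s}$ directly to $B_{i_1+1,i_1+1+s}$. The only edge between these two blocks is the diagonal edge leaving the bottom-right vertex of $B_{i_1,i_1+s}$. Hence both paths visit that vertex, which gives Alternative~1. With this addition, your proof is complete and coincides with the paper's.
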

\begin{proof}
    Assume that both $\A_1$ and $\A_2$ do not visit a non-zero block $B_{i,j}$ with $i \in [i_1\dd i_2]$.
    Let $B_{i_1,j_1}$ be the first block visited by $\A_1$ with $i$ value $i_1$ and let $B_{i_1,j_2}$ be the first block visited by $\A_2$ with $i$ value $i_1$.
    We will prove that $j_1=j_2$.

    We claim that $j_1$ is an occurrence of $P$ in $T'\coloneqq\shrink(T)$.
    The proof that $j_2$ is an occurrence of $P$ in $T'$ is identical.
    Recall that every block $B_{i,j}$ visited by $\A_1$ with $i\in[i_1\dd i_2]$ is a zero block.
    Since two orthogonally adjacent blocks cannot be both zero blocks, it holds that all blocks visited by $\A_1$ with $i$ value in $[i_1\dd i_2]$ are zero blocks of the form $B_{i_1 + x,j_1+x}$, i.e., $\A_1$ only makes diagonal block steps in the horizontal block strip $[i_1\dd i_2]$.

    
    It follows that $T'[j_1\dd j_1 + |P|)=S'[i_1\dd i_2]=P$ as required.
    Similarly, $j_2$ is also an occurrence of $P$ in $T'$.
    Since $\A_1$ aligns $S$ and $T[\alpha_1\dd \beta_1]$ for a cost of at most $2k$, it follows from \cref{lem:kband} that $|j_1 - (i_1 + R_T(\alpha_1))+1| \le  4k$.
    Due to the same reasoning, we have $|j_2 - (i_1 + R_T(\alpha_2))+1| \le  4k$.
    It follows that $|j_1 - j_2| \le |R_T(\alpha_1) - R_T(\alpha_2)| + 8k \le 12k$.
    We have shown that $j_1$ and $j_2$ are occurrences of $P$ in $T'$ with $|j_1 -j_2| \le 12k$.
    It follows from \cref{obs:two-occs-period} that either $j_1 = j_2$ or $\per(P)\le |j_1-j_2| \le 12k$. 
    Since we know that $\per(P) > 12k$, it must be the case that $j_1 =j_2$, as required. 
    Finally, since $\A_1$ and $\A_2$ only make diagonal block steps in the horizontal block strip $[i_1\dd i_2]$, it must be that both $\A_1$ and $\A_2$ visit the right-bottom vertex of $B_{i_1,j_1}$, and therefore $B_{i_1,j_1}\cap\A_1\cap\A_2\ne\emptyset$.
\end{proof}

Here we show that a string $S$ contains $4k+1$ disjoint substrings, each 
with period larger than $12k$.
These substrings satisfy the requirements on $P$ in \cref{lem:aper-intersect-or-mismatch}.
\begin{lemma}\label{lem:aperpartition}
  Let $S$ be a string with no consecutive equal symbols (i.e., $S=\shrink(S)$) and let $k$ be an integer such that $|S| \ge (4k+2) (L_{12k}(S)+24k)$.
  There are substrings $S[i_1\dd j_1],S[i_2\dd j_2],\ldots ,S[i_{4k+1}\dd j_{4k+1}]$ of $S$ satisfying the following.
  For every $x\neq y \in [1\dd 4k+1]$ we have $[i_x\dd j_x] \cap [i_y\dd j_y] = \emptyset$.
  For every $x\in [1\dd 4k+1]$ it holds that $\per(S[i_x\dd j_x])> 12k$.
\end{lemma}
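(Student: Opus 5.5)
Let $L\coloneqq L_{12k}(S)$ and $\ell\coloneqq L+24k$. The plan is to cut $S$ into $4k+1$ consecutive, pairwise disjoint windows and show that each window contains (in fact, is) a substring with period exceeding $12k$.

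Concretely, we take windows of length $\ell$: $[i_x\dd j_x]\coloneqq[(x-1)\ell+1\dd x\ell]$ for $x\in[1\dd 4k+1]$. These are disjoint by construction. They fit inside $S$ because $(4k+1)\ell\le(4k+2)\ell\le|S|$. The slack factor $(4k+2)$ is generous, and we only need $4k+1$ windows.

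It remains to show that $\per(S[i_x\dd j_x])>12k$ for each $x$. We argue by contradiction. Suppose some window $W\coloneqq S[i_x\dd j_x]$ has $\per(W)\le 12k$. Since $S=\shrink(S)$, the window $W$ is a substring of $\shrink(S)$ with period at most $12k$. By the definition of $L_{12k}(S)$, this gives $|W|\le L$. But $|W|=\ell=L+24k>L$ whenever $k\ge1$, a contradiction. For $k\ge1$ the argument is therefore immediate.

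One edge case needs care: $k=0$. Then the required condition is $\per>0$, which holds trivially for any non-empty substring. A single window of length $\ge1$ suffices, and the hypothesis $|S|\ge 2L\ge2$ (when $S$ is non-empty) supplies it. We also note that $\per$ is defined only for non-empty strings. This is fine because $\ell\ge 24k\ge1$ for $k\ge1$.

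The main obstacle, if any, is purely definitional. We must make sure that $L_{12k}(S)$ counts any substring of $\shrink(S)$ with period at most $12k$, including short ones of length at most $12k$, so that $|W|>L$ indeed forces $\per(W)>12k$. The definition in the introduction does count all such substrings, since any string of length at most $p$ has period at most $p$. The extra $24k$ in the hypothesis is then unnecessary slack. It is presumably included to match later uses or to allow a rounding-based partition.
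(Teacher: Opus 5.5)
Your proof is correct and matches the paper's: both cut $S$ into $4k+1$ consecutive disjoint windows of length exceeding $L_{12k}(S)$ (the paper uses length $L+1$, you use $L+24k$), and each window has period greater than $12k$ by the definition of $L_{12k}(S)$. One small slip in your $k=0$ aside: the paper sets $L_0(S)=0$, so $2L\ge 2$ is false. That case only needs $S$ to be non-empty, and the paper's own argument tacitly assumes $k\ge 1$, as in its only application.
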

\begin{proof}
    Let $L\coloneqq L_{12k}(S)$.
    Since $|S|\ge(4k+2)(L+24k)>(4k+1)(L+1)$, one can choose $4k+1$ consecutive pairwise disjoint substrings of $S$, each of length $L+1$.
    By definition of $L_{12k}(S)$, each of these substrings has period larger than $12k$.
\end{proof}

Combining \cref{lem:aper-intersect-or-mismatch,lem:aperpartition}, we show that two alignments of $S$ with a small number of mismatches must intersect in some zero block, provided that their length is sufficiently large (compared to $L_{\Oh(k)}(S)$).

\begin{lemma}\label{lem:large-s-alignments-cross}
    Let $S$ and $T$ be two strings and let $k$ be an integer such that $k\ge 1$.
    Let $T[\alpha_1\dd \beta_1]$ and $T[\alpha_2\dd \beta_2]$ be two substrings of $T$ with $|R_T(\alpha_1) - R_T(\alpha_2)|\le 4k$.
    Let $\A_1$ and $\A_2$ be two alignments of cost at most $2k$ each, aligning $S$ to $T[\alpha_1\dd \beta_1]$ and to $T[\alpha_2\dd \beta_2]$, respectively.
    Let $S'\coloneqq\shrink(S)$.

    If $|S'| \ge (4k+2)(L_{12k}(S) + 24k)$,
    it must be that there is a zero block $B_{i^*,j^*}$ such that $B_{i^*,j^*}\cap \A_1\cap \A_2 \ne \emptyset$.
\end{lemma}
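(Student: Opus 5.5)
We plan to combine \cref{lem:aperpartition} with \cref{lem:aper-intersect-or-mismatch} through a pigeonhole argument on the number of non-zero blocks the two alignments can afford.

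First, apply \cref{lem:aperpartition} to the string $S'=\shrink(S)$. This string has no consecutive equal symbols. Moreover, $L_{12k}(S')=L_{12k}(S)$, since $\shrink(S')=S'=\shrink(S)$. The hypothesis $|S'|\ge(4k+2)(L_{12k}(S)+24k)$ is exactly what the lemma requires. We therefore obtain $4k+1$ pairwise disjoint index intervals $[i_x\dd j_x]$, $x\in[1\dd 4k+1]$, with $\per(S'[i_x\dd j_x])>12k$ for each $x$.

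Next, suppose for a contradiction that $\A_1$ and $\A_2$ share no zero block. For each $x$, apply \cref{lem:aper-intersect-or-mismatch} with $P\coloneqq S'[i_x\dd j_x]$. Its hypotheses hold: the run indices of $\alpha_1,\alpha_2$ differ by at most $4k$, and both alignments cost at most $2k$. Since the first alternative of that lemma is excluded by assumption, we obtain a non-zero block $B_{i^*_x,j^*_x}$ with $i^*_x\in[i_x\dd j_x]$ that is visited by $\A_1$ or $\A_2$. The intervals are disjoint, so the $4k+1$ blocks obtained this way are pairwise distinct: their row indices $i^*_x$ lie in different intervals.

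Finally, count costs. Each visited non-zero block contributes at least one vertex of cost at least $1$, because $\delta$ is normalized. Hence the number of distinct non-zero blocks visited by $\A_1$ is at most $\cost(\A_1)\le 2k$, and likewise for $\A_2$. The two alignments together therefore visit at most $4k$ distinct non-zero blocks, contradicting the existence of $4k+1$ distinct ones.

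The main subtlety is ensuring that the blocks from different intervals are counted separately and that each block is charged at most once per path. Both follow from the disjointness of the row intervals: the same block cannot arise from two different $x$. A block lying in both alignments is charged to only one of them, which only helps the counting. A secondary point to check is that the block indexing in \cref{lem:aper-intersect-or-mismatch} uses run indices of $S$, which coincide with positions in $S'$; this justifies passing from intervals of $S'$ to block rows.
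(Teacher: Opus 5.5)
Your proof is correct and follows the paper's argument: you apply \cref{lem:aperpartition} to $S'$, invoke \cref{lem:aper-intersect-or-mismatch} on each of the $4k+1$ disjoint pieces of period greater than $12k$, and derive a contradiction because $4k+1$ distinct non-zero blocks would force a combined cost above $4k$. Your explicit check that $L_{12k}(S')=L_{12k}(S)$, needed to apply \cref{lem:aperpartition} to $S'$, is a detail the paper leaves implicit.
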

\begin{proof}
    By applying \cref{lem:aperpartition} on $S'$, we obtain $4k+1$ substrings $A_1\coloneqq S' [i'_1\dd j'_1],A_2\coloneqq S'[i'_2\dd j'_2],\ldots, A_{4k+1}\coloneqq S' [i'_{4k+1}\dd j'_{4k+1}]$ such that for every $x\neq y \in [4k+1]$ it holds that $\per(A_x) > 12k$ and $ [i'_x\dd j'_x] \cap [i'_y\dd j'_y] =\emptyset$.

    For every $x\in [4k+1]$, it follows from \cref{lem:aper-intersect-or-mismatch} that there is a block $B_{i^*_x,j^*_x}$ with $i^*_x \in [i'_x\dd j'_x]$ that is either a zero block visited both by $\A_1$ and $\A_2$, or a non-zero block visited by $\A_1$ or by $\A_2$.
    We claim that for some $y\in [4k+1]$ it must be the former that holds, rather than the latter.

    Assume to the contrary that for every $x\in [4k+1]$ we have that $B_{i^*_x,j^*_x}$ is a non-zero block visited by either $\A_1$ or $\A_2$.
    Notice that the blocks $B_{i^*_x,j^*_x}$ are pairwise disjoint, since each has its $i$  value contained in a different interval $[i'_x\dd j'_x]$, disjoint from other intervals.
    It follows that for each $x\in [4k+1]$, the block $B_{i^*_x,j^*_x}$ induces a unique cost of at least 1 either to $\A_1$ or to $\A_2$.

    We obtain that the total cost of $\A_1$ and $\A_2$ is at least $4k+1$, which leads to the cost of one of the two being more than $2k$, a contradiction.

    We have shown that for some $y\in [4k+1]$ there is a zero block $B_{i^*_y,j^*_y}$ with $i^*_y\in [i'_y\dd j'_y] \subseteq [1\dd |S'|]$ satisfying $B_{i^*_y,j^*_y}\cap \A_1 \cap \A_2 \neq \emptyset$, as required.
\end{proof}


We are now ready to prove the main lemma of this subsection, which shows how to use an approximate alignment from $S$ to $T$ to compute an optimal alignment.

\begin{lemma}\label{lem:dtw-given-approx}
    There exists an algorithm that, given two strings $S$ and $T$ of total length $n$, oracle access to a mismatch-cost function $\delta$, an alignment $\A$ from $S$ to $T$  (given in the compressed representation defined above) of cost at most $t\in\mathbb N_{>0}$ and a bound $M\ge L_{12t}(S)$,  computes $\DTW_\delta(S,T)$ and outputs an optimal alignment $\A^*$  from $S$ to $T$ (in the compressed representation defined above).
    The algorithm runs in $\Otild(t^2 M)$ time.
\end{lemma}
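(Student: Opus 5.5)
The plan is a recursive divide-and-conquer on the run-length structure of $S$, following the blueprint from the technical overview. Let $n_S\coloneqq|\shrink(S)|$ and set the threshold $\Lambda\coloneqq(8t+4)(M+48t)$, which is a constant multiple of the bound required by \cref{lem:large-s-alignments-cross} with $k\coloneqq t$.

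\textbf{Base case.} If $n_S\le 2\Lambda$, then by \cref{lem:kband} only blocks $B_{i,j}$ with $|i-j|\le 2t$ matter. Hence $T$ may be treated as having $\Oh(n_S+t)$ relevant runs. We invoke \cref{lem:rle-return-alignment} with threshold $t$ (legitimate because $\cost(\A)\le t$ gives $\DTW_\delta(S,T)\le t$). This returns an optimal alignment in $\Otild(t\Lambda)=\Otild(t^2M)$ time.

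\textbf{Recursive step.} Otherwise, let $S_c$ be the middle $\Lambda$ runs of $S$ (in run-length terms). Let $T_c$ be the fragment of $T$ that $\A$ aligns with $S_c$: from the column of the first vertex of $\A$ in the first row of $S_c$ to the column of its last vertex in the last row. The restriction of $\A$ to $S_c\times T_c$ costs at most $t$. Applying \cref{lem:rle-return-alignment} to $(S_c,T_c)$ yields an optimal alignment $\A_c^*$ with $\cost(\A_c^*)\le t$, in $\Otild(t\Lambda)$ time.

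Define $\bar\A$ as $\A$'s prefix up to $S_c\times T_c$, then $\A_c^*$, then $\A$'s suffix. Its cost is at most $2t$, since $\A_c^*$ replaces a piece of $\A$ of no smaller cost.

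Next we choose a zero block $B$ visited by $\A_c^*$ whose $S$-run lies in the middle third of $S_c$. Such a block exists because $\A_c^*$ visits at most $t$ nonzero blocks while passing through $\Lambda/3\gg t$ runs. The key claim is that some optimal alignment $\A^*$ of $(S,T)$ visits $B$.

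\textbf{Proof of the key claim.} Apply \cref{lem:large-s-alignments-cross} twice, to the left and right thirds of $S_c$ (each has at least $\Lambda/3\ge(4t+2)(L_{12t}(S)+24t)$ runs, using \cref{lem:Lk-substring}). The two alignments are the restrictions of $\A^*$ and $\bar\A$ to those strips, where $\bar\A$ agrees with $\A_c^*$. Both have cost at most $2t$. Their starting columns differ by at most $4t$ runs, because both lie within $2t$ of the run diagonal by \cref{lem:kband}. The lemma yields common zero blocks $B_L$ to the left of $B$ and $B_R$ to the right, each visited by both paths. By \cref{lem:corners_0_block}, both paths can be rerouted through the corners of $B_L$ and $B_R$ at no extra cost.

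The subpath of $\A_c^*$ between these corners is optimal between its endpoints, since $\A_c^*$ is optimal on $(S_c,T_c)$. Splicing it into $\A^*$ therefore keeps $\A^*$ optimal, and the spliced path visits $B$.

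The delicate points here are that \cref{lem:large-s-alignments-cross} is stated for alignments of whole strings, so it must be applied to restrictions aligning a substring of $S$ to substrings of $T$, and the offset hypothesis $|R_T(\alpha_1)-R_T(\alpha_2)|\le4t$ must be verified. I expect these to be the main obstacle and would argue them directly from \cref{lem:kband} applied to both full alignments.

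\textbf{Splitting and recursion.} Using \cref{obs:split_by_A*} and \cref{lem:first_and_last_in_0_block}, split at the last vertex of $B$. Recurse on $(S[1\dd i],T[1\dd j])$ and $(S[i\dd|S|],T[j\dd|T|])$, each with the corresponding half of $\bar\A$ of cost at most $2t$.

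Costs cannot double at each level. To handle this, note that the split pieces of $\bar\A$ sum to at most $\cost(\A)+t$. More cleanly, replace $t$ by the fixed global bound $2t$ and observe that the cost of the two children's alignments sums to at most $\cost(\bar\A)$. This sum is at most $2t$ overall on every level, because at each level the pieces of $\bar\A$ are disjoint subpaths of one alignment of cost at most $2t$. Keeping the global parameter $t'=2t$ fixed (and $M\ge L_{24t}$ issues handled by using $12t'$ only where needed, or by adjusting constants) keeps $\Lambda$ fixed. Each level reduces $n_S$ by a constant factor, giving $\Oh(\log n)$ levels.

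\textbf{Running time.} A level has at most $n_S/\Lambda$ nodes doing nontrivial work. Only nodes whose alignment has positive cost, or that contain many runs, pay, and a zero-cost sub-alignment is already optimal and can be returned immediately. With the costs summing to $\Oh(t)$ per level, at most $\Oh(t)$ nodes per level do work. Each costs $\Otild(t\Lambda)$ time plus the time to slice compressed representations. The total is $\Otild(t\cdot t\Lambda/t)$ per cost unit, i.e.\ $\Otild(t^2M)$ overall.

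Finally, concatenate the children's optimal alignments at the shared vertex of $B$ to obtain $\A^*$, and read off $\DTW_\delta(S,T)$ via \cref{obs:split_by_A*}.
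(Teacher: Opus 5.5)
Your correctness skeleton matches the paper's: an optimal $\A_c^*$ on a middle window spliced into $\A$, then \cref{lem:large-s-alignments-cross} on each side of a zero block $B$, an exchange argument, and a split at $B$. The running-time analysis, however, has a genuine gap.

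You keep one global threshold ($t$, or $t'=2t$) and one global window $\Lambda=\Theta(t(M+t))$ in every recursive call. So every call that does work spends $\Otild(t\Lambda)=\Otild(t^2M)$ time in \cref{lem:rle-return-alignment}. Up to $t$ calls per level can receive alignments of positive cost, so a level may cost $\Otild(t^3M)$. Your closing line ``$\Otild(t\cdot t\Lambda/t)$ per cost unit'' does not follow from anything you established.

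The missing idea is to let each call's threshold depend on the cost $c'$ of the alignment it receives. The paper passes $t'\coloneqq\lceil c'\rceil$, which satisfies $t'\le 2c'$ because every path costs $0$ or at least $1$. It also shrinks the window to $(4t'+2)(M+24t')$. A call then costs $\Otild((t')^2M)\le\Otild(t'\cdot tM)$, and summing over a level gives $\Otild(t^2M)$ per level, hence $\Otild(t^2M)$ overall. Keeping your fixed window but running \cref{lem:rle-return-alignment} with threshold $\lceil c'\rceil$ would also work.

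This accounting needs $\sum c'\le t$ on every level, which requires $\cost(\bar\A)\le\cost(\A)$ at each node. Your stated bound $\cost(\bar\A)\le 2t$ is too weak. Your justification that the pieces on a level are ``disjoint subpaths of one alignment'' is also wrong beyond the first level: grandchildren receive pieces of the children's re-optimized alignments, not of the top-level $\bar\A$.

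Your own construction already gives the tight bound. $\A_c^*$ costs no more than the subpath of $\A$ it replaces, so $\cost(\bar\A)\le\cost(\A)$ and the level sums never increase. The tight bound also fixes two side issues:
\begin{itemize}
\item With only $\cost(\bar\A)\le 2t$, \cref{lem:kband} places $\bar\A$ within $4t$ runs of the diagonal. The two starting offsets could then differ by $6t$, breaking the $4t$ hypothesis of \cref{lem:large-s-alignments-cross}.
\item Passing $t'=2t$ to children would require $M\ge L_{24t}(S)$, which is not assumed.
\end{itemize}
Finally, with $\Lambda=(8t+4)(M+48t)$, a third of the window is $\tfrac23(4t+2)(M+48t)$. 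This falls below $(4t+2)(M+24t)$ once $M>24t$, so $\Lambda$ must be enlarged by a constant factor.
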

\begin{proof}
\newcommand{\tDTW}{t\textit{-}\mathsf{DTW}}


    If $\cost(\A)=0$, return $\A$, which is clearly optimal.
    From now on, we assume $\cost(\A)>0$.

    Let $S' \coloneqq \shrink(S)$ and $T' \coloneqq \shrink(T)$ with respective lengths $n'$ and $m'$.
    Denote $L\coloneqq (4t+2)(M+24t)$.
    Consider the indices $i_1 \coloneqq \floor{n'/2 - L -3t}$ and $i_2 \coloneqq \ceil{n'/2 + L + 3t}$.

    If $n' \le 2L+8t$, we apply \cref{lem:rle-return-alignment}, with $k\coloneqq t$ to compute $\DTW_\delta(S,T)$ and an optimal alignment.
    If $n'\le 12t$, \cref{lem:rle-return-alignment} takes $\Otild(tn')=\Otild(t^2)=\Otild(t^2M)$.
    Otherwise, $n'>12t$, and hence $L_{12t}(S)\ge 12t=\Omega(t)$, so $M=\Omega(t)$. Therefore the running time is $\Otild(t(L+t))=\Otild(t^2M+t^3)=\Otild(t^2 M)$.
    %
    %
    Notice that since $\cost(\A)\le t$ it must be that $\DTW_\delta(S,T)\le t$ and therefore the algorithm of \cref{lem:rle-return-alignment} will return an alignment.

    Otherwise, consider the blocks $B_{i_1,j_1}$ and $B_{i_2,j_2}$ as the first blocks visited by $\A$ with $i$-values $i_1$ and $i_2$, respectively. 
    Furthermore, let    $B_{i'_1,j'_1}$ be the first zero block visited by $\A$ weakly after $B_{i_1,j_1}$ (i.e., if $B_{i_1,j_1}$ is a zero block then $B_{i'_1,j'_1} = B_{i_1,j_1}$).
    Similarly, let $B_{i'_2,j'_2}$ be the first zero block visited by $\A$ weakly after $B_{i_2,j_2}$.
    Notice that since  $\cost(\A)\le t$, it must hold that $B_{i'_1,j'_1}$ and $B_{i'_2,j'_2}$ are within $t+1\le 2t$ blocks from $B_{i_1,j_1}$ and from $B_{i_2,j_2}$, respectively.
    In particular, $i'_1 - i_1 \le 2t$ and $i'_2 - i_2 \le 2t$, since along $\A$ the $i$-index of consecutive blocks can increase by at most $1$.
    Notice that all these blocks can be found by scanning the compressed representation of $\A$, since every segment descriptor specifies the sequence of blocks that it visits.

    Let $S_c$ be the substring of $S$ implied by $S'[i'_1\dd i'_2]$.
    Similarly, let $T_c$ be the substring of $T$ implied by $T'[j'_1\dd j'_2]$.
Let $v_1$ be the first vertex of the zero block $B_{i'_1,j'_1}$,
and let $v_2$ be the last vertex of the zero block $B_{i'_2,j'_2}$.
Let $w_1$ and $w_2$ be vertices of $\A$ in
$B_{i'_1,j'_1}$ and $B_{i'_2,j'_2}$, respectively.
By \cref{lem:first_and_last_in_0_block}, there is a path
$P_1$ from $(1,1)$ to $v_1$ whose cost is at most the cost of
the prefix of $\A$ ending at $w_1$.
Symmetrically, there is a path $P_2$ from $v_2$ to
$(|S|,|T|)$ whose cost is at most the cost of the suffix of
$\A$ starting at $w_2$.
Since $B_{i'_1,j'_1}$ and $B_{i'_2,j'_2}$ are zero blocks,
there is a zero-cost path inside $B_{i'_1,j'_1}$ from $v_1$ to
$w_1$, and a zero-cost path inside $B_{i'_2,j'_2}$ from $w_2$
to $v_2$.
Together with the subpath of $\A$ from $w_1$ to $w_2$, these
paths form an alignment from $S_c$ to $T_c$ of cost at most $t$.
Notice that using the compressed representation of $\A$ one can easily compute $P_1$ and $P_2$ using only local edits on the segment-descriptors.

We use \cref{lem:rle-return-alignment}, with $k\coloneqq t$, to compute
an optimal alignment $\A_c^*$ from $S_c$ to $T_c$.
In particular, $\cost(\A_c^*)\le t$.
We then define $\bar\A\coloneqq P_1\circ\A_c^*\circ P_2$.
The alignment $\bar\A$ is an alignment from $S$ to $T$ satisfying $\cost(\bar\A)\le\cost(\A)\le t$.
The alignment $\bar\A$ is represented by concatenating the corresponding segment-descriptor sequences.



     Now, let $B_{i_c,j_c}$ be the first zero block visited by $\A_c^*$ with $i$-index in $[n'/2-t\dd n'/2+t]$.
     Again, notice that such a block exists because the total cost of $\A_c^*$ is bounded by $t$ and $\A_c^*$ visits at least $2t$ blocks with $i$ values in $[n'/2 - t\dd n'/2 + t]$.
     Since every visited non-zero block contributes at least $1$ to the cost and $2t>t$, at least one of these blocks must be a zero block.

     We make the following claim.
     \begin{claim}\label{clm:visit-center}
         There is an optimal alignment from $S$ to $T$ that visits the block $B_{i_c,j_c}$.
     \end{claim}
     \begin{claimproof}
Let $\A^{\mathrm{opt}}$ be an optimal alignment from $S$ to $T$.
Since $\cost(\A^{\mathrm{opt}})\le\cost(\A)\le t$, both
$\A^{\mathrm{opt}}$ and $\bar\A$ have cost at most $t$.
Recall that the subpath of $\bar\A$ between $v_1$ and $v_2$
is precisely $\A_c^*$.

We first show that $\A^{\mathrm{opt}}$ and $\A_c^*$ have a
common zero vertex strictly before the $i_c$th run of $S$.
Let $S_L$ be the substring of $S$ implied by $S'[i'_1\dd i_c-1]$.
Consider the restrictions of $\A^{\mathrm{opt}}$ and $\bar\A$
to $S_L$.
These restrictions are alignments of $S_L$ to two substrings
of $T$, and the latter restriction is a sub-alignment of
$\A_c^*$.

Let $\alpha_1$ and $\alpha_2$ be the first positions of the
two corresponding substrings of $T$.
Since both alignments are subpaths of global alignments of cost
at most $t$, \cref{lem:kband} gives
\[
    |R_T(\alpha_1)-i'_1|\le 2t
    \qquad\text{and}\qquad
    |R_T(\alpha_2)-i'_1|\le 2t.
\]
Hence, $|R_T(\alpha_1)-R_T(\alpha_2)|\le 4t$.

Furthermore, since $i'_1-i_1\le2t$, we have $|\shrink(S_L)|
    =i_c-i'_1\ge (n'/2-t)-(i_1+2t)\ge L$.
By \cref{lem:Lk-substring}, $L_{12t}(S_L)\le L_{12t}(S)\le M$,
and therefore
$|\shrink(S_L)|\ge(4t+2)\bigl(L_{12t}(S_L)+24t\bigr)$.
Thus, we can apply \cref{lem:large-s-alignments-cross}, with
$k\coloneqq t$, to the two restricted alignments.
It follows that there is a zero block $B_{i_L,j_L}$ with $i_L\in[i'_1\dd i_c-1]$ containing a vertex $u_L$ visited by both
$\A^{\mathrm{opt}}$ and $\A_c^*$.

One can apply symmetrical reasoning to prove that there is also a zero block $B_{i_R,j_R}$ with $i_R \in [i_c+1\dd i'_2]$ containing a vertex $u_R$ visited by both $\A^{\mathrm{opt}}$ and $\A_c^*$.


Let $P$ be the subpath of $\A^{\mathrm{opt}}$ from $u_L$ to
$u_R$, and let $Q$ be the subpath of $\A_c^*$ from $u_L$ to
$u_R$.
Since $\A^{\mathrm{opt}}$ is an optimal alignment, $P$ is a
shortest path from $u_L$ to $u_R$.
Similarly, since $\A_c^*$ is an optimal alignment from $S_c$
to $T_c$, $Q$ is a shortest path from $u_L$ to $u_R$.
Indeed, by monotonicity, every path from $u_L$ to $u_R$ is
contained in the subgraph induced by $S_c$ and $T_c$.
Consequently, $\cost(P)=\cost(Q)$.

We replace $P$ in $\A^{\mathrm{opt}}$ by $Q$.
The resulting alignment has the same cost as
$\A^{\mathrm{opt}}$, and is therefore optimal.
Finally, since $i_L<i_c<i_R$ and $Q$ is a subpath of
$\A_c^*$, the path $Q$ visits $B_{i_c,j_c}$.
Hence, the resulting optimal alignment visits
$B_{i_c,j_c}$, as required.
\end{claimproof}

    We recursively apply the algorithm to find $\DTW_\delta(S_1,T_1)$ and $\DTW_\delta(S_2,T_2)$ for the strings $S_1$, $S_2$ , $T_1$, and $T_2$ defined as follows.
    \begin{enumerate}
        \item $S_1$ is the substring of $S$ implied by $S'[1\dd i_c]$,
        \item $S_2$ is the substring of $S$ implied by $S'[i_c\dd n']$,
        \item $T_1$ is the substring of $T$ implied by $T'[1\dd j_c]$, and
        \item $T_2$ is the substring of $T$ implied by $T'[j_c\dd m']$.
    \end{enumerate}
    We need to equip each recursive call with an approximate alignment.
    We partition $\bar\A$ at $B_{i_c,j_c}$ into alignments $\A_1$ and $\A_2$, aligning $S_1$ to $T_1$ and $S_2$ to $T_2$, respectively, as follows.
    Let $f$ and $\ell$ be the first and last vertices of $B_{i_c,j_c}$, respectively, and let $w$ be some vertex in $\bar A\cap B_{i_c,j_c}$.
    Define $\A_1$ as the prefix of $\bar\A$ ending at $w$, followed by a zero-cost path from $w$ to $\ell$ inside $B_{i_c,j_c}$, and $\A_2$ as a zero-cost path from $f$ to $w$ inside $B_{i_c,j_c}$, followed by the suffix of $\bar\A$ starting at $w$.
    Since $B_{i_c,j_c}$ is a zero-block, $\cost(\A_1)+\cost(\A_2)=\cost(\bar \A )\le\cost(\A)\le t$, where $\A_1$ aligns $S_1$ and $T_1$ and $\A_2$ aligns $S_2$ and $T_2$ (notice that $B_{i_c,j_c}$ is a zero-block, and therefore contributes nothing to  $\cost(\A_1)$ and to $\cost(\A_2)$).
    Notice that $\A_1$ and $\A_2$ can be obtained by local edits on segment-descriptors.
    We make each of the recursive calls using the respective alignment, and use $\ceil{\cost(\A_1)}$ and $\ceil{\cost(\A_2)}$, respectively as the upper bound on the cost of each alignment. 
    Notice that $\ceil{\cost(\A_1)}\le t$ and  $\ceil{\cost(\A_2)}\le t$ since $t$ is an integer.
    In the real RAM model, we compute these ceilings by binary search over the integers in $[0\dd t]$.
    Since this is a recursive case, $n'>2L+8t$, so the searches take $\Oh(\log n)$ time, absorbed by the stated bound.
    If $\cost(\A_\ell)=0$ (for $\ell\in\{1,2\}$) the algorithm does not recurse since $A_\ell$ is already optimal.  
    Finally, we pass $M$ as the upper bound to both recursive calls since $L_{12\ceil{\cost(\A_\ell)}}(S_\ell)\le L_{12t}(S_\ell)\le L_{12t}(S)\le M$  for every $\ell\in\{1,2\}$, due to \cref{lem:Lk-monotone,lem:Lk-substring}.

    Let $\A^*_1$ and $\A^*_2$ be the two returned paths, respectively, and assume both visit the first and last vertices of $B_{i_c,j_c}$, $f$ and $\ell$ (see \cref{lem:corners_0_block}), and share the canonical path $Z$  from $f$ to $\ell$.
    We compute $\A^*_1[s\dd f]\circ Z \circ \A^*_2[\ell\dd e]$ (where $s$ and $e$ are the first and last vertices of $\A$).
    We return the obtained alignment as an optimal alignment from $S$ to $T$.

    \para{Correctness.}
    If $|\shrink(S)| \le 2L + 8t$, the algorithm simply applies \cref{lem:rle-return-alignment} and the correctness follows from the correctness of \cref{lem:rle-return-alignment}.

    Otherwise, the algorithm finds a zero block $B_{i_c,j_c}$ such that if $\DTW_\delta(S,T) \le t$, there is an optimal alignment visiting $B_{i_c,j_c}$ (due to \cref{clm:visit-center}).
    Let $\A'$ be such an optimal alignment (hence, $\cost(\A')=\DTW_\delta(S,T)$) visiting $B_{i_c,j_c}$ at some vertex $w$.
    Let $\A'_1\coloneqq\A'[s,w]\circ(w\rightsquigarrow \ell)$ and $\A'_2\coloneqq(f\rightsquigarrow w)\circ\A'[w,e]$.
    The algorithm recursively finds optimal alignments $\A^*_1$ and $\A^*_2$ from $B_{1,1}$ to $B_{i_c,j_c}$ and from $B_{i_c,j_c}$ to $B_{n',m'}$ respectively.
    It follows from the optimality of $\A^*_1$ and $\A^*_2$ that $\cost(\A^*_1) \le \cost(\A'_1)$ and $\cost(\A^*_2) \le \cost(\A'_2)$.
    Therefore, the path obtained from combining $\A^*_1$ and $\A^*_2$ has cost at most $\cost(\A'_1) + \cost(\A'_2)= \cost(\A') = \DTW_\delta(S,T)$, as required.

    \para{Complexity.}
    If $|S'| \le 2L + 8t$, the algorithm applies \cref{lem:rle-return-alignment} which runs in $\Otild(t |S'|) = \Otild(t (L+t)) = \Oh(t^2M)$ time.

    Otherwise, the algorithm applies \cref{lem:rle-return-alignment} on the pair of strings $S_c$ and $T_c$ such that $|\shrink(S_c)|=\Oh(L+t)$, which as stated before takes $\Otild(t\cdot  |\shrink(S_c)|)=\Otild(t(L+t))=\Otild(t^2M)$ time.
    The algorithm then constructs $\bar \A$, splits it into $\A_1$ and $\A_2$ where all paths are in a compressed representation in $\Oh(t+1)$ time. 
    The algorithm runs two recursive calls, $(S_1,T_1)$ and $(S_2,T_2)$.
    Since $i_c\in[n'/2-t\dd n'/2+t]$ and $t\ge1$, both $S_1$ and $S_2$ contain at most $n'/2+t+1\le n'/2+2t$ runs.
    Moreover, since the base case does not apply, $n'>2L+8t>8t$.
    Thus, each recursive instance contains fewer than $3n'/4$ runs.
    Hence, the recursion depth is $\Oh(\log n)$.
    



    For every recursive call with a path $\A_\ell$, let $c'\coloneqq\cost(\A_\ell)$ and $t'\coloneqq\ceil{c'}$.
    Since every path costs either $0$ or at least $1$, we have $t'\le 2c'$.
    Since the sum of the costs $c'$ at every recursion level is at most
    the initial budget $t$, we have $\sum t' \le 2\sum c' \le 2t$. 
    So, the internal running time of each recursive call on $(S',T')$ with path of cost $c'$ is dominated by the application of \cref{lem:rle-return-alignment} which takes $\Otild((t')^2 M)= \Otild((t') \cdot t \cdot M)$ time (the inequality is by \cref{lem:Lk-monotone,lem:Lk-substring}).
    Since the sum of all $t'$ budgets in a level is at most $2t$, this sums up to $\Otild(t ^2M)$ per layer, for a total of $\Otild(t ^2M)$ time across all layers.
\end{proof}

\subsection{Using Edit-Distance to Approximate $\DTW$}\label{sec:ED-to-approx-DTW}

In the following two lemmas we show a close relation between the $\DTW$ and $\ED$ of two strings $S$ and $T$ with $S=\shrink(S)$ and $T=\shrink(T)$.

\begin{lemma}\label{lem:DTW>ED/2}
Let $S$ and $T$ be two strings such that $S=\shrink(S)$ and $T=\shrink(T)$ and let $\delta$ be a mismatch-cost function.
Then $\ED(S,T)\le 2\DTW_{\delta}(S,T)$. 
\end{lemma}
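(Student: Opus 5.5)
Take an optimal DTW alignment $\A$ in $\AG_\delta(S,T)$ and convert it into an edit-distance alignment of $S$ and $T$ in $\AGed(S,T)$ whose cost is at most $2\cost(\A)$. Since $S=\shrink(S)$ and $T=\shrink(T)$, every run has length one. Hence every block is a single vertex, and two consecutive vertices of $\A$ on a horizontal or vertical step cannot both have cost $0$: if $(i,j)\to(i,j+1)$ had both costs $0$, then $T[j]=S[i]=T[j+1]$, contradicting $T=\shrink(T)$. Because $\delta$ is normalized, every mismatched vertex costs at least $1$. So it suffices to bound the edit cost by twice the number of mismatched vertices on $\A$.

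The translation is as follows. Read $\A=(v_1,\dots,v_\ell)$ with $v_1=(1,1)$. The DTW vertex $(i,j)$ corresponds to the edit-graph vertex $(i,j)$, i.e., to having consumed $S\fragment{1}{i}$ and $T\fragment{1}{j}$. Start with the diagonal edge $(0,0)\to(1,1)$, which costs $0$ if $S[1]=T[1]$ and $1$ otherwise, so it is charged to $v_1$ when $v_1$ is mismatched. Each subsequent DTW step becomes one edit step, charged to the vertex it enters:
\begin{itemize}
\item a diagonal step into $(i,j)$ becomes the diagonal edit edge $(i-1,j-1)\to(i,j)$, which costs $0$ exactly when $S[i]=T[j]$, so it is charged at most $1$ only when the entered vertex is mismatched;
\item a horizontal step $(i,j-1)\to(i,j)$ becomes an insertion of cost $1$;
\item a vertical step $(i-1,j)\to(i,j)$ becomes a deletion of cost $1$.
\end{itemize}
This gives a path from $(0,0)$ to $(|S|,|T|)$ in $\AGed(S,T)$, so $\ED(S,T)$ is at most its total cost.

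It remains to charge each orthogonal step of cost $1$ to a mismatched vertex of $\A$, with every mismatched vertex receiving total charge at most $2$. By the observation above, at least one endpoint of each orthogonal step is mismatched. Charge the step to its entered vertex if that vertex is mismatched, and otherwise to its source vertex, which must then be mismatched. Each vertex is entered by at most one step, so it receives at most one charge as an entered vertex, either from a diagonal or from an orthogonal step. It is also the source of at most one step, so it receives at most one charge as a source. Hence every mismatched vertex is charged at most twice, and matched vertices are never charged. This yields $\ED(S,T)\le 2\cdot|\{v\in\A: c(v)>0\}|\le 2\cost(\A)=2\DTW_\delta(S,T)$.

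The main point needing care is the charging scheme, specifically ensuring no vertex is charged three times. A vertex that is mismatched, entered by an orthogonal step, and the source of an orthogonal step whose target is matched receives exactly two charges, which is fine. The initial edge $(0,0)\to(1,1)$ is charged to $v_1$ as an entered vertex, and $v_1$ has no other incoming step, so it also stays within the bound.
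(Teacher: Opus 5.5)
Your proof is correct and follows essentially the same route as the paper: you translate an optimal DTW path edge by edge into an edit-distance path that starts with $(0,0)\to(1,1)$, and you charge each unit-cost edge either to the mismatched vertex it enters or, for an orthogonal edge entering a matched vertex, to its source, which must be mismatched because consecutive characters of $S$ and $T$ differ. Your write-up is slightly more explicit than the paper's about why each mismatched vertex receives at most two charges (one as the entered vertex, one as a source), which the paper covers with the remark that the mapping is injective.
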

\begin{proof}
We note that both $S$ and $T$ have the property that every two consecutive characters are distinct.

Let $\A$ be a shortest path from $(1,1)$ to $(|S|,|T|)$ in $\AG_\delta(S,T)$.
Consider the path $\A'$ in $\AGed(S,T)$, starting from the edge $(0,0) \xrightarrow{} (1,1)$ and proceeding using exactly the same edges as $\A$.
Consider an edge $e=(u,v)$ in $\A'$ such that the weight of $e$ in $\AGed$ is $1$ and the cost of $v$ in $\AG_\delta(S,T)$ is $0$.
We consider three cases regarding the orientation of $e$: either vertical, horizontal, or diagonal.
We start by observing that $e$ cannot be a diagonal edge.
That is since a diagonal edge with weight $1$ in $\AGed(S,T)$ must enter a vertex $(i,j)$ such that $S[i]\ne T[j]$.
This contradicts our assumption that $v=(i,j)$ has cost $0$ in $\AG_\delta(S,T)$.
Consider the case in which $e$ is a horizontal edge from $(i,j-1)$ to $(i,j)$.
We assume that the cost of $(i,j)$ in $\AG_\delta(S,T)$ is $0$, and therefore $S[i]=T[j]$.
Since every two consecutive symbols in $T$ are distinct, it must hold that $S[i]\neq T[j-1]$.
Therefore, the vertex $(i,j-1)$ that precedes $(i,j)$ in $\A$ has cost at least $1$.
We charge the excess cost of $\A'$ compared to $\A$ contributed by $e$ on the vertex $(i,j-1)$.
Notice that this mapping is injective.
The same arguments can be applied for the case in which $e$ is a vertical edge.

We have shown that every non-zero edge in $\A'$ either corresponds to a non-zero vertex in $\A$ or it can be uniquely mapped to a non-zero vertex in $\A$.
It follows that the cost of $\A'$ is at most twice the cost of $\A$.
In particular, $\ED(S,T) \le \cost(\A') \le 2\cdot\cost(\A) = 2\cdot\DTW_\delta(S,T)$.
\end{proof}

\begin{lemma}\label{lem:ED-alignment-to-DTW-alignment}
Let $S$ and $T$ be two strings and let  $\A$ be an edit-distance alignment in $\AGed(S,T)$.
Then, there exists a unit-cost $\DTW$ alignment $\A'$ in $\AG_\unit(S,T)$ with $\cost(\A')\le\cost(\A)$.
Moreover, there exists an algorithm that, given $\A$ (in compressed representation) computes (a compressed representation of) $\A'$ in $\Oh(|S|+|T|)$ time.
\end{lemma}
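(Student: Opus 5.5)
We convert the edit-distance path $\A$ in $\AGed(S,T)$ into a DTW path by a local, edge-by-edge simulation, and charge every DTW vertex of positive cost to an edit edge of weight $1$. Write $\A$ as a sequence of edges from $(0,0)$ to $(|S|,|T|)$. Every vertex $(i,j)$ with $i,j\ge1$ of $\AGed(S,T)$ is also a vertex of $\AG_\unit(S,T)$, and every edge between such vertices is an edge of both graphs. The only mismatch between the two graphs is near row $0$ and column $0$, which exist only in $\AGed$.

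First handle the prefix. Let $\A$ leave the boundary (the set of vertices with $i=0$ or $j=0$) at its first vertex $(i_0,j_0)$ with $i_0,j_0\ge1$. Assume by symmetry that the last boundary vertex before it is $(i_0-1,j_0)$ or $(i_0-1,j_0-1)$ with $i_0=1$, or more generally that the prefix runs along row $0$ and/or column $0$. The portion of $\A$ before $(i_0,j_0)$ contains at least $\max(i_0,j_0)-1$ unit-weight (horizontal or vertical) edges. It also contains a final entering edge, which is diagonal of weight $\unit(S[i_0],T[j_0])$ or orthogonal of weight $1$. In $\AG_\unit$ we replace this prefix by the straight path from $(1,1)$ along row $1$ or column $1$ to $(i_0,j_0)$. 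That path has $\max(i_0,j_0)$ vertices, each of cost at most $1$. A short case check (diagonal entry versus orthogonal entry) shows the DTW prefix cost is at most the edit prefix weight.

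After $(i_0,j_0)$, every edge of $\A$ is an edge of $\AG_\unit$, so we follow the same vertices. We charge the cost of each newly entered vertex $v$ to the edge entering it. A diagonal edge into $(i,j)$ has edit weight exactly $\unit(S[i],T[j])=c(v)$. A horizontal or vertical edge has edit weight $1\ge c(v)$. Hence the total cost is at most $\cost(\A)$, and the endpoint is $(|S|,|T|)$, so the result is a valid alignment. The charging is injective because each DTW vertex after the prefix is entered by exactly one edge of $\A$.

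For the algorithm, scan the compressed representation of $\A$ once. Produce the prefix replacement explicitly, then copy the remaining path, emitting DTW segment descriptors (blocks visited, with zero-cost stretches merged) in $\Oh(|S|+|T|)$ time. The main obstacle is only the boundary bookkeeping at $(0,0)$, in particular the degenerate cases where $\A$ travels along row $0$ or column $0$ for a long time. The other delicate point is making sure that the first DTW vertex $(1,1)$ is not double-counted against a diagonal edge of weight $0$. We will resolve both by the explicit case analysis of the entering edge described above.
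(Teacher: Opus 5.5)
Your proposal is correct and follows essentially the same route as the paper. You replace the boundary prefix of $\A$ by the straight segment along row or column $1$ from $(1,1)$ to the first vertex $(i_0,j_0)$ with both coordinates positive, pay for its $\max(i_0,j_0)-1$ extra vertices with the orthogonal boundary edges of $\A$, and charge every later vertex to the edge of $\A$ entering it; the only cosmetic difference is that you fold $(i_0,j_0)$ into the prefix case analysis, whereas the paper charges it to its entering edge like all subsequent vertices, which already settles your concerns about long boundary runs and double-counting $(1,1)$.
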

\begin{proof}

We construct a path $\A'$ in $\AG(S,T)$ as follows.
Let $(i,j)$ be the first vertex of $\A$ with $i\ge 1$ and $j\ge 1$.
Since $\A$ is monotone, either $i=1$ or $j=1$.
We start $\A'$ at $(1,1)$, move horizontally to $(1,j)$ if $i=1$ (or vertically to $(i,1)$ if $j=1$), and from there continue using exactly the same moves as $\A$.
Thus, $\A'$ is a valid $\DTW$ alignment.

We compare the costs.
For every vertex $(p,q)$ visited by $\A'$ from $(i,j)$ onward, the cost in $\AG(S,T)$ is at most the cost paid by $\A$ when entering $(p,q)$:
if $\A$ enters $(p,q)$ diagonally, the costs are equal, and if it enters horizontally or vertically, $\A$ pays $1$ while the vertex cost is at most $1$.

It remains to bound the added prefix.
If $(i,j)=(1,j)$, then the prefix of $\A'$ has cost at most $j-1$, while $\A$ must pay at least $j-1$ to reach $(1,j)$ along the boundary of $\AGed(S,T)$.
The case $(i,j)=(i,1)$ is symmetric.
Therefore, $\cost(\A')\le \cost(\A)$.

The construction scans $\A$ once and outputs $\A'$, hence runs in $\Oh(|\A|)=\Oh(|S|+|T|)$ time.
\end{proof}

Using \cref{lem:DTW>ED/2,lem:ED-alignment-to-DTW-alignment} we are ready to prove the following lemma, which is used as the first step in the proof of \cref{thm:main-UB}.


\begin{lemma}\label{lem:ub-base-DTW-unit-for-shrinked}
    Given two strings $S$ and $T$, and some $M\ge L_{24k}(S)$, there exists an algorithm that runs in $\Otild(n+k^2M)$ time and computes an optimal alignment in $\AG_\unit(\shrink(S),\shrink(T))$, or reports $\DTW(\shrink(S),\shrink(T))>k$.
\end{lemma}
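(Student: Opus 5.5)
Write $S'\coloneqq\shrink(S)$ and $T'\coloneqq\shrink(T)$, and compute both in $\Oh(n)$ time. The plan has three steps: obtain a cheap unit-cost DTW alignment from edit distance, then refine it to an optimal one via \cref{lem:dtw-given-approx}.

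\textbf{Step 1 (edit distance).} Run the Landau--Vishkin algorithm of \cref{lem:LV} on $(S',T')$ with threshold $2k$. This takes $\Oh(n+k^2)$ time.
\begin{itemize}
\item If it reports $\ED(S',T')>2k$, then \cref{lem:DTW>ED/2}, applied with the unit cost function, gives $\DTW(S',T')\ge \ED(S',T')/2>k$. We report this.
\item Otherwise we obtain an optimal edit-distance alignment of cost at most $2k$.
\end{itemize}

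\textbf{Step 2 (convert to DTW).} Apply \cref{lem:ED-alignment-to-DTW-alignment} to the edit-distance alignment. In $\Oh(n)$ time this produces a unit-cost DTW alignment $\A$ in $\AG_\unit(S',T')$ with $\cost(\A)\le 2k$. We then convert $\A$ into the compressed segment-descriptor representation required by \cref{lem:dtw-given-approx}. Since $S'$ and $T'$ have no consecutive equal characters, every block is a single vertex, so a linear scan of $\A$ suffices and takes $\Oh(n)$ time.

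\textbf{Step 3 (refine to optimal).} Invoke \cref{lem:dtw-given-approx} on $(S',T')$ with $\delta=\unit$, the alignment $\A$, budget $t\coloneqq 2k$, and the given $M$. The hypothesis $M\ge L_{12t}(S')$ needs checking. Here $12t=24k$, and $L_{24k}(S')=L_{24k}(S)$ because $\shrink(S')=S'=\shrink(S)$. So $M\ge L_{24k}(S)$ is exactly the required condition. The lemma returns $\DTW(S',T')$ together with an optimal alignment in $\Otild(t^2M)=\Otild(k^2M)$ time. If the returned value exceeds $k$, we report $\DTW(\shrink(S),\shrink(T))>k$; otherwise we output the alignment.

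The total running time is $\Otild(n+k^2M)$. We may assume $k\ge1$, since the case $k=0$ reduces to checking $S'=T'$, which takes $\Oh(n)$ time.

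I expect the main point of care to be parameter bookkeeping rather than any substantive step:
\begin{itemize}
\item The approximation factor $2$ from \cref{lem:DTW>ED/2} doubles the budget to $t=2k$, which is why the statement needs $L_{24k}$ rather than $L_{12k}$.
\item The conversion in Step 2 must genuinely produce the compressed representation within $\Oh(n)$ time.
\item \cref{lem:dtw-given-approx} is stated for strings "of total length $n$". Applying it to the shrunk strings is harmless, since their lengths are at most $n$.
\end{itemize}
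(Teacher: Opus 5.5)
Your proposal is correct and follows the paper's proof step for step: you run Landau--Vishkin with threshold $2k$, reject via \cref{lem:DTW>ED/2}, convert via \cref{lem:ED-alignment-to-DTW-alignment}, refine via \cref{lem:dtw-given-approx} using $L_{12t}(S')\le L_{24k}(S)\le M$, and reject if the returned optimum exceeds $k$. The only cosmetic difference is that the paper sets $t\coloneqq\cost(\A')$ and handles $\cost(\A')=0$ separately, whereas you fix $t\coloneqq 2k$ and dispose of $k=0$ up front; this is equally valid because the auxiliary lemmas assume an integer $k\ge 1$.
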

\begin{proof}
Let $S'\coloneqq\shrink(S)$ and $T'\coloneqq\shrink(T)$.
We use \cref{lem:LV} to compute $2\kED(S',T')$ and an optimal edit distance alignment  from $S'$ to $T'$, if $\ED(S',T') \le 2k$.
If \cref{lem:LV} reports that $\ED(S',T') > 2k$, we report that $\DTW(S',T') > k$ (this is correct by \cref{lem:DTW>ED/2}).
Otherwise, \cref{lem:LV} returns an optimal edit distance alignment $\A''$ of cost at most $2k$.
We apply \cref{lem:ED-alignment-to-DTW-alignment} with $\A''$ to obtain a $\DTW$ alignment $\A'$  from $S'$ to $T'$ with cost at most $2k$.
If $\cost_{\unit}(\A')=0$, we return $\A'$, which is already optimal.
Otherwise, if $\cost_{\unit}(\A')>0$, we apply \cref{lem:dtw-given-approx}
on $S'$ and $T'$, the mismatch-cost function $\unit$, the alignment $\A'$, the parameter $t\coloneqq\cost(\A')$ and the upper bound $M$ to obtain an optimal alignment $\A^*$.
Notice that $L_{12t}(S')\le L_{24k}(S')=L_{24k}(S)\le M$, since $t=\cost(\A')\le2k$ and $S'=\shrink(S)$.
Finally, if $\cost(\A^*)>k$ the algorithm returns $\infty$.

\para{Complexity.}
Applying \cref{lem:LV} takes $\Otild(n +k^2)$ time to compute $\A''$.
Computing $\A'$ by \cref{lem:ED-alignment-to-DTW-alignment} takes additional $\Otild(n)$ time.
Finally, the algorithm of \cref{lem:dtw-given-approx} computes $\A^*$ (with $t=\cost(\A')\le2k$ and $M$) in $\Otild(k^2M)$ time.
\end{proof}

\subsection{Scaling the Mismatch-Cost Function}\label{sec:scaling-delta}

Let $\delta$ be a mismatch-cost function.
We define $\half_\delta:\Sigma^2\to\R_{\ge0}$ to be the function that halves every value of $\delta$, but still preserves the requirement that $\delta(\sigma,\sigma')\ge 1$ for every $\sigma\ne\sigma'$.
Formally, $\half_\delta(\sigma,\sigma')$ is $0$ if $\sigma=\sigma'$ and $\max\{\delta(\sigma,\sigma')/2,1\}$ otherwise.
Notice that, given oracle access to $\delta$, one can also obtain oracle access to $\half_\delta$ with an additional constant time per query.

For two strings $S$ and $T$, recall that $\AG_\delta(S,T)$ is a vertex-weighted graph, where the weights of the vertices are a function of $\delta$.
Further recall that $\AG_\delta(S,T)$ and $\AG_{\half_\delta}(S,T)$ are both vertex-weighted graphs, with the same vertices and edges, which are different only in the weights of the vertices.
For a path $\A$, we denote by $\cost_\delta(\A)$ the cost of $\A$ in $\AG_\delta(S,T)$, and by $\cost_{\half_\delta}(\A)$ the cost of $\A$ in $\AG_{\half_\delta}(S,T)$.
The following lemma shows that $\cost_{\half_\delta}(\A)$ is an approximation for $\cost_\delta(\A)$.
\begin{lemma}\label{lem:half-delta-approx}
Let $S$ and $T$ be two strings, and let $\delta$ be a mismatch-cost function.
For every alignment  $\A$  from $S$ to $T$ we have $\cost_{\delta}(\A)\in[\cost_{\half_\delta}(\A), 2\cost_{\half_\delta}(\A)]$.
\end{lemma}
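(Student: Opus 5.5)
The plan is to prove the inequality vertex by vertex and then sum along the path. Both $\cost_\delta(\A)$ and $\cost_{\half_\delta}(\A)$ are sums over the same multiset of visited vertices $(i,j)\in\A$, because $\AG_\delta(S,T)$ and $\AG_{\half_\delta}(S,T)$ share vertices and edges and differ only in vertex weights. It therefore suffices to show that, for every pair of characters $\sigma,\sigma'\in\Sigma$,
\[
\half_\delta(\sigma,\sigma')\le\delta(\sigma,\sigma')\le2\,\half_\delta(\sigma,\sigma').
\]
Summing this over $(\sigma,\sigma')=(S[i],T[j])$ for all $(i,j)\in\A$ gives both claimed bounds at once.

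The pointwise inequality splits into two cases. If $\sigma=\sigma'$, normalization gives $\delta(\sigma,\sigma)=0$, and by definition $\half_\delta(\sigma,\sigma)=0$, so both sides vanish.

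If $\sigma\ne\sigma'$, write $x\coloneqq\delta(\sigma,\sigma')\ge1$, using that $\delta$ is normalized. Then $\half_\delta(\sigma,\sigma')=\max\{x/2,1\}$.
\begin{itemize}
\item For the upper bound, $2\max\{x/2,1\}\ge2\cdot x/2=x$.
\item For the lower bound, $\max\{x/2,1\}\le x$ holds because $x/2\le x$ (as $x\ge0$) and $1\le x$.
\end{itemize}

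There is no real obstacle. The only point needing care is the lower bound $\half_\delta\le\delta$, which relies on the normalization assumption $\delta(\sigma,\sigma')\ge1$ for $\sigma\ne\sigma'$; without it, the clamp at $1$ could exceed the original cost. It is also worth noting, for later use, that $\half_\delta$ is itself normalized, so the lemma can be applied iteratively.
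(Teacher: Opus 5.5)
Your proof is correct and takes the same approach as the paper: you establish the pointwise bound $\half_\delta(\sigma,\sigma')\le\delta(\sigma,\sigma')\le 2\,\half_\delta(\sigma,\sigma')$ by splitting into the cases $\sigma=\sigma'$ and $\sigma\ne\sigma'$, then sum it over the vertices visited by $\A$. You also state explicitly that the lower bound needs normalization ($\delta(\sigma,\sigma')\ge 1$ for $\sigma\ne\sigma'$), which the paper uses only implicitly.
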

\begin{proof}
We have $\cost_{\delta}(\A)=\sum_{(i,j)\in \A} \delta(S[i],T[j])$ and $\cost_{\half_\delta}(\A)=\sum_{(i,j)\in \A} \half_\delta(S[i],T[j])$.
For every $\sigma,\sigma'\in\Sigma$, if $\sigma=\sigma'$ then both sides are $0$.
Otherwise, $\half_\delta(\sigma,\sigma')=\max\{\delta(\sigma,\sigma')/2,1\}\ge \delta(\sigma,\sigma')/2$, and hence $\delta(\sigma,\sigma')\le 2\half_\delta(\sigma,\sigma')$.
On the other hand, we also have $\half_\delta(\sigma,\sigma')=\max\{\delta(\sigma,\sigma')/2,1\}\le \delta(\sigma,\sigma')$.
Summing over all vertices visited by $\A$ gives $\cost_{\delta}(\A)\in[\cost_{\half_\delta}(\A) ,2\cost_{\half_\delta}(\A)]$.
\end{proof}

The following lemma shows that given an optimal alignment with respect to the unit-cost $\DTW$, we can compute an optimal alignment with respect to $\DTW_\delta$.

\begin{lemma}\label{lem:scale-delta}
There exists an algorithm that, given two strings $S$ and $T$, some $M\ge L_{24k}(S)$, oracle access to a mismatch-cost function $\delta$ and an optimal alignment $\A$ in $\AG_\unit(S,T)$ of cost at most $k$, computes in $\Otild(k^2M)$ time an optimal alignment in $\AG_\delta(S,T)$ or reports that $\kDTW_\delta(S,T)>k$.
\end{lemma}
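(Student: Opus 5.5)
The plan is a scaling recursion on the mismatch-cost function, with \cref{lem:dtw-given-approx} as the refinement step at every level. First cap the costs: set $\delta_0(\sigma,\sigma')\coloneqq\min\{\delta(\sigma,\sigma'),k+1\}$. An alignment of $\delta$-cost at most $k$ never visits a vertex of cost above $k$. Hence $\kDTW_{\delta_0}$ and $\kDTW_\delta$ agree, and any optimal $\delta_0$-alignment of cost at most $k$ is also $\delta$-optimal. Next define $\delta_{i+1}\coloneqq\half_{\delta_i}$. After $h=\Oh(\log k)$ halvings every positive value of $\delta_h$ equals $1$, so $\delta_h=\unit$. All $\delta_i$ are normalized and oracle-accessible with $\Oh(\log k)$ overhead per query. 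The given alignment $\A$ is therefore an optimal alignment for $\delta_h$ of cost at most $k$.

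We then process $i=h-1,\dots,0$, maintaining an optimal $\delta_{i+1}$-alignment $\A_{i+1}$ of cost at most $k$. By \cref{lem:half-delta-approx}, $\cost_{\delta_i}(\A_{i+1})\le 2\cost_{\delta_{i+1}}(\A_{i+1})\le 2k$. Applying the same lemma to an optimal $\delta_i$-alignment $\A^*$ gives $\DTW_{\delta_{i+1}}\le\DTW_{\delta_i}$, so
\[
\DTW_{\delta_i}(S,T)\le 2\DTW_{\delta_{i+1}}(S,T)\le 2k.
\]
Apply \cref{lem:dtw-given-approx} with cost function $\delta_i$, alignment $\A_{i+1}$, and $t\coloneqq\ceil{\cost_{\delta_i}(\A_{i+1})}\le 2k$. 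The required bound holds because $L_{12t}(S)\le L_{24k}(S)\le M$ by \cref{lem:Lk-monotone}. Over the real RAM, $t$ is obtained by binary search over integers in $[0\dd 2k]$. If $t=0$, then $\A_{i+1}$ is already optimal. The lemma returns an optimal $\delta_i$-alignment $\A_i$ in $\Otild(k^2M)$ time.

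If $\cost_{\delta_i}(\A_i)>k$, then $\DTW_{\delta_0}\ge\DTW_{\delta_i}>k$, because $\delta_0\ge\delta_i$ pointwise. We report $\kDTW_\delta(S,T)>k$ and stop. Otherwise the invariant holds at the next level. At the end, $\A_0$ is the answer by the capping argument. The total time is $\Oh(\log k)\cdot\Otild(k^2M)=\Otild(k^2M)$, since the compressed alignments have size $\Oh(k)$ and the local edits are cheap.

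The step needing most care is the monotonicity $\delta_0\ge\delta_1\ge\dots\ge\delta_h$, which makes the early \emph{report} sound. It follows from $\max\{x/2,1\}\le x$ for $x\ge1$. The other delicate point is that the ceiling of a real cost stays at most $2k$, so the single parameter $M\ge L_{24k}(S)$ suffices at every level.
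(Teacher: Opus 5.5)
Your proposal is correct and follows the paper's proof essentially step for step. Like the paper, you cap costs at $k+1$, halve repeatedly down to $\unit$, refine at each of the $\Oh(\log k)$ levels with \cref{lem:dtw-given-approx}, and justify the early report by the pointwise monotonicity of the $\delta_i$. The only cosmetic difference is that the paper passes $t\coloneqq 2k$ directly to the refinement lemma instead of the ceiling of the current cost, which makes your $t=0$ case and your binary search for the ceiling unnecessary.
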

\begin{proof}
Assume without loss of generality that $\delta(\sigma,\sigma')\le k+1$, since if this is not the case we can replace every mismatch cost larger than $k$ by $k+1$, without affecting $\kDTW_\delta(S,T)$.
Now, we define a sequence of mismatch-cost functions $\delta=\delta_0,\delta_1,\delta_2,\dots,\delta_{\ceil{\log (k+1)}}$, as follows.
$\delta_0\coloneqq\delta$, and for every $i>0$ we define $\delta_i\coloneqq\half_{\delta_{i-1}}$.
Recall that we have oracle access to a mismatch-cost function $\delta$.
For integer input weights, every cost at stage $i$ is a multiple of $2^{-i}$.
Multiplying alignment costs by $2^i$ gives integers with $\Oh(w)$ bits, since $i=\Oh(\log k)=\Oh(\log n)$ and alignments have at most $2n-1$ vertices.
Thus, the scaled costs are represented exactly using a constant number of words on the word RAM.
Moreover, it is clear that $\delta_{\ceil{\log (k+1)}}$ is the function $\unit$, since  $\delta(\sigma,\sigma')\in[1,k+1]$ (for $\sigma\ne\sigma'$) we have $\delta_{\ceil{\log(k+1)}}(\sigma,\sigma')\in[1,\max\{1,(k+1)/2^{\ceil{\log (k+1)}}\}]=\{1\}$.

Let $\A_{\ceil{\log (k+1)}}\coloneqq\A$.
For every $i$, from $\ceil{\log (k+1)}-1$ downward to $0$ we define $\A_i$ to be an optimal alignment of $\AG_{\delta_i}(S,T)$.
For every $i$, the algorithm computes $\A_i$, using \cref{lem:dtw-given-approx} with $\A_{i+1}$, $M$ and $t\coloneqq2k$.
If at some point during the computation we have $\DTW_{\delta_i}(S,T)=\cost_{\delta_i}(\A_i)>k$ the algorithm returns that $\kDTW_\delta(S,T)>k$.
This is true since, by definition of $\half_\delta$ we have $\DTW_\delta(S,T)\ge\DTW_{\half_\delta}(S,T)$ and therefore $\DTW_\delta(S,T)=\DTW_{\delta_0}(S,T)\ge\DTW_{\delta_i}(S,T)>k$.

\para{Complexity.}
There are $\Oh(\log k)$ iterations.
At the $i$th iteration, we begin with $\A_{i+1}$, which by \cref{lem:half-delta-approx} satisfies $\cost_{\delta_i}(\A_{i+1})\le 2\cost_{\delta_{i+1}}(\A_{i+1})\le 2k$.
We use \cref{lem:dtw-given-approx} that takes $\Otild(k^2 M)$.
Hence, the total running time of the algorithm is $\Otild(k^2M)$.
\end{proof}

\subsection{Scaling the RLE Exponents}\label{sec:scaling-exponents}

In this section we show how to compute  an optimal alignment for $\AG_\delta(S,T)$ given an optimal alignment for $\AG_\delta(\shrink(S),\shrink(T))$.

For a string $S$ with run length encoding $(\sigma_1,a_1),(\sigma_2,a_2), \ldots (\sigma_\ell,a_\ell)$, we define $\tshrink(S)$ as the string with run length encoding $(\sigma_1,\ceil{ a_1/2}),(\sigma_2,\ceil{a_2/2}), \ldots (\sigma_\ell,\ceil{a_\ell/2})$.
In words, $\tshrink$ is obtained from $S$ by shrinking each run of $S$ to half its original size.

\begin{lemma}\label{lem:tshrink-approx-dtw}
    For strings $S$ and $T$ and a mismatch-cost function $\delta$, it holds that $\DTW_\delta(\tshrink(S),\tshrink(T)) \in [\DTW_\delta(S,T) / 2 \dd \DTW_\delta(S,T)]$.
    Furthermore, given an alignment $\A$ from $\tshrink(S)$ to $\tshrink(T)$ (in compressed representation), there is an algorithm that outputs (a compressed representation of) an alignment $\A'$ from $S$ to $T$ with cost at most $2\cdot\cost(\A)$ in $\Oh(|S| + |T|)$ time.
\end{lemma}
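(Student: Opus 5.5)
We prove two inequalities and the algorithmic transfer separately. Write $S=\sigma_1^{a_1}\cdots\sigma_\ell^{a_\ell}$, $T=\tau_1^{b_1}\cdots\tau_r^{b_r}$, and $\tilde S\coloneqq\tshrink(S)$, $\tilde T\coloneqq\tshrink(T)$. The block structure of $\AG_\delta(\tilde S,\tilde T)$ is the same as that of $\AG_\delta(S,T)$, with run $i$ of $\tilde S$ of length $\tilde a_i=\ceil{a_i/2}$. Our plan is to map paths between the two grids block by block, using the expansions view of \cref{lem:expensions-alignments} to avoid fiddling with coordinates.

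\emph{Lower bound $\DTW_\delta(\tilde S,\tilde T)\le\DTW_\delta(S,T)$.} Take an optimal alignment of $S,T$, and view it via \cref{lem:expensions-alignments}(2) as same-length expansions $S',T'$. Define the monotone surjections $\phi:[|S|]\to[|\tilde S|]$ and $\psi:[|T|]\to[|\tilde T|]$ that send position $x$ within run $i$ (offset $o\in[1\dd a_i]$) to offset $\ceil{o/2}$ within run $i$ of $\tilde S$, and analogously for $T$. These maps preserve characters, are nondecreasing, increase by at most one per step, and are onto. Applying $(\phi,\psi)$ to each vertex of the alignment and deleting consecutive repetitions gives a valid path from $(1,1)$ to $(|\tilde S|,|\tilde T|)$. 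Each retained vertex has the same cost as one of its preimages, and deleting duplicates only lowers the cost. So the image costs at most $\cost$ of the original.

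\emph{Upper bound $\DTW_\delta(S,T)\le2\DTW_\delta(\tilde S,\tilde T)$, together with the algorithm.} Given $\A$ in $\AG_\delta(\tilde S,\tilde T)$, pass to expansions $\tilde S',\tilde T'$ and duplicate every column, giving strings $\tilde S'',\tilde T''$ of doubled length with $\Ham_\delta$ exactly $2\cost(\A)$. Doubling every character of $\tilde S'$ yields an expansion of the string with run lengths $2\tilde a_i\ge a_i$, not of $S$ itself. We fix this by noting that a string whose run $i$ has length $\ge a_i$ is still an expansion of $S$, since each run of $S$ is only lengthened. So $\tilde S''$ is an expansion of $S$ and $\tilde T''$ is an expansion of $T$ of equal length. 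Then \cref{lem:expensions-alignments}(1) gives an alignment $\A'$ of $S,T$ of cost at most $2\cost(\A)$.

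For running time, the construction must not be carried out character by character when $\A$ is compressed. Instead we work directly on the path. Map each vertex $(x,y)$ of $\A$ to the $2\times2$ square $\{2x-1,2x\}\times\{2y-1,2y\}$, clamped within each run, i.e.\ to run offsets $\min(2o-1,a_i)$ and $\min(2o,a_i)$. We traverse the two diagonal-adjacent vertices of each square and connect squares by the corresponding step. Each original vertex then contributes at most two vertices of the same cost.

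Each segment descriptor of $\A$ (within one block, or along a zero diagonal) maps to $\Oh(1)$ descriptors. Writing these out is at worst $\Oh(|S|+|T|)$, which meets the claimed bound. The main obstacle is this last step: checking that clamping at run ends keeps consecutive vertices adjacent, i.e.\ each coordinate increases by at most one between the images, and that the endpoints land at $(1,1)$ and $(|S|,|T|)$. We would verify it by casing on whether a step of $\A$ stays inside a run or crosses a run boundary in each coordinate.
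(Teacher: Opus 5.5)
Your two inequalities are correct. The bound $\DTW_\delta(S,T)\le 2\DTW_\delta(\tshrink(S),\tshrink(T))$ follows exactly the paper's argument: pass to expansions via \cref{lem:expensions-alignments}, duplicate every column, use $2b_i\ge 2\ceil{a_i/2}\ge a_i$, and convert back. Your projection $(\phi,\psi)$ is a path-level version of the paper's one-line observation that every expansion of $S$ is also an expansion of $\tshrink(S)$.

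The gap is in the algorithmic claim. Your premise that the construction ``must not be carried out character by character'' is false for the required $\Oh(|S|+|T|)$ bound. The alignment $\A$ has at most $|\tshrink(S)|+|\tshrink(T)|-1<|S|+|T|$ vertices. So you can decompress it, apply part~(2) of \cref{lem:expensions-alignments}, double both strings, apply part~(1), and recompress the result, all in linear time. This is precisely the paper's algorithm, so your own expansion argument already gives the algorithm.

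The path-level replacement you propose is not merely unverified but incorrect as stated. The failure has nothing to do with clamping at run ends: traversing each $2\times2$ square along its diagonal breaks at every horizontal or vertical step of $\A$. Take $S=\mathtt{aa}$ and $T=\mathtt{bbbb}$, so $\tshrink(S)=\mathtt{a}$ and $\tshrink(T)=\mathtt{bb}$, and let $\A=((1,1),(1,2))$. The square of $(1,1)$ is traversed as $(1,1)\to(2,2)$ and the square of $(1,2)$ as $(1,3)\to(2,4)$. The connecting step from $(2,2)$ to $(1,3)$ would then decrease the row index.

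Any repair must decide, along each horizontal or vertical segment of $\A$, where to switch from the first to the second of the two copies of the repeated position. The greedy map in part~(1) of \cref{lem:expensions-alignments} does exactly this for you. So drop the detour and let the expansion construction carry the running-time claim.
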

\begin{proof}
In this proof we follow the definition of $\DTW_\delta$ via expansions of $S$ and $T$ (see the discussion before \cref{lem:expensions-alignments}).

\para{Upper bound.}
First, the upper bound $\DTW_\delta(\tshrink(S),\tshrink(T))\le\DTW_\delta(S,T)$ is straightforward.
Let $S'$ and $T'$ be equal-length expansions of $S$ and $T$, respectively, such that
\[
\DTW_\delta(S,T)=\sum_{i=1}^{|S'|}\delta(S'[i],T'[i]).
\]
Since every run of $\tshrink(S)$ is obtained from the corresponding run of $S$ by decreasing its length, every expansion of $S$ is also an expansion of $\tshrink(S)$.
Thus $S'$ and $T'$ are also expansions of $\tshrink(S)$ and $\tshrink(T)$, respectively.
Hence,
\begin{align*}
\DTW_\delta(\tshrink(S),\tshrink(T))
&=
\min_{\substack{
S'' \text{ is an expansion of } \tshrink(S)\\
T'' \text{ is an expansion of } \tshrink(T)\\
|S''|=|T''|
}}
\sum_{i=1}^{|S''|}\delta(S''[i],T''[i])\\
&\le \sum_{i=1}^{|S'|}\delta(S'[i],T'[i])\\
&=\DTW_\delta(S,T).
\end{align*}

\para{Lower bound.}
We now prove that $\DTW_\delta(\tshrink(S),\tshrink(T))\ge \DTW_\delta(S,T)/2$ by showing that $\DTW_\delta(S,T)\le 2\cdot\DTW_\delta(\tshrink(S),\tshrink(T))$.
We prove this together with the algorithmic claim that, given an alignment $\A$ from $\tshrink(S)$ to $\tshrink(T)$, we can compute an alignment $\A'$ from $S$ to $T$ whose cost is at most $2\cost(\A)$.

First, we compute strings $S'$ and $T'$ which are same-length expansions of $\tshrink(S)$ and $\tshrink(T)$, respectively, such that
\[
\cost(\A)=\sum_{i=1}^{|S'|}\delta(S'[i],T'[i]).
\]
Then let $S''$ be the string obtained by duplicating each character of $S'$, that is, for every $i\in[|S'|]$ we define both $S''[2i-1]$ and $S''[2i]$ to be $S'[i]$.
Similarly, we define $T''$ to be the string obtained by duplicating each character of $T'$.
Clearly $|S''|=|T''|$, and
\begin{align*}
\sum_{i=1}^{|S''|}\delta(S''[i],T''[i])
&=
\sum_{i=1}^{|S'|} \big(\delta(S''[2i-1],T''[2i-1]) + \delta(S''[2i],T''[2i])\big) \\
&=
\sum_{i=1}^{|S'|} 2\delta(S'[i],T'[i]) \\
&=
2\sum_{i=1}^{|S'|}\delta(S'[i],T'[i])
=
2\cost(\A).
\end{align*}

Finally, we claim that $S''$ is an expansion of $S$.
Recall that if the run-length encoding of $S$ is $(\sigma_1,a_1),(\sigma_2,a_2), \ldots,(\sigma_\ell,a_\ell)$, then the run-length encoding of $\tshrink(S)$ is $(\sigma_1,\ceil{a_1/2}),(\sigma_2,\ceil{a_2/2}), \ldots,(\sigma_\ell,\ceil{a_\ell/2})$.
Since $S'$ is an expansion of $\tshrink(S)$, the run-length encoding of $S'$ is $(\sigma_1,b_1),(\sigma_2,b_2), \ldots,(\sigma_\ell,b_\ell)$ with $b_i\ge \ceil{a_i/2}$ for every $i\in[\ell]$.
Thus the run-length encoding of $S''$ is $(\sigma_1,c_1),(\sigma_2,c_2), \ldots,(\sigma_\ell,c_\ell)$ with $c_i\coloneqq2b_i\ge a_i$ for every $i\in[\ell]$.
Hence $S''$ is indeed an expansion of $S$.
By the same argument, $T''$ is an expansion of $T$.

Finally, by \cref{lem:expensions-alignments} we can compute an alignment $\A'$ from $S$ to $T$ according to $S''$ and $T''$ such that
\[
\cost(\A')\le \sum_{i=1}^{|S''|}\delta(S''[i],T''[i])=2\cost(\A).
\]

Applying the above construction to an optimal alignment $\A$ from $\tshrink(S)$ to $\tshrink(T)$, for which $\cost(\A)=\DTW_\delta(\tshrink(S),\tshrink(T))$, we obtain an alignment $\A'$ from $S$ to $T$ such that
\[
\DTW_\delta(S,T)\le \cost(\A')\le 2\cost(\A)=2\DTW_\delta(\tshrink(S),\tshrink(T)).
\]
Rearranging gives $\DTW_\delta(\tshrink(S),\tshrink(T))\ge \DTW_\delta(S,T)/2$.

\para{Complexity.}
Computing $S'$ and $T'$ from $\A$ using \cref{lem:expensions-alignments} takes $\Oh(|\A|)=\Oh(|S|+|T|)$ time.
Computing $S''$ and $T''$ takes $\Oh(|S'|+|T'|)=\Oh(|\A|)=\Oh(|S|+|T|)$ time.
Finally, computing $\A'$ from $S''$ and $T''$ using \cref{lem:expensions-alignments} takes $\Oh(|S''|)=\Oh(|\A|)=\Oh(|S|+|T|)$ time.
Thus the total running time of the algorithm is $\Oh(|S|+|T|)$.
\end{proof}

\begin{lemma}\label{lem:scale-RLE}
There exists an algorithm that, given two strings $S$ and $T$ of total length $n$, some $M\ge L_{24k}(S)$, oracle access to a mismatch-cost function $\delta$ and an  optimal alignment $\A$ in $\AG_\delta(\shrink(S),\shrink(T))$ of cost at most $k$, computes in $\Otild(n+k^2M)$ time an optimal alignment in $\AG_\delta(S,T)$ or reports that $\kDTW_\delta(S,T)>k$.
\end{lemma}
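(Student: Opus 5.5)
We will climb from $\shrink(S),\shrink(T)$ back to $S,T$ through a chain of $\tshrink$ iterates. For $h\ge 0$, let $S_h$ be obtained from $S$ by applying $\tshrink$ $h$ times, and define $T_h$ analogously. Every run of $S_h$ has length $\ceil{e/2^h}$ for the corresponding run length $e$ of $S$. Hence, for $H\coloneqq\ceil{\log n}$, we have $S_H=\shrink(S)$ and $T_H=\shrink(T)$. Moreover, $\shrink(S_h)=\shrink(S)$ for every $h$, so $L_{24k}(S_h)=L_{24k}(S)\le M$. All these strings are represented in run-length encoding. Each $S_h,T_h$ has at most as many runs as $S,T$, so constructing the encodings at every level takes $\Oh(n\log n)$ time, provided we never materialize the expanded strings.

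We proceed downward from $h=H$ to $h=0$, maintaining an optimal alignment $\A_h$ in $\AG_\delta(S_h,T_h)$ of cost at most $k$. Initially $\A_H\coloneqq\A$. Given $\A_{h+1}$, we apply \cref{lem:tshrink-approx-dtw} with $\tshrink(S_h)=S_{h+1}$. This yields an alignment $\A'_h$ from $S_h$ to $T_h$ with $\cost(\A'_h)\le 2\cost(\A_{h+1})\le 2k$. If $\cost(\A'_h)=0$, it is already optimal. Otherwise, we invoke \cref{lem:dtw-given-approx} on $(S_h,T_h)$ with $\A'_h$, the parameter $t\coloneqq\ceil{\cost(\A'_h)}\le 2k$, and the bound $M$. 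This is valid because $L_{12t}(S_h)\le L_{24k}(S_h)\le M$ by \cref{lem:Lk-monotone}. The call returns an optimal $\A_h$ in $\Otild(k^2M)$ time. If $\cost(\A_h)>k$, we report $\kDTW_\delta(S,T)>k$. This report is correct because $\DTW_\delta(S_h,T_h)\le\DTW_\delta(S,T)$ by iterating the first part of \cref{lem:tshrink-approx-dtw}. At $h=0$, we output $\A_0$.

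Summing over the $\Oh(\log n)$ levels gives $\Otild(k^2M)$ for the calls to \cref{lem:dtw-given-approx}, plus the cost of the conversions. The main obstacle is that \cref{lem:tshrink-approx-dtw} is stated with running time $\Oh(|S|+|T|)$, which refers to uncompressed lengths; over $\log n$ levels this is still $\Oh(n\log n)$, so it is acceptable as stated. We should nonetheless make sure the conversion can be carried out on compressed representations. Doubling expansions acts segment-wise on the segment descriptors. A segment inside a block of run lengths $(\ceil{a/2},\ceil{b/2})$ maps to a segment in the block $(a,b)$ whose cost is at most doubled. Zero-diagonal descriptors map to zero-diagonal descriptors. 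Therefore the conversion takes $\Oh(k+1)$ time per level plus the $\Oh(n)$ needed to read the encodings. Either way, the total is $\Otild(n+k^2M)$, as claimed.
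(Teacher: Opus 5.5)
Your proposal is correct and takes essentially the same route as the paper. The paper also builds the chain of $\tshrink$ iterates ending at $(\shrink(S),\shrink(T))$ and descends it over $\Oh(\log n)$ levels: \cref{lem:tshrink-approx-dtw} gives a cost-$2k$ alignment, and \cref{lem:dtw-given-approx} re-optimizes it, which is valid since $L_{12t}(S_h)\le L_{24k}(S)\le M$. The differences are cosmetic: the paper uses $t\coloneqq 2k$ rather than $\ceil{\cost(\A'_h)}$, and it stops at the $\tshrink$ fixed point rather than at $H=\ceil{\log n}$. Your remark on converting compressed representations is unnecessary, since the $\Oh(n)$ per-level conversion bound already suffices.
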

\begin{proof}
Given $S$ and $T$, we define a sequence $P \coloneqq (S_1,T_1),(S_2,T_2),\dots,(S_{|P|},T_{|P|})$ as follows.
$(S_1,T_1)$ are simply $S$ and $T$.
For $i \ge1$, if $S_{i} = \tshrink(S_{i})$ and $T_{i} = \tshrink(T_{i})$, then $(S_i,T_i)$ is the last pair of the sequence.
Otherwise, $(S_{i+1},T_{i+1}) \coloneqq (\tshrink(S_{i}),\tshrink(T_{i}))$.
Notice that $S_{|P|}=\shrink(S)$ and $T_{|P|}=\shrink(T)$.
Hence, $\A$ is an optimal alignment for $\AG_\delta(S_{|P|},T_{|P|})$.
Let $\A_{|P|}\coloneqq\A$.

The algorithm iterates $i$ from $|P|-1$ to $1$ in decreasing order.
At every iteration, we assume that we already have an optimal $\DTW_\delta$ alignment $\A_{i+1}$, or that we have found that $\DTW_\delta(S_{i+1},T_{i+1}) > k$.
Due to \cref{lem:tshrink-approx-dtw}, if $\DTW_\delta(S_{i+1},T_{i+1}) > k$ we can report that $\DTW_\delta(S_i,T_i) > k$ as well.
Otherwise, we have the alignment $\A_{i+1}$ with cost at most $k$ at hand.
Due to \cref{lem:tshrink-approx-dtw}, we can use the alignment $\A_{i+1}$ from $S_{i+1} = \tshrink(S_i)$ to $T_{i+1} = \tshrink(T_i)$ to construct an alignment $\A'_i$ from $S_i$ to $T_i$ of cost at most $2k$.
Since $\shrink(S_i)=\shrink(S)$ we have $L_{24k}(S_i)=L_{24k}(S)\le M$, using $t\coloneqq2k$ it holds that $L_{12t}(S_i)\le M$.
Then, we can apply \cref{lem:dtw-given-approx} on $(S_i,T_i)$ with the alignment $\A'_i$, the parameter $M$  and $t\coloneqq2k$ to either report that $\DTW_\delta(S_i,T_i) >k$ or return an optimal $\DTW_\delta$ alignment $\A_i$ from $S_i$ to $T_i$, with cost at most $k$.

After running the above for all $i$ from $|P|-1$ down to $1$, we have either reported that $\DTW_\delta(S,T)> k$ or have found an optimal alignment $\A_1$ from $S$ to $T$, as required.

\para{Complexity.} 
Since $|P|=\Oh(\log n)$, the computation of the sequence $P$ takes $\Otild(n)$ time.

For every iteration, the algorithm processes $(S_i,T_i)$ given $\A_{i+1}$, to compute $\A_i$.
If $\DTW_\delta(S_{i+1},T_{i+1})>k$, the algorithm terminates, and this  iteration takes $\Otild(1)$ time.
Otherwise, the computation of $\A'_{i}$ from $\A_{i+1}$ takes $\Otild(n)$ time by \cref{lem:tshrink-approx-dtw}.
Then, the computation of $\A_i$ from $\A'_i$ takes $\Otild(k^2M)$ time, by \cref{lem:dtw-given-approx}.
Since we have only $|P|=\Oh(\log n)$ iterations, the total running time of the algorithm is $\Otild(n+k^2M)$.
\end{proof}

\subsection{Wrapping Up}
The proof of \cref{thm:main-UB} is straightforward using \cref{lem:ub-base-DTW-unit-for-shrinked,lem:scale-delta,lem:scale-RLE}.
\restatementwithproof{\MainUB*}
\begin{proof}
For $0\le k<1$, normalization implies that an alignment of cost at most $k$ must have cost $0$.
Compare $\shrink(S)$ and $\shrink(T)$ in $\Oh(n)$ time: if they are equal, matching corresponding runs yields an optimal alignment of cost $0$; otherwise, return \No.
For $k\ge n$, use the standard $\Oh(n^2)$-time dynamic program, which meets the claimed bound since $L_p(S)\ge1$.
It remains to consider $1\le k<n$.
We first describe the algorithm for integer thresholds.
First, the algorithm computes $M$, a $2$-approximation for $L_{24k}(S)$, using \cref{lem:approx_LpS}.
Then, the algorithm applies \cref{lem:ub-base-DTW-unit-for-shrinked}. 
If \cref{lem:ub-base-DTW-unit-for-shrinked} returns $\infty$, the algorithm returns $\infty$ for $\kDTW_\delta(S,T)$.
Otherwise, let $\A_1$ be the optimal alignment in $\AG_\unit(\shrink(S),\shrink(T))$ returned by \cref{lem:ub-base-DTW-unit-for-shrinked}.
The algorithm uses $\A_1$ to apply \cref{lem:scale-delta}.
The algorithm of \cref{lem:scale-delta} either returns $\infty$, indicating that $\DTW_\delta(\shrink(S),\shrink(T))>k$, or computes an optimal alignment $\A_2$ in $\AG_\delta(\shrink(S),\shrink(T))$.
In the former case, the algorithm returns $\infty$ for $\kDTW_\delta(S,T)$.
In the latter case, the algorithm uses $\A_2$ to apply the algorithm of \cref{lem:scale-RLE}, and returns its output.
Finally, since the alignment is computed in compressed representation, the algorithm converts it into an explicit optimal alignment in $\Oh(n)$ time.

\para{Correctness.}
The correctness of the algorithm follows directly from \cref{lem:ub-base-DTW-unit-for-shrinked,lem:scale-delta,lem:scale-RLE}, and from the inequality 
\[
\DTW_\delta(S,T)\ge \DTW_\delta(\shrink(S),\shrink(T))\ge \DTW(\shrink(S),\shrink(T)).
\]
Notice that the first inequality follows from \cref{lem:tshrink-approx-dtw} and the second inequality follows from \cref{lem:half-delta-approx} (both by recursive application).

\para{Complexity.}
Computing $M$ using \cref{lem:approx_LpS} takes $\Otild(n)$ time.
The time complexity of \cref{lem:ub-base-DTW-unit-for-shrinked} is $\Otild(n+k^2M)$.
The time complexity of \cref{lem:scale-delta} is $\Otild(k^2M)$ and the time complexity of \cref{lem:scale-RLE} is $\Otild(n+k^2M)$.
Hence, the total time complexity of the algorithm is $\Otild(n+k^2M)$.
Since $M\le 2L_{24k}(S)$ the total running time of the algorithm is $\Otild(n+k^2M)=\Otild(n+k^2L_{24k}(S))$.

For a non-integral threshold $1\le k<n$, compute $K\coloneqq\ceil{k}$ by binary search over the integers in $[1\dd n]$, using $\Oh(\log n)$ comparisons.
Run the integer-threshold algorithm with $K$.
If it reports that the distance exceeds $K$, return \No; otherwise, compare the cost of the returned optimal alignment with $k$ and output the alignment if the answer is \Yes.
Since $K\le2k$ and $p=24K$, the running time is $\Otild(n+K^2L_{24K}(S))=\Otild(n+k^2L_p(S))$, as claimed.
\end{proof}

\subparagraph*{AI Disclosure:}
We used GPT 5.4 to assist with generating figures (producing TikZ code), content refactoring (e.g., extracting a repeated argument to a separate lemma, with instructions what the lemma should state), content summarization (e.g., explaining how \cref{thm:ov-to-dtw} follows from earlier constructions and lemmas; see \cpageref{pg:proof}), and drafting a small number of proofs that could otherwise be left as exercises for a knowledgeable reader (e.g., that DTW behaves predictably with respect to padding fresh characters; \cref{lem:dtw-padding}).
We subsequently verified and adjusted the content generated this way to ensure correctness, simplicity, and clarity.
We did not use any AI assistance for conceptual tasks such as designing gadgets, algorithms, or analysis strategies.

\bibliography{bib.bib}


\end{document}